 \newtheorem{example}{Example}
 \newtheorem{definition}{Definition}
 \newtheorem{proof}{Proof}
 \newtheorem{tableindoc}{Table}
 \newtheorem{lemma}{Lemma}
 \newtheorem{theorem}{Theorem}
 \newtheorem{corollary}{Corollary}
 \newtheorem{proposal}{Proposal}
 \newtheorem{proposition}{Proposition}
 \newcommand{\M}[1]{\mathbb{#1}}
\begin{document}

\begin{titlepage}
    \vspace*{10ex}
    \huge
    \begin{center}
     \bf{Gauge-Higgs unification with broken \\ flavour symmetry} 
    \end{center}
    \vspace{12ex}
    \Large
    \begin{center}
     Dissertation \\ zur Erlangung des Doktorgrades \\ des Fachbereichs Physik \\ 
     der Universit\"at Hamburg  \end{center}
    \vspace{12ex}
    \begin{center}
     vorgelegt von \\ Michael Olschewsky \\ aus Soltau
    \end{center}
    \vspace{2ex}
    \begin{center}
     Hamburg 2007
    \end{center}
\end{titlepage}

\thispagestyle{empty}
\begin{tabular}[bl!]{ll}
\vspace{16cm}\\
Gutachter der Dissertation:             & Prof.~Dr. Gerhard Mack\\
                                        & Prof.~Dr. Klaus Fredenhagen\\
Gutachter der Disputation:              & Prof.~Dr. Gerhard Mack\\
                                        & Prof.~Dr. Jochen Bartels\\
Datum der Disputation:                  & 18. Mai 2007\\
Vorsitzender des Pr\"ufungsausschusses:  & Prof.~Dr. Jochen Bartels\\
Vorsitzender des Promotionsausschusses: & Prof.~Dr. G\"unter Huber\\
Departmentleiter:                       & Prof.~Dr. Robert Klanner\\
Dekan der Fakult\"at f\"ur Mathematik,  &  \\
 Informatik und Naturwissenschaften:    &  Prof.~Dr. Arno Fr\"uhwald
\end{tabular}

\clearpage

~
\vspace{2cm} 

\begin{center} {\bf{Abstract}}  \end{center}

 We study a five-dimensional Gauge-Higgs unification model on the orbifold
 $S^{1}/\mathbb{Z}_{2}$ based on the extended standard model (SM) gauge group 
 $SU(2)_{L} \times U(1)_{Y} \times SO(3)_{F}$. The
 group $SO(3)_{F}$ is treated as a chiral gauged flavour symmetry.  
 Electroweak-, flavour- and Higgs interactions
 are unified in one single gauge group $SU(7)$. The unified gauge
 group $SU(7)$ is broken down to $SU(2)_{L} \times U(1)_{Y} \times SO(3)_{F}$ by orbifolding and
 imposing Dirichlet and Neumann boundary conditions. The compactification
 scale of the theory is $\mathcal{O}(1)$ TeV.
 Furthermore, the orbifold $S^{1}/\mathbb{Z}_{2}$ is put
 on a lattice. This setting gives a well-defined staring point for renormalisation group (RG) 
 transformations. As a result of the RG-flow, the bulk 
 is integrated out and the extra dimension will consist of only two points:
 the orbifold fixed points. The model obtained this way is called an effective bilayered 
 transverse lattice model. Parallel transporters (PT) in the extra dimension become
 nonunitary as a result of the blockspin transformations. In addition, a Higgs potential $V(\Phi)$
 emerges naturally. The PTs can be written as a product
 $e^{A_{y}} e^{\eta} e^{A_{y}}$ of unitary factors $e^{A_{y}}$ and a selfadjoint factor $e^{\eta}$.
 The reduction $\mathbf{48} \to \mathbf{35} + \mathbf{6} + \bar{\mathbf{6}} + \mathbf{1}$
 of the adjoint representation of $SU(7)$ with respect to $SU(6) \supset SU(2)_{L} \times U(1)_{Y} 
 \times SO(3)_{F}$ leads to three $SU(2)_{L}$ Higgs doublets: one for the first, one for the second
 and one for the third generation. Their zero modes serve as a substitute for the SM
 Higgs. When the extended SM gauge group $SU(2)_{L} \times U(1)_{Y} \times SO(3)_{F}$ is
 spontaneously broken down to $U(1)_{em}$, an exponential gauge boson mass splitting occurs naturally.
 At a first step $SU(2)_{L} \times U(1)_{Y} \times SO(3)_{F}$ is broken to 
 $SU(2)_{L} \times U(1)_{Y}$ by VEVs for the
 selfadjoint factor $e^{\eta}$. This breaking leads to masses of flavour changing $SO(3)_{F}$ gauge bosons
 much above the compactification scale. Such a behaviour has no counterpart within the customary 
 approximation scheme of an ordinary orbifold theory.
 This way tree-level flavour-changing-neutral-currents are naturally suppressed.  
 In a second step the electroweak gauge group $SU(2)_{L} \times U(1)_{Y}$ is broken 
 to $U(1)_{em}$ by VEVs for the unitary factors $e^{A_{y}}$ 
 at the electroweak scale. This breaking is equivalent to a Wilson line breaking. 
 Making some simplifying assumptions we also calculate fermion masses and 
 CKM mixing angles. As for the gauge bosons an exponential fermion mass splitting occurs 
 naturally. Fermion masses and mixing angles are determined by the VEVs for $e^{\eta}$ and $e^{A_{y}}$
 of PTs for quarks and leptons. The model predicts a large Higgs
 sector consisting of altogether $30$ Higgs particles.  
 The model in its simplest form also predicts the (too small) weak mixing angle $\theta_{W}=0.125$.

\clearpage

~
\vspace{2cm}

\begin{center}\bf Zusammenfassung\end{center}

Wir untersuchen ein f\"unfdimenisonales Eich-Higgs Vereinigungsmodell auf der 
Orbifold $S^{1}/\mathbb{Z}_{2}$ basierend auf der erweiterten Standardmodell (SM)
Eichgruppe $SU(2)_{L} \times U(1)_{Y} \times SO(3)_{F}$. Die Gruppe $SO(3)_{F}$
wird behandelt als chirale geeichte Flavoursymmetrie. Elektroschwache-, 
Flavour- und Higgswechselwirkungen sind in einer einzigen Eichgruppe $SU(7)$
vereinigt. Die Vereinigungsgruppe $SU(7)$ wird durch Orbifolding und 
Dirichlet- und Neumannrandbedingungen auf
$SU(2)_{L} \times U(1)_{Y} \times SO(3)_{F}$ gebrochen. Die
Kompaktifizierungsskala der Theorie ist $\mathcal{O}(1)$ TeV. Weiterhin 
setzen wir die Orbifold $S^{1}/\mathbb{Z}_{2}$ auf ein Gitter. Dieser
Rahmen gibt einen wohldefinierten Startpunkt f\"ur die Betrachtung von 
Renormierungsgruppentransformationen. Als Ergebnis des 
Renormierungsgruppenflusses wird der Bulk ausintegriert und die Extradimension 
besteht aus nur zwei Punkten: Die Fixpunkte der Orbifold. Wir nennen das 
auf diese Weise erhaltene Modell ein effektives, transverses Zweischichtmodell.
Als ein Ergebnis Blockspintransformationen werden Paralleltransporter (PT) in der Extradimension 
nichtunit\"ar. Zus\"atzlich entsteht ein 
Higgspotential auf nat\"urliche Art und Weise. Die PT k\"onnen
geschrieben werden als ein Produkt $e^{A_{y}} e^{\eta} e^{A_{y}}$ von
unit\"aren Faktoren  $e^{A_{y}}$ und einem selbstadjungierten Faktor
 $e^{\eta}$.
Die Reduktion $\mathbf{48} \to \mathbf{36} + \mathbf{6} + \bar{\mathbf{6}} + \mathbf{1}$ der adjungierten Darstellung von $SU(7)$ bez\"uglich $SU(6)$ f\"uhrt
auf drei $SU(2)_{L}$ Higgsdoublets: Eines f\"ur die erste, eines f\"ur die 
zweite und eines f\"ur die dritte Generation. Ihre Nullmoden dienen als Ersatz
f\"ur das SM Higgs. Wenn die erweiterte SM 
Eichgruppe $SU(2)_{L} \times U(1)_{Y} \times SO(3)_{F}$ spontan zu $U(1)_{em}$
gebrochen wird, entsteht eine exponentielle Aufspaltung der Eichbosonenmassen auf auf nat\"urliche
Art und Weise. Dies f\"uhrt auf Flavoureichbosonen mit Massen weit oberhalb der
Kompaktifizierungsskala. Solch ein Verhalten hat keine Entsprechung
innerhalb der herk\"ommlichen N\"aherungen einer  
Orbifoldtheorie. Flavourver\"anderne neutrale 
Str\"ome sind auf nat\"urliche Art und Weise unterdr\"uckt. Die 
elektroschwache Eichgruppe $SU(2)_{L} \times U(1)_{Y}$ wird durch
Vakuumerwartungswerte f\"ur die unit\"aren Faktoren  
$e^{A_{y}}$ bei der elektroschwachen Brechungsskala auf
$U(1)_{em}$ gebrochen.  Ausserdem berechnen wir unter
vereinfachenden Annahmen Fermionenmassen und die CKM Matrix. Wie f\"ur
Eichbosonen, so ersteht auch f\"ur Fermionen eine exponentielle Massenaufspaltung. Fermionenmassen 
und Mischungswinkel sind festgelegt durch Vakuumerwartungswerte f\"ur
$e^{\eta}$ und $e^{A_{y}}$ von PTn f\"ur Quarks und Leptonen. Das Modell sagt
insgesamt $30$ Higgsteilchen voraus. In seine einfachsten Version sagt das Modell 
den (zu kleinen) schwachen Mischungswinkel $\theta_{W}=0.125$ voraus.

\bibliographystyle{plain}
\tableofcontents

\chapter{Introduction}

  \label{chapterintrodcution}

  During the last ten years much attention has been paid to gauge theories in higher dimensions.
  One of the strongest motivations for extra dimensions is based on the very attractive idea that gauge
  and Higgs fields can be unified in higher dimensions \cite{Manton:1979kb,Hatanaka:1998yp}.
  Gauge bosons and Higgs fields arise from the four-dimensional and extra components of
  higher-dimensional gauge fields, respectively. This scenario is
  called Gauge-Higgs unification \cite{Hall:2001zb,Antoniadis:2001cv,Csaki:2002ur,Burdman:2002se,
  Scrucca:2003ra,Kubo:2001zc,Haba:2004qf}. The gauge group in this class of models must be larger
  than the Standard model (SM) gauge group in order to obtain Higgs fields which transform 
  according to the fundamental representation of $SU(2)_{L}$.
  The larger amount of gauge symmetry can be reduced to the SM one
  by compactifying the extra dimensions on an orbifold. Orbifolding \cite{Hebecker:2001jb,Quiros:2003gg}
  is a technique used to break a gauge group without the use of Higgs fields.
  It has many applications not only in Gauge-Higgs unification models
  but also in GUT breaking \cite{Hebecker:2001wq,Asaka:2001eh}. 

  The SM is extremely successful in reproducing all the available data up to currently accessible
  energies. However, it has serious unsolved problems. One of the biggest problems is the stability 
  of the electroweak scale 
  against quadratically divergent corrections to the Higgs mass. This problem, called
  hierarchy problem, suggests the presence of new physics at the TeV scale \cite{Antoniadis:1990ew,
  Delgado:1998qr,Antoniadis:1998sd}. In Gauge-Higgs unification
  models tree level Higgs masses are forbidden by higher-dimensional gauge invariance. For infinite 
  large extra dimensions the
  masslessness of Higgs fields should hold to any order of perturbation theory. 
  However, for compact extra dimensions radiative corrections
  generate finite mass terms for the Higgs $\sim 1/R$, where $1/R$ is the compactification scale of the 
  theory. In Gauge-Higgs unification models the compactification scale is usually set to $\mathcal{O}(1)$
  TeV. This way Gauge-Higgs unification models on orbifolds give a solution for the 
  hierarchy problem. Electroweak symmetry breaking occurs radiatively in this class of models 
  and is equivalent to a Wilson line symmetry breaking \cite{Witten:1985xc,Ibanez:1987pj} or
  Hosotani breaking \cite{Hosotani:1983xw,Hosotani:1983vn,Hosotani:1988bm}.
  Matter fields can be introduced either as bulk fields \cite{Burdman:2002se}
  in representations of the unified gauge group or as boundary fields \cite{Csaki:2002ur}
  localised at the orbifold fixed points where the
  unified gauge group is broken to its subgroup. 

  Another central problem of the SM is the arbitrariness of the Yukawa couplings and the 
  related problem of the strength of the CKM (and PMNS) matrix elements. 
  This is called the flavour problem: The question why there are three families of quarks and leptons 
  in the SM and how they get their masses and mixing angles. In the literature,
  there are many postulated forms of Yukawa matrices \cite{Fritzsch:1979zq,Fritzsch:1989qm,
  Fritzsch:1999ee}. In order to understand their origin one can try to apply
  a family symmetry $G_{f}$ connecting different generations. Some candidates for a family
  symmetry group are continuous groups like $U(1)$ \cite{Irges:1998ax}, $SU(2)$ \cite{Chen:2000fp, 
  Kuchimanchi:2002yu}, $SU(3)$ \cite{King:2001uz} or $SO(3)$ \cite{Ghosal:1999jb,Wu:1998if,
  King:2005bj} or discrete groups
  like $S_{3}$ \cite{Harrison:2003aw}, $S_{4}$ \cite{Ma:2005pd} or $A_{4}$ \cite{Ma:2002yp}.
  The groups $SU(2)$, $SO(3)$ and $SU(3)$
  as well as $S_{4}$ and $A_{4}$ have the advantage over the groups $U(1)$ and $S_{3}$  that they
  have irreducible three dimensional representations into which the three families of the SM can fit.
  If one adds to the SM gauge group $G_{SM}=SU(3)_{c} \times SU(2)_{L} \times U(1)_{Y}$
  a gauged flavour group $G_{f}$, e.g. $SU(3)$ or $SO(3)$, one is faced with the problem that
  the latter leads to flavour-changing-neutral-currents (FCNC). However, FCNC are highly
  suppressed in the SM due to the GIM-mechanism. There are experimental lower bounds on the masses of such
  flavour gauge bosons of $\mathcal{O}(10^{3})-\mathcal{O}(10^{5})$ TeV, which is much above the
  electroweak breaking scale (but far below the GUT scale). 
  Hence a Gauge-Higgs unification model, which also includes a gauged family symmetry, with a
  compactification scale  $\mathcal{O}(1)$ TeV will in general 
  fail \cite{Martinelli:2005ix}, because it leads to unsuppressed FCNC.
  The reason is that flavour gauge bosons in this scenario will get masses at most of the order of the
  compactification scale, i.e. $\mathcal{O}(1)$ TeV. 

  A possible solution to this problem is to built a Gauge-Higgs unification model
  with the help of nonunitary parallel transporters (PTs). Gauge theories with nonunitary PTs 
  were first examined in \cite{Mack:2005fv,Lehmann:2003jh}. They are based on 
  the idea to abandon unitarity of PTs. In this class of theories PTs 
  are no longer elements of a (unitary \footnote{i.e. a compact gauge group whose finite dimensional
  representations are unitary.}) gauge group $G$ but are rather elements of a
  holonomy group $H$. The holonomy group $H$ is typically noncompact and larger than the unitary gauge
  group one has started with. Nonunitary PTs occur naturally in effective theories
  as a result of the renormalisation group (RG) flow \cite{Lehmann:2003jh}. One starts in a fundamental
  theory with conventional (unitary)  PTs. Blockspin transformations will 
  in general lead to nonunitary PTs. The most exciting property of gauge theories with 
  nonunitary PTs is that an exponential mass hierarchy appears naturally when
  the local gauge symmetry is spontaneously broken by a Higgs mechanism. In \cite{Lehmann:2003jn}
  it has been shown that an exponential flavour mass splitting for quarks
  can be obtained this way.
  
  In this thesis we show that also exponential (flavour) gauge bosons masses can be obtained when the  
  PTs in the extra dimension become nonunitary. This opens up the possibility of
  suppressing tree-level FCNC by large flavour gauge boson masses. We will present a Gauge-Higgs unification
  model, which includes a gauged flavour symmetry, with nonunitary PTs 
  in the extra dimension. It will be consistent with existing experimental constraints on 
  FCNC. The compactification scale of the theory is $\mathcal{O}(1)$ TeV.
 
  The thesis is organised as follows. In chapter \ref{chapterorbifold} we review orbifolds 
  \cite{Hebecker:2001jb,Quiros:2003gg} in one 
  extra dimension. For this analysis, we will refer to the space group
  $\mathbb{D}_{\infty}$ \cite{Nilse:2006jv}. In comparison with the more ad hoc definitions in 
  the literature, the definition of orbifolds in terms of space groups is attractive. The reason is that
  all properties of the orbifold, in particular the orbifold space-time and the various relations the
  projection matrices and twist matrices have to fulfil, can be derived directly from the defining space 
  group. Furthermore, we review the issue of gauge symmetry breaking \cite{Hebecker:2003jt} through orbifolding
  and consider also familiar orbifold constructions in orbifold GUTs. 
  We will work out the Fourier mode expansions and zero modes on 
  the orbifold $S^{1}/\mathbb{Z}_{2}$, which will be useful for the topics discussed 
  in chapter \ref{chaptereffectivetheorie}. 
  In addition, we review continuous Wilson line breaking, also know as Hosotani breaking 
  \cite{Hosotani:1988bm,Haba:2002py}. 

  In chapter \ref{chaptereffectivetheorie} we describe how an effective 
  transverse lattice model can be obtained from an ordinary $S^{1}/\mathbb{Z}_{2}$ orbifold
  model. We start with the five-dimensional space-time $M^{4} \times S^{1}/\mathbb{Z}_{2}$ where $M^{4}$
  is the four-dimensional Minkowski space-time and $S^{1}/\mathbb{Z}_{2}$ is the orbifold.
  Furthermore we put the orbifold $S^{1}/\mathbb{Z}_{2}$ on a lattice.  
  Hence the four-dimensional Minkowski space-time will remain
  continuous and only the extra dimension is latticized. Such a scenario is known as a transverse
  lattice and it occurs naturally in deconstruction theories 
  \cite{Hill:2000mu,Arkani-Hamed:2001ca,Arkani-Hamed:2001nc}. This setting gives a
  well-defined starting point for RG transformations. Starting with this latticized extra
  dimension one can calculate the RG-flow. The endpoint of the RG flow will be 
  an extra dimension, which consists of only two points: the two orbifold fixed points. The
  bulk is completely integrated out. We call the model obtained this way 
  an effective bilayered transverse lattice model 
  (eBTLM). The PTs $\Phi$ in the extra dimension from one orbifold fixed point to
  the other will be nonunitary as a result of the blockspin transformations. They can be interpreted as 
  Higgs fields. When $\Phi$ becomes nonunitary, a Higgs potential $V(\Phi)$ naturally emerges. 
  We will discuss in detail the physical interpretation of an eBTLM. It will turn out that 
  for trivial minimum of the Higgs potential and trivial orbifold projection an eBTLM equals
  an ordinary $S^{1}/\mathbb{Z}_{2}$ orbifold model with trivial orbifold projection,
  if one truncates the Fourier mode expansion for all fields in the $S^{1}/\mathbb{Z}_{2}$
  orbifold model at the first Kaluza-Klein mode. In order to handle also non-trivial minima of the
  Higgs potential and non-trivial orbifold projections we formulate orbifold 
  conditions for nonunitary PTs $\Phi$ and consider spontaneous symmetry breaking.
  As an application, we analyse in detail an eBTLM
  based on the (flavour) gauge group $SU(2)$.  
  The most exciting result is that exponential gauge boson
  masses can occur for some of the first excited {\em{and}} the zero mode gauge bosons, 
  when the gauge group $SU(2)$ is broken spontaneously. 
  This behaviour has no counterpart within the customary approximation scheme
  of an ordinary orbifold theory. 

  In chapter \ref{su7model} we present a realistic Gauge-Higgs unification model, which
  includes a chiral gauged $SO(3)_{F}$ flavour symmetry. This model is based on the gauge group
  $SU(7)$. The gauge group $SU(7)$ unifies electroweak-, flavour- and Higgs interactions. 
  Colour will be ignored.   As an intermediate step the model also unifies weak- and flavour
  interactions in the gauge group $SU(6)_{L} \subset SU(7)$. Zero modes of the 
  extra-dimensional component of the five-dimensional gauge fields, 
  transforming according to the fundamental representation of 
  $SU(2)_{L}$ and carrying the hypercharge $\frac{1}{2}$, will serve as a substitute for the
  SM Higgs. The theory will include three $SU(2)_{L}$ Higgs doublets, one for the first, one for the second
  and one for the third generation. They generate
  the unitary part of the nonunitary bulk parallel transporter $\Phi$. 
  We break $SU(7)$ again down to $SU(6)_{L}\times U(1)_{Y}$ by orbifolding.  
  The gauge symmetry breaking $SU(6)_{L}\times U(1)_{Y} \to SU(2)_{L} \times U(1)_{Y} \times SO(3)_{F}$
  can be achieved by demanding additional Dirichlet- and Neumann boundary conditions 
  for the $SU(6) \times U(1)_{Y}$ gauge fields. When spontaneous symmetry
  breaking occurs, the $SO(3)_{F}$ flavour symmetry is broken by vacuum expectation values (VEVs)
  for the selfadjoint part of $\Phi$.
  This way the flavour gauge bosons can receive very large masses in comparison to the compactification
  scale $1/R=\mathcal{O}(1)$ TeV. Hence tree-level FCNC are naturally suppressed due to the large flavour
  gauge boson
  masses. The electroweak gauge symmetry $SU(2)_{L} \times U(1)_{Y}$ is broken to $U(1)_{em}$ by VEVs for
  the three $SU(2)_{L}$ Higgs doublets. We calculate all gauge boson masses in the model in terms of the 
  minimum $\Phi_{min}$ of the Higgs potential $V(\Phi)$. The model will also make a prediction for
  the weak mixing angle $\theta_{W}$. 

  In chapter \ref{su7fermionmasses} we will calculate the fermion masses and the CKM mixing matrix in
  the $SU(7)$ model under some simplifying assumptions.
  We assume that nonunitary parallel transporters for gauge fields, quarks and leptons
  are different. 

  In chapter \ref{SummaryOutlook} will draw our conclusions and discuss possible extensions
  of the $SU(7)$ model.

\chapter[Orbifolds in one extra dimension]{Orbifolds in one extra dimension, Fourier mode expansion and the Hosotani mechanism}

 \label{chapterorbifold}

  In this chapter we review orbifolds \cite{Hebecker:2001jb,Hebecker:2003jt,Quiros:2003gg,Scrucca:2004jn} and
  gauge symmetry breaking through orbifolding in one extra dimension. In contrast to
  the literature, we will define orbifolds in terms of one-dimensional space groups \cite{Nilse:2006jv}.
  The definition of orbifolds in terms of space groups is attractive since all properties of the 
  orbifold can be derived directly from the defining space group. 
  Furthermore we will work out the Fourier mode expansions and zero modes on 
  the orbifold $S^{1}/\mathbb{Z}_{2}$ which will be useful for the topics discussed 
  in chapter \ref{chaptereffectivetheorie}. 
  In addition, we review continuous Wilson line breaking also know as Hosotani breaking 
  \cite{Hosotani:1988bm,Haba:2002py,Hosotani:2004wv,Hosotani:2005fk}.
  In the following section we sketch the basic ideas \cite{Hebecker:2001jb} of orbifolding.

 \section{The meaning of orbifolding}

   We consider a quantum field theory (QFT) with gauge group $G$
   in $D=d+4$ dimensions, where $d$ denote the number of extra dimensions. 
   The QFT is defined on $M=M^{4} \times C$, where $M^{4}$ is 
   the four-dimensional Minkowski spacetime and $C$ is a smooth manifold. 
   Let 
   \begin{equation}
    x^{M}=(x^{\mu},y^{m}) \quad  
   \mu=0,\dots,3 \quad   m=1,\dots,d
   \end{equation}
   denote the coordinates of the $D$-dimensional space, where $x_{\mu}$ and $y^{m}$ are the 
   coordinates on $M^{4}$ and $C$, respectively. 

   We suppose that both the manifold $C$ and the QFT possess a symmetry
   under a discrete group $\mathcal{K}$, i.e. 
   \begin{enumerate}
    \item
     $\mathcal{K}$ acts on the manifold $C$ as  
     \begin{equation}
      \label{actionkonc}
      \mathcal{K} : y \to \tau_{k}(y) 
     \end{equation}
     where $y=(y^{m})$ and $\tau_{k}$ constitute a representation of $\mathcal{K}$ on $C$.
    \item
     $\mathcal{K}$ acts on the field space as
     \begin{equation}
      \label{actionkonphi}
      \mathcal{K} : \Phi_{(i)} \to P_{k\; (ij)} \Phi_{(j)} \; ,
     \end{equation}
     where $\Phi$ is a vector containing all fields of the theory and $P_{k}$ is a matrix 
     representation of $\mathcal{K}$ on the field space.   
   \end{enumerate}
 
   With the symmetry group $\mathcal{K}$ at hand we can now construct the space $C/\mathcal{K}$ by 
   identifying points $y$ and 
   $\tau_{k}(y)$ that belong to the same orbit
   \begin{equation}
    y \equiv \tau_{k}(y) \; .
   \end{equation} 
   According to the action of $\mathcal{K}$ on $C$ there are two possibilities
   \begin{enumerate}
    \item
     $\mathcal{K}$ acts freely on $C$, i.e.
     \begin{equation}
      \tau_{k}(y) \neq y \quad \forall y \in C \; , \forall k \in \mathcal{K} \; , k \neq 1 \; .
     \end{equation}
     This means that non-trivial elements of $\mathcal{K}$ move all points of $C$. The space 
     $C/\mathcal{K}$ is then again a smooth manifold.  
    \item
     $\mathcal{K}$ acts non freely on $C$, i.e. the action of $\mathcal{K}$ on $C$ has
     fixed points
     \begin{equation}
      \tau_{k}(y) = y \quad \text{for some } y \in C \quad k \neq 1 \; .
     \end{equation}
     The resulting space $C/\mathcal{K}$ is not a smooth manifold but it has singularities at the
     fixed points. Such a space is known as an {\em{orbifold}}.
   \end{enumerate}

   We set $C=\M{R}^{d}$ and consider the quotient space $\M{R}^{d}/\mathcal{K}$. Note that for
   $\mathcal{K}$ we cannot choose any arbitrary discrete group. Instead of that $\mathcal{K}$ is restricted
   to be a $d$-dimensional 
   space group. A $d$-dimensional space group is defined as a discrete group of isometries of $\M{R}^{d}$. 
   \begin{definition}[Orbifold]
    \label{definitionorbifold}
    Let $\mathcal{K}$ be a space group in $d$-dimensions acting {\bf{non freely}} on $\M{R}^{d}$. We define 
    an orbifold in $d$ extra dimensions to be the quotient space
    \begin{equation}
     \label{quotientspace}
     \M{R}^{d}/\mathcal{K} \; .
    \end{equation}  
   \end{definition}
   Remarks: 
   i) Space groups are also known as crystallographic groups and their classification is 
   known for dimensions $d \le 6$. \\
   ii) Since orbifolds are defined as quotient spaces $\M{R}^{d}/\mathcal{K}$, their classification 
   follows directly from the classification of the space groups $\mathcal{K}$. \\
    
   Recall that $\mathcal{K}$ is assumed to be a symmetry of both $\M{R}^{d}$ and the QFT. We declare
   that only field configurations invariant under the actions 
   (\ref{actionkonc}) and (\ref{actionkonphi}) are physical. This means that we demand
   \begin{equation}
    \label{orbifoldconditionsgeneral}
    \Phi_{(i)}(x^{\mu},\tau_{k}(y))=P_{k\; (ij)} \Phi_{(j)}(x^{\mu},y) \; .
   \end{equation}
   In general the action of $\mathcal{K}$ on the fields can make use of all symmetries of the QFT.
   This means that $P_{k}$ can involve 
   gauge transformations, discrete parity transformations and in the supersymmetric case,
   $R$-symmetry transformations \cite{Hebecker:2003jt}. In this thesis we consider the case where 
   $P_{k}$ involves gauge
   transformations and restrict ourselves to orbifolds in one extra dimension, i.e. we take $d=1$
   in (\ref{quotientspace}).

  \section{One-dimensional orbifolds}

   Let us first consider all possible space groups in one dimension and as a start do not care whether they
   act freely on $\M{R}$ or not. 
   The real line $\M{R}$ has two possible isometries, the 
   translation $t$ and the $\pi$-rotation $r$. The one-dimensional
   space groups are therefore \cite{Nilse:2006jv}
   \begin{eqnarray}
     \label{onedimspacegroups}
     \M{Z}          & = & \langle t \rangle   \; , \\
     \M{D}_{\infty} & = & \langle t,r \mid r^{2}=1, (tr)^{2}=1 \rangle 
                            \supseteq \M{Z},\M{Z}_{2} \nonumber  \; ,
   \end{eqnarray}
   where $\M{Z}_{2}=\langle r \mid r^{2}=1 \rangle$.
   The space groups (\ref{onedimspacegroups}) are defined in a purely algebraic way, i.e. initially
   we do not specify a particular  representation of them. Instead of that we define a set of 
   generators and list the relations among them. This
   way the space groups (\ref{onedimspacegroups}) are uniquely defined. Take
   for example the space group $\M{D}_{\infty}$. It is generated by a 
   translation $t$ and a $\pi$-rotation $r$. The generators $r$ and
   $t$ fulfil the relations $r^{2}=1$ and $(tr)^{2}=1$.
   It is important and we will make use of this fact later that the 
   choice of the generators in (\ref{onedimspacegroups}) is not unique \cite{Nilse:2006jv}. For instance the
   space group $\M{D}_{\infty}$ can be defined equally in terms of 
   two $\pi$-rotations
   \begin{equation}
    \label{dinfty2}
    \M{D}_{\infty}=\langle r,r^{\prime} \mid r^{2}=r^{\prime 2}=1 \rangle \; ,
   \end{equation}
   with $r^{\prime}=tr$. Note that $r r^{\prime} \neq r^{\prime} r$.  \\
   Remark: $\M{D}_{\infty}$ may have representations $P,P^{\prime}$ of $r,r^{\prime}$ 
   on the field space, which are not faithful. For instance, one can have representations
   $P,P^{\prime}$ fulfilling $P P^{\prime}=P^{\prime} P$. In fact we will consider this
   possibility later in section \ref{sectioncontinuousdiscretewislonlineshosotani}.

   For each $\mathcal{K}$ let $\mathcal{K}^{\prime}$ be the largest subgroup of $\mathcal{K}$ that 
   does not include translations. Thus we can rewrite 
   (\ref{quotientspace}) for $d=1$ as
   \begin{equation}
    \M{R}/\mathcal{K}=S^{1}/\mathcal{K}^{\prime} \; ,
   \end{equation}
   where $S^{1}$ is the circle. The circle $S^{1}$ is the quotient space
   $\M{R}/\M{Z}$ and it is constructed by identifying the points
   \begin{equation}
    y \to y +2 \pi R \label{periodicity} \; ,
   \end{equation}
   on $\M{R}$. Here $y$ denotes the coordinate on $\M{R}$ and $R$ is the 
   compactification radius, i.e. the radius of $S^{1}$. In (\ref{periodicity}) we have given a particular
   representation of $t$ on $\M{R}$. Since in one dimension there exist only two space
   groups, namely $\M{Z}$ and $\M{D}_{\infty}$, we arrive at the two one-dimensional
   compact spaces
   \begin{equation}
    S^{1}=\mathbb{R}/\mathbb{Z} \quad , \quad 
    S^{1}/\mathbb{Z}_{2}=\mathbb{R}/\mathbb{D}_{\infty} \; .
   \end{equation}
   We will see later that only $S^{1}/\mathbb{Z}_{2}$ has fixed points and is therefore 
   the only one-dimensional orbifold.

  \subsection{Gauge symmetry breaking through orbifolding}

   We consider now the case where $\mathcal{K}$ acts on the space of gauge fields.  
   If $\mathcal{K}$ is a symmetry of the gauge action $P_{k}$ will act as a gauge transformation.
   To be more precise, 
   consider a five-dimensional gauge field $A_{M}=A_{M}^{A} T^{A}$
   where $T^{A}$ are the generators of $G$, $M \in (\mu,y)$ 
   and $A=1,\dots,\dim(G)$. Let the generators $T^{A}$ be 
   normalised such that $\text{tr} \left( T^{A} T^{B} \right)=\frac{1}{2} \delta_{AB}$. The
   five-dimensional Yang-Mills action reads 
   \begin{equation}
    \label{5dactionorbi}
    S_{5D}=\int d^{4} x dy \; \text{tr} \left( -\frac{1}{2} F_{MN} F^{MN} \right) \; ,
   \end{equation}
   where $F_{MN}=F_{MN}^{A} T^{A}$, $F_{MN}^{A}=\partial_{M} A_{N}^{A} - \partial_{N} A_{M}^{A}
   + g_{5} f^{ABC} A_{M}^{B} A_{N}^{C}$, $M,N \in (\mu,y)$, $\left[ T^{A}, T^{B} \right]=i f^{ABC} T^{C}$ and
   $g_{5}$ denote the five-dimensional gauge coupling constant. 
   The $T^{A}$ are considered here as a matrix representation of the generators of $G$.
   Under a gauge transformation $\Omega(x^{\mu},y) \in G$ on the covering space $\mathbb{R}$
   the five-dimensional gauge field $A_{M}(x^{\mu},y)$ transforms as
   \begin{equation}
    A_{M}(x^{\mu},y) \to A^{\prime}_{M}(x^{\mu},y)=\Omega(x^{\mu},y) A_{M}(x^{\mu},y)
                                                   \Omega(x^{\mu},y)^{-1}
                    -\frac{i}{g} \Omega(x^{\mu},y) \partial_{M} \Omega(x^{\mu},y)^{-1} \; .
   \end{equation}
   We represent $r$ and $t$ on $\mathbb{R}$ by
   \begin{gather}
    y \to -y   \; , \\
    y \to y+2 \pi R  \; ,
   \end{gather}
   respectively, and on $\mathfrak{g}=\text{Lie} \; G$ by
   \begin{gather}
    A_{M}(x^{\mu},y) \to P \; A_{M}(x^{\mu},y) \; P^{-1}  \; ,  \\
    A_{M}(x^{\mu},y) \to T \; A_{M}(x^{\mu},y) \; T^{-1}  \; ,
   \end{gather}
   respectively. Note that we have restricted here to the case where the action of $t$ and $r$ on
   $\mathfrak{g}=\text{Lie} \; G$ can be written as an inner automorphism. 
   According to (\ref{orbifoldconditionsgeneral}) we demand 
   \footnote{Note that
    the minus sign in (\ref{orbifoldboundaryconditions2}) is needed in order to maintain 
    the gauge covariance for $F_{\mu y}$, i.e.
    \begin{eqnarray*}
     F_{\mu y}(x^{\mu},-y) & = & \partial_{\mu} A_{y}(x^{\mu},-y) - \partial_{y} A_{\mu}(x^{\mu},-y)
     -i g_{5} \left[ A_{\mu}(x^{\mu},-y), A_{y}(x^{\mu},-y) \right] \\
     & = &   - \partial_{\mu} (P A_{y}(x^{\mu},y) P^{-1}) - \partial_{y} (P A_{\mu}(x^{\mu},y) P^{-1}) 
             + i g_{5} P \left[A_{\mu}(x^{\mu},y) , A_{y}(x^{\mu},y) \right] P^{-1}  \\
     & = &  - P \left(\partial_{\mu} A_{y}(x^{\mu},y) - \partial_{y} A_{\mu}(x^{\mu},y) 
             - i g_{5} \left[A_{\mu}(x^{\mu},y) , A_{y}(x^{\mu},y) \right] \right) P^{-1}  \\
     & = & - P F_{\mu y}(x^{\mu},y) P^{-1}  \; .
    \end{eqnarray*}  
    If we instead of (\ref{orbifoldboundaryconditions2}) demand 
    \begin{equation*}
     A_{y}(x^{\mu},-y)=P \; A_{y}(x^{\mu},y) \; P^{-1} \; ,
    \end{equation*}
    we get
    \begin{eqnarray*}
     F_{\mu y}(x^{\mu},-y) & = & \partial_{\mu} A_{y}(x^{\mu},-y) - \partial_{y} A_{\mu}(x^{\mu},-y)
     -i g_{5} \left[ A_{\mu}(x^{\mu},-y), A_{y}(x^{\mu},-y) \right] \\
     & = &    \partial_{\mu} (P A_{y}(x^{\mu},y) P^{-1}) - \partial_{y} (P A_{\mu}(x^{\mu},y) P^{-1}) 
             + i g_{5} P \left[A_{\mu}(x^{\mu},y) , A_{y}(x^{\mu},y) \right] P^{-1}  \\
     & = &  P \left(\partial_{\mu} A_{y}(x^{\mu},y) + \partial_{y} A_{\mu}(x^{\mu},y) 
             + i g_{5} \left[A_{\mu}(x^{\mu},y) , A_{y}(x^{\mu},y) \right] \right) P^{-1}  \; .
    \end{eqnarray*}}
   \begin{gather}
    \label{orbifoldboundaryconditions1}
    A_{\mu}(x^{\mu},-y)=P \; A_{\mu}(x^{\mu},y) \; P^{-1}  \\
    \label{orbifoldboundaryconditions2}
    A_{y}(x^{\mu},-y)=-P \; A_{y}(x^{\mu},y) \; P^{-1}    \\
    \label{orbifoldboundaryconditions3}
    A_{M}(x^{\mu},y+2 \pi R)=T \; A_{M}(x^{\mu},y) \; T^{-1}  \; .
   \end{gather}
   It follows that
   \begin{gather}
    F_{\mu \nu}(x^{\mu},-y)=P \; F_{\mu \nu}(x^{\mu},y) \; P^{-1} \\
    \label{fmuycondition}
    F_{\mu y}(x^{\mu},-y)=-P \; F_{\mu y}(x^{\mu},y) \; P^{-1} \\
    F_{MN}(x^{\mu},y+2 \pi R)=T \; F_{MN}(x^{\mu},y) \; T^{-1} \; . 
   \end{gather}
   Thus (\ref{5dactionorbi}) is invariant under the action of $\M{D}_{\infty}$.
   The conditions (\ref{orbifoldboundaryconditions1}), (\ref{orbifoldboundaryconditions2}) are known 
   as boundary conditions and the condition (\ref{orbifoldboundaryconditions3}) 
   is known as periodicity condition.
   
   Let us discuss the issue of gauge symmetry breaking due to the boundary condition
   (\ref{orbifoldboundaryconditions1}) and the periodicity condition (\ref{orbifoldboundaryconditions3})
   for the four-dimensional components $A_{\mu}(x^{\mu},y)$ of the five-dimensional gauge field
   $A_{M}(x^{\mu},y)$.
   First (\ref{orbifoldboundaryconditions1}) and (\ref{orbifoldboundaryconditions3}) can
   alternatively be understood in terms of local gauge symmetry breaking at the
   various fixed points of the orbifold. This reinterpretation comes out if one 
   takes into account that a generic fixed point $y_{i} \in \mathbb{R}$ is left fixed 
   by an element $k^{\prime} \in \mathcal{K}^{\prime}$ only modulo a suitable translation
   in the covering space $\mathbb{R}$, i.e.
   \begin{equation}
    \label{definitionfixedpoints}
    y_{i}=k^{\prime}(y_{i})+n_{i} \cdot 2 \pi R \; ,
   \end{equation}
   where $n_{i} \in \mathbb{N}$ depend on the particular fixed point $y_{i}$.
   Thus we conclude that the effective orbifold projection $P_{i}$, assigned to
   the fixed point $y_{i}$, is given by
   \begin{equation}
    P_{i}=T^{n_{i}} P \; .
   \end{equation}
   The boundary condition for the four-dimensional gauge fields at a given fixed point $y_{i}$ then reads
   \begin{equation}
     A_{\mu}(x^{\mu},y_{i}-y)=P_{i} \; 
     A_{\mu}(x^{\mu},y_{i}+y) \; P_{i}^{-1}  \; .
   \end{equation}
   This formula shows explicitly that 
   \begin{itemize}
    \item the gauge group $G$ is broken {\em{locally}} at
          the orbifold fixed point $y_{i}$ to the centraliser of $P_{i}$ in $G$
          \begin{equation}
           \label{localunbrokengaugegroup}
            H_{i}=\{ g \in G \mid P_{i}g=gP_{i} \}
          \end{equation}
    \item away from the fixed points, i.e. in the bulk, the gauge group $G$ remains unbroken.
   \end{itemize}
   The {\em{globally}} unbroken gauge group $H$, i.e. the gauge group of the low energy four-dimensional
   effective theory, is given by the intersection 
   \begin{equation}
    \label{globalunbrokengaugegroupasintersection}
    H=\cap_{i} H_{i} \; .
   \end{equation}
   It is remarkable that this reinterpretation follows directly from the fact
   that the definition of the space group generators $t,r$ in
   (\ref{onedimspacegroups}) is not unique. In fact one can always redefine the generators
   $t$ and $r$ such that to every fixed point $y_{i}$ of the orbifold one can assign one
   generator of the space group. In the next section we will discuss this topic for the orbifold
   $S^{1}/\mathbb{Z}_{2}$.

 \section{The orbifold $S^{1}/\mathbb{Z}_{2}$}

  \label{sectionorbifolds1z2}
         
  The orbifold $S^{1}/\mathbb{Z}_{2}=\mathbb{R}/\mathbb{D}_{\infty}$ is 
  the  quotient space of the real line modulo $\mathbb{D}_{\infty}$. Recall
  that  $\mathbb{D}_{\infty}$ is defined as 
  \begin{equation}
   \label{definitionrt}
   \mathbb{D}_{\infty}=\langle t,r \mid r^{2}=1, (tr)^{2}=1 \rangle \; .
  \end{equation}
  This space group has two generators, the translation $t$ and the reflection $r$.
  We represent $t$ on $\mathbb{R}$ by
  \begin{equation}
   \label{s1condition}
   y \to y+2 \pi R \; .
  \end{equation}
  Thus we arrive as an intermediate step at the circle 
  $S^{1}=\mathbb{R}/\mathbb{Z}$. Figure \ref{figurecircle} shows the representation
  of $t$ on $\mathbb{R}$ and the resulting space $S^{1}$.
  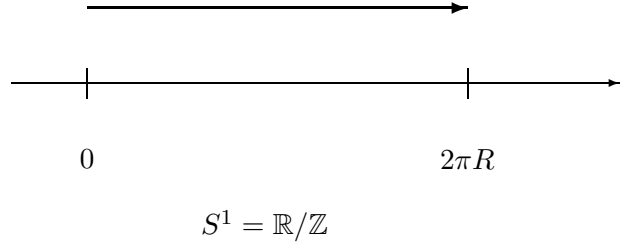
\begin{figure}[h]
   \begin{equation*}  
    \begin{picture} (8,4)
    \thicklines
    \put(1,3) {\vector (1,0) {5}}
    \thinlines
    \put (0,2) {\vector (1,0) {8}}
    \multiput(1,1.8)(5,0) {2} {\line (0,1){0.4}}
    \put(2.5,0) {$S^{1}=\mathbb{R}/\mathbb{Z}$}
    \multiputlist(1,1)(5,0){$0$,$2\pi R$}
    \end{picture}
   \end{equation*}
   \caption{Representation of $t$ on $\mathbb{R}$ (thick black arrow) and the resulting space $S^{1}.$}
   \label{figurecircle}
  \end{figure}
  Note that $t$ acts freely on $\mathbb{R}$ and thus $S^{1}$ possesses no fixed points. 
  Consequently $S^{1}$ is not an orbifold. 
  In order to arrive at the orbifold $S^{1}/\mathbb{Z}_{2}$ we represent $r$ on $\mathbb{R}$ by 
  \begin{equation}
   \label{z2condition}
   y \to -y   \; ,
  \end{equation}
  i.e. we divide the circle $S^{1}$ by a $\mathbb{Z}_{2}$ transformation. Figure \ref{figorbifold}
  shows the representation of $t$ and $r$ 
  on $\mathbb{R}$ and the resulting space $S^{1}/\mathbb{Z}_{2}$.  
  \begin{figure}[h]
   \begin{equation*}
    \begin{picture} (8,4)
     \thicklines
     \put(1,3) {\vector (1,0) {5}}
     \thinlines
     \put (0,2) {\line (1,0) {3.5}}
     \put (1,2) {\circle*{0,15}}
     \dottedline[\circle*{0.05}]{0.1}(3.5,2) (6,2)
     \put (6,2) {\vector (1,0) {2}}
     \multiput(1,1.8)(5,0) {2} {\line (0,1){0.4}}
     \multiputlist(1,1)(2.5,0){$0$,$\pi R$,$2\pi R$}
     \put(2.2,0) {$S^{1}/\mathbb{Z}_{2}=\mathbb{R}/\mathbb{D}_{\infty}$}
    \end{picture}
   \end{equation*}
   \caption{Representation of $t$ (thick black arrow) and $r$ (black dot at $y=0$)  
            on $\mathbb{R}$ and the resulting space $S^{1}/\mathbb{Z}_{2}$.}
   \label{figorbifold}
  \end{figure}
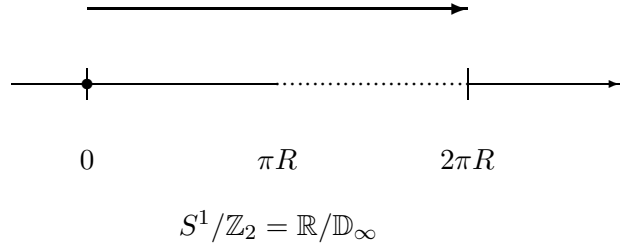
  Due to the definition of $\mathbb{D}_{\infty}$ (\ref{definitionrt}), the following relations hold
  \begin{equation}
   \label{dinftyrelations}
   r^{2}=(tr)^{2}=1 \quad , \quad t=(tr)r \quad , \quad trt=r \; .
  \end{equation}
  The gauge fields have to fulfil the boundary conditions
  \begin{gather}
    \label{pboundarycondition}
    A_{\mu}(x^{\mu},-y)=P \; A_{\mu}(x^{\mu},y) \; P^{-1}    \\
    A_{y}(x^{\mu},-y)=-P \; A_{y}(x^{\mu},y) \; P^{-1}   \nonumber  
  \end{gather}
  and the periodicity condition 
  \begin{equation}
   \label{tboundarycondition}
   A_{M}(x^{\mu},y+2 \pi R)=T \; A_{M}(x^{\mu},y) \; T^{-1} \; .
  \end{equation}

  The orbifold  $S^{1}/\mathbb{Z}_{2}$ has two fixed points $y_{1}=0$ 
  and $y_{2}=\pi R$, where $y_{1}=0$ is invariant under the group element $r$ 
  \begin{equation}
   y_{1}=0 \stackrel{r}{\rightarrow} 0=y_{1}
  \end{equation}
  and $y_{2}=\pi R$ is invariant under the group element $tr$
  \begin{equation}
    y_{2}=\pi R \stackrel{r}{\rightarrow} -\pi R \stackrel{t}{\rightarrow}
     \pi R=y_{2} \; .
  \end{equation}
  This means that in (\ref{definitionfixedpoints}) we have $n_{1}=0$ and $n_{2}=1$. 
  The corresponding effective projections are therefore 
  \begin{equation}
   \label{s1z2z2primetwists}
   P_{1}=P \quad , \quad P_{2}=TP \; .
  \end{equation} 
  Consequently we can rewrite the orbifold boundary conditions (\ref{pboundarycondition}) and
  the periodicity condition (\ref{tboundarycondition}) as
  \begin{gather}
    \label{boundaryconditionss1z2twisted}
    A_{\mu}(x^{\mu},-y)=P_{1} \; A_{\mu}(x^{\mu},y) \; P_{1}^{-1}   \\
    A_{y}(x^{\mu},-y)=-P_{1} \; A_{y}(x^{\mu},y) \; P_{1}^{-1} \; , \nonumber \\
    \nonumber  \\
    A_{\mu}(x^{\mu},\pi R-y)=P_{2} \; A_{\mu}(x^{\mu},\pi R+y) \; 
                             P_{2}^{-1}   \nonumber \\
    A_{y}(x^{\mu},\pi R-y)=-P_{2} \; A_{y}(x^{\mu},\pi R+y) \;
                            P_{2}^{-1} \; . \nonumber 
  \end{gather}   
  Due to (\ref{dinftyrelations}), the projection matrices $P_{1}$ and $P_{2}$ fulfil
  \begin{equation}
   \label{twistrelations}
   P_{1}^{2}=P_{2}^{2}=1 \quad , \quad T=P_{2}P_{1} \quad , 
   \quad TP_{1}T=P_{1} \; .
  \end{equation}
  The resulting physical space  $S^{1}/\mathbb{Z}_{2}$ is the interval $[0,\pi R]$. 

  At $y=0$, the gauge group $G$ is broken to the centraliser of $P_{1}$ in $G$
  \begin{equation}
    \label{unbrokenh1}
    H_{1}=\{ g \in G \mid P_{1}g=gP_{1} \}
  \end{equation}
  and  at $y=\pi R$ is broken to the centraliser of $P_{2}$ in $G$
  \begin{equation}
     \label{unbrokenh2}
    H_{2}=\{ g \in G \mid P_{2}g=gP_{2} \} \; .
  \end{equation}
  The low energy four-dimensional gauge group is given by the intersection
  \begin{equation}
   H=H_{1} \cap H_{2}=\{ g \in G \mid P_{1}g=gP_{1} \; \wedge P_{2}g=gP_{2}  \} \; .
  \end{equation}  
  It is remarkable that in general
  \begin{equation}
   \left[ P_{1},P_{2} \right] \neq 0
  \end{equation}
  This allows to reduce the $rank$ of $\mathfrak{g}$, i.e. $rank \; \mathfrak{h} < rank \; \mathfrak{g}$,
  where $\mathfrak{h}=Lie H$.

  In the last section we have argued that this reinterpretation  
  follows directly from the fact that the definition of the space group generators
  in $\mathbb{D}_{\infty}$ is not unique. In fact we can rewrite
  \begin{equation}
   \label{definitionrrprime}
   \M{D}_{\infty}=\langle r,r^{\prime} \mid r^{2}=r^{\prime 2}=1 \rangle \; ,
  \end{equation}  
  with $r^{\prime}=tr$. Remember that $r r^{\prime} \neq r^{\prime} r$. 
  In definition (\ref{definitionrrprime}) the two space group
  generators $r$ and $r^{\prime}$ are {\em{directly assigned}} to the fixed points
  $y_{1}=0$ and $y_{2}=\pi R$, respectively.  Figure \ref{figorbifold2}
  shows the representation of $r$ and $r^{\prime}$
  on $\mathbb{R}$, and the resulting space $S^{1}/\mathbb{Z}_{2}$.
  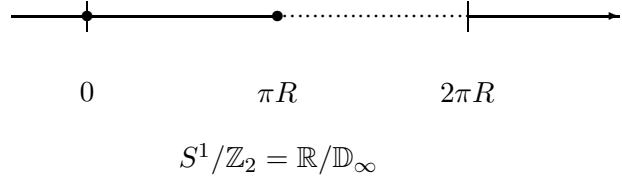
\begin{figure}[h]
   \begin{equation*}
    \begin{picture} (8,4)
    \thinlines
    \put (0,2) {\line (1,0) {3.5}}
    \multiput (1,2) (2.5,0){2} {\circle*{0,15}}
    \dottedline[\circle*{0.05}]{0.1}(3.5,2) (6,2)
    \put (6,2) {\vector (1,0) {2}}
    \multiput(1,1.8)(5,0) {2} {\line (0,1){0.4}}
    \multiputlist(1,1)(2.5,0){$0$,$\pi R$,$2\pi R$}
    \put(2.2,0) {$S^{1}/\mathbb{Z}_{2}=\mathbb{R}/\mathbb{D}_{\infty}$}
    \end{picture}
   \end{equation*}
   \label{figorbifold2}
   \caption{Representation of $r$ (black dot at $y=0$) and $r^{\prime}$ (black dot at $y=\pi R$)
            on $\mathbb{R}$ and the resulting space $S^{1}/\mathbb{Z}_{2}$.}
  \end{figure}
  The orbifold defined by (\ref{definitionrt}) and leading to the boundary
  condition (\ref{pboundarycondition}) and the periodicity condition
  (\ref{tboundarycondition}) is known as $S^{1}/\mathbb{Z}_{2}$ with twisted
  boundary conditions.

 \subsection{The orbifold $S^{1}/\mathbb{Z}_{2} \times \mathbb{Z}^{\prime}_{2}$}

  \label{sectionorbifolds1z2z2prime}

  In the literature, especially in orbifold GUTs, one is often 
  faced with the orbifold $S^{1}/\mathbb{Z}_{2} \times \mathbb{Z}^{\prime}_{2}$ \cite{Hebecker:2001wq}.
  It is constructed as follows. The starting point is a circle 
  $S^{1}$ of radius $R^{\prime}$. We divide $S^{1}$ by two 
  $\mathbb{Z}_{2}$ transformations 
  \begin{equation}
   \label{z2z2prime}
   \mathbb{Z}_{2}: \quad y \to -y  \quad , \quad
   \mathbb{Z}^{\prime}_{2}: \quad y^{\prime} \to -y^{\prime} \; , 
  \end{equation}
  where $y^{\prime}=y-\pi R^{\prime}/2$. In this case the resulting physical space is the 
  interval $[0,\pi R^{\prime}/2]$. 

  The gauge fields have to fulfil the boundary conditions
  \begin{gather}
    \label{boundaryconditionss1z2z2prime}
    A_{\mu}(x^{\mu},-y)=P \; A_{\mu}(x^{\mu},y) \; P^{-1}  \\
    A_{y}(x^{\mu},-y)=-P \; A_{y}(x^{\mu},y) \; P^{-1}  \; , \nonumber \\
    \nonumber  \\
    A_{\mu}(x^{\mu},-y^{\prime})=P^{\prime} A_{\mu}(x^{\mu},y^{\prime}) 
                             P^{\prime \; -1}  \nonumber \\
    A_{y}(x^{\mu},-y^{\prime})=-P^{\prime} A_{y}(x^{\mu},y^{\prime}) 
                            P^{\prime \; -1} \nonumber \; .
  \end{gather}   
  The projection matrices $P$ and $P^{\prime}$ fulfil
  \begin{equation}
   \label{twistrelations}
   P^{2}=P^{\prime \; 2}=1 \; .
  \end{equation} 
  If one compares the resulting physical space $S^{1}/\mathbb{Z}_{2} \times \mathbb{Z}^{\prime}_{2}
  =[0,\pi R^{\prime}/2]$ generated by the two reflections 
  (\ref{z2z2prime}) with the resulting physical space $S^{1}/\mathbb{Z}_{2}=[0,\pi R]$ 
  generated by the translation (\ref{s1condition}) 
  and the reflection (\ref{z2condition}) and the boundary conditions 
  (\ref{boundaryconditionss1z2z2prime}) and 
  (\ref{boundaryconditionss1z2twisted}), one observes that the orbifold
  $S^{1}/\mathbb{Z}_{2} \times \mathbb{Z}^{\prime}_{2}$ is equivalent to 
  the orbifold $S^{1}/\mathbb{Z}_{2}$ with twisted boundary conditions
  if we take
  \begin{equation}
   R^{\prime}=2 R \; .
  \end{equation}  
  Therefore we will not distinguish between the orbifold
  $S^{1}/\mathbb{Z}_{2} \times \mathbb{Z}^{\prime}_{2}$ and the orbifold
  $S^{1}/\mathbb{Z}_{2}$ with twisted boundary conditions. 

  In general, $[P,P^{\prime}] \neq 0$. However in orbifold GUTs it is assumed \cite{Hebecker:2001wq} that 
  projection matrices $P$ and $P^{\prime}$ commute
  \begin{equation}
   \label{commutingpprime}
   [P,P^{\prime}]=0 \; .
  \end{equation}
  This means that is this case the representation $P,P^{\prime}$ of $\mathbb{D}_{\infty}$ on the field space 
  is not faithful. Due to (\ref{commutingpprime}) the $rank$ of $\mathfrak{g}$ is not reduced.

 \section{Continuous versus discrete Wilson line breaking}
  
  \label{sectioncontinuousdiscretewislonlineshosotani}

  In this section we give an interpretation of the twist matrix $T$ in terms of Wilson lines 
  \cite{Scrucca:2004jn,Hall:2001tn}.
  Remember that the five-dimensional gauge field $A_{M}(x)$ has to fulfil the periodicity condition 
  \begin{equation}
   \label{tboundarycondition2}
   A_{M}(x^{\mu},y+2 \pi R)=T \; A_{M}(x^{\mu},y) \; T^{-1} \; .
  \end{equation}
  The twist matrix $T$ can always be interpreted as a Wilson line $W$
  \begin{equation}
   \label{wilsonline}
   W=\exp \left(2 \pi i g R  \left< A_{y} \right> \right)  \; ,
  \end{equation}
  where $\left< A_{y} \right>$ is a constant VEV for $A_{y}(x^{\mu},y)$. 
  However $W$ and therefore $\left< A_{y} \right>$ must be compatible with
  the boundary condition for $A_{y}(x^{\mu},y)$ (\ref{pboundarycondition}).
  This means that according to (\ref{dinftyrelations}) the orbifold projection $P$ and the Wilson
  line $W$ has to fulfil the consistency condition
  \begin{equation}
   \label{consistencycondition}
   (WP)^{2}=1 \; .
  \end{equation}
  In general $W$, and therefore $\left< A_{y}\right>$, need not to commute with $P$.
  To be more precise, suppose that boundary conditions for $A_{y}$ are given
  \begin{equation}
    \label{pboundaryconditionwilson1}
    A_{y}(x^{\mu},-y)=-P \; A_{y}(x^{\mu},y) \; P^{-1}  \; .
  \end{equation}

  Three possibilities \cite{Scrucca:2004jn} can occur
  \begin{enumerate}
   \item Let $\{ T^{a} \}$ denote the set of generators of $G$, which 
         fulfil simultaneously
         \begin{equation}
          \label{generatorsofhglobal}
          [P,T^{a}]=0 \quad , \quad [WP,T^{a}]=0 \; .
         \end{equation}
         Note, that these generators create the four-dimensional
         unbroken gauge group $H$ due to (\ref{unbrokenh1}) and
         (\ref{unbrokenh2}). The relations (\ref{generatorsofhglobal}) imply $[W,T^{a}]=0$.
         Thus (\ref{wilsonline}) can be written as
         \begin{equation}
          \label{discretewilsonline}
          W=\exp \left(2 \pi i g R \sum_{a} \left< A^{a}_{y} \right> T^{a} \right) 
         \end{equation}
         i.e. $A_{y}=\sum_{a} A^{a}_{y}T^{a}$. The Wilson line (\ref{discretewilsonline})
         commutes with every $T^{b} \in \{ T^{a} \}$
         \begin{equation}
          [\exp \left(2 \pi i g R \sum_{a} \left< A^{a}_{y} \right> T^{a} \right),T^{b}]=0 \; .
         \end{equation}
         Due to the
         minus sign in the boundary conditions (\ref{pboundaryconditionwilson1})
         $A_{y}$ is {\em{odd}} under $P$.
         Since $[P,T^{a}]=0$, $W$ also commutes with $P$
         \begin{equation}
          \label{norankreduction}
          [P,W]=0 \; .
         \end{equation}
         Together with (\ref{consistencycondition}) this yields 
         \begin{equation}
          \label{conditiondiscretewilsonline}
          W^{2}=1 \; .
         \end{equation}
         Thus $\left< A^{a}_{y} \right>$ in (\ref{discretewilsonline}) can take only special values
         compatible with (\ref{conditiondiscretewilsonline}). Therefore the Wilson line constructed
         from $A_{y}=A^{a}_{y}T^{a}$ is called a {\em{discrete Wilson line}}. Note, that because
         of (\ref{norankreduction}) a discrete Wilson line symmetry breaking preserve the
         $rank$, i.e. $rank \; \mathfrak{h}=rank \; \mathfrak{g}$ \footnote{Recall that
         $\mathfrak{h}=Lie H$ and $\mathfrak{g}=Lie G$.}.
   \item Let $\{T^{\hat{a}}\}$ denote the set of generators of $G$ which fulfil simultaneously 
         \begin{equation}
          \label{definitiongeneratorsthat}
          \{P,T^{\hat{a}}\}=0 \quad , \quad \{WP,T^{\hat{a}}\}=0 \; .
         \end{equation}
         This implies $[W,T^{\hat{a}}]=0$. 
         Thus (\ref{wilsonline}) can be written as 
         \begin{equation}
          \label{wilonlinecontious}
          W=\exp \left(2 \pi i g R  \sum_{\hat{a}} \langle A^{\hat{a}}_{y} \rangle T^{\hat{a}} \right) 
         \end{equation}
         i.e. $A_{y}=\sum_{\hat{a}} A^{\hat{a}}_{y} T^{\hat{a}}$.
         The Wilson line (\ref{wilonlinecontious}) commutes 
         \footnote{Let $W$ be given by (\ref{wilonlinecontious}). According to 
          (\ref{definitiongeneratorsthat}) we have for any $T^{\hat{b}} \in \{T^{\hat{a}}\}$:
          $WP T^{\hat{b}}=-W T^{\hat{b}} P \stackrel{!}{=} T^{\hat{b}} WP$. Thus $[W,T^{\hat{b}}]=0$ 
          for every $T^{\hat{b}}$.}  
         with every $T^{\hat{b}} \in \{ T^{\hat{a}} \}$
         \begin{equation}
          [\exp \left(2 \pi i g R  \sum_{\hat{a}} \langle A^{\hat{a}}_{y} \rangle T^{\hat{a}} \right),
          T^{\hat{b}}]=0 \; .
         \end{equation}
         Due to the minus sign in the boundary conditions (\ref{pboundaryconditionwilson1})
         $A_{y}$ is {\em{even}} under $P$.
         Since $\{ P,T^{\hat{a}} \}=0$, $W$ does not commute with $P$
         \begin{equation}
          \label{rankreduction}
          [P,W] \neq 0 \; .
         \end{equation}
         In this case the VEV for $A_{y}$ can be an arbitrary constant. Therefore, we
         call the Wilson line constructed from $A_{y}=\sum_{\hat{a}} A^{\hat{a}}_{y}T^{\hat{a}}$
         a {\em{continuous Wilson line}}. Due to (\ref{rankreduction}),
         a continuous Wilson line induces a spontaneous $rank$ reducing gauge symmetry
         breaking, i.e. $rank \; \mathfrak{h} < rank \; \mathfrak{g}$.
   \item The remaining generators of $G$, which are even under one effective projection and odd under
         the other, can never give rise to a consistent Wilson line $W$.
  \end{enumerate}
  Remarks: i) \;
  Following the line of thinking of section \ref{sectionorbifolds1z2z2prime},
  the orbifold $S^{1}/\mathbb{Z}_{2}$ with continuous Wilson line breaking 
  is equivalent to the orbifold 
  $S^{1}/\mathbb{Z}_{2} \times \mathbb{Z}^{\prime}_{2}$ if we allow the
  orbifold projection $P^{\prime}$ (\ref{twistrelations}) to depend on a
  continuous parameter. In this case rank reduction is also possible on the 
  orbifold $S^{1}/\mathbb{Z}_{2} \times \mathbb{Z}^{\prime}_{2}$.

  \begin{example}
   Let $G=SU(3)$ be the bulk gauge group and let $T^{A}=\lambda^{A}$ be the Gell-Mann matrices
   generating $SU(3)$. We break $G=SU(3)$ down to
   $H_{1}=SU(2) \times U(1)$ at the orbifold fixed point $y_{1}=0$ by choosing
   \begin{equation}
    \label{su3twist}
    P_{1}=\exp(\pi i \lambda_{3})=
         \left(\begin{array}{ccc} -1 & 0 & 0 \\ 0 & -1 & 0 \\ 0 & 0 & 1 
         \end{array} \right)\; ,
   \end{equation}
   where $H_{1}$ is generated by $\{T^{a}\}, \; a=1,2,3,8$ and the coset
   $G/H_{1}$ is generated by $\{T^{\hat{a}}\}, \; \hat{a}=4,5,6,7$. Note that $P_{1} \in G$. 

   Let us first consider a discrete Wilson line, for example
   \begin{equation}
    \label{wilsonlineexmaple}
    W=\exp(\pi i (\lambda_{3}+\sqrt{3}\lambda_{8})/2 )
     =\left( \begin{array}{ccc} -1 & 0 & 0 \\ 0 & 1 & 0 \\ 0 & 0 & -1 
      \end{array} \right) \; .
   \end{equation} 
   This Wilson line leads to the breaking $H_{1} \to H=U(1) \times U(1)$ 
   generated by $T^{3},T^{8}$. Alternatively, we can directly assign the 
   projection $P_{2}$ to the fixed point $y_{2}=\pi R$
   \begin{equation}
    P_{2}=WP_{1}=\exp(\pi i (2\lambda_{3}+\sqrt{3}\lambda_{8})/2) 
                \left( \begin{array}{ccc} 1 & 0 & 0 \\ 0 & -1 & 0 
                \\ 0 & 0 & -1 \end{array} \right) \; .
   \end{equation}
   Thus $G$ is broken at the orbifold fixed point $y_{2}=\pi R$ down to $H_{2}=SU(2) \times U(1)$
   generated by $\{T^{a^{\prime}}\}, \; a=3,6,7,8$. In fact, $H=H_{1} \cap H_{2}$ is generated by
   $T^{3},T^{8}$.
   For $T^{a} \in \{ T^{3},T^{8} \}$ the following relations hold 
   \begin{equation}
    [P_{1},T^{a}]=[WP_{1},T^{a}]=[W,T^{a}]=0 \; .
   \end{equation}
   The projection matrices fulfil 
   \begin{equation}
    P^{2}_{1}=P^{2}_{2}=1 \; .
   \end{equation}
   In particular we have
   \begin{equation}
    W^{2}=1 \; .
   \end{equation}
   Let us construct the Wilson line $W$ explicitly. Since $A_{y}=\sum_{a} A^{a}_{y} T^{a}=A^{3}_{y}T^{3}+
   A^{8}_{y}T^{8}$ we have
   \begin{equation}
    \label{discretewilsonlineexample}
    W=\exp(2\pi i g R \left< A^{3}_{y} \right> T^{3}+ 2\pi i g R 
     \left< A^{8}_{y} \right> T^{8}) \; .
   \end{equation}
   We can built four different discrete Wilson lines
   \begin{equation}
    \left< A^{3}_{y} \right>=0 \; , \; 
    \left< A^{8}_{y} \right>=0 \; \to \;
     W=\left( \begin{array}{ccc} 1 & 0 & 0 \\ 0 & 1 & 0 \\ 0 & 0 & 1 
     \end{array} \right) \; ,
   \end{equation}
   \begin{equation}
    \left< A^{3}_{y} \right>=\frac{1}{2gR} \; , \;
    \left< A^{8}_{y} \right>=0 \; \to \; 
    W=\left( \begin{array}{ccc} -1 & 0 & 0 \\ 0 & -1 & 0 \\ 0 & 0 & 1 
     \end{array} \right) \; ,
   \end{equation}
   \begin{equation}
    \label{discretewilsonline1}
    \left< A^{3}_{y} \right>=-\frac{1}{2gR} \; , \;
    \left< A^{8}_{y} \right>=\frac{\sqrt{3}}{4gR} \; \to \; 
    W=\left( \begin{array}{ccc} 1 & 0 & 0 \\ 0 & -1 & 0 \\ 0 & 0 & -1 
     \end{array} \right) \; ,
   \end{equation}
   \begin{equation}
    \label{discretewilsonline2}
    \left< A^{3}_{y} \right>=\frac{1}{4gR} \; , \; 
    \left< A^{8}_{y} \right>=\frac{\sqrt{3}}{4gR} \; \to \;
    W=\left( \begin{array}{ccc} -1 & 0 & 0 \\ 0 & 1 & 0 \\ 0 & 0 & -1 
     \end{array} \right)
    \footnote{We stress that is possible to rewrite (\ref{discretewilsonlineexample}) is terms of
    equivalent sets of generators of $SU(3)$ such that one VEV equals $\frac{1}{2gR}$ while the other
    equals $0$, i.e.
    \begin{itemize}
     \item
      (\ref{discretewilsonline1}) can be rewritten as 
      \begin{equation*}
       W=\exp(2\pi i g R 
       \left< A^{\eta}_{y} \right> \eta + 2\pi i g R 
       \left< A^{\eta^{\prime}}_{y} \right> \eta^{\prime})=\exp( \pi i \; \eta)
      \end{equation*}
      where $\eta=\frac{1}{2} \left(\sqrt{3} \lambda_{8}
      - \lambda_{3} \right)$, $\eta^{\prime}=\frac{1}{2} \left(\lambda_{8}
      - \sqrt{3} \lambda_{3} \right)$ and $\left< A^{\eta}_{y} \right>=\frac{1}{2gR}$ and
      $\left< A^{\eta^{\prime}}_{y} \right>=0$.
     \item (\ref{discretewilsonline2}) can be rewritten as 
      \begin{equation*}
       W=\exp(2\pi i g R 
       \left< A^{\rho}_{y} \right> \rho + 2\pi i g R 
       \left< A^{\rho^{\prime}}_{y} \right> \rho^{\prime})=\exp( \pi i \; \rho)
      \end{equation*}
      where $\rho=\frac{1}{2} \left( \lambda_{3}
      + \sqrt{3} \lambda_{8} \right)$, $\rho^{\prime}=\frac{1}{2} \left(\sqrt{3} \lambda_{3} + \lambda_{8}
      \right)$ and $\left< A^{\rho}_{y} \right>=\frac{1}{2gR}$ and
      $\left< A^{\rho^{\prime}}_{y} \right>=0$.
     \end{itemize}
     Note that with the definitions for $\eta$, $\eta^{\prime}$, $\rho$ and $\rho^{\prime}$ above the
     three equivalent sets of generators of $SU(3)$ read: $\{ \lambda_{1},\lambda_{2},\lambda_{3}
     ,\lambda_{4},\lambda_{5},\lambda_{5},\lambda_{7},\lambda_{8}\}$, $\{\lambda_{1},\lambda_{2},
     \rho,\lambda_{4},\lambda_{5},\lambda_{5},\lambda_{7},\rho^{\prime}\}$ and $\{\lambda_{1},
     \lambda_{2},\lambda_{4},\lambda_{5},\eta,\lambda_{6},\lambda_{7},\eta^{\prime}\}$} \; . 
   \end{equation}
   We see that the Wilson line (\ref{wilsonlineexmaple}) is obtained from the
   choice $\left< A^{3}_{5} \right>=\frac{1}{4gR}$ and $\left< A^{8}_{5} 
   \right>=\frac{\sqrt{3}}{4gR}$ in (\ref{discretewilsonlineexample}). Note that the 
   orbifold projection $P_{1}$ (\ref{su3twist}) commutes with every discrete Wilson line $W$. 
   This shows explicitly that discrete Wilson line breaking is rank preserving.

   Next we choose a continuous Wilson line, e.g.
   \begin{equation}
    W(\alpha)=\exp(2 \pi i \; \alpha \lambda_{7})
     =\left(\begin{array}{ccc} 1 & 0 & 0 \\ 0 & \cos\; 2 \pi \alpha & \sin\; 2 \pi \alpha \\
            0 & -\sin\; 2 \pi \alpha & \cos\; 2 \pi \alpha 
     \end{array} \right) \; .
   \end{equation} 
   Let the parameter $\alpha$ be limited to $0 < \alpha < 1$ but otherwise 
   arbitrary. This Wilson line leads to the breaking $H_{1} \to H=U(1)$.
   Alternatively, we again directly assign the 
   projection $P_{2}$ to the fixed point $y_{2}=\pi R$
   \begin{equation}
    P_{2}(\alpha)=W(\alpha)P_{1}=\left(\begin{array}{ccc} -1 & 0 & 0 \\ 
                 0 & -\cos\; 2 \pi \alpha & \sin\; 2 \pi \alpha \\
                 0 & \sin\; 2 \pi \alpha & \cos\; 2 \pi \alpha 
                 \end{array} \right) \; .
   \end{equation}
   $P_{2}$ now depends on $\alpha$.
   For $T^{\hat{a}} \in \{T^{\hat{7}}\}$ the following relations hold 
   \begin{equation}
    \{P_{1},T^{\hat{a}}\}=\{WP_{1},T^{\hat{a}}\}=[W,T^{\hat{a}}]=0 \; .
   \end{equation}
   Therefore $A_{y}=\sum_{\hat{a}} A^{\hat{a}}_{y} T^{\hat{a}}
   =A^{\hat{7}}_{y} T^{\hat{7}}$. 
   The projection matrices $P_{1}$ and $P_{2}$ fulfil 
   \begin{equation}
    P^{2}_{1}=P^{2}_{2}=1 \; .
   \end{equation}
   But now we obviously have
   \begin{equation}
    W^{2} \neq 1 \; .
   \end{equation}  
   Let us again explicitly construct the Wilson line $W$. Since
   $A_{y}=A^{\hat{7}}_{y} T^{\hat{7}}$ we have
   \begin{equation}
    W=\exp(2\pi i g R < A^{\hat{7}}_{y} > T^{\hat{7}})
     =\exp(2 \pi i \;  \alpha \lambda_{7}) \; ,
   \end{equation}
   where
   \begin{equation}
    \label{higgsvev}
    < A^{\hat{7}}_{y} >=\frac{\alpha}{g R} \; .
   \end{equation}
   For $0 < \alpha \ll 1$ the VEV for $A_{y}$ (\ref{higgsvev}) can be much smaller
   than the compactification scale $1/R$.
  \end{example}

 \section{Fourier expansion and zero modes on $S^{1}/\mathbb{Z}_{2}$}

  In this section, we discuss the Fourier mode expansions 
  of $A_{\mu}(x^{\mu},y)$ and $A_{y}(x^{\mu},y)$ on the orbifold $S^{1}/\mathbb{Z}_{2}$.
  Recall that the gauge fields have to fulfil the boundary conditions
  \begin{gather}
    \label{boundaryconditionsfromtwistp}
    A_{\mu}(x^{\mu},-y)=P \; A_{\mu}(x^{\mu},y) \; P^{-1} \\
    \label{boundaryconditionsfromtwistpsc}
    A_{y}(x^{\mu},-y)=-P \; A_{y}(x^{\mu},y) \; P^{-1}  \; . 
  \end{gather} 
  and the periodicity condition 
  \begin{equation}
   \label{periodictyforsu7}
   A_{M}(x^{\mu},y+2 \pi R)=W \; A_{M}(x^{\mu},y) \; W^{-1} \; ,
  \end{equation}
  where $W$ is the corresponding Wilson line. In general, three cases can arise
  \begin{enumerate}
   \item
    $W=1$. This means that we admit the trivial periodicity condition, i.e.
    \begin{equation}
     \label{trivialperidocity}
     A_{M}(x^{\mu},y+2 \pi R)=A_{M}(x^{\mu},y) \; .
    \end{equation}  
    The boundary condition (\ref{boundaryconditionsfromtwistp}) 
    breaks the bulk gauge group $G$ down to its subgroup
    $H_{y_{1}}$ 
    \begin{equation}
     \label{gaugegroup4dlowenergy}
     H_{y_{1}}=\{ g \in G \mid Pg=gP \}
    \end{equation}
    at the fixed point $y_{1}=0$. Let $\{T^{a}\}$ denote the set of generators
    creating $H_{y_{1}}$ and let $\{T^{\hat{a}}\}$ denote the set of generators creating the coset $G/H$.
    In following we call $T^{a}$ the unbroken generators and $T^{\hat{a}}$
    the broken generators, respectively.

    According to (\ref{boundaryconditionsfromtwistp}) and (\ref{boundaryconditionsfromtwistpsc}) 
    unbroken gauge $A^{a}_{\mu}(x^{\mu},y)$ and the scalar fields $A^{\hat{a}}_{y}(x^{\mu},y)$
    \footnote{Note that from a four-dimensional point of view 
    the gauge fields $A_{y}(x^{\mu},y)$ are seen as scalar fields. Therefore we will also call 
    $A_{y}(x^{\mu},y)$ scalar fields.}     
    are even functions, i.e.
    \begin{gather}
     A^{a}_{\mu}(x^{\mu},-y)=A^{a}_{\mu}(x^{\mu},y) \\
     A^{\hat{a}}_{y}(x^{\mu},-y)=A^{\hat{a}}_{y}(x^{\mu},y) \nonumber \; .
    \end{gather}
    Thus we can Fourier expand
    \begin{eqnarray}
     \label{fourierexpandunbrokengaugeandscalarfields}
     && A^{a}_{\mu}(x^{\mu},y)=\frac{1}{\sqrt{2 \pi R}}
     A_{\mu}^{a}{}^{(0)}(x^{\mu})+\frac{1}{\sqrt{\pi R}}\sum_{n=1}^{\infty} 
     A_{\mu}^{a}{}^{(n)}(x^{\mu}) \cos(\frac{n y}{R}) \\
     && A^{\hat{a}}_{y}(x^{\mu},y)=\frac{1}{\sqrt{2 \pi R}}
     A_{y}^{\hat{a}}{}^{(0)}(x^{\mu})+ \frac{1}{\sqrt{\pi R}}
     \sum_{n=1}^{\infty} 
     A_{y}^{\hat{a}}{}^{(n)}(x^{\mu}) \cos(\frac{n y}{R})  \nonumber \; .
    \end{eqnarray} 
    Since $\cos(\frac{n y}{R})$ is $2 \pi R$-periodic,
    $A^{a}_{\mu}(x^{\mu},y)$ and $A^{\hat{a}}_{y}(x^{\mu},y)$ fulfil also
    the periodicity condition (\ref{trivialperidocity}). 
    Note that for the scalar fields $A_{y}(x^{\mu},y)$ 
    the situation is opposite ($a$ and $\hat{a}$ are interchanged)
    due to the relative minus sign in the boundary conditions 
    (\ref{boundaryconditionsfromtwistp}) and (\ref{boundaryconditionsfromtwistpsc}), respectively. 
    
    The Fourier coefficients 
    $A_{\mu}^{a}{}^{(n)}(x^{\mu})$ and $A_{y}^{\hat{a}}{}^{(n)}(x^{\mu})$ are
    given by 
    \begin{gather}
     \label{nmodesintegration}
     A_{\mu}^{a}{}^{(n)}(x^{\mu})=\frac{1}{\sqrt{\pi R}} \int_{0}^{\pi R} 
     A^{a}_{\mu}(x^{\mu},y) \cos(\frac{n y}{R}) \;  dy  \\
     A_{y}^{\hat{a}}{}^{(n)}(x^{\mu})=\frac{1}{\sqrt{\pi R}} \int_{0}^{\pi R} 
     A^{\hat{a}}_{y}(x^{\mu},y) \cos(\frac{n y}{R}) \;  dy \nonumber \; .
    \end{gather}
    The zero modes read
    \begin{gather}
     A_{\mu}^{a}{}^{(0)}(x^{\mu})=\frac{1}{\sqrt{2 \pi R}} \int_{0}^{\pi R} 
     A^{a}_{\mu}(x^{\mu},y) \;  dy  \\
     A_{y}^{\hat{a}}{}^{(0)}(x^{\mu})=\frac{1}{\sqrt{2 \pi R}} 
     \int_{0}^{\pi R} A^{\hat{a}}_{y}(x^{\mu},y) \;  dy \nonumber \; .
    \end{gather}

    On the other hand, according to (\ref{boundaryconditionsfromtwistp}) and 
    (\ref{boundaryconditionsfromtwistpsc}), broken gauge $A^{\hat{a}}_{\mu}(x^{\mu},y)$ and
    the scalar fields $A^{a}_{y}(x^{\mu},y)$ are odd functions, i.e.
    \begin{gather}
     A^{\hat{a}}_{\mu}(x^{\mu},-y)=-A^{\hat{a}}_{\mu}(x^{\mu},y) \\
     A^{a}_{y}(x^{\mu},-y)=-A^{a}_{y}(x^{\mu},y) \nonumber \; .
    \end{gather}
    Thus we can Fourier expand 
    \begin{eqnarray}
     \label{fourierexpandbrokengaugeandscalarfields}
     && A^{\hat{a}}_{\mu}(x^{\mu},y)=\frac{1}{\sqrt{\pi R}}\sum_{n=1}^{\infty} 
     A_{\mu}^{\hat{a}}{}^{(n)}(x^{\mu}) \sin(\frac{n y}{R}) \\
     && A^{a}_{y}(x^{\mu},y)=\frac{1}{\sqrt{\pi R}}\sum_{n=1}^{\infty} 
     A_{y}^{a}{}^{(n)}(x^{\mu}) \sin(\frac{n y}{R})  \nonumber \; .
    \end{eqnarray} 
    Again since $\sin(\frac{n y}{R})$ is $2 \pi R$-periodic,
    $A^{\hat{a}}_{\mu}(x^{\mu},y)$ and $A^{a}_{y}(x^{\mu},y)$ fulfil also
    the periodicity condition (\ref{trivialperidocity}).

    The Fourier coefficients $A_{\mu}^{\hat{a}}{}^{(n)}(x^{\mu})$ and 
    $A_{y}^{a}{}^{(n)}(x^{\mu})$ are given by 
    \begin{gather}
     A_{\mu}^{\hat{a}}{}^{(n)}(x^{\mu})=\frac{1}{\sqrt{\pi R}} 
     \int_{0}^{\pi R} 
     A^{\hat{a}}_{\mu}(x^{\mu},y) \sin(\frac{n y}{R}) \;  dy  \\
     A_{y}^{a}{}^{(n)}(x^{\mu})=\frac{1}{\sqrt{\pi R}} \int_{0}^{\pi R} 
     A^{a}_{y}(x^{\mu},y) \sin(\frac{n y}{R}) \;  dy \nonumber \; .
    \end{gather}
    Note that in contrast to (\ref{fourierexpandunbrokengaugeandscalarfields}) in 
    (\ref{fourierexpandbrokengaugeandscalarfields})
    no zero modes occurs. Since only (\ref{fourierexpandunbrokengaugeandscalarfields}) contains zero modes
    the gauge group of the low energy four-dimensional effective theory is given by 
    (\ref{gaugegroup4dlowenergy}).

   \item
    $W$ is a discrete Wilson line. This means that $W^{2}=1$. We know that
    a discrete Wilson line $W$ commutes with the orbifold projection $P$
    \begin{equation}
     [P,W]=0  \; .
    \end{equation}
    Therefore $W$ and $P$ have a common set of 
    eigenfunctions. In order to find their eigenfunctions
    we first look at the periodicity condition 
    \begin{equation}
     A_{M}(x^{\mu},y+2 \pi R)=W \; A_{M}(x^{\mu},y) \; W^{-1} \; .
    \end{equation}
    Due to $W^{2}=1$ the five-dimensional gauge field $A_{M}(x^{\mu},y)$ splits into 
    an even part
    \begin{equation}
     A^{a}_{M}(x^{\mu},y+2 \pi R)T^{a}=\; + \; A^{a}_{M}(x^{\mu},y)T^{a} \; ,
    \end{equation}
    where the $\{T^{a}\}$ satisfy $[W,T^{a}]=0$, and an odd part
    \begin{equation}
     A^{\hat{a}}_{M}(x^{\mu},y+2 \pi R)T^{\hat{a}}
     =\; - \; A^{\hat{a}}_{M}(x^{\mu},y)T^{\hat{a}} 
    \end{equation}
    where the $\{T^{\hat{a}}\}$ satisfy $\{W,T^{\hat{a}}\}=0$. Taking further into
    account that the orbifold projection $P$ acts at $y_{1}=0$ according to
    \begin{gather}
     \label{boundaryconditionsamumy}
     A_{\mu}(x^{\mu},-y)=P \; A_{\mu}(x^{\mu},y) \; P^{-1} \\
     A_{y}(x^{\mu},-y)=-P \; A_{y}(x^{\mu},y) \; P^{-1}  \nonumber \; , 
    \end{gather}
    we can Fourier expand \cite{Haba:2004qf}
    \begin{eqnarray}
     && A_{\mu}^{(+,+)}(x^{\mu},y)=\frac{1}{\sqrt{2 \pi R}} 
     A_{\mu}^{(+,+)(0)}(x^{\mu})+\frac{1}{\sqrt{\pi R}} \sum_{n=1}^{\infty} 
     A_{\mu}^{(+,+)(n)}(x^{\mu}) \cos(\frac{n y}{R}) \; , \nonumber \\
     && A_{\mu}^{(+,-)}(x^{\mu},y)=\frac{1}{\sqrt{\pi R}}\sum_{n=0}^{\infty} 
     A_{\mu}^{(+,-)(n)}(x^{\mu}) \sin(\frac{n y}{R}) \; , \nonumber \\
     && A_{\mu}^{(-,+)}(x^{\mu},y)=\frac{1}{\sqrt{\pi R}}\sum_{n=0}^{\infty} 
     A_{\mu}^{(-,+)(n)}(x^{\mu}) \cos(\frac{(n+1/2) y}{R}) \; , \nonumber \\
     && A_{\mu}^{(-,-)}(x^{\mu},y)=\frac{1}{\sqrt{\pi R}}\sum_{n=0}^{\infty} 
     A_{\mu}^{(-,-)(n)}(x^{\mu}) \sin(\frac{(n+1/2) y}{R})  \; .
     \label{fourierexpannsionwp}
    \end{eqnarray}
    The superscript $(\pm,\pm)$ denotes the eigenvalue of $W$ and $P$, respectively.    
    This means that $A_{\mu}^{(+,+)}(x^{\mu},y)=A_{\mu}^{++}(x^{\mu},y)T^{++}$ 
    commutes with $W$ and $P$
    \begin{equation}
     [W,T^{++}]=[P,T^{++}]=0 \; ,
    \end{equation} 
    $A_{\mu}^{(+,-)}(x^{\mu},y)=A_{\mu}^{+-}(x^{\mu},y)T^{+-}$ 
    commutes with $W$ and anticommutes with $P$
    \begin{equation}
     [W,T^{+-}]=\{P,T^{+-}\}=0 \; .
    \end{equation}  
    $A_{\mu}^{(-,+)}(x^{\mu},y)=A_{\mu}^{-+}(x^{\mu},y)T^{-+}$ 
    anticommutes with $W$ and commutes with $P$
    \begin{equation}
     \{W,T^{-+}\}=[P,T^{-+}]=0  \; ,
    \end{equation} 
    and $A_{\mu}^{(-,-)}(x^{\mu},y)=A_{\mu}^{--}(x^{\mu},y)T^{--}$ 
    anticommutes with $W$ and $P$
    \begin{equation}
     \{W,T^{--}\}=\{P,T^{--}\}=0 \; .
    \end{equation}
    The expansion for $A_{y}$ is done in the same way but, due to (\ref{boundaryconditionsamumy}),
    with the opposite eigenvalue for $W$ and $P$.

    The Fourier coefficient for $A_{\mu}^{++(n)}(x^{\mu})$ is given by
    \begin{equation}
     A_{\mu}^{++(n)}(x^{\mu})=\frac{1}{\sqrt{\pi R}} \int_{0}^{\pi R} 
     A^{++}_{\mu}(x^{\mu},y) \cos(\frac{n y}{R}) \;  dy \; .
    \end{equation}
    All other Fourier coefficients can be obtained in an analogous manner. 
    The zero modes are contained in $A_{\mu}^{(+,+)(0)}(x^{\mu},y)$ and read
    \begin{equation}
     \label{4dunbrokengaugefield}
     A_{\mu}^{(+,+)(0)}(x^{\mu})= A_{\mu}^{++(0)}(x^{\mu})T^{++} \; .
    \end{equation} 
    The Fourier coefficient $A_{\mu}^{++(0)}(x^{\mu})$ is given by
    \begin{equation}
     A_{\mu}^{++(0)}(x^{\mu})=\frac{1}{\sqrt{2\pi R}} \int_{0}^{\pi R} 
     A_{\mu}^{++}(x^{\mu},y) \;  dy \; . 
    \end{equation}
    The low energy four-dimensional unbroken gauge group $H$ is created by the generators
    $\{T^{++}\}$. Note that the generators $\{T^{++}\}$ commute with $W$ and $P$, i.e.
    \begin{equation}
     [W,T^{++}]=[P,T^{++}]=0 \; .
    \end{equation}      
    
    If we switch to the reinterpretation of $S^{1}/\mathbb{Z}_{2}$ in terms
    of the two effective orbifold projections  
    $P_{1}=P$ and $P_{2}=WP$ \footnote{Note that due to $[P,W]=0$ we have $[P,WP]=0$ and thus also
    $P_{1}=P$ and $P_{2}=WP$ have a common set of eigenfunctions}
    the orbifold boundary conditions
    read (\ref{boundaryconditionss1z2twisted}) 
    \begin{gather}
     A_{\mu}(x^{\mu},-y)=P_{1} \; A_{\mu}(x^{\mu},y) \; P_{1}^{-1}  \\
     A_{y}(x^{\mu},-y)=-P_{1} \; A_{y}(x^{\mu},y) \; P_{1}^{-1}  \; , \nonumber \\
     \nonumber  \\
     A_{\mu}(x^{\mu},\pi R-y)=P_{2} \; A_{\mu}(x^{\mu},\pi R+y) \;
                              P_{2}^{-1} \nonumber \\
     A_{y}(x^{\mu},\pi R-y)=-P_{2} \; A_{y}(x^{\mu},\pi R+y) \; 
                             P_{2}^{-1} \nonumber \; .
    \end{gather}  
    These boundary conditions lead to the Fourier expansion
    \begin{eqnarray}
     && A_{\mu}^{(+,+)}(x^{\mu},y)=\frac{1}{\sqrt{2 \pi R}} 
     A_{\mu}^{(+,+)(0)}(x^{\mu})+\frac{1}{\sqrt{\pi R}} \sum_{n=1}^{\infty} 
     A_{\mu}^{(+,+)(n)}(x^{\mu}) \cos(\frac{n y}{R}) \; , \nonumber \\
     && A_{\mu}^{(+,-)}(x^{\mu},y)=\frac{1}{\sqrt{\pi R}}\sum_{n=0}^{\infty} 
     A_{\mu}^{(+,-)(n)}(x^{\mu}) \cos(\frac{(n+1/2) y}{R}) \; , \nonumber \\
     && A_{\mu}^{(-,+)}(x^{\mu},y)=\frac{1}{\sqrt{\pi R}}\sum_{n=0}^{\infty} 
     A_{\mu}^{(-,+)(n)}(x^{\mu}) \sin(\frac{(n+1/2) y}{R}) \; , \nonumber \\
     && A_{\mu}^{(-,-)}(x^{\mu},y)=\frac{1}{\sqrt{\pi R}}\sum_{n=0}^{\infty} 
     A_{\mu}^{(-,-)(n)}(x^{\mu}) \sin(\frac{(n+1) y}{R})  \; .
     \label{fouriermodeexpansionz2twistedbc}
    \end{eqnarray}   
    The subscript $(\pm,\pm)$ denote the eigenvalues of $(P_{1},P_{2})$. 
    The connection to (\ref{fourierexpannsionwp}) becomes apparent if
    we looks at the different eigenvalues
    \begin{equation}
     \begin{array}{|c|c|c|c|}  \hline
       W & P & P_{1} & P_{2} \\ \hline
       + & + & + & + \\ \hline
       + & - & - & - \\ \hline
       - & + & + & - \\ \hline
       - & - & - & + \\ \hline
     \end{array} 
    \end{equation}
    Recall again that $P_{1}=P$ and $P_{2}=W P$.
    The gauge group $G$ is broken to its subgroup $H_{1}$ generated by
    $\{T^{++},T^{+-}\}$ at $y_{1}=0$ and to its subgroup $H_{2}$ generated by
    $\{T^{++},T^{-+}\}$  at $y_{2}=\pi R$. 
    The  low energy four-dimensional unbroken gauge group $H$ is thus generated by $\{T^{++}\}$ and we have
    \begin{equation}
     H=H_{1} \cap H_{2} \; .
    \end{equation}

  \subsection{Fourier expansion and zero modes on 
              $S^{1}/\mathbb{Z}_{2} \times \mathbb{Z}^{\prime}_{2}$} 

   Since the orbifold $S^{1}/\mathbb{Z}_{2} \times \mathbb{Z}^{\prime}_{2}$ 
   is familiar in orbifold GUTs, we shortly discuss their Fourier 
   mode expansion. We recall that the gauge fields have to fulfil the boundary conditions 
   (\ref{boundaryconditionss1z2z2prime})
   \begin{gather}
    A_{\mu}(x^{\mu},-y)=P \; A_{\mu}(x^{\mu},y) \; P^{-1}  \\
    A_{y}(x^{\mu},-y)=-P \; A_{y}(x^{\mu},y) \; P^{-1} \; , \nonumber \\
    \nonumber  \\
    A_{\mu}(x^{\mu},-y^{\prime})=P^{\prime} \; A_{\mu}(x^{\mu},y^{\prime}) \;
                             P^{\prime \; -1} \nonumber \\
    A_{y}(x^{\mu},-y^{\prime})=-P^{\prime} \; A_{y}(x^{\mu},y^{\prime}) \;
                            P^{\prime \; -1} \nonumber \; .
   \end{gather}   
   Remember that $y^{\prime}=y-\pi R^{\prime}/2$ and 
   $\mathbb{Z}_{2}: \;y \to -y$, $\mathbb{Z}^{\prime}_{2}: \; y^{\prime} 
   \to -y^{\prime}$. 
   The orbifold $S^{1}/\mathbb{Z}_{2} \times \mathbb{Z}^{\prime}_{2}$
   has the two fixed points $y=0$ and $y=\pi R^{\prime}/2$. Fourier expanding yields
   \begin{eqnarray}
     && A_{\mu}^{(+,+)}(x^{\mu},y)=\frac{1}{\sqrt{\pi R^{\prime}}} 
     A_{\mu}^{(+,+)(0)}(x^{\mu})+\frac{1}{\sqrt{\pi R^{\prime}/2}} \sum_{n=1}^{\infty} 
     A_{\mu}^{(+,+)(n)}(x^{\mu})  \cos(\frac{2 n y}{R^{\prime}}) \; , \nonumber \\
     && A_{\mu}^{(+,-)}(x^{\mu},y)=\frac{1}{\sqrt{\pi R^{\prime}/2}}\sum_{n=0}^{\infty} 
     A_{\mu}^{(+,-)(n)}(x^{\mu}) \cos(\frac{(2n+1) y}{R^{\prime}}) \; , \nonumber \\
     && A_{\mu}^{(-,+)}(x^{\mu},y)=\frac{1}{\sqrt{\pi R^{\prime}/2}}\sum_{n=0}^{\infty} 
     A_{\mu}^{(-,+)(n)}(x^{\mu}) \sin(\frac{(2n+1) y}{R^{\prime}}) \; , \nonumber \\
     && A_{\mu}^{(-,-)}(x^{\mu},y)=\frac{1}{\sqrt{\pi R^{\prime}/2}}\sum_{n=0}^{\infty} 
     A_{\mu}^{(-,-)(n)}(x^{\mu}) \sin(\frac{(2n+2) y}{R^{\prime}}) \; .
     \label{fouriermodeexpansionz2z2prime}
   \end{eqnarray}   
   The expansion for $A_{y}$ is again done in the same way but with the opposite eigenvalue for $W$ and $P$.
   We already know that the orbifold
   $S^{1}/\mathbb{Z}_{2} \times \mathbb{Z}^{\prime}_{2}$ is equivalent to 
   the orbifold $S^{1}/\mathbb{Z}_{2}$ with twisted boundary conditions.
   To show this equivalence also for the Fourier mode expansion we have to keep in mind that 
   \begin{equation}
    \label{relationradii}
    R^{\prime}=2 R \; .
   \end{equation} 
   Indeed, if we insert (\ref{relationradii}) in (\ref{fouriermodeexpansionz2z2prime}) we
   recover (\ref{fouriermodeexpansionz2twistedbc}). The Fourier mode 
   expansion (\ref{fouriermodeexpansionz2z2prime}) well known in
   orbifold GUTs \cite{Hebecker:2001wq,Hall:2001pg}.

  \item
   $W$ is a continuous Wilson line. In this case the Wilson line
   $W$ and the projection $P$ do not commute
   \begin{equation}
    [P,W] \neq 0 \; .
   \end{equation}
   Therefore $P$ and $W$ do not have a common set of eigenfunctions. 
  
  \end{enumerate}

  \section[The Hosotani mechanism on the orbifold $S^{1}/\mathbb{Z}_{2}$]{Continuous Wilson line
           breaking and the Hosotani mechanism on the orbifold $S^{1}/\mathbb{Z}_{2}$}

   \label{sectionhosotanimechanism}

   In section \ref{sectioncontinuousdiscretewislonlineshosotani} we have seen that 
   a continuous Wilson line is given by
   \begin{equation}
    \label{hosotanicontwilsonline}
    W=\exp \left(2 \pi i g R  \sum_{\hat{a}} \langle A^{\hat{a}}_{y} \rangle T^{\hat{a}} \right) 
   \end{equation}
   where $T^{\hat{a}} \in H_{W}$ and
   \begin{equation}
    \label{hwhosotani}
    H_{W}=\{T^{\hat{a}} \in G \mid \{T^{\hat{a}},P_{1}\}=\{T^{\hat{a}},P_{2}\}=0 \} \; .
   \end{equation}
   We consider $x$- and $y$-independent modes of  
   \begin{equation}
    A_{y}=\sum_{\hat{a}} A_{y}^{\hat{a}} T^{\hat{a}} \quad , \; T^{\hat{a}} \in H_{W} \; .
   \end{equation}
   They correspond to Wilson line phases via \cite{Haba:2002py}
   \begin{equation}
    \theta_{\hat{a}}:=g \pi R A^{\hat{a}}_{y}  \; .
   \end{equation}
   Wilson line phases are part of the Hosotani mechanism 
   \cite{Hosotani:1988bm}. The Hosotani mechanism is used in a series of papers
   \cite{Haba:2002py,Hosotani:2003ay,Hosotani:2004wv,Hosotani:2005fk,Scrucca:2003ra,Kubo:2001zc}.
   Here we describe the main ingredients \cite{Haba:2002py,Hosotani:2003ay} of the Hosotani mechanism:
   \begin{itemize}
    \item
     Wilson line phases $\theta_{\hat{a}}$ along noncontractible
     loops become physical degrees of freedom which cannot be gauged away
     once boundary conditions are given. They yield vanishing field strengths
     such that they appear as degenerate vacua at the classical level.
    \item
     The degeneracy of the classical vacuum is in general lifted by quantum
     effects. Let $V_{eff}=V_{eff}(\theta_{\hat{a}})$ be the effective potential
     for the Wilson line phases $\theta_{\hat{a}}$. Then the true physical vacuum
     is given by those configurations of the Wilson line phases $\theta_{\hat{a}}$
     which minimise $V_{eff}$.
    \item
     Suppose that the effective potential $V_{eff}$ is minimised at nontrivial
     configurations of the Wilson line phases. Then the gauge symmetry
     is spontaneously broken by radiative corrections. This
     part of the mechanism is called {\em{Wilson line symmetry breaking}}.
     Gauge fields in four dimensions whose gauge symmetry is spontaneously broken get masses from
     nonvanishing VEVs for the Wilson line phases. In addition, some matter 
     fields also acquire masses.
    \item
     All zero modes of the extra-dimensional component of the higher dimensional gauge field 
     become massive. Their masses are given by the second derivatives of 
     $V_{eff}$ up to numerical constants.
    \item
     The physical symmetry of the theory is determined by orbifold boundary conditions
     and the VEVs of the Wilson line phases. 
   \end{itemize}

\chapter{Effective Theories and nonunitary parallel transporters}
 
 \label{chaptereffectivetheorie}

  In this chapter we describe how an effective bilayered transverse lattice model can be
  obtained from an ordinary $S^{1}/\mathbb{Z}_{2}$ orbifold model via 
  renormalisation group (RG) transformations. We start with a five-dimensional space-time 
  $M^{4} \times S^{1}/\mathbb{Z}_{2}$,
  which is the product of the four-dimensional Minkowski space-time $M^{4}$ and the orbifold 
  $S^{1}/\mathbb{Z}_{2}$.
  Recall that the orbifold $S^{1}/\mathbb{Z}_{2}$ is obtained by dividing the circle $S^{1}$ with radius $R$
  by a $\mathbb{Z}_{2}$ transformation. The resulting space is the interval $\left[ 0, \pi R \right]$.
  Let $G$ be the bulk gauge group. In order to obtain a well-defined starting point for RG transformations
  we put the orbifold $S^{1}/\mathbb{Z}_{2}$ on a lattice. Thus the four-dimensional Minkowski space-time 
  $M^{4}$ remains continuous while the extra dimension is latticized. Such a scenario is known as a 
  transverse lattice and it occurs naturally in deconstruction models 
  \cite{Hill:2000mu,Cheng:2001vd,Cheng:2001nh}. 
  Starting with this latticized extra dimension we calculate the RG-flow. 
  The endpoint of the RG-flow will be an extra dimension which consists of only two points: the two orbifold
  fixed points $y=0$ and $y=\pi R$. The bulk is completely integrated out.
  The effective theory obtained this way will be called an {\em{effective bilayered 
  transverse lattice model}} (eBTLM). We call the four-dimensional boundary at the fixed point $y=0$ 
  the $L$-boundary and the four-dimensional boundary at the fixed point $y=\pi R$ the $R$-boundary.
  PTs $\Phi$ in the extra dimension from 
  the $L$- to the $R$-boundary (and vice versa) become nonunitary as a result of the blockspin 
  transformation. They 
  take their values in a Lie group $H$ which is typically noncompact and larger than the unitary gauge 
  group $G$ we have started with. We always consider the case where $G$ is the maximal compact subgroup
  of $H$. It will turn out that these nonunitary PTs $\Phi$ can be interpreted as Higgs fields. 
  In this chapter we will also formulate orbifold conditions for nonunitary PTs $\Phi$.
  As an application, we analyse in detail an eBTLM based on the gauge group $SU(2)$.
 
  .

 \section{$S^{1}/\mathbb{Z}_{2}$ orbifold model on a lattice}

  We consider a one-dimensional lattice $\Gamma$ with lattice spacing $a$. The points of $\Gamma$ have the 
  coordinate $y=a \; n_{y}$, where $n_{y}=-N_{y}+1,\dots,N_{y}$, $N_{y} \in \mathbb{N}_{\ast}$.
  If we identify the points $y=-a N_{y}$ and $y=a N_{y}$, $\Gamma$ will possesses the 
  translation invariance 
  \begin{equation}
   t: \; n_{y} \to n_{y}+2 N_{y} \quad \Longleftrightarrow \quad y \to y+2 \pi R   \; .
  \end{equation}
  Thus the physical extension of $\Gamma$ is $2 \pi R=2 N_{y} a$. We define the reflection $r$ on
  $\Gamma$ by
  \begin{equation}
   \label{latticereflection}
   r : \;  n_{y} \to -n_{y} \quad \Longleftrightarrow \quad y \to -y \; .
  \end{equation} 
  Figure \ref{latticerefelction} shows the representation of the orbifold reflection $r$ on $\Gamma$. 
  \begin{figure}[h]
   \vspace{1cm}
   \begin{equation*}
   \begin{picture} (12,10)
    \thinlines
    \put(6,5){\bigcircle{8}}
    \multiput(2,5)(8,0){2}{\circle*{0.2}}
    \matrixput(9.47,7)(-6.93,0){2}(0,-4){2}{\circle*{0.2}}
    \matrixput(8,8.47)(-4,0){2}(0,-6.93){2}{\circle*{0.2}}
    \multiput(2.54,5)(6.93,0){2}{\vector(0,1){1.9}}
    \multiput(2.54,5)(6.93,0){2}{\vector(0,-1){1.9}}
    \multiput(4,5)(4,0){2}{\vector(0,1){3,36}}
    \multiput(4,5)(4,0){2}{\vector(0,-1){3,36}}
    \multiput(5.8,9.8)(0,-9.6){2}{\dots}
    \multiputlist(6,10.5)(0,-11){$n_{y}$,$-n_{y}$}
    \multiputlist(1,5)(10.5,0){$n_{y}=0$,$n_{y}=N_{y}$}
    \multiputlist(1.7,7.5)(8.6,0){1,$N_{y}-1$}
    \multiputlist(1.7,2.5)(8.6,0){-1,$-N_{y}+1$}
    \multiputlist(3.5,9.3)(5,0){2,$N_{y}-2$}
    \multiputlist(3.5,0.7)(5,0){-2,$-N_{y}+2$}
   \end{picture}
   \end{equation*}
   \label{latticerefelction}
   \caption{Representation of the $S^{1}/\mathbb{Z}_{2}$ orbifold reflection $r$ on the lattice $\Gamma$.}
  \end{figure}
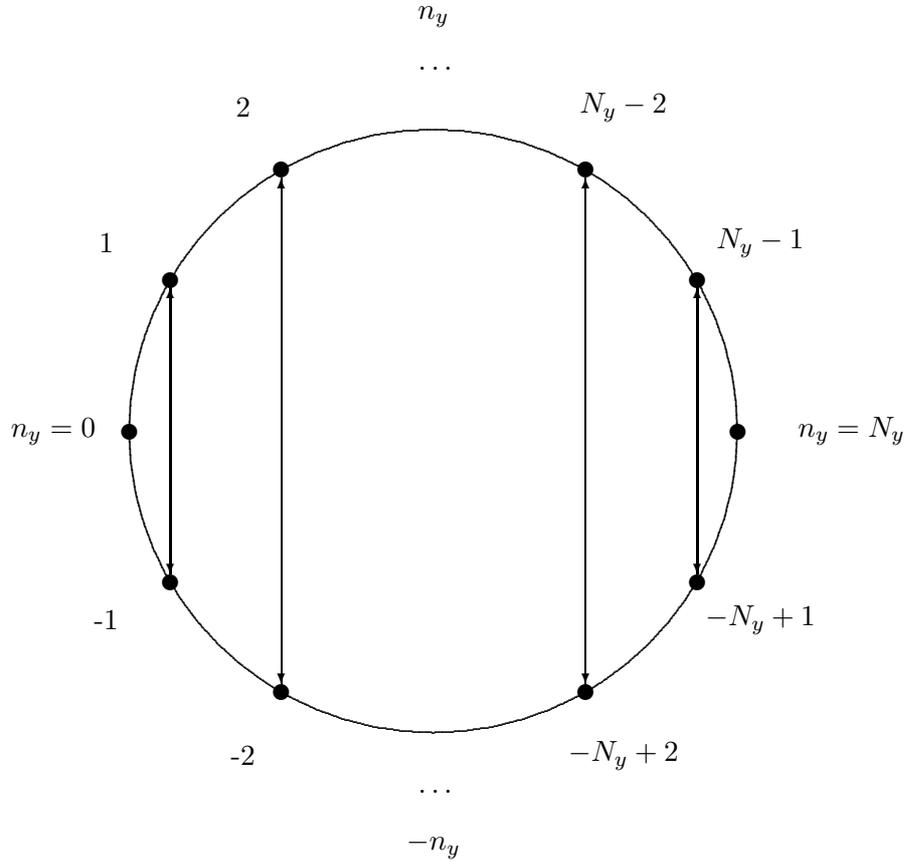
  The orbifold has two fixed points $n_{y}=0$, invariant under $r$, and $n_{y}=N_{y}$, invariant
  under $tr$. In terms of the coordinate $y$ they read $y=0$ and $y=\pi R$. After the identification
  (\ref{latticereflection}) the resulting space is the latticized interval
  $[0,N_{y}]$.
  \begin{figure}[h]  
   \begin{equation*}
    \begin{picture} (10,10)
    \thicklines
    \multiput(2,1)(0,8){2}{\line(1,0){5}}
    \thinlines
    \multiput(3,2)(0,1){3}{\line(1,0){3}}
    \multiput(3,6)(0,1){3}{\line(1,0){3}}
    \put(4.5,5){\vdots}
    \put(2.5,5){\vector(0,1){4}}
    \put(2.5,5){\vector(0,-1){4}}
    \put(3.5,2.5){\vector(0,1){0.5}}
    \put(3.5,2.5){\vector(0,-1){0.5}}
    \put(1.5,5){$\pi R$}
    \put(3,2.5){$a$}
    \multiputlist(9,1)(0,8){$n_{y}=0 \; (y=0)$,$n_{y}=N_{y} \; (y=\pi R)$} 
   \end{picture}
   \end{equation*}
   \label{latticizidinterval}
   \caption{The latticized interval $[0,N_{y}]$. The two orbifold fixed points 
            are $n_{y}=0 \; (y=0)$ and $n_{y}=N_{y} \; (y=\pi R)$.}
  \end{figure}
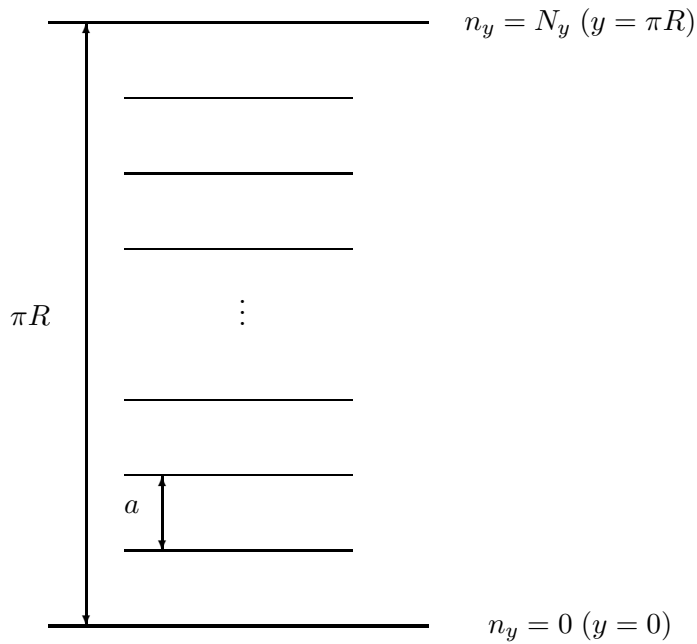

 \section[From latticized interval to bilayered transverse lattice]
         {From an orbifold model on the latticized interval to an 
          effective bilayered transverse lattice model
          via renormalisation group transformations}

  In this section we sketch the basic ideas which lead to an effective bilayered transverse lattice model.
  We start with the latticized interval $\Delta=[0,N_{y}]$, where $N_{y} \gg 1$, and take it as the
  fundamental lattice. The bilayered transverse lattice model is treated as an effective theory,
  which leads at a coarser scale to the same expectation values as the $S^{1}/\mathbb{Z}_{2}$
  orbifold theory on $\Delta$, however, with less degrees of freedom. The transition from a theory on the 
  latticized interval $\Delta$ to a theory on a bilayered transverse lattice is given by RG
  transformations. Let $\phi$ be an unitary PT in the
  fundamental theory and let $\phi^{\prime}$ a PT in the effective
  theory. The transition from $\phi$ to $\phi^{\prime}$ is given by a blockspin operator $\mathcal{C}$
  \begin{equation}
   \phi^{\prime}=\mathcal{C} \phi \; .
  \end{equation}
  To be more precise, let $\phi^{\prime}$ be the PT from a point $x \in \Delta^{\prime}$ to a point 
  $y \in \Delta^{\prime}$ along a path $C:x \to y$, where $\Delta^{\prime}$ is a coarser latticized 
  interval than $\Delta$. The blockspin is given by
  \begin{equation}
   \label{blockspin}
   \phi^{\prime}= \sum_{C:x \to y} \rho(C) \phi(C) \; ,
  \end{equation}
  where the sum goes over all paths $C:x \to y$ in $\Delta$ and $\rho(C)$ is
  a weight factor. Thus $\phi^{\prime}$ will be in the linear span of the unitary bulk gauge
  group $G$. The blockspin $\phi^{\prime}$ polar decomposes into a unitary part $U$ and
  a selfadjoint part $S$
  \begin{equation}
   \phi^{\prime}=U \; S \; .
  \end{equation}
  If it is possible to integrate out the selfadjoint part $S$ we will recover a local effective theory
  with unitary PTs $\phi^{\prime}$. We assume that this procedure will fail 
  after $n$ steps in the sense that the theory would acquire bad locality properties \cite{Lehmann:2003jh}
  and the effective theory needs nonunitary PTs for its locality.

  Let us consider the family of latticized intervals $\{ \Delta^{0},\Delta^{1},\dots,\Delta^{m} \}$
  \footnote{Note that in the context of RG transformations one usually considers a family of hypercubic
  lattices $\{ \Lambda^{0},\dots,\Lambda^{i}, \Lambda^{i+1},\dots \}$.},
  where $\Delta^{0}=\Delta$ is the fundamental latticized interval $[0,N_{y}]$ and $\Delta^{m}$, $m > n$
  is the bilayered transverse lattice. Obviously $\Delta^{m}$ is the coarsest 
  latticized interval as it consists of only two points. Hence an eBTLM 
  can always be interpreted as the endpoint of a RG-flow. 
  The PTs $\Phi$ \footnote{In the following we write $\Phi$ instead of $\phi^{\prime}$ for 
  nonunitary PTs} in the extra dimension are nonunitary as a consequence of the blockspin
  transformation (\ref{blockspin}). They can be interpreted as Higgs fields.
  When $\Phi$ becomes nonunitary a Higgs potential $V(\Phi)$ naturally
  emerges \cite{Pruestel2003}. The nonunitary PTs $\Phi$ take their values in a Lie group $H$ which is
  typically noncompact and larger 
  than the unitary bulk gauge group $G$. We call $H$ the holonomy group.  We always
  consider the case where $G$ is the maximal compact subgroup of $H$. As already mentioned above,
  the extra dimension consists of only two points which are the orbifold fixed points 
  $n_{y}=0 \; (y=0)$ and $n_{y}=N_{y} \; (y=\pi R)$. 
  We call the four-dimensional boundary at the fixed point $y=0$ the $L$-boundary and the four-dimensional
  boundary at the fixed point $y=\pi R$ the $R$-boundary.
  $G_{R}$ denotes the gauge group of the $R$-boundary, and $G_{L}$ denotes the gauge group 
  of the $L$-boundary.
  In principle, an orbifold breaking can lead to different gauge groups $G_{L}$ and $G_{R}$ at the
  boundaries $R$ and $L$. In the following however we will restrict ourselves to the case where
  $G_{L}=G_{R}=G_{0}$. The gauge group $G_{0}$ is the subgroup of $G$ left unbroken by the orbifold 
  projection $P$ i.e. the centraliser of $P$ in $G$. We call $G_{0}$ the orbifold unbroken gauge
  group.
  \begin{figure}[h]
   \begin{equation*}
    \begin{picture} (18,4.5)
     \thicklines
     \multiput(2,1.5) (0,2.5) {2} {\line (1,0) {9}}
     \thinlines
     \multiputlist(12,1.5)(0,2.5){$L \; (y=0)$,$R \; (y=\pi R)$}
     \multiputlist(6.5,1)(0,3.5){$G_{L}=G_{0}$,$G_{R}=G_{0}$}
     \multiputlist(6.5,2.6)(2.5,0){$G \subset H$}
     \put(2.5,2.5) {$\Phi$}
     \put(10.3,2.5) {$\Phi^{\ast}$}
     \matrixput (3,1.5)(7,0){2}(0,2,5){2}{\circle*{0.2}}
     \thicklines
     \dottedline[\circle*{0.05}]{0.1}(3,1.5)(3,4)
     \dottedline[\circle*{0.05}]{0.1}(10,1.5)(10,4)
     \thinlines
     \put(2.82,3.65){\Pisymbol{pzd}{115}}
     \put(9.82,1.58){\Pisymbol{pzd}{116}}
    \end{picture}
   \end{equation*}
   \label{btlmfigure}   
   \caption{Effective bilayered transverse lattice model (eBTLM).}
  \end{figure}
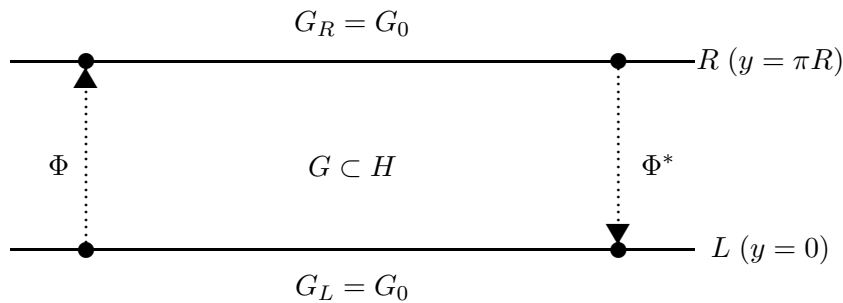  

  In the simplest approximation, the four-dimensional effective Lagrangian of an eBTLM reads
  \begin{equation}
    \label{bilayeredaction}
    \mathcal{L}_{4D}=-\frac{1}{4} \sum_{i=L,R} F_{i \mu \nu}^{a} F^{i \mu \nu a}
    + \text{tr} \left[ \left(D_{\mu}\Phi \right)^{\dagger} \left( D_{\mu}\Phi \right)
      \right] + V(\Phi) \; ,
  \end{equation}
  where the covariant derivative is given by
  \begin{equation}
   D_{\mu}\Phi=\partial_{\mu}\Phi
               +i g \left( A^{R}_{\mu}\Phi-\Phi A^{L}_{\mu} \right) \; .
  \end{equation}
  We discuss the terms in (\ref{bilayeredaction}):
  \begin{itemize}
   \item The term $F_{i \mu \nu}^{a} F^{i \mu \nu a}$ is a Yang-Mills term for the boundary
         gauge fields $A_{\mu}^{R}$ and $A_{\mu}^{L}$. Let $\mathfrak{g}_{0}=\text{Lie} \; G_{0}$. Then
         $A_{\mu}^{R}, A_{\mu}^{L} \in \mathfrak{g}_{0}$.
   \item The term tr$ \left[ \left(D_{\mu}\Phi \right)^{\dagger} \left( D_{\mu}\Phi \right) \right]$ 
         is the kinetic term for $\Phi$. It will lead to a mass term for the boundary
         gauge fields $A_{\mu}^{R}$ and $A_{\mu}^{L}$. 
   \item The term $V(\Phi)$ is the Higgs potential. If $V(\Phi)$ takes its minimum at
         non-trivial $\Phi_{min}$ the orbifold unbroken gauge group $G_{0}$ is spontaneously broken. 
  \end{itemize}
  We will see that under certain circumstances the effective four-dimensional Lagrangian
  (\ref{bilayeredaction}) equals the effective
  four-dimensional  Lagrangian of a corresponding $S^{1}/\mathbb{Z}_{2}$ continuum
  \footnote{By continuum we mean that $S^{1}/\mathbb{Z}_{2}$ is treated as usual as the quotient
  space $\mathbb{R}/\mathbb{D}_{\infty}$ and not as the latticized interval $[0,N_{y}]$.} 
  orbifold model.

 \section[EBTLM, ordinary Higgs mechanism and renormalisability]
 {Effective bilayered transverse lattice model, ordinary Higgs mechanism and renormalisability}

  \label{sectioneBTLM}

  In this section, we start to work out the correspondence between an eBTLM
  and a $S^{1}/\mathbb{Z}_{2}$ continuum orbifold model by investigating two simple examples. In both
  examples the orbifold projection $P$ is chosen to be trivial. Thus the bulk gauge group $G$ remains 
  unbroken. In the first example we consider Abelian gauge theory, i.e. we set $G=U(1)$, while 
  in the second example we consider non-Abelian gauge theory, e.g. we set $G=SU(N)$. It will turn out 
  that the Lagrangian of an eBTLM equals the effective four-dimensional 
  Lagrangian of an $S^{1}/\mathbb{Z}_{2}$ continuum orbifold model if we truncate the Kaluza-Klein (KK) 
  expansion for all fields in the $S^{1}/\mathbb{Z}_{2}$ continuum orbifold model at the first excited 
  KK mode. In addition, we need to make certain assumptions about the minimum 
  of the Higgs potential $V(\Phi)$ in the eBTLM. We will also demonstrate the close analogy between an  
  $S^{1}/\mathbb{Z}_{2}$ continuum orbifold model with truncated KK-mode expansion and the {\em{ordinary}}
  Higgs mechanism of four-dimensional gauge theories \cite{Dienes:1998vg}.
  Since the ordinary Higgs mechanism of four-dimensional gauge theories preserves 
  renormalisability, we conclude that also the truncated $S^{1}/\mathbb{Z}_{2}$ continuum orbifold model
  is renormalisable and therewith the corresponding eBTLM. As already mentioned in the introduction
  of this chapter, for {\em{trivial orbifold projection $P$}} and {\em{trivial minimum of the
  Higgs potential $V(\Phi)$}} there is a close analogy between an eBTLM and deconstruction models
  \cite{Hill:2000mu,Cheng:2001vd,Cheng:2001nh}.

  \subsection{Abelian gauge theory}

   In the first example we consider Abelian gauge theory, i.e. we start with the bulk gauge group $G=U(1)$.
   Figure \ref{figureabelianexample} summarises the setup.
   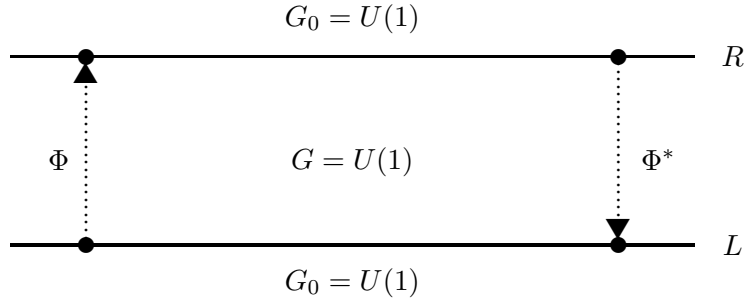
\begin{figure}[h]
    \begin{equation*}
     \begin{picture} (18,4.5)
      \thicklines
      \multiput(2,1.5) (0,2.5) {2} {\line (1,0) {9}}
      \thinlines
      \multiputlist(11.5,1.5)(0,2.5){$L$,$R$}
      \multiputlist(6.5,4.5)(7,0){$G_{0}=U(1)$}
      \multiputlist(6.5,2.6)(2.5,0){$G=U(1)$}
      \multiputlist(6.5,1)(7,0){$G_{0}=U(1)$}
      \put(2.5,2.5) {$\Phi$}
      \put(10.3,2.5) {$\Phi^{\ast}$}
      \matrixput (3,1.5)(7,0){2}(0,2,5){2}{\circle*{0.2}}
      \thicklines
      \dottedline[\circle*{0.05}]{0.1}(3,1.5)(3,4)
      \dottedline[\circle*{0.05}]{0.1}(10,1.5)(10,4)
      \thinlines
      \put(2.82,3.65){\Pisymbol{pzd}{115}}
      \put(9.82,1.58){\Pisymbol{pzd}{116}}
     \end{picture}
    \end{equation*}
    \caption{Effective bilayered transverse lattice model for the bulk gauge group $G=U(1)$ and
             trivial orbifold projection $P=1$.}
    \label{figureabelianexample}
   \end{figure}
   The effective four-dimensional Lagrangian (\ref{bilayeredaction}) reads 
   \begin{equation}
     \label{Lagrangianuone}
     \mathcal{L}_{4D}=-\frac{1}{4} F^{L}_{\mu \nu} F^{L \mu \nu}-\frac{1}{4} F^{R}_{\mu \nu} F^{R \mu \nu}
     +\left( D_{\mu}\Phi \right)^{\dagger}\left( D_{\mu}\Phi \right)+V(\Phi)
   \end{equation}  
   where
   \begin{gather}
    F^{L}_{\mu\nu}=\partial_{\mu}A^{L}_{\nu}-\partial_{\nu}A^{L}_{\mu} \quad ,
    \quad F^{R}_{\mu\nu}=\partial_{\mu}A^{R}_{\nu}-\partial_{\nu}A^{R}_{\mu}  \; , \\
    \label{covariantderivaticephiu1}
    D_{\mu}\Phi=\partial_{\mu}\Phi
               +i g \left( A^{R}_{\mu}\Phi-\Phi A^{L}_{\mu} \right) \; .
   \end{gather}
   The fields $A^{L}_{\mu}$ and $A^{R}_{\mu}$ are $U(1)$ gauge 
   fields on the $L$- and $R$-boundary respectively.

   Suppose that the Higgs potential $V(\Phi)$ takes its minimum at $\Phi_{min}$ where
   \begin{equation}
    \label{minimumphiabelian}
    \Phi_{min}=\frac{1}{2} \; \rho_{min} \; , \quad \rho_{min} \in \mathbb{R}_{\ast}^{+} 
   \end{equation} 
   and $\mathbb{R}_{\ast}^{+}=\mathbb{R}^{+} / \{0\}$.
   Inserting (\ref{minimumphiabelian}) in (\ref{covariantderivaticephiu1}) yields for the kinetic term
   \begin{equation}
    \left(  D_{\mu}\Phi_{min} \right)^{\dagger} \left(
    D_{\mu}\Phi_{min} \right)=\frac{1}{4} \; g^{2} \rho_{min}^{2}
    \left(A^{R}_{\mu}-A^{L}_{\mu}\right)^{2} \; .
   \end{equation}
   This is a mass term for the $U(1)$ gauge fields $A^{R}_{\mu}$ and 
   $A^{L}_{\mu}$. Defining $A:=(A^{R}_{\mu},A^{L}_{\mu})$, we can rewrite 
   \begin{equation}
    \mathcal{L}_{mass}=\frac{1}{4} \; g^{2} \rho_{min}^{2}\left(A^{R}_{\mu}-A^{L}_{\mu}\right)^{2}
    =A \; M \; A^{t} \; ,
   \end{equation}
   where $M$ is the mass-squared matrix
   \begin{equation}
    M=\frac{1}{4} \; g^{2} \rho_{min}^{2} 
    \left( \begin{array}{cc} 1 & -1 \\ -1 & 1 \end{array} \right) \; ,
   \end{equation}
   and $A^{t}$ is the transpose of $A$. 
   The gauge fields $A^{R}_{\mu}$ and $A^{L}_{\mu}$ can be expressed as
   real linear combinations of their mass eigenstates $A^{(0)}_{\mu}$ and
   $A^{(1)}_{\mu}$
   \begin{gather}
    \label{masseigenstates}
    A^{R}_{\mu}=\frac{1}{\sqrt{2}} \left( A^{(0)}_{\mu}+A^{(1)}_{\mu} \right) \; ,\\
    A^{L}_{\mu}=\frac{1}{\sqrt{2}} \left( A^{(0)}_{\mu}-A^{(1)}_{\mu} \right) \; .
    \nonumber
   \end{gather}
   In this new basis the mass-squared matrix $M$ is diagonal. We obtain
   \begin{eqnarray}
    \mathcal{L}_{mass} & = & A \; M \; A^{t}=\frac{1}{4} \; g^{2} \rho_{min}^{2}
    \left(A^{R}_{\mu}-A^{L}_{\mu}\right)^{2}  \\
    & = & \frac{1}{8} \; g^{2} \rho_{min}^{2}\left( A^{(0)}_{\mu}+A^{(1)}_{\mu}
          - A^{(0)}_{\mu}-A^{(1)}_{\mu}
          \right)^{2} \nonumber \\
    & = & \frac{1}{2} \; g^{2} \rho_{min}^{2} \left( A^{(1)}_{\mu} \right)^{2}  \nonumber 
   \end{eqnarray}
   which leads to the mass 
   \begin{equation}
    \label{transversebilayermass}
    m=g \rho_{min} 
   \end{equation}
   for the gauge field $A^{(1)}_{\mu}$, while the gauge field $A^{(0)}_{\mu}$ remains massless.
   In the basis of mass eigenstates (\ref{masseigenstates}) the Lagrangian
   (\ref{Lagrangianuone}) reads
   \begin{equation}
    \label{4defflagu1btlm}
    \mathcal{L}_{4D}=-\frac{1}{4} \; \left( \partial_{\mu} A_{\nu}^{(0)} 
                     -\partial_{\nu} A_{\mu}^{(0)} \right)^{2} 
                     -\frac{1}{4} \; \left( \partial_{\mu} A_{\nu}^{(1)} 
                     -\partial_{\nu} A_{\mu}^{(1)} \right)^{2} + \frac{1}{2} \;
                      m^{2} \left( A_{\mu}^{(1)} \right)^{2} \; .
   \end{equation}    
   This Lagrangian describes two Abelian gauge fields $A^{(0)}_{\mu}$ and 
   $A^{(1)}_{\mu}$, where $A^{(0)}_{\mu}$ is a massless field and the field 
   $A^{(1)}_{\mu}$ is massive with mass $m=g \rho_{min}$. 

   We compare this result to an $S^{1}/\mathbb{Z}_{2}$ continuum
   orbifold model. Let $G=U(1)$ be the bulk gauge group. The five-dimensional
   Lagrangian  \footnote{Recall that $M,N \in (\mu,y)$, where $\mu=0,1,2,3$.} reads
   \begin{equation} 
    \label{5dlagrangianorbifoldabelain}
    \mathcal{L}_{5D}=-\frac{1}{4} F_{MN} F^{MN} \; ,
   \end{equation}
   where
   \begin{equation}
    F_{MN}=\partial_{M}A_{N}-\partial_{N}A_{M} \; .
   \end{equation}
   The boundary conditions for the gauge fields read
   \begin{gather}
    A_{\mu}(x^{\mu},-y)=P \; A_{\mu}(x^{\mu},y) \; P^{-1}  \\
    A_{y}(x^{\mu},-y)=-P \; A_{y}(x^{\mu},y) \; P^{-1}  \; .
   \end{gather}   
   We take the trivial orbifold projection
   \begin{equation}
    P=1 
   \end{equation}
   and Fourier expand
   \begin{gather}
    A_{\mu}(x^{\mu},y)=\frac{1}{\sqrt{2 \pi R}}
    A_{\mu}^{(0)}(x^{\mu})+\frac{1}{\sqrt{\pi R}}\sum_{n=1}^{\infty} 
    A_{\mu}^{(n)}(x^{\mu}) \cos(\frac{n y}{R}) \\ 
    A_{y}(x^{\mu},y)=\frac{1}{\sqrt{\pi R}}
    \sum_{n=1}^{\infty} A_{y}^{(n)}(x^{\mu}) \sin(\frac{n y}{R})  \; .
   \end{gather}  
   Truncating this expansion at $n=1$ yields
   \begin{gather}
    \label{kkmodeexpansionabelian}
    A_{\mu}(x^{\mu},y)=\frac{1}{\sqrt{2 \pi R}} A_{\mu}^{(0)}(x^{\mu})
      +\frac{1}{\sqrt{\pi R}} A_{\mu}^{(1)}(x^{\mu}) \cos(\frac{y}{R}) \\
    A_{y}(x^{\mu},y)=\frac{1}{\sqrt{\pi R}}A_{y}^{(1)}(x^{\mu}) 
      \sin(\frac{y}{R}) \nonumber \; .
   \end{gather}
   The field strength $F_{MN}$ consists of two parts
   \begin{eqnarray}
    \label{u1fmunu}
    F_{\mu\nu} & = & \partial_{\mu}A_{\nu}-\partial_{\nu}A_{\mu} \\ & = &
        \frac{1}{\sqrt{2 \pi R}} \left[ \left( \partial_{\mu} 
        A_{\nu}^{(0)} - \partial_{\nu} A_{\mu}^{(0)} \right)
        + \left( \partial_{\mu} A_{\nu}^{(1)} - \partial_{\nu} 
        A_{\mu}^{(1)} \right) \cdot \sqrt{2} \cos(\frac{y}{R}) 
        \right] \; , \nonumber \\  
    F_{\mu y} & = & \partial_{\mu}A_{y}-\partial_{y}A_{\mu}=
             \frac{1}{\sqrt{\pi R}} \left[ \partial_{\mu} A_{y}^{(1)} 
             \cdot \sin(\frac{y}{R})+ A_{\mu}^{(1)} \frac{1}{R} 
             \sin(\frac{y}{R}) \right] \; , \nonumber 
   \end{eqnarray}
   where we have inserted the truncated KK-mode expansion (\ref{kkmodeexpansionabelian}). 
   We insert $F_{\mu\nu}$ and $F_{\mu y}$ into the five-dimensional Lagrangian
   (\ref{5dlagrangianorbifoldabelain}) and integrate over the circle $S^{1}$. This yields 
   \begin{eqnarray}
    \label{4defflagu1orbi}
    \mathcal{L}_{4D} & = & \int_{0}^{2\pi R} \{ -\frac{1}{4} F_{MN} F^{MN} \} \; dy=
                           \int_{0}^{2\pi R} \{ -\frac{1}{4} F_{\mu\nu} F^{\mu\nu}
                           - \frac{1}{2} F_{\mu y} F^{\mu y} \} \; dy \\
     & = & -\frac{1}{4} \left(\partial_{\mu} A_{\nu}^{(0)} - 
     \partial_{\nu} A_{\mu}^{(0)} \right)^{2} -\frac{1}{4}
     \left(\partial_{\mu} A_{\nu}^{(1)} - 
     \partial_{\nu} A_{\mu}^{(1)}\right)^{2}    
     + \frac{1}{2} \left( \partial_{\mu} A_{y}^{(1)}+ \frac{1}{R} A_{\mu}^{(1)} \right)^{2}  \; . \nonumber
   \end{eqnarray}
   This Lagrangian describes two Abelian gauge fields $A_{\mu}^{(0)}$ and
   $A_{\mu}^{(1)}$, where the field $A_{\mu}^{(0)}$ is massless and the field $A_{\mu}^{(1)}$ is 
   massive. We define 
   \begin{equation}
    B_{\mu}^{(1)}=A_{\mu}^{(1)}+ R \; \partial_{\mu} A_{y}^{(1)} \; ,
   \end{equation}
   and express the Lagrangian (\ref{4defflagu1orbi}) in terms of $A_{\mu}^{(0)}$ and $B_{\mu}^{(1)}$
   \begin{equation}
    \mathcal{L}_{4D}= -\frac{1}{4} \left(\partial_{\mu} A_{\nu}^{(0)} - 
     \partial_{\nu} A_{\mu}^{(0)} \right)^{2} -\frac{1}{4}
     \left(\partial_{\mu} B_{\nu}^{(1)} - \partial_{\nu} B_{\mu}^{(1)}\right)^{2} + \frac{1}{2} \;
     \frac{1}{R^{2}}\left(  B_{\mu}^{(1)} \right)^{2} \; .
   \end{equation}
   We observe that the field $B_{\mu}^{(1)}$ has mass $1/R$.

   We compare this result to the ordinary $U(1)$ Abelian Higgs model \cite{Peskin:1995ev}. Let
   \begin{equation}
    \mathcal{L}=-\frac{1}{4} F_{\mu\nu} F^{\mu\nu} + \mid D_{\mu} \phi \mid^{2} - V(\phi) \; ,
   \end{equation}
   with $D_{\mu}=\partial_{\mu}+i e A_{\mu}$, be the Lagrangian of a complex
   scalar field $\phi$ coupled both to itself and an electromagnetic field. The potential $V(\Phi)$
   is chosen to be of the form
   \begin{equation}
    V(\phi)=-\mu^{2} \left(\phi^{\ast} \phi \right) +
    \lambda^{2} \left(\phi^{\ast} \phi \right)^{2}
   \end{equation}
   where $\mu^{2} >0$. With the minimum of $V(\phi)$ at
   \begin{equation}
    \label{vevabelianordinaryabelianhiggs}
    \langle \phi \rangle=\frac{1}{\sqrt{2}} \phi_{0}  
   \end{equation}
   where $\phi_{0}=\mu/\lambda$. We expand the complex field $\phi(x)$ around the minimum $\phi_{0}$ as
   \begin{equation}
    \phi(x)=\frac{1}{\sqrt{2}}\left(\phi_{0}+\phi_{1}+ i \phi_{2} \right) \; .
   \end{equation}
   We insert this expansion into the kinetic term $ \mid D_{\mu} \phi \mid^{2}$. Thus we obtain
   \begin{equation}
    \label{ordinaryhiggskineticexpansion}
    \mid D_{\mu} \phi \mid^{2}=\frac{1}{2} \left( \partial_{\mu} \phi_{1} \right)^{2}
                              +\frac{1}{2} \left( \partial_{\mu} \phi_{2} \right)^{2}
                              +e \phi_{0} \; A_{\mu} \partial^{\mu} \phi_{2} + \frac{1}{2} e^{2} \phi_{0}^{2}
                               A_{\mu} A^{\mu} + \dots \; ,
   \end{equation}
   where we have omitted terms cubic and quartic in the fields $A_{\mu}$, $\phi_{1}$ and $\phi_{2}$.
   We compare this result with
   \begin{equation}
    \label{masstermfirstkkmode}
    \frac{1}{2} \left( \partial_{\mu} A_{y}^{(1)}+ \frac{1}{R} A_{\mu}^{(1)} \right)^{2}
   \end{equation} 
   from (\ref{4defflagu1orbi}).
   This apparently coincides with the Abelian Higgs model if we identify \cite{Dienes:1998vg}
   \begin{equation}
    e \; \phi_{0} \quad \Longleftrightarrow \quad \frac{1}{R} \; .
   \end{equation}
   In addition, the comparison of (\ref{ordinaryhiggskineticexpansion}) with (\ref{masstermfirstkkmode})
   shows that the first excited KK-mode gauge field $A_{y}^{(1)}$ plays the role of the Goldstone boson
   $\phi_{2}$ \cite{Dienes:1998vg}. It is therefore natural to go to unitary gauge, i.e. we set
   \begin{equation}
    \label{axialgauge}
    A_{y}^{(1)}=0 \; .
   \end{equation} 
   In the context of gauge theories in extra dimensions this gauge is known as axial gauge
   and we will from now on call (\ref{axialgauge}) axial gauge.
   In axial gauge the Lagrangian (\ref{4defflagu1orbi}) reads
   \begin{equation}
    \label{4defflagabelianorbiaxial}
    \mathcal{L}_{4D}= -\frac{1}{4} \left(\partial_{\mu} A_{\nu}^{(0)} - 
     \partial_{\nu} A_{\mu}^{(0)} \right)^{2} -\frac{1}{4}
     \left(\partial_{\mu} A_{\nu}^{(1)} - \partial_{\nu} A_{\mu}^{(1)}\right)^{2} + \frac{1}{2} \;
     \frac{1}{R^{2}}\left(  A_{\mu}^{(1)} \right)^{2} \; ,
   \end{equation}
   since $B_{\mu}^{(1)}=A_{\mu}^{(1)}$ for $A_{y}^{(1)}=0$.
   Due to the close analogy of the 
   $S^{1}/\mathbb{Z}_{2}$ continuum orbifold model with truncated KK-mode expansion and the ordinary Higgs
   mechanism, we conclude that the truncated $S^{1}/\mathbb{Z}_{2}$ continuum orbifold model is
   renormalisable.

   We compare the Lagrangian (\ref{4defflagu1orbi}) of the truncated  $S^{1}/\mathbb{Z}_{2}$ continuum 
   orbifold model  in axial
   gauge (\ref{4defflagabelianorbiaxial}) with the Lagrangian of the corresponding eBTLM 
   (\ref{4defflagu1btlm}). If we require  
   \begin{equation}
    g \rho_{min}=\frac{1}{R} \; ,
   \end{equation}  
   both Lagrangian's equal and hence both theories describe the same physics. Thus we conclude that 
   also the eBTLM (\ref{4defflagu1btlm}) is {\em{renormalisable}}.

 \subsection{Non-Abelian gauge theory}

  \label{sectionenonabBTLM}
  
  In the second example we consider non-Abelian gauge theory, i.e. we start for example
  with the bulk gauge group $G=SU(N)$. 
  Figure \ref{figurenonabelianexample} summarises the setup.
  \begin{figure}[h]
   \begin{equation*}
    \begin{picture} (18,4.5)
     \thicklines
     \multiput(2,1.5) (0,2.5) {2} {\line (1,0) {9}}
     \thinlines
     \multiputlist(11.5,1.5)(0,2.5){$L$,$R$}
     \multiputlist(6.5,4.5)(7,0){$G_{0}=SU(N)$}
     \multiputlist(6.5,2.6)(2.5,0){$G=SU(N)$}
     \multiputlist(6.5,1)(7,0){$G_{0}=SU(N)$}
     \put(2.5,2.5) {$\Phi$}
     \put(10.3,2.5) {$\Phi^{\ast}$}
     \matrixput (3,1.5)(7,0){2}(0,2,5){2}{\circle*{0.2}}
     \thicklines
     \dottedline[\circle*{0.05}]{0.1}(3,1.5)(3,4)
     \dottedline[\circle*{0.05}]{0.1}(10,1.5)(10,4)
     \thinlines
     \put(2.82,3.65){\Pisymbol{pzd}{115}}
     \put(9.82,1.58){\Pisymbol{pzd}{116}}
    \end{picture}
   \end{equation*}
   \caption{Effective bilayered transverse lattice model for the bulk gauge group $G=SU(N)$ and
            trivial orbifold projection and $P=\text{diag}(1,\dots,1)$.}
   \label{figurenonabelianexample}
  \end{figure}
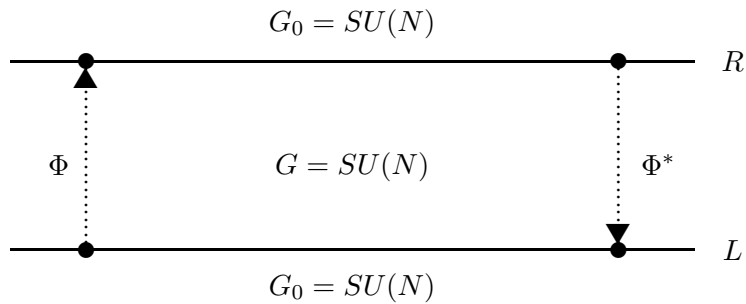
  The effective four-dimensional Lagrangian reads
  \begin{equation}
    \label{Lagrangiansun}
    \mathcal{L}_{4D}=-\frac{1}{4}F^{i}_{L \mu \nu} F^{L i \mu \nu}
                -\frac{1}{4}F^{i}_{R \mu \nu} F^{R i \mu \nu}
                +\text{tr}\left[ \left( D_{\mu}\Phi \right)^{\dagger}
                 \left(D_{\mu}\Phi \right) \right]+V(\Phi) \; ,
  \end{equation}  
  where
  \begin{gather}
    F^{L i}_{\mu\nu}=\partial_{\mu}A^{Li}_{\nu}-\partial_{\nu}A^{Li}_{\mu} 
                 + g f^{ijk} A^{Lj}_{\nu} A^{Lk}_{\mu} \; ,
    \; F^{R}_{\mu\nu}=\partial_{\mu}A^{Ri}_{\nu}-\partial_{\nu}A^{Ri}_{\mu}
                 + g f^{ijk} A^{Lj}_{\nu} A^{Lk}_{\mu} \; , \\
    \label{covariantderivaticephisun}
    D_{\mu}\Phi=\partial_{\mu}\Phi+i g \left( A^{R}_{\mu}\Phi
                -\Phi A^{L}_{\mu} \right)=\partial_{\mu}\Phi+i g \left( A^{Ri}_{\mu} L_{i} \;  \Phi
                -\Phi \; A^{Li}_{\mu}  L_{i} \right) \; .
  \end{gather} 
  The $L_{i}$ denote the generators of $G$ normalised as $\text{tr}( L_{i} L_{j} )=\frac{1}{2} \delta_{ij}$.
  Let $V(\Phi)$ take its minimum at $\Phi_{min}$ with
  \begin{equation}
   \label{mimimumphinonabelian}
   \Phi_{min}=\rho_{min} \; \frac{1}{\sqrt{2}} \mathbf{1}_{N} \; ,
  \end{equation}
  where $\mathbf{1}_{N}$ is the $N \times N$ unit matrix and $\rho_{min} \in \mathbb{R}_{\ast}^{+}$. 
  Inserting (\ref{mimimumphinonabelian}) in (\ref{covariantderivaticephisun}) yields for the kinetic term
  \begin{equation}
   \text{tr} \left[ \left( D_{\mu}\Phi_{min} \right)^{\dagger} 
   \left(D_{\mu}\Phi_{min} \right) \right]=\frac{1}{4} \; g^{2} \rho_{min}^{2}
   \left(A^{R i}_{\mu}-A^{L i}_{\mu}\right)^{2} \; .
  \end{equation}
  Defining $A:=(A^{R i}_{\mu},A^{L i}_{\mu})$, we can rewrite 
  \begin{equation}
   \mathcal{L}_{mass}=\frac{1}{4} \; g^{2} \rho_{min}^{2}\left(A^{R i}_{\mu}
   -A^{L i}_{\mu}\right)^{2}=A \; M \; A^{t} \; ,
  \end{equation}
  where $M$ is the mass-squared matrix
  \begin{equation}
    M=\frac{1}{4} \; g^{2} \rho_{min}^{2} 
    \left( \begin{array}{cc} 1 & -1 \\ -1 & 1 \end{array} \right) \; ,
  \end{equation}
  and $A^{t}$ is the transpose of $A$. 
  The gauge fields $A^{R i}_{\mu}$ and $A^{L i}_{\mu}$ can be expressed as
  real linear combinations of their mass eigenstates $A^{i(0)}_{\mu}$ and
  $A^{i(1)}_{\mu}$
  \begin{gather}
   \label{masseigenstatessun}
   A^{R i}_{\mu}=\frac{1}{\sqrt{2}} \left( A^{i(0)}_{\mu}+A^{i(1)}_{\mu} \right) \; ,\\
   A^{L i}_{\mu}=\frac{1}{\sqrt{2}} \left( A^{i(0)}_{\mu}-A^{i(1)}_{\mu} \right) \; .
   \nonumber
  \end{gather}
  In this new basis, the mass squared matrix $M$ is diagonal. We obtain
  \begin{eqnarray}
   \label{nonableinamassterm1}
   \mathcal{L}_{mass} & = & \frac{1}{4} \;g^{2} \rho_{min}^{2}
   \left( A^{R i}_{\mu}-A^{L i}_{\mu} \right)^{2} \\
   & = & \frac{1}{8} \; g^{2} \rho_{min}^{2} \left(  
         A^{i(0)}_{\mu}+A^{i(1)}_{\mu} - A^{i(0)}_{\mu}-A^{i(1)}_{\mu}  
         \right)^{2}  \nonumber \\
   & = & \frac{1}{2} \; g^{2} \rho_{min}^{2} \left(A^{i(1)}_{\mu}\right)^{2}  \; . \nonumber
  \end{eqnarray}
  This leads to the common mass term
  \begin{equation}
   \label{transversnonabelianmass2}
   m=g \rho_{min} 
  \end{equation}
  for all gauge fields $A^{i(1)}_{\mu}$. The gauge fields $A^{i(0)}_{\mu}$ remain
  massless. 

  We calculate the Lagrangian (\ref{Lagrangiansun}) in the basis of mass eigenstates 
  (\ref{masseigenstatessun}). The Yang-Mills term reads 
  \begin{eqnarray}
    \label{yangmillsnonablian}
    \mathcal{L}_{YM} & = & -\frac{1}{4} \left( \partial_{\mu}A^{L i}_{\nu}
                           -\partial_{\nu}A^{L i}_{\mu} 
                           +g f^{ijk} A^{L j}_{\mu} A^{L k}_{\nu} \right)^{2}-\frac{1}{4}
                           \left( \partial_{\mu}A^{R l}_{\nu}
                           -\partial_{\nu}A^{R l}_{\mu}+g f^{lmn} 
                           A^{R m}_{\mu} A^{R n}_{\nu} \right)^{2} \nonumber \\
                     & = & -\frac{1}{8} \left( \partial_{\mu} A^{i(0)}_{\nu}-
                           \partial_{\nu} A^{i(0)}_{\mu}- \left(
                           \partial_{\mu} A^{i(1)}_{\nu}-
                           \partial_{\nu} A^{i(1)}_{\mu} \right) \right. \\
                     & + & \left. \frac{g}{\sqrt{2}} f^{ijk} \left(
                           A^{j(0)}_{\mu}A^{k(0)}_{\nu}-
                           A^{j(0)}_{\mu}A^{k(1)}_{\nu}-
                           A^{j(1)}_{\mu}A^{k(0)}_{\nu}+
                           A^{j(1)}_{\mu}A^{k(1)}_{\nu}  
                           \right)\right)^{2}  \nonumber \\
                     & - & \frac{1}{8} \left( \partial_{\mu} A^{l(0)}_{\nu}-
                           \partial_{\nu} A^{l(0)}_{\mu}+ \left(
                           \partial_{\mu} A^{l(1)}_{\nu}-
                           \partial_{\nu} A^{l(1)}_{\mu} \right) \right. \nonumber \\
                     & + & \left. \frac{g}{\sqrt{2}} f^{lmn} \left(
                           A^{m(0)}_{\mu}A^{n(0)}_{\nu}+
                           A^{m(0)}_{\mu}A^{n(1)}_{\nu}+
                           A^{m(1)}_{\mu}A^{n(0)}_{\nu}+ 
                           A^{m(1)}_{\mu}A^{n(1)}_{\nu}
                           \right)\right)^{2} \; . \nonumber
  \end{eqnarray}  
  We isolate the zero-mode. This yields
  \begin{equation}
    \label{4dggbtlm}
    \mathcal{L}_{0}=-\frac{1}{4} \left( \partial_{\mu} A^{i(0)}_{\nu}-
                           \partial_{\nu} A^{i(0)}_{\mu}
                  +\frac{g}{\sqrt{2}} f^{ijk} A^{j(0)}_{\mu} A^{k(0)}_{\nu} \right)^{2} \; .
  \end{equation} 
  If we define the coupling constant
  \begin{equation}
   \tilde{g}=\frac{g}{\sqrt{2}} \; ,
  \end{equation}
  the zero mode has the canonical four-dimensional kinetic term with field strength
  \begin{equation}
   F_{\mu\nu}^{i(0)}=\partial_{\mu} A^{i(0)}_{\nu}-\partial_{\nu} A^{i(0)}_{\mu}
                    +\tilde{g} f^{ijk} A^{j(0)}_{\mu} A^{k(0)}_{\nu} \; .
  \end{equation}
  In contrast, for the first KK-mode we obtain 
  \begin{equation}
   \mathcal{L}_{1}=-\frac{1}{4} \left( \partial_{\mu} A^{i(1)}_{\nu}-
                           \partial_{\nu} A^{i(1)}_{\mu} \right)^{2} \; .
  \end{equation} 
  Note that a term 
  $-\frac{1}{4} \left( \partial_{\mu} A^{i(1)}_{\nu}-\partial_{\nu} A^{i(1)}_{\mu}
                  +g f^{ijk} A^{j(1)}_{\mu} A^{k(1)}_{\nu} \right)^{2}$ for
  the first excited KK-mode does not occur due to the relative minus sign for 
  $\partial_{\mu} A^{i(1)}_{\nu}-
  \partial_{\nu} A^{i(1)}_{\mu}$ and $\partial_{\mu} A^{l(1)}_{\nu}-\partial_{\nu} A^{l(1)}_{\mu}$
  in (\ref{yangmillsnonablian}). However there will the term  
  \begin{equation}
   \label{firstkkmodeinteraction}
   -\frac{1}{4} \tilde{g}^{2} f^{ijk} f^{imn}  A^{j(1)}_{\mu}A^{k(1)}_{\nu}
                          A^{m(1)}_{\mu}A^{n(1)}_{\nu}  \; ,
  \end{equation}
  which describes the self-interaction of the first excited KK-modes. 
  In addition, we obtain the following interaction terms among the zero mode and the first excited mode
  linear in $\tilde{g}$ 
  \begin{eqnarray}
   \label{linearf}
   \mathcal{L}_{\tilde{g}} & = & -\frac{1}{4} \; \tilde{g} \; \left( \left(\partial_{\mu} 
                      A^{i(0)}_{\nu}-\partial_{\nu} A^{i(0)}_{\mu} \right)
                      f^{ijk} \; A^{j(1)}_{\mu}A^{k(1)}_{\nu} \right .\\
                      & + &   \left. \left(\partial_{\mu} 
                      A^{i(1)}_{\nu}-\partial_{\nu} A^{i(1)}_{\mu} \right)
                      f^{ijk} \; A^{j(0)}_{\mu}A^{k(1)}_{\nu} 
                        +  \left(\partial_{\mu} 
                      A^{i(1)}_{\nu}-\partial_{\nu} A^{i(1)}_{\mu} \right)
                      f^{ijk} \; A^{j(1)}_{\mu}A^{k(0)}_{\nu} \right)
                      \nonumber 
  \end{eqnarray}
  and quadratic in $\tilde{g}$ 
  \begin{eqnarray}
   \label{quadf}
   \mathcal{L}_{\tilde{g}^{2}} & = & -\frac{1}{4} \; \tilde{g}^{2} \; f^{ijk} f^{imn}  \; \left(  
                          A^{j(0)}_{\mu}A^{k(1)}_{\nu}
                          A^{m(0)}_{\mu}A^{n(1)}_{\nu}   
                         +A^{j(1)}_{\mu}A^{k(0)}_{\nu}
                          A^{m(1)}_{\mu}A^{n(0)}_{\nu} \right)  \\
                   & - &  \frac{1}{2} \; \tilde{g}^{2} \; f^{ijk} f^{imn}  \; \left( 
                          A^{j(0)}_{\mu}A^{k(0)}_{\nu}
                          A^{m(1)}_{\mu}A^{n(1)}_{\nu}      
                          +A^{j(0)}_{\mu}A^{k(1)}_{\nu}
                          A^{m(1)}_{\mu}A^{n(0)}_{\nu} \right)  \nonumber \; .
  \end{eqnarray}
 
  We compare this result to an $S^{1}/\mathbb{Z}_{2}$ continuum
  orbifold model. Let $G=SU(N)$ be the bulk gauge group. The five-dimensional
  Lagrangian reads
  \begin{equation}
   \label{5dlagrangianorbifoldnonabelain}
   \mathcal{L}_{5D}=-\frac{1}{4}  F^{a}_{MN} F^{aMN} \; ,
  \end{equation}
  where
  \begin{equation}
   F^{a}_{MN}=\partial_{M}A^{a}_{N}-\partial_{N}A^{a}_{M}
              +g_{5}f^{abc}A^{b}_{M}A^{c}_{N}  \; .
  \end{equation}
  In this equation, $g_{5}$ is the five-dimensional gauge coupling constant.
  The boundary conditions for the gauge fields read
  \begin{gather}
   A_{\mu}(x^{\mu},-y)=P \; A_{\mu}(x^{\mu},y) \; P^{-1}  \\
   A_{y}(x^{\mu},-y)=-P \; A_{y}(x^{\mu},y) \; P^{-1}  \; .
  \end{gather}   
  As in the Abelian case, we take the trivial orbifold projection
  \begin{equation}
   P=\text{diag}(1,\dots,1)  \; .
  \end{equation}
  The Fourier expansion up to the first KK-mode reads 
  \begin{gather}
   A^{a}_{\mu}(x^{\mu},y)=\frac{1}{\sqrt{2 \pi R}} A_{\mu}^{a(0)}(x^{\mu})
     +\frac{1}{\sqrt{\pi R}} A_{\mu}^{a(1)}(x^{\mu}) \cos(\frac{y}{R}) \\
   A^{a}_{y}(x^{\mu},y)=\frac{1}{\sqrt{\pi R}}A_{y}^{a(1)}(x^{\mu}) 
     \sin(\frac{y}{R}) \; .
  \end{gather}
  The field strength $F_{MN}$ consists of two parts
  \begin{eqnarray}
   F^{a}_{\mu\nu}& = & \partial_{\mu}A^{a}_{\nu}-\partial_{\nu}A^{a}_{\mu}
                       +g_{5} f^{abc}A^{b}_{\mu}A^{c}_{\nu} \label{fmununonabelorbi} \\
       & = & \frac{1}{\sqrt{2 \pi R}} \left[ \left( \partial_{\mu} 
             A_{\nu}^{a(0)} - \partial_{\nu} A_{\mu}^{a(0)} \right)
             +\left( \partial_{\mu} A_{\nu}^{a(1)} - \partial_{\nu} 
             A_{\mu}^{a(1)} \right) \cdot \sqrt{2} \cos(\frac{y}{R})\right]
             \nonumber \\
       & + & \frac{g_{5}}{2 \pi R} f^{abc} \left[A_{\mu}^{b(0)}(x^{\mu})
             +A_{\mu}^{b(1)}(x^{\mu}) \sqrt{2} \cos(\frac{y}{R}) \right] \nonumber \\
       &   & \left[A_{\nu}^{c(0)}(x^{\mu})
             +A_{\nu}^{c(1)}(x^{\mu}) \sqrt{2} \cos(\frac{y}{R}) \right] \; ,
             \nonumber\\  
   F^{a}_{\mu y} & = & \partial_{\mu}A^{a}_{y}-\partial_{y}A^{a}_{\mu}+
                       g_{5}f^{abc} A^{b}_{\mu}A^{c}_{y} \label{fmu5nonabelorbi} \\
            & = & \frac{1}{\sqrt{\pi R}} \partial_{\mu} A_{y}^{a(1)} 
                  \cdot \sin(\frac{y}{R})-
                  \frac{1}{\sqrt{\pi R}} A_{\mu}^{a(1)} \frac{1}{R} 
                  \sin(\frac{y}{R}) \nonumber \\
            & + & \frac{g_{5}}{2 \pi R} f^{abc} 
                  \left[A_{\mu}^{b(0)}(x^{\mu})+A_{\mu}^{b(1)}(x^{\mu}) 
                  \sqrt{2} \cos(\frac{y}{R}) \right]
                  \left[A_{y}^{a(1)}(x^{\mu}) 
                  \sqrt{2}\sin(\frac{y}{R}) \right] \nonumber \; .
  \end{eqnarray}
  The second term in (\ref{fmu5nonabelorbi}) will lead to mass terms for the gauge fields 
  $A_{\mu}^{a(1)}$. As in the Abelian case, we compare this result to the ordinary non-Abelian Higgs
  model. The result is analogous to the Abelian case. In particular, the first excited KK-mode
  gauge fields $A_{y}^{a(1)}$ 
  play again the role of Goldstone bosons. Therefore we go to axial gauge 
  \begin{equation} 
   \label{axialgaugenonnabelian}
   A_{y}^{a(1)}=0 \; .
  \end{equation}
  Due to the close analogy of the non-Abelian $S^{1}/\mathbb{Z}_{2}$ continuum orbifold model 
  with truncated KK-mode expansion and the non-Abelian ordinary Higgs
  mechanism, we can conclude that also the non-Abelian truncated $S^{1}/\mathbb{Z}_{2}$ continuum orbifold 
  model is renormalisable. In axial gauge  
  (\ref{fmu5nonabelorbi}) becomes
  \begin{equation}
   \label{fmu5nonabelorbiaxial} 
   F^{a}_{\mu y}=-\frac{1}{\sqrt{\pi R}} A_{\mu}^{a(1)} \frac{1}{R} 
             \sin(\frac{y}{R}) \; .
  \end{equation}

  We insert (\ref{fmununonabelorbi}) and (\ref{fmu5nonabelorbiaxial})
  into the five-dimensional Lagrangian
  (\ref{5dlagrangianorbifoldnonabelain}) and integrate over the circle $S^{1}$. This yields  
  \begin{eqnarray} 
   \label{4dlagninabeorbi}
   \mathcal{L}_{4D} & = & \int_{0}^{2\pi R} \{ -\frac{1}{4} F^{a}_{MN} F^{a MN} \}=
                       \int_{0}^{2\pi R} \{ -\frac{1}{4} F^{a}_{\mu\nu} F^{a\mu\nu}
                       -\frac{1}{2} F^{a}_{\mu y} F^{a\mu y}  \} \; dy  \\
                   & = & -\frac{1}{4} \left( \partial_{\mu} A^{a(0)}_{\nu}-
                       \partial_{\nu} A^{a(0)}_{\mu}
                       +\frac{g_{5}}{\sqrt{2 \pi R}} f^{abc} A^{b(0)}_{\mu} A^{c(0)}_{\nu} \right)^{2} 
                       -\frac{1}{4} \left( \partial_{\mu} A^{a(1)}_{\nu}-
                         \partial_{\nu} A^{a(1)}_{\mu}\right)^{2}  \nonumber \\
                   & + & \frac{1}{2} \; \frac{1}{R^{2}} \left( A_{\mu}^{a(1)} \right)^{2} +
                         \mathcal{L}^{\prime}_{g_{5}}+\mathcal{L}^{\prime}_{g_{5}^{2}}\nonumber \; ,
  \end{eqnarray}
  where $\mathcal{L}_{g_{5}}^{\prime}$ and $\mathcal{L}^{\prime}_{g_{5}^{2}}$ are interaction terms.
  Note that the zero mode has the canonical field strength
  \begin{equation}
   \label{4dggorbi}
   F^{a(0)}_{\mu\nu} =\partial_{\mu}A^{a(0)}_{\nu}-\partial_{\nu}A^{a(0)}_{\mu}
                       +g_{4} f^{abc}A^{b(0)}_{\mu}A^{c(0)}_{\nu} 
  \end{equation}
  if we identify
  \begin{equation}
   \label{relation4deffgaugecoupling5dgaugecoupling}
   g_{4}=\frac{g_{5}}{\sqrt{2 \pi R}} \; ,
  \end{equation}
  where $g_{4}$ is the four-dimensional effective gauge coupling constant. This relation is well-known from
  higher-dimensional gauge theories \cite{Sundrum:2005jf}. Note that while $g_{4}$ is dimensionless,
  $g_{5}$ has mass dimension $-1/2$. The comparison of (\ref{4dggorbi}) and (\ref{4dggbtlm})
  yields the relation 
  \begin{equation}
   g_{4}=\tilde{g}=\frac{g}{\sqrt{2}} \quad \Longrightarrow \quad g=\frac{g_{5}}{\sqrt{\pi R}} \; .
  \end{equation}

  Finally we compare the Lagrangian (\ref{4dlagninabeorbi}) of the non-Abelian truncated
  $S^{1}/\mathbb{Z}_{2}$ continuum orbifold
  model in axial gauge (\ref{axialgaugenonnabelian}) with the Lagrangian 
  of the corresponding eBTLM. First, using
  \begin{gather}
   \int_{0}^{2\pi R} \cos(\frac{y}{R}) \; dy=\int_{0}^{2\pi R} \sin(\frac{y}{R}) \; dy=
   \int_{0}^{2\pi R} \cos^{3}(\frac{y}{R}) \; dy=0 \; ,\\  
   \int_{0}^{2\pi R} \cos^{2}(\frac{y}{R}) \; dy=\pi R \quad , \quad 
   \int_{0}^{2\pi R} \cos^{4}(\frac{y}{R}) \; dy=\frac{3}{4} \pi R 
  \end{gather}
  and inserting (\ref{relation4deffgaugecoupling5dgaugecoupling}) 
  an elementary but lengthy calculation shows that the interaction terms
  $\mathcal{L}^{\prime}_{g_{5}}$ and $\mathcal{L}^{\prime}_{g_{5}^{2}}$ in (\ref{4dlagninabeorbi})
  equal (\ref{linearf}) and (\ref{quadf}) (including the term (\ref{firstkkmodeinteraction})), respectively.
  Second, if we require as in the Abelian case
  \begin{equation}
   \label{kkmoderequirement}
   g \rho_{min}=\frac{1}{R} \; ,
  \end{equation}  
  both Lagrangian's equal and hence both theories describe the same physics. Therefore we conclude
  that also the eBTLM is {\em{renormalisable}}.

 \section[Orbifold conditions for nonunitary parallel transporters]
         {Orbifold conditions for nonunitary parallel transporters in the effective 
          bilayered tranverse lattice model}

    In the last section we have restricted to the case where the orbifold projection $P$ 
    is trivial. In addition, we have made certain assumptions about the minimum $\Phi_{min}$ of the 
    Higgs potential $V(\Phi)$, see (\ref{minimumphiabelian}) and (\ref{mimimumphinonabelian}).
    As a result, the gauge group $G$ remained unbroken and the zero mode gauge fields remained massless. 
    
    In this section we determine orbifold conditions for nonunitary parallel transporters $\Phi$.
    As a result, we can also handle non-trivial minima of the Higgs potential $V(\Phi)$  
    and non-trivial orbifold projections $P$. 
    At first, we recall some standard facts about Lie algebras, which can be found  
    in \cite{Knapp}.

    \begin{theorem}[Cartan decomposition 1]\cite{Knapp}
     Let $H$ be a real semi-simple Lie group with Lie algebra $\mathfrak{h}$. Then $\mathfrak{h}$
     has a Cartan involution $\theta$. A Cartan involution $\theta$ of $\mathfrak{h}$ leads to
     an eigenspace decomposition
     \begin{equation}
      \mathfrak{h}=\mathfrak{g} \oplus \mathfrak{p}
     \end{equation}
     of $\mathfrak{h}$ such that  
     \begin{equation}
      [\mathfrak{g},\mathfrak{g}] \subseteq \mathfrak{g}, \quad
      [\mathfrak{g},\mathfrak{p}] \subseteq \mathfrak{p}, \quad
      [\mathfrak{p},\mathfrak{p}] \subseteq \mathfrak{g}
     \end{equation}
     and $\mathfrak{g},\mathfrak{p}$ are $+1$ and $-1$ eigenspaces of $\theta$, i.e.
     \begin{itemize}
      \item 
       $\theta \; X=X \quad \text{for} \; X \in \mathfrak{g}$ 
      \item
       $\theta \; X=-X \quad \text{for} \; X \in \mathfrak{p}$ \; .
     \end{itemize}
     Let $\kappa$ be the Killing form of $\mathfrak{h}$. Then $\mathfrak{g}$ and $\mathfrak{p}$ are 
     orthogonal under $\kappa$ and $\kappa$ is negative definite on $\mathfrak{g}$ and positive
     definite on $\mathfrak{p}$. 
    \end{theorem}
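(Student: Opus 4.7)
The plan is to construct the Cartan involution $\theta$ from a suitably chosen compact real form of the complexification $\mathfrak{h}^{\mathbb{C}}=\mathfrak{h}\otimes_{\mathbb{R}}\mathbb{C}$, and then to derive the remaining assertions by elementary eigenspace arguments.

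The first step is existence of a compact real form: by Weyl's unitary trick there is a real subalgebra $\mathfrak{u}_{0}\subset\mathfrak{h}^{\mathbb{C}}$ with negative-definite Killing form such that $\mathfrak{h}^{\mathbb{C}}=\mathfrak{u}_{0}\oplus i\,\mathfrak{u}_{0}$. Let $\sigma$ and $\tau$ denote the conjugate-linear involutions of $\mathfrak{h}^{\mathbb{C}}$ with fixed-point sets $\mathfrak{h}$ and $\mathfrak{u}_{0}$, respectively. The second step is to replace $\mathfrak{u}_{0}$ by an $\mathrm{Int}(\mathfrak{h}^{\mathbb{C}})$-conjugate so that $\sigma\tau=\tau\sigma$; this uses the conjugacy of compact real forms together with a polar-decomposition argument applied to the automorphism $\sigma\tau$, continuously deforming $\mathfrak{u}_{0}$ until the two conjugations commute. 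Once commutativity is in hand, $\tau$ preserves $\mathfrak{h}$, and I would set $\theta:=\tau\vert_{\mathfrak{h}}$, which is an involutive Lie-algebra automorphism of $\mathfrak{h}$.

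Given $\theta$, the decomposition $\mathfrak{h}=\mathfrak{g}\oplus\mathfrak{p}$ with $\mathfrak{g}=\mathfrak{h}\cap\mathfrak{u}_{0}$ and $\mathfrak{p}=\mathfrak{h}\cap i\,\mathfrak{u}_{0}$ follows immediately from $\theta^{2}=1$. Since $\theta$ is a Lie-algebra automorphism, for eigenvectors $X,Y\in\mathfrak{h}$ with $\theta$-parities $\epsilon_{X},\epsilon_{Y}\in\{\pm 1\}$ one has $\theta[X,Y]=\epsilon_{X}\epsilon_{Y}[X,Y]$, which yields the three bracket inclusions at a stroke. Ad-invariance of $\kappa$ under $\theta$ gives $\kappa(X,Y)=\kappa(\theta X,\theta Y)=-\kappa(X,Y)$ for $X\in\mathfrak{g},Y\in\mathfrak{p}$, hence $\mathfrak{g}\perp_{\kappa}\mathfrak{p}$. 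For definiteness, the restriction of $\kappa_{\mathfrak{h}^{\mathbb{C}}}$ to $\mathfrak{u}_{0}$ coincides with the Killing form of the compact real form $\mathfrak{u}_{0}$ and is therefore negative definite; this negativity restricts to $\mathfrak{g}\subset\mathfrak{u}_{0}$. For $X=iY\in i\,\mathfrak{u}_{0}$ the $\mathbb{C}$-bilinearity of $\kappa_{\mathfrak{h}^{\mathbb{C}}}$ gives $\kappa(X,X)=i^{2}\kappa(Y,Y)=-\kappa(Y,Y)>0$, proving positive-definiteness of $\kappa$ on $\mathfrak{p}$.

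The main obstacle in this outline is arranging the commutativity $\sigma\tau=\tau\sigma$. For an arbitrary compact real form this will generally fail, and the nontrivial content is the deformation step that moves $\mathfrak{u}_{0}$ within its inner-automorphism orbit to a position compatible with $\sigma$; this relies essentially on nondegeneracy of the Killing form (hence semisimplicity) together with the existence of a positive-definite bilinear form on $\mathfrak{h}^{\mathbb{C}}$ built from $\sigma$, $\tau$ and $\kappa$. Once $\theta$ has been produced, the remaining structural and metric statements are mechanical consequences of having an involutive automorphism whose $\pm 1$ eigenspaces lie in $\mathfrak{u}_{0}$ and $i\,\mathfrak{u}_{0}$.
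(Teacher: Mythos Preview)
The paper does not supply its own proof of this theorem: it is quoted as a standard structural result with a citation to Knapp, and the surrounding text immediately moves on to the global Cartan decomposition and the $KAK$-decomposition. So there is no ``paper's proof'' to compare against in any substantive sense.

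That said, your sketch is essentially the textbook argument one finds in Knapp. The construction of $\theta$ from a compact real form $\mathfrak{u}_{0}$ of $\mathfrak{h}^{\mathbb{C}}$, followed by the conjugation step that forces $\sigma\tau=\tau\sigma$, is exactly the standard route; you have correctly isolated this commutativity as the only genuinely nontrivial point. The remaining claims (bracket relations from multiplicativity of the $\pm 1$ parities, $\kappa$-orthogonality from $\theta$-invariance of $\kappa$, and the sign of $\kappa$ on $\mathfrak{g}$ and $\mathfrak{p}$ via $\mathfrak{g}\subset\mathfrak{u}_{0}$, $\mathfrak{p}\subset i\,\mathfrak{u}_{0}$) are handled cleanly. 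One small sharpening: the invariance you use for orthogonality is $\kappa(\theta X,\theta Y)=\kappa(X,Y)$, which holds because $\theta$ is a Lie-algebra automorphism and $\kappa$ is built from $\mathrm{ad}$; you might state this explicitly rather than calling it ``Ad-invariance''. Otherwise the argument is sound and matches the reference the paper is relying on.
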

    Remark: \; If $\mathfrak{h}=\mathfrak{g} \oplus \mathfrak{p}$ is a Cartan decomposition of
    $\mathfrak{h}$ then $\mathfrak{g} \otimes i \mathfrak{p}$ is a compact real form of its
    complexification $(\mathfrak{h})^{\mathbb{C}}$.  

    \begin{theorem}[Cartan decomposition 2]\cite{Knapp}
     Let $H$ be a real semi-simple Lie group with Lie algebra $\mathfrak{h}$, let $\theta$ be a
     Cartan decomposition of its Lie algebra $\mathfrak{h}$ and let 
     $\mathfrak{h}=\mathfrak{g} \oplus \mathfrak{p}$ be the corresponding Cartan decomposition. Suppose
     $H$ has a finite centre, then $G$ is the maximal compact subgroup of $H$ and $G$ has
     Lie algebra $\mathfrak{g}$. The elements of $H$ can be written as
     \begin{equation}
      \label{Cartandecomposition2}
      h=g \exp{X} \; , \quad g \in G \; , X \in \mathfrak{p}
     \end{equation}
     This decomposition is called the global Cartan decomposition. 
    \end{theorem}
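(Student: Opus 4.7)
The plan is to reduce the global decomposition to the classical polar decomposition on $GL(\mathfrak{h})$ via the adjoint representation, and then lift the result back to $H$ using the hypothesis that the centre is finite. The key auxiliary tool is the inner product $B_{\theta}(X,Y) := -\kappa(X,\theta Y)$ on $\mathfrak{h}$, which is positive definite because $\kappa$ is negative definite on $\mathfrak{g}$ and positive definite on $\mathfrak{p}$ and the two summands are $\kappa$-orthogonal.

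First I would consider the adjoint representation $\mathrm{Ad}:H\to GL(\mathfrak{h})$. Since $\mathfrak{h}$ is semisimple, $\ker\mathrm{Ad}=Z(H)$, which is finite by assumption, so $\mathrm{Ad}$ is a covering onto its image, and $\mathrm{Ad}(H)$ is the analytic subgroup of $GL(\mathfrak{h})$ with Lie algebra $\mathrm{ad}(\mathfrak{h})$. A direct computation with $B_{\theta}$ shows that $\mathrm{ad}(X)$ is skew-symmetric for $X\in\mathfrak{g}$ and symmetric for $X\in\mathfrak{p}$, so $\mathrm{Ad}(\exp X)$ lies in the orthogonal group $O(\mathfrak{h},B_{\theta})$ for $X\in\mathfrak{g}$ and in the positive-definite symmetric matrices for $X\in\mathfrak{p}$. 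Define $G$ as the analytic subgroup of $H$ with Lie algebra $\mathfrak{g}$; then $\mathrm{Ad}(G)\subseteq O(\mathfrak{h},B_{\theta})$. Since $O(\mathfrak{h},B_{\theta})$ is compact and $\ker\mathrm{Ad}|_{G}$ is a subgroup of the finite group $Z(H)$, this will give compactness of $G$ once one shows that $\mathrm{Ad}(G)$ is closed in $O(\mathfrak{h},B_{\theta})$.

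Next I would establish the decomposition itself. By the classical polar decomposition on $GL(\mathfrak{h})$, every $\mathrm{Ad}(h)$ factors uniquely as $k\cdot p$ with $k\in O(\mathfrak{h},B_{\theta})$ and $p$ positive definite symmetric. The core technical step is to show that this decomposition actually stays inside $\mathrm{Ad}(H)$, with $k\in\mathrm{Ad}(G)$ and $p=\exp(\mathrm{ad}(X))$ for some $X\in\mathfrak{p}$. The standard route is: write $p=\exp(S)$ with $S$ symmetric, decompose $S$ along the eigenspaces of $\mathrm{ad}$ acting on matrices, and use that $\mathrm{ad}(\mathfrak{h})$ is a $\theta$-stable Lie subalgebra of $\mathrm{End}(\mathfrak{h})$ (here one uses the Jacobson--Morozov / invariant subspace argument that the symmetric part of any element of a $\theta$-stable real semisimple subalgebra again lies in the subalgebra). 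Together with $[\mathfrak{g},\mathfrak{p}]\subseteq\mathfrak{p}$ this forces $S\in\mathrm{ad}(\mathfrak{p})$. Uniqueness of the polar decomposition on $GL(\mathfrak{h})$ then gives uniqueness of $(g,X)$ modulo the centre, and the finiteness of $Z(H)$ together with the connectedness of $\exp\mathfrak{p}$ upgrades this to genuine uniqueness on $H$. Finally, the map $G\times\mathfrak{p}\to H$, $(g,X)\mapsto g\exp(X)$, is smooth with everywhere surjective differential (its differential at $(e,0)$ is the identity, and general points reduce to this by left translation), hence a diffeomorphism.

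Maximality of $G$ as a compact subgroup then follows from a fixed-point argument: given any compact subgroup $K\subseteq H$, the image $\mathrm{Ad}(K)$ preserves some positive definite form on $\mathfrak{h}$ (by averaging), and conjugating by a suitable element of $\exp\mathfrak{p}$ one can align this form with $B_{\theta}$; hence a conjugate of $K$ lies in $G$. The principal obstacle I expect is the step identifying the symmetric factor of $\mathrm{Ad}(h)$ with $\exp\mathrm{ad}(X)$ for $X\in\mathfrak{p}$, since this is where one must use the semisimplicity of $\mathfrak{h}$ and the $\theta$-stability in an essential way rather than merely invoking the polar decomposition on the ambient $GL(\mathfrak{h})$; everything else (smoothness, surjectivity, the fixed-point argument for maximality) is comparatively formal once this structural fact is in hand.
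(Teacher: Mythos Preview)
The paper does not supply its own proof of this theorem: it is stated with the citation \cite{Knapp} and used as a black box, so there is no in-paper argument to compare against. Your sketch is the standard textbook route (essentially the proof in Knapp): push everything down to $GL(\mathfrak{h})$ via $\mathrm{Ad}$, use the $B_{\theta}$-inner product to identify $\mathrm{ad}(\mathfrak{g})$ and $\mathrm{ad}(\mathfrak{p})$ with the skew and symmetric parts, apply the polar decomposition there, and lift back using the finite-centre hypothesis.

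Two small remarks on your write-up. First, the invocation of ``Jacobson--Morozov'' for the step that the symmetric factor of $\mathrm{Ad}(h)$ lies in $\exp\mathrm{ad}(\mathfrak{p})$ is not the right label; what is actually used is the elementary fact that if a $\theta$-stable real subalgebra $\mathfrak{s}\subseteq\mathfrak{gl}(\mathfrak{h})$ is closed under the operation of taking real powers of positive symmetric elements (which follows from diagonalisability over $\mathbb{R}$), then the polar factors of any $e^{\mathrm{ad}\,Y}$ with $Y\in\mathfrak{s}$ stay in the corresponding group. Second, your maximality paragraph conjugates $K$ into $G$, which gives that $G$ is maximal among compact subgroups up to conjugacy; to get that $G$ itself is maximal (no larger compact subgroup contains it) you also use that the decomposition $H\cong G\times\mathfrak{p}$ forces any compact subgroup containing $G$ to have trivial $\mathfrak{p}$-component, since $\exp\mathfrak{p}$ contains no nontrivial compact subgroup. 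These are refinements rather than gaps; the outline is sound.
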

    Remark: \; The global Cartan decomposition generalises the polar decomposition of matrices.
    
    \begin{definition}
     \label{definitionmaximalabelianliealgebrainp}
     Let $\mathfrak{a} \subset \mathfrak{p}$ be a maximal Abelian Lie algebra in $\mathfrak{p}$ and let $A$ be
     the corresponding subgroup of $H$. We call $A$ a maximal noncompact Abelian subgroup of $H$.
    \end{definition}
    Remark: \; $A$ is not unique and we will make use of this fact later. \\
    \\
    Let us recall the definition of the Weyl group $W(G,A)$ of the pair $(G,A)$. We use the notations above. 
    Let $W^{\ast}$ be the normaliser of $\mathfrak{a}$ in $G$, i.e. 
    \begin{equation}
     W^{\ast}=\{ g \in G \mid Ad(g) \mathfrak{a} \subset \mathfrak{a} \} \; ,
    \end{equation} 
    where $Ad(g) \mathfrak{a} \subset \mathfrak{a}$ means that for all $x \in \mathfrak{a}$
    we have $g x g^{-1} \in \mathfrak{a}$ 
    \footnote{For matrixgroups the adjoint action $Ad$ can be written as 
    $Ad(g)x=g x g^{-1}$}.
    Let $W$ be the centraliser of $a$ in $G$, i.e.
    \begin{equation}
     W=\{ g \in G \mid Ad(g)x = x \; \text{for all} \; x \in \mathfrak{a} \} \; .
    \end{equation} 
    Their quotient group is the Weyl group 
    \begin{equation}
     W(G,A)=W^{\ast}/W  \; .
    \end{equation}   
    \begin{theorem}[Uniqueness of KAK-decomposition]
     \label{theoremKAKdecomposition}
     Let $H$ be a reductive Lie group, $G$ the maximal compact subgroup of $H$, $A$ a maximal
     noncompact Abelian subgroup of $H$ and $\mathfrak{a}$ the corresponding Abelian subspace in 
     $\mathfrak{p}$. Then every element $h \in H$ admits a decomposition
     \begin{equation}
      \label{KAKdecomposition}
      h=k_{1} a k_{2}^{-1}
     \end{equation}
     where $k_{1},k_{2} \in G$ and $a \in A$. In this decomposition
     \begin{itemize}
      \item
       $a$ is unique up to conjugation with elements of $W(G,A)$ 
      \item 
       Given $a \in A$, let $W_{a}=\{ g \in G \mid g a g^{-1}=a\}$. Then
       $k_{1}$ and $k_{2}$ are unique up to right multiplication by an element of $W_{a}$, i.e.
       \begin{equation}
        k_{1} \to k_{1}^{\prime}=k_{1} k \quad , \quad 
        k_{2} \to k_{2}^{\prime}=k_{2} k 
       \end{equation}
       where $k \in W_{a}$.
     \end{itemize}
    \end{theorem}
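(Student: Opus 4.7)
My plan is to treat existence and the two uniqueness claims separately, in that order.

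For existence, I would invoke the global Cartan decomposition already stated in the excerpt to write $h = g \exp(X)$ with $g \in G$ and $X \in \mathfrak{p}$. The key intermediate fact is the classical structural result that $\mathfrak{p} = \bigcup_{k \in G} \mathrm{Ad}(k)\,\mathfrak{a}$, i.e.\ every $X \in \mathfrak{p}$ is $G$-conjugate to some $Y \in \mathfrak{a}$. One standard argument maximises $\kappa(X, \mathrm{Ad}(k) H)$ over $k \in G$ for a fixed regular $H \in \mathfrak{a}$ and observes that the critical-point condition forces $\mathrm{Ad}(k^{-1}) X$ to lie in the centraliser of $H$ in $\mathfrak{p}$, which by the maximality in Definition~\ref{definitionmaximalabelianliealgebrainp} equals $\mathfrak{a}$. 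Granted this, $\exp(X) = k_2 \exp(Y) k_2^{-1}$ with $Y \in \mathfrak{a}$, and setting $a = \exp(Y) \in A$ and $k_1 = g k_2$ yields $h = k_1 a k_2^{-1}$.

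For the uniqueness of $a$, suppose $h = k_1 a k_2^{-1} = k_1' a' k_2'^{-1}$ and let $\theta$ be the Cartan involution. Using $\theta|_G = \mathrm{id}$ and $\theta(a) = a^{-1}$ for $a \in \exp(\mathfrak{a})$, a direct calculation gives
\begin{equation}
 h\,\theta(h)^{-1} = k_1 a^2 k_1^{-1} = k_1'\,(a')^2\,(k_1')^{-1},
\end{equation}
so $a^2$ and $(a')^2$ are $G$-conjugate elements of $A$. The standard Weyl-group lemma---that two elements of $A$ are $G$-conjugate iff they lie in the same $W(G,A)$-orbit, proved by showing the conjugating element normalises $\mathfrak{a}$ modulo its centraliser---then places $a^2$ and $(a')^2$ in the same $W(G,A)$-orbit. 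Since $\exp : \mathfrak{a} \to A$ is a diffeomorphism, squaring is injective on $A$, so the same relation transfers from $a^2, (a')^2$ to $a, a'$.

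For the uniqueness of $(k_1, k_2)$ given $a$, suppose $k_1 a k_2^{-1} = k_1' a k_2'^{-1}$. Set $k = k_1^{-1} k_1' \in G$ and $l = k_2^{-1} k_2' \in G$; a short rearrangement yields $l = a^{-1} k a$. Applying $\theta$ to this identity and using $\theta(k) = k$, $\theta(l) = l$, $\theta(a) = a^{-1}$ gives also $l = a k a^{-1}$. Combining the two expressions produces $a^2 k = k a^2$. Now $\mathrm{Ad}(a) = \exp(\mathrm{ad}\,Y)$ with $Y \in \mathfrak{a} \subset \mathfrak{p}$ is self-adjoint with respect to an $\mathrm{Ad}(G)$-invariant inner product on $\mathfrak{h}$ and hence has strictly positive real eigenvalues; squaring is therefore injective on its spectrum, so the eigenspace decompositions of $\mathrm{Ad}(a)$ and $\mathrm{Ad}(a^2)$ coincide. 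Consequently, $k$ commuting with $a^2$ forces $k$ commuting with $a$, i.e.\ $k \in W_a$. Then $l = a^{-1} k a = k$, which is exactly $k_2' = k_2 k$, as claimed.

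The main obstacle is not the bookkeeping with $\theta$ but justifying the two structural inputs: that $\mathrm{Ad}(G)$ acts on $\mathfrak{p}$ transitively on the set of translates of $\mathfrak{a}$, and that the centraliser of $a$ in $G$ equals that of $a^2$. Both ultimately rest on the positivity of the spectrum of $\mathrm{Ad}(a)$ for $a \in \exp(\mathfrak{p})$, which is the reductive/semisimple input that underlies the whole $KAK$-picture and which I would cite from a reference such as Knapp rather than reprove in detail.
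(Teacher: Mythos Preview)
Your argument is correct, and the overall architecture (existence via the global Cartan decomposition plus $\mathfrak{p}=\bigcup_{k\in G}\mathrm{Ad}(k)\,\mathfrak{a}$, uniqueness of $a$ up to $W(G,A)$) matches the paper's, which likewise defers these two structural inputs to Knapp.

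Where you and the paper genuinely diverge is in the uniqueness of $(k_1,k_2)$ for fixed $a$. You apply the Cartan involution $\theta$ to $l=a^{-1}ka$, combine with the original to obtain $a^2k=ka^2$, and then invoke the positivity of the spectrum of $\mathrm{Ad}(a)$ to descend from $a^2$ to $a$. The paper bypasses this detour entirely: from $k_1 a k_2^{-1}=k_1' a k_2'^{-1}$ it sets $\tilde{k}_1=k_1'^{-1}k_1$, $\tilde{k}_2=k_2^{-1}k_2'$, rewrites $\tilde{k}_1 a \tilde{k}_2=a$ as
\[
 (\tilde{k}_1\tilde{k}_2)\,(\tilde{k}_2^{-1}a\,\tilde{k}_2)=1\cdot a,
\]
and reads off $\tilde{k}_1\tilde{k}_2=1$ and $\tilde{k}_2^{-1}a\,\tilde{k}_2=a$ directly from the \emph{uniqueness} of the global Cartan decomposition $H=G\exp\mathfrak{p}$ (since $\tilde{k}_1\tilde{k}_2\in G$ and $\tilde{k}_2^{-1}a\,\tilde{k}_2\in\exp\mathfrak{p}$). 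This gives $\tilde{k}_2\in W_a$ and $\tilde{k}_1=\tilde{k}_2^{-1}$ in one stroke. The paper's route is shorter because it uses the polar-decomposition uniqueness as a black box; your route is more self-contained because it unpacks exactly the spectral positivity that underlies that uniqueness. Both are sound, and your explicit use of $\theta$ in fact also supplies an argument for the uniqueness of $a$, which the paper simply cites.
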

    \begin{proof}
     The existence of the decomposition can be found in \cite{Knapp}. It is based on the global Cartan
     decomposition $H=G \exp{\mathfrak{p}}$ and the equality 
     $\mathfrak{p}=\cup_{g \in G} Ad(g) \mathfrak{a}$. 
     Let us come to the uniqueness of (\ref{KAKdecomposition}). First, the proof for
     nonuniqueness of $a$ can be found in \cite{Knapp}. Given now $a \in A$. Suppose
     \begin{equation}
      \label{nonuniqeness}
      k_{1} a k_{2}^{-1}=k_{1}^{\prime} a k_{2}^{\prime -1} \; .
     \end{equation}
     If $\tilde{k}_{1}=k_{1}^{\prime -1} k_{1}$ and $\tilde{k}_{2}=k_{2}^{-1} k_{2}^{\prime}$, then 
     $\tilde{k}_{1} a \tilde{k}_{2}=a$
     and therefore $(\tilde{k}_{1}\tilde{k}_{2}) (\tilde{k}_{2}^{-1} a \tilde{k}_{2} )=a$. 
     By the uniqueness of the global Cartan decomposition (\ref{Cartandecomposition2}) 
     it follows that $\tilde{k}_{1}\tilde{k}_{2}=1$ and $\tilde{k}_{2}^{-1} a \tilde{k}_{2}=a$.
     Thus $\tilde{k}_{2} \in W_{a}$. Then 
     \begin{equation}
      \tilde{k}_{2}=k_{2}^{-1} k_{2}^{\prime} \Longrightarrow  k_{2}^{\prime}=k_{2} \tilde{k}_{2}=
      k_{2} k \; ,
     \end{equation}
     with $k=\tilde{k}_{2} \in W_{a}$. In addition
     \begin{equation}
      \tilde{k}_{1}=k_{1}^{\prime -1} k_{1} \Longrightarrow k_{1}^{\prime}=k_{1} \tilde{k}_{1}^{-1}
      =k_{1} \tilde{k}_{2}=k_{1} k \; ,
     \end{equation}
     because $\tilde{k}_{1}\tilde{k}_{2}=1$. 
    \end{proof}
    
    \begin{definition}
     We call $a \in A$ generic if $W_{a}=W$. 
    \end{definition} 

    \begin{corollary}
     \label{corallarykakdecomposition}
     i) (generic case) \; Given $a \in A$ in (\ref{KAKdecomposition}). If $a$ is generic, it follows that
     \begin{equation}
      k_{1}^{\prime}=k_{1} w \; , \quad k_{2}^{\prime}=k_{2} w
     \end{equation}
     in (\ref{nonuniqeness}) where $w \in W$. \\
     ii) (unitary case) \; Let $a=1$ in (\ref{KAKdecomposition}). Then
     \begin{equation}
      k_{1} k_{2}^{-1}=k_{1}^{\prime} k_{2}^{\prime -1}
     \end{equation}
     allows
     \begin{equation}
      k_{1}^{\prime}=k_{1} g \; , \quad k_{2}^{\prime}=k_{2} g
     \end{equation}
     where $g \in G$ that is $W_{a}=G$ for $a=1$.
    \end{corollary}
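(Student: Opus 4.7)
The plan is to obtain both parts of the corollary as immediate specialisations of Theorem \ref{theoremKAKdecomposition}, which already gives the full uniqueness statement: if $k_{1} a k_{2}^{-1} = k_{1}^{\prime} a k_{2}^{\prime\, -1}$ for a fixed $a \in A$, then $k_{1}^{\prime} = k_{1} k$ and $k_{2}^{\prime} = k_{2} k$ for some $k \in W_{a} = \{ g \in G \mid g a g^{-1} = a \}$. Hence everything reduces to computing $W_{a}$ in the two cases.

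For part (i), I would simply invoke the definition: $a \in A$ is called generic precisely when $W_{a} = W$. Substituting this identification into the uniqueness clause of the theorem yields $k_{1}^{\prime} = k_{1} w$ and $k_{2}^{\prime} = k_{2} w$ for some $w \in W$, as claimed.

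For part (ii), I would specialise to $a = 1$. The defining condition $g a g^{-1} = a$ becomes the trivial identity $g \cdot 1 \cdot g^{-1} = 1$, which is satisfied by every $g \in G$. Therefore $W_{1} = G$, and the uniqueness clause of Theorem \ref{theoremKAKdecomposition} gives the stated freedom $k_{1}^{\prime} = k_{1} g$, $k_{2}^{\prime} = k_{2} g$ with $g \in G$. This also explains the aside $W_{a} = G$ for $a = 1$.

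There is essentially no obstacle here: the corollary is a direct reading of the theorem once one recognises that ``generic'' has been defined precisely so as to force $W_{a} = W$, and that the centraliser of the identity in any group is the whole group. The only mild subtlety worth flagging is that part (ii) describes the familiar ``unitary'' situation where an element of $H$ already lies in $G$, so $k_{1} k_{2}^{-1}$ is itself an element of $G$ and the entire gauge freedom is naturally $G$-valued; this consistency check reassures us that no further hypothesis on $H$ is needed.
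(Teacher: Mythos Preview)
Your proposal is correct and follows essentially the same approach as the paper: both parts are immediate specialisations of the uniqueness clause in Theorem~\ref{theoremKAKdecomposition}, using that $W_{a}=W$ by the definition of generic in~(i) and that $W_{1}=G$ trivially in~(ii). The paper's own proof is even more terse but identical in substance.
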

    \begin{proof}
     i) We know that $\tilde{k}_{2}^{-1} a \tilde{k}_{2}=a$ where 
     $\tilde{k}_{2}=W_{a}$. Now let $a$ be generic. Thus $\tilde{k}_{2} \in W$. \\
     ii) is obvious since in this case $W_{a}=G$. \\
    \end{proof}

  \subsection{Orbifold conditions for nonunitary parallel transporters}
   
    We now determine orbifold conditions for nonunitary PTs.  
    Let $G$ be the unitary gauge group of the bulk and let $\mathfrak{g}$ be its Lie algebra. Any orbifold
    projection 
    $P \in G$ can be written as an exponential of some Lie algebra element and is therefore contained in some
    $U(1)$ subgroup of $G$. If we start with this $U(1)$ subgroup, we can
    construct a maximal torus $T \subset G$. By $\mathfrak{t}$ we denote the Lie algebra of $T$.
    Let $\mathfrak{t}=\text{Lie} \; T$. Let $\{ H_{i} \}_{i=1}^{r}$, with
    $r=rank \; \mathfrak{g}$, denote the generators
    of $\mathfrak{t}$. Since $P \in T$ by construction we can always write
    \begin{equation}
     P=\exp(-2 \pi i \; \vec{V} \cdot \vec{H}) \; ,
    \end{equation}
    where $\vec{V}$ is a shift vector 
    \footnote{Every possible orbifold projection $P$ can be specified by a corresponding
    shift vector $\vec{V}$. Shift vectors are listed in the literature 
    for many gauge groups, see e.g. \cite{Bouwknegt:1988hn}.}
    and $\vec{H}=(H_{1},H_{2},\dots,H_{r})$.
    The shift vector $\vec{V}$ is an element of the weight space of $\mathfrak{g}$.
    We consider the case where $\mathfrak{g}$ can be obtained from a complex Lie algebra $\mathfrak{h}$, i.e.
    \begin{equation}
     \mathfrak{h}=\mathfrak{g} \oplus i \mathfrak{g} \; .
    \end{equation}
    An important example is $\mathfrak{h}=\mathfrak{sl}(N,\mathbb{C})$. In this case 
    $\mathfrak{g}=\mathfrak{su}(N)$. 

    Let $\mathfrak{a}$ be a maximal Abelian Lie algebra in $i \mathfrak{g}$ and let $A$ be
    its corresponding subgroup in $G$ according to Definition \ref{definitionmaximalabelianliealgebrainp}. 
    The KAK-decomposition (\ref{KAKdecomposition}) holds for any choice of $\mathfrak{a}$. It is natural to
    make the special choice
    \begin{equation}
     \mathfrak{a}=i \mathfrak{t} \; .
    \end{equation}
    Let $\eta \in \mathfrak{a}$. Then we have by construction
    \begin{equation}
     P \eta P^{-1}=\eta  \; .
    \end{equation}
    Thus $P \in W_{e^{\eta}}$.
    \begin{example}
     Let $G=SU(N)$ and $P^{2}=1$. Without loss of generality we can write $P$ as 
     \begin{equation} 
      P=\text{diag}(\underbrace{1,\dots,1}_{N-m},\underbrace{-1,\dots,-1}_{m})
     \end{equation}
     for $1 \le m \le N$ and $m$ is restricted to be an even integer.
     Let $\mathfrak{h}=\mathfrak{sl}(N,\mathbb{C})$. The Cartan decomposition of
     $\mathfrak{sl}(N,\mathbb{C})$ reads
     $\mathfrak{sl}(N,\mathbb{C})=\mathfrak{su}(N) + i \mathfrak{su}(N)$. We are free to choose 
     \begin{equation}
      \mathfrak{a}=\{ \eta=\text{diag}(a_{1},\dots,a_{N}) \} \subset i \mathfrak{su}(N) \; ,
     \end{equation}
     where $\sum a_{i}=0$, $a_{i} \in \mathbb{R}$. It follows that
     \begin{equation}  
      P \eta P^{-1}=\eta  \; .
     \end{equation}
     Thus $P \in W_{e^{\eta}}$.
    \end{example}

    Let $V_{L}(V_{R})$ be the fibre over the $L(R)$-boundary. The parallel transporter $\Phi$
    is a map $\Phi: \; V_{R} \to V_{L}$. In addition, the parallel transporter $\Phi^{\ast}$
    in the backwards direction is a map $\Phi^{\ast}: \; V_{L} \to V_{R}$. If we identify $V_{L}$ 
    and $V_{R}$ via a map $i: V_{L} \to V_{R}$ \cite{Mack:2005fv} $(i^{-1}: V_{R} \to V_{L})$,
    there remains the freedom that $\Phi \in H$ transforms under a unitary gauge transformation 
    according to 
    \begin{equation}
     \Phi \mapsto S(x) \Phi S(x)^{-1} \; ,
    \end{equation}
    where $S(x) \in G$. Hence we can require for $\Phi,\Phi^{\ast} \in H$ and $P \in T \subset G$
    the orbifold condition
    \begin{equation}
     \label{boundaryconditionspt}
     \Phi=P \; \Phi^{\ast} P^{-1} \; ,
    \end{equation}
    where $P^{\ast}=P^{-1}$.  
    According to (\ref{KAKdecomposition}), we can write $\Phi \in H$ as
    \begin{equation}
     \label{kakdecompositionincalcu}
     \Phi=U_{L} \; e^{\eta} \; U_{R}^{\ast} \; , 
    \end{equation}
    where $\eta \in \mathfrak{a}=i \mathfrak{t}$, $\mathfrak{t}=\text{Lie} \; T$ and $U_{L},U_{R} \in G$.
    We insert (\ref{kakdecompositionincalcu}) in (\ref{boundaryconditionspt}). Consequently we
    get for the right-hand side of (\ref{boundaryconditionspt}) 
    \begin{equation}
     \label{righthandsidepphiastp}
     P \Phi^{\ast} P^{-1}=P U_{R} e^{\eta^{\ast}} U_{L}^{\ast} P^{-1}=
     P U_{R} e^{\eta} U_{L}^{\ast} P^{-1}  \; .
    \end{equation}
    The second equation holds since $e^{\eta}$ is selfadjoint. 
    According to Theorem \ref{theoremKAKdecomposition} however, the
    decomposition (\ref{kakdecompositionincalcu}) is not unique. 
    The comparison of (\ref{kakdecompositionincalcu})
    with (\ref{righthandsidepphiastp}) tells us that there is a $K \in W_{e^{\eta}}$ such that
    \begin{gather}
     \label{nonuniquenessofphiorbifoldeq1}
     U_{L}=P U_{R} K  \\
     \label{nonuniquenessofphiorbifoldeq2}
     U_{R}^{\ast}=K^{\ast} U_{L}^{\ast} P^{-1} \; .
    \end{gather}
    Let us consider (\ref{nonuniquenessofphiorbifoldeq2}). We obtain 
    \begin{equation}
     \label{nonuniquenessofphiorbifoldeq3}
     U_{R}^{\ast}=K^{\ast} U_{L}^{\ast} P^{-1} \Longrightarrow U_{R}=P U_{L} K 
     \Longrightarrow U_{L}=P^{-1} U_{R} K^{-1} \; .
    \end{equation}
    We now restrict to involutive $P$. On the orbifold $S^{1}/\mathbb{Z}_{2}$ this 
    restriction is empty because $P$ fulfils already $P^{2}=1$. In addition, because in one extra
    dimension there exits only one orbifold, namely $S^{1}/\mathbb{Z}_{2}$, the following is true for all
    orbifold models in one extra dimension. Since $P^{2}=1$ it follows that $P=P^{-1}$. Then
    (\ref{nonuniquenessofphiorbifoldeq3}) and (\ref{nonuniquenessofphiorbifoldeq1}) are compatible if 
    \begin{equation}
     K^{2}=1  \; .
    \end{equation}
    This result shows that $K$ can be interpreted as a projection.
    Recapitulating, we can write $\Phi \in H$ as
    \begin{equation}
     \label{expressionphi2}
     \Phi=U_{L} \; e^{\eta} \; K U_{L}^{\ast} P^{-1} \; ,
    \end{equation} 
    where $K,P \in W_{e^{\eta}}$ and $K^{2}=P^{2}=1$. 
    This result motivates the following 
    \begin{definition}[Sharpened orbifold condition for nonunitary $\Phi$]
     Let $H$ be a reductive Lie group, $G$ the maximal compact subgroup of $H$, $A$ a maximal
     noncompact Abelian subgroup of $H$ and $\mathfrak{a}$ the corresponding maximal Abelian Lie Algebra in 
     $\mathfrak{p}$. $\Phi \in H$ can be decomposed according to Theorem \ref{theoremKAKdecomposition} as
     \begin{equation}
      \Phi=U_{L} \; e^{\eta} \; U_{R}^{\ast} \; ,
     \end{equation}
     where $\eta \in \mathfrak{a}$, $U_{L},U_{R} \in G$ and $\mathfrak{a}$ is not unique.
     Given $P \in G$. Then we demand that  
     \begin{enumerate}
       \item \begin{equation}
              \label{choiceofa}
              P \eta P^{-1}=\eta 
             \end{equation} 
       \item $\eta, U_{L}, U_{R}$ satisfy the condition
             \begin{equation} 
              U_{R}=P U_{L} P^{-1} 
             \end{equation}
     \end{enumerate}  
     for a suitable choice of $\mathfrak{a}$.
    \end{definition}
    Remark: \; For a complex Lie group $H$ one can choose a maximal Abelian Lie Algebra $\mathfrak{a}$ 
    such that (\ref{choiceofa}) is automatically fulfilled.  

    \begin{corollary}
     If $\Phi$ fulfils the sharpened orbifold condition and $P$ is involutive, then $\Phi$ also
     satisfies $\Phi=P \Phi^{\ast} P^{-1}$.
    \end{corollary}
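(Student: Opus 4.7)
The plan is to start from the KAK-decomposition $\Phi = U_L e^{\eta} U_R^{\ast}$ guaranteed by Theorem \ref{theoremKAKdecomposition}, compute $\Phi^{\ast}$ directly, and then show that applying $P$ on the left and $P^{-1}$ on the right reproduces $\Phi$ after using each of the hypotheses exactly once.

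First I would take the adjoint of the decomposition to get $\Phi^{\ast} = U_R e^{\eta^{\ast}} U_L^{\ast}$. Because $\eta \in \mathfrak{a}$ and $\mathfrak{a}$ lies in the selfadjoint part of $\mathfrak{h}$, one has $\eta^{\ast} = \eta$ and hence $\Phi^{\ast} = U_R e^{\eta} U_L^{\ast}$. Conjugating by $P$ then yields
\begin{equation*}
P \Phi^{\ast} P^{-1} = (P U_R) \, e^{\eta} \, (U_L^{\ast} P^{-1}).
\end{equation*}

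Next I would process each of the two parenthesised factors using the sharpened orbifold condition. The relation $U_R = P U_L P^{-1}$ combined with the involutivity $P^{2} = 1$ (equivalently $P = P^{-1}$ and, since $P \in G$ is unitary, also $P^{\ast} = P$) gives $P U_R = P^{2} U_L P^{-1} = U_L P^{-1}$. Taking the adjoint of the same relation produces $U_L^{\ast} = P U_R^{\ast} P^{-1}$, so the right-hand factor becomes $U_L^{\ast} P^{-1} = P U_R^{\ast} P^{-2} = P U_R^{\ast}$. Substituting back gives
\begin{equation*}
P \Phi^{\ast} P^{-1} = U_L P^{-1} e^{\eta} P U_R^{\ast}.
\end{equation*}

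Finally I would invoke the first part of the sharpened orbifold condition, $P \eta P^{-1} = \eta$, which is equivalent to $P$ commuting with $e^{\eta}$. This collapses $P^{-1} e^{\eta} P = e^{\eta}$ and yields $P \Phi^{\ast} P^{-1} = U_L e^{\eta} U_R^{\ast} = \Phi$, which is the claim. The calculation is essentially routine; the only point that requires care is that involutivity is used in two distinct places (to simplify both $P U_R$ and $U_L^{\ast} P^{-1}$), and the compatibility of these two uses with the commutation relation $[P, e^{\eta}] = 0$ is exactly what makes the argument close. No obstacle beyond this bookkeeping is expected.
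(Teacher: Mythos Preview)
Your proof is correct and follows essentially the same approach as the paper: both start from the decomposition $\Phi = U_L e^{\eta} U_R^{\ast}$, compute $P\Phi^{\ast}P^{-1}$, and reduce it back to $\Phi$ using $U_R = P U_L P^{-1}$, involutivity $P^2=1$, and $P\eta P^{-1}=\eta$. The only cosmetic difference is the order of operations---the paper first inserts $P^{-1}P$ around $e^{\eta}$ and then simplifies the outer unitary factors, whereas you first push $P$ through the unitary factors and collapse $P^{-1}e^{\eta}P$ at the end.
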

    \begin{proof}
     If $\Phi$ fulfils the sharpened orbifold condition, we can decompose 
     \begin{equation}
      \Phi=U_{L} \; e^{\eta} \; U_{R}^{\ast} \; ,
     \end{equation}
     where $\eta \in \mathfrak{a}$, $U_{L},U_{R} \in G$, $P \in G$ with $P \eta P^{-1}=\eta$
     and $U_{R}=P U_{L} P^{-1}$.
     We get
     \begin{equation}
      P \Phi^{\ast} P^{-1}=P \; U_{R} \; e^{\eta} \; U_{L}^{\ast} \; P^{-1}
      \stackrel{!}{=} P \; U_{R} \; P^{-1} \; e^{\eta} \; P \; U_{L}^{\ast} \; P^{-1} \; ,
     \end{equation}
     where in the second step we have used that $P \eta P^{-1}=\eta$.
     Since $U_{R}=P U_{L} P^{-1}$ and $P^{2}=1$, we further obtain 
     \begin{equation} 
      P \; U_{R} \; P^{-1} \; e^{\eta} \; P \; U_{L}^{\ast} \; P^{-1}=U_{L} \; e^{\eta} \; U_{R}^{\ast}
      =\Phi \; .
     \end{equation} 
    \end{proof} 

    Let $\Phi \in H$ fulfil the sharpened orbifold condition.
    Then $\Phi$ can be decomposed as
    \begin{equation}
     \label{expressionphigen}
     \Phi=U_{L} e^{\eta} U_{R}^{\ast}=U_{L} \; e^{\eta} \; P U_{L}^{\ast} P^{-1} \; ,
    \end{equation}
    where $P \eta P^{-1}=\eta$. $P$ acts on $G$ through an 
    automorphism on its Lie algebra $\mathfrak{g}$. Let $G_{0}$ be the centraliser of $P$ in $G$.
    Since $P$ is an involutive automorphism $\mathfrak{g}$ splits as
    \begin{equation}
     \label{orbifoldprojectiondecompostion}
     \mathfrak{g}=\mathfrak{g}_{0} \oplus \mathfrak{g}_{1} \; ,
    \end{equation}
    where $[\mathfrak{g}_{0},\mathfrak{g}_{0}] \subseteq \mathfrak{g}_{0},
    [\mathfrak{g}_{0},\mathfrak{g}_{1}] \subseteq \mathfrak{g}_{1},
    [\mathfrak{g}_{1},\mathfrak{g}_{1}] \subseteq \mathfrak{g}_{0}$ and 
    $\mathfrak{g}_{0}=\text{Lie} \; G_{0}$.
    $G_{0}$ is called the orbifold unbroken gauge group.  
    $\mathfrak{g}_{1}$ is the orthogonal complement of $\mathfrak{g}_{0}$ and may be viewed as the
    tangent vector to the coset space $G/G_{0}$.  
    Let $g_{0} \in G_{0}$ and $A_{y} \in \mathfrak{g}_{1}$. 
    Then $g_{0}$ and $A_{y}$ fulfil
    \begin{gather}
     P g_{0} P^{-1}=g_{0} \; , \\
     P A_{y} P^{-1}=-A_{y} \; .
    \end{gather}   
    We can decompose $U_{L}$ (at least in a small neighbourhood of the identity) as
    \begin{equation}
     U_{L}= g_{0} \; e^{A_{y}}
    \end{equation}
    according to the action of $P$ on $G$.
    We insert this decomposition into (\ref{expressionphigen}) and obtain 
    \begin{eqnarray}
     \Phi & = & g_{0} e^{A_{y}}  \; e^{\eta} \; P ( g_{0} e^{A_{y}}  )^{\ast} P^{-1} \\
          & = & g_{0} e^{A_{y}}  \; e^{\eta} \; P e^{A_{y}^{\ast}}  \underbrace{P^{-1}P}_{=1}  
                g_{0}^{\ast} P^{-1} \\
          & = & g_{0} e^{A_{y}} \; e^{\eta} \; P e^{A_{y}^{\ast}} P^{-1} \; g_{0}^{\ast} \\
          & = & g_{0} e^{A_{y}} g_{0}^{-1} \; g_{0} e^{\eta} g_{0}^{-1} \; g_{0} e^{A_{y}}  g_{0}^{-1} \\
          & = & e^{A^{\prime}_{y}} \; e^{\eta^{\prime}} \; e^{A^{\prime}_{y}} \; ,
    \end{eqnarray}
    where 
    \begin{gather}
     A^{\prime}_{y}=g_{0} A_{y} g_{0}^{-1} \in \mathfrak{g}_{1} \\
     \eta^{\prime}=g_{0} \eta g_{0}^{-1} \in  \mathfrak{a}^{\prime}=Ad(g_{0}) \mathfrak{a} \; .
    \end{gather}
    We summarise this result in the following 
    \begin{theorem}
     \label{theoremorbifoldconditionsfornupt}
     Suppose that $\Phi \in H$ fulfils the sharpened orbifold condition.
     Then $\Phi$ can be decomposed as
     \begin{equation}  
      \label{kakdecompositionsharpoc}
      \Phi=U_{L} e^{\eta} U_{R}^{\ast}=U_{L} \; e^{\eta} \; P U_{L}^{\ast} P^{-1} \; ,
     \end{equation}
     where $\eta \in \mathfrak{a}$, $U_{L},U_{R} \in G$, $\mathfrak{a}$ such that $P \eta P^{-1}=\eta$
     and $U_{L},U_{R}$ such that $U_{R}=P U_{L} P^{-1}$. Let $P \in G$ be involutive
     and let $\mathfrak{g}=Lie \; G$ split as
     \begin{equation}
      \mathfrak{g}=\mathfrak{g}_{0} \oplus \mathfrak{g}_{1}
     \end{equation} 
     according to the action of $P$ on $\mathfrak{g}$. Then $\Phi$ can be written as
     \begin{equation}
      \Phi = e^{A_{y}} \; e^{\eta} \; e^{A_{y}} \; ,
     \end{equation}
     and 
     \begin{eqnarray} 
      P A_{y} P^{-1} & = & -A_{y} \; , \\
      P \eta P^{-1}  & = & \eta  \; ,
     \end{eqnarray}
     where $A_{y} \in \mathfrak{g}_{1}$ and $\eta \in \mathfrak{a}$. Since $\Phi$ in $H$ fulfils the
     sharpened orbifold condition $P$ has the property
     \begin{equation}
      P \in W_{e^{\eta}} \; .
     \end{equation}
    \end{theorem}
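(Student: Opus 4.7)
The plan is to combine the KAK decomposition (Theorem \ref{theoremKAKdecomposition}) with the sharpened orbifold condition, and then exploit the $\mathbb{Z}_2$-grading of $\mathfrak{g}$ induced by the involution $P$ to pull apart $U_L$ into a piece commuting with $P$ and a piece anticommuting with $P$.

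First, I would take the KAK decomposition $\Phi = U_L e^{\eta} U_R^{\ast}$ supplied by the sharpened orbifold condition, so that $\eta \in \mathfrak{a}$ with $P\eta P^{-1}=\eta$ and $U_R = P U_L P^{-1}$. Substituting the relation between $U_R$ and $U_L$ gives
\begin{equation}
\Phi \;=\; U_L\, e^{\eta}\, (P U_L P^{-1})^{\ast} \;=\; U_L\, e^{\eta}\, P\, U_L^{\ast}\, P^{-1},
\end{equation}
where I used that $P\in G$ is unitary and involutive, so $P^{\ast}=P^{-1}=P$.

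Second, I would invoke the involutivity of $P$ to obtain the eigenspace decomposition $\mathfrak{g}=\mathfrak{g}_0\oplus\mathfrak{g}_1$ from (\ref{orbifoldprojectiondecompostion}). In a neighbourhood of the identity, any $U_L\in G$ factors as $U_L = g_0\, e^{A_y}$ with $g_0\in G_0$ and $A_y\in\mathfrak{g}_1$; for a global statement one uses the Cartan-like decomposition associated with the involution $\mathrm{Ad}(P)$ of $\mathfrak{g}$. Under conjugation by $P$ one has $P g_0 P^{-1}=g_0$ and $P e^{A_y} P^{-1}=e^{-A_y}$; since $G$ is unitary, $A_y^{\ast}=-A_y$ and hence $P e^{A_y^{\ast}} P^{-1}=e^{A_y}$. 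Plugging $U_L = g_0 e^{A_y}$ into the expression above and rearranging,
\begin{equation}
\Phi \;=\; g_0\, e^{A_y}\, e^{\eta}\, \bigl(P e^{A_y^{\ast}} P^{-1}\bigr)\bigl(P g_0^{\ast} P^{-1}\bigr) \;=\; g_0\, e^{A_y}\, e^{\eta}\, e^{A_y}\, g_0^{\ast},
\end{equation}
where I used $P g_0^{\ast} P^{-1} = g_0^{\ast}$ (since $g_0$ commutes with $P$) and $g_0^{\ast}=g_0^{-1}$.

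Third, I would absorb $g_0$ by the inner automorphism $\mathrm{Ad}(g_0)$. Inserting $g_0^{-1} g_0 = 1$ between the factors yields $\Phi = e^{A'_y}\, e^{\eta'}\, e^{A'_y}$ with $A'_y=\mathrm{Ad}(g_0)A_y$ and $\eta'=\mathrm{Ad}(g_0)\eta$; these stay in the right subspaces because $[\mathfrak{g}_0,\mathfrak{g}_1]\subseteq\mathfrak{g}_1$ and because $\mathfrak{a}'=\mathrm{Ad}(g_0)\mathfrak{a}$ is again a valid maximal Abelian subspace in $i\mathfrak{g}$. A quick check using $P g_0 P^{-1}=g_0$ together with $P A_y P^{-1}=-A_y$ and $P\eta P^{-1}=\eta$ confirms $P A'_y P^{-1}=-A'_y$ and $P\eta' P^{-1}=\eta'$. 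Relabeling $A'_y\to A_y$ and $\eta'\to\eta$ yields the claimed form. The condition $P\in W_{e^{\eta}}$ follows immediately from $P\eta P^{-1}=\eta$, which is built into the sharpened orbifold condition.

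I expect the main subtlety to lie not in any single algebraic step but in keeping track of the non-uniqueness of the KAK decomposition: the factorisation $\Phi=U_L e^{\eta} U_R^{\ast}$ and the subsequent splitting $U_L=g_0 e^{A_y}$ are only unique up to $W_{e^{\eta}}$ and $G_0$, respectively, so the proof really produces $(A_y,\eta)$ only up to this ambiguity. Verifying that one can always choose representatives compatible with $\mathfrak{a}$ such that $P\eta P^{-1}=\eta$ holds (point 1 of the sharpened orbifold condition and the remark that follows it) is what makes the argument go through.
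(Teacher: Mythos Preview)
Your proof is correct and follows essentially the same route as the paper: both start from the sharpened orbifold condition $\Phi=U_L e^{\eta} P U_L^{\ast} P^{-1}$, factor $U_L=g_0 e^{A_y}$ using the $\mathbb{Z}_2$-grading induced by the involution $P$, simplify using $P g_0^{\ast} P^{-1}=g_0^{\ast}$ and $P e^{A_y^{\ast}} P^{-1}=e^{A_y}$, and then absorb $g_0$ by conjugation to obtain $\Phi=e^{A_y'} e^{\eta'} e^{A_y'}$ with $A_y'\in\mathfrak{g}_1$ and $\eta'\in\mathrm{Ad}(g_0)\mathfrak{a}$. Your closing remarks on the non-uniqueness of the KAK and $G_0$-factorisations are a welcome clarification but do not depart from the paper's argument.
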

    Remarks: \; i) \; For unitary $\Phi$, that is $e^{\eta}=1$, we have $W_{e^{\eta}}=G$ and 
    therefore $P \in G$. $\Phi$ can be written as 
    \begin{equation} 
      \Phi = e^{2 A_{y}} \; .
    \end{equation} 
    Thus if $e^{\eta}=1$ we recover the conventional orbifold case. \\
    ii) \; If $a=e^{\eta}$ is generic then Corollary \ref{corallarykakdecomposition} yield
    $W_{e^{\eta}}=W$. Thus $P \in W$. \\

    \begin{example}
     \label{examplesl3c}
     We consider the Lie algebra $\mathfrak{h}=sl(3,\mathbb{C})$. It possesses the Cartan 
     decomposition $sl(3,\mathbb{C})=su(3) \oplus i su(3)$. Let $\mathfrak{a}=\{ \eta=diag(a_{1},a_{2}
     ,a_{3}) \}$ be a maximal Abelian Lie algebra of $\mathfrak{p}$, where $\sum a_{i}=0, \; a_{i} \in 
     \mathbb{R}$. In addition, let $\lambda_{i},i=1,\dots,8$ denote the Gell-Mann matrices 
     generating $SU(3)$. Let $\eta \in \mathfrak{a}$ be generic, i.e. $\eta=(a_{1},a_{2},a_{3})$ where the 
     $a_{i}$ are all distinct. Then $W_{e^{\eta}}=W$ is the torus $T$ consisting of all diagonal matrices
     in $SU(3)$. Since $P \in T$, the orbifold projection $P$ has to be a diagonal matrix.
     For example we can choose $P \in T \subset G$ as 
     \begin{equation}
      P=\exp(i \pi \sqrt{3} \lambda_{8})=diag(-1,-1,1) \; .
     \end{equation}
     Note that this choice for $P$ breaks the unitary gauge group $G=SU(3)$ down to 
     $G_{0}=SU(2) \times U(1)$. 
    \end{example}

   \section{Spontaneous symmetry breaking}

    In this section, we discuss the topic of spontaneous symmetry breaking in detail.
    First we introduce some terminology.
    \begin{definition}
     \label{definitiongeta}
     Let $G_{0 \eta}$ be the centraliser of $\eta \in \mathfrak{a}$ in $G_{0}$, i.e.
     \begin{equation}
      G_{0 \eta}=\{ g \in G_{0} \mid Ad(g) \eta=\eta\} \; .
     \end{equation}
     $G_{0 \eta}$ is called the unbroken subgroup of $G_{0}$ with respect to $\eta$.
    \end{definition}
    In the context of orbifold- and spontaneous symmetry breaking one is usually faced with the 
    situation where first the bulk gauge group $G$ is
    broken by orbifolding to $G_{0}$ at high energies and second $G_{0}$ is broken further spontaneously
    to $G_{0 \eta}$. We schematically write
    \begin{equation}
     G \stackrel{P}{\longrightarrow} G_{0} \stackrel{\eta}{\longrightarrow} G_{0 \eta} \; .
    \end{equation}

    Let us consider the case where $P=1$. For $P=1$
    the unitary gauge group $G$ remains unbroken and $\mathfrak{g}_{0}$ the Lie algebra of $G_{0}$
    equals $\mathfrak{g}$. Suppose $\Phi$ fulfils the sharpened orbifold
    condition. Thus we can write $\Phi$ according to
    Theorem \ref{theoremorbifoldconditionsfornupt} as
    \begin{equation}
     \Phi=U_{L} e^{\eta} U_{L}^{\ast} \; ,
    \end{equation}
    where $\eta \in \mathfrak{a}$, $U_{L},U_{L}^{\ast} \in G$ and $\mathfrak{a}$ such that $P \eta P=\eta$.
    This follows directly from the decomposition (\ref{kakdecompositionsharpoc})  
    with $P=1$. Let us consider the Higgs potential 
    \begin{equation}
     \label{phiunbroken1}
     V(\Phi)=V(U_{L} e^{\eta} U_{L}^{\ast}) \; .
    \end{equation}  
    Since $\Phi$ transforms under a unitary gauge transformation as
    \begin{equation}
     \Phi \mapsto S(x) \Phi S(x)^{-1} \; ,
    \end{equation}
    where $S(x) \in G$, the unitary factors $U_{L},U_{L}^{\ast} \in G$ in (\ref{phiunbroken1})
    can be transformed away.  
    As a consequence the Higgs potential $V(\Phi)$ can be written as a function which depends only
    on $\eta$, and we have
    \begin{corollary}
     \label{coralarytrivialphiggspotential}
     Suppose that the Higgs potential $V(\Phi)$ is $G$-invariant, i.e.
     \begin{equation}
      V(S(x) \Phi S(x)^{-1})=V(\Phi)
     \end{equation}
     for all $S(x) \in G, \Phi \in H$. Then there exists a function $\mathcal{V}$ on
     $\mathfrak{a}$ such that 
     \begin{equation}
      V(\Phi)=V(g e^{\eta} g^{\ast})=\mathcal{V}(\eta) \quad \text{for all} \; g \in G, \eta \in \mathfrak{a}
     \end{equation}
     and
     \begin{equation} 
      \mathcal{V}(\eta)=\mathcal{V}(w(\eta)) \quad \text{for all} \; w \in W(G,A) \; .
     \end{equation}
    \end{corollary}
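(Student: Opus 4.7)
The plan is to reduce the claim to two elementary computations using $G$-invariance of $V$ together with the KAK decomposition from Theorem \ref{theoremKAKdecomposition}. First I would define $\mathcal{V}(\eta) := V(e^{\eta})$ and verify that $V(g e^{\eta} g^{\ast}) = \mathcal{V}(\eta)$. For this, note that $g^{\ast}=g^{-1}$ since $g \in G$ is unitary, so choosing the gauge transformation $S(x) \equiv g^{-1} \in G$ gives
\begin{equation}
V(g e^{\eta} g^{\ast}) = V\!\left(g^{-1}(g e^{\eta} g^{-1})g\right) = V(e^{\eta}) = \mathcal{V}(\eta),
\end{equation}
by the assumed $G$-invariance of $V$. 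This already covers the first half of the corollary; the argument even extends, via a second application of $G$-invariance, to an arbitrary KAK decomposition $\Phi = U_L e^{\eta} U_R^{\ast}$, so that $V$ only ever sees the selfadjoint factor.

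Next I would establish the Weyl invariance $\mathcal{V}(w(\eta))=\mathcal{V}(\eta)$. Recall from the text preceding Theorem \ref{theoremKAKdecomposition} that $W(G,A)=W^{\ast}/W$, where $W^{\ast}$ is the normaliser of $\mathfrak{a}$ in $G$. Thus any $w\in W(G,A)$ is represented by some $n\in W^{\ast}\subset G$ acting as $w(\eta)=Ad(n)\eta = n\eta n^{-1}\in\mathfrak{a}$. Since $n\in G$ I can again apply $G$-invariance of $V$ to compute
\begin{equation}
\mathcal{V}(w(\eta)) = V\!\left(e^{n\eta n^{-1}}\right) = V\!\left(n e^{\eta} n^{-1}\right) = V(e^{\eta}) = \mathcal{V}(\eta).
\end{equation}
Independence of this value from the particular representative $n$ modulo the centraliser $W$ is automatic, because elements of $W$ act trivially on $\mathfrak{a}$ by definition and hence do not change $e^{\eta}$.

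There is essentially no hard step here; the content of the corollary is just a dictionary translation of the existence/uniqueness properties recorded in Theorem \ref{theoremKAKdecomposition} into the language of $V$. The only point that needs a small comment is the well-definedness of $\mathcal{V}$ as a function on $\mathfrak{a}$: different KAK decompositions of the same $\Phi$ produce the same $e^{\eta}$ up to conjugation by $W_{e^{\eta}}\subset G$, and conjugation by any element of $G$ leaves $V$ invariant, so the value $V(e^{\eta})$ depends only on $\eta$ (and in fact only on its Weyl orbit, as shown above). This completes the plan.
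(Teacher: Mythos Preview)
Your proposal is correct and follows essentially the same approach as the paper: the paper's argument is just the paragraph immediately preceding the corollary, where it observes that the unitary factors $U_L, U_L^\ast \in G$ in $\Phi = U_L e^{\eta} U_L^\ast$ can be gauged away by $G$-invariance, and the single sentence after the corollary asserting the Weyl symmetry. You have supplied the explicit computation for both parts, in particular the Weyl-invariance step (representing $w$ by $n\in W^\ast\subset G$ and using $e^{n\eta n^{-1}}=n e^\eta n^{-1}$), which the paper leaves implicit.
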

    Thus $W(G,A)$ is a discrete group of symmetries of the Higgs potential $V(\Phi)$.

    \begin{definition}
     Let $\mathcal{W}_{\eta} \subset W(g,A)$ be the subgroup of elements which leave $\eta$ invariant, i.e.
     \begin{equation}
      \mathcal{W}_{\eta}=\{ \omega \in W(H,A) \mid \omega(\eta)=\eta \} \; .
     \end{equation} 
    \end{definition}
    We note that $\mathcal{W}_{\eta}=(G_{0\eta} \cap W^{\ast})/W$.

    \begin{definition}
     We call $\eta \in \mathfrak{a}$ generic if $G_{0\eta}=W$. 
    \end{definition}
 
    \begin{corollary}
     If $\eta \in \mathfrak{a}$ is generic, then $\mathcal{W}_{\eta}$ is trivial.
    \end{corollary}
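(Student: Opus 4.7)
The plan is to exploit directly the identification $\mathcal{W}_\eta = (G_{0\eta} \cap W^*)/W$ stated in the remark preceding the corollary. Once this formula is available, the argument becomes a short chain of set-theoretic inclusions combined with the genericity hypothesis.

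First, I would observe that $W \subseteq W^*$. This is tautological: by definition $W$ is the centraliser of $\mathfrak{a}$ in $G$, so every $g \in W$ satisfies $g x g^{-1}=x$ for all $x \in \mathfrak{a}$, which in particular implies $Ad(g)\mathfrak{a} \subseteq \mathfrak{a}$, the defining property of $W^*$. Next, I would note that $W \subseteq G_{0\eta}$. Indeed, an element $g \in W$ fixes every element of $\mathfrak{a}$ under $Ad$, hence fixes $\eta \in \mathfrak{a}$ in particular; and since $W$ lies in $G$ and commutes with $P \in T$ (the elements of $W$ being built from a common torus containing $P$), one verifies $W \subseteq G_0$, so $W \subseteq G_0 \cap \{g \in G : Ad(g)\eta=\eta\} = G_{0\eta}$.

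Now I would invoke the genericity assumption, which is the defining equality $G_{0\eta}=W$. Combining it with the inclusion $W \subseteq W^*$ gives
\begin{equation}
G_{0\eta} \cap W^* = W \cap W^* = W .
\end{equation}
Substituting into the formula for $\mathcal{W}_\eta$ then yields $\mathcal{W}_\eta = W/W = \{1\}$, which is the claim.

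The step I would expect to require the most care is the verification of the identity $\mathcal{W}_\eta = (G_{0\eta} \cap W^*)/W$ itself, since the corollary relies on it. If that identity is not regarded as a definition but needs justification, I would argue as follows: an element $\omega \in W(G,A) = W^*/W$ is represented by some $n \in W^*$, and $\omega(\eta)=\eta$ means $n \eta n^{-1}=\eta$, i.e.\ $n \in G_\eta$; together with $n \in W^* \subseteq G$ and (after checking $n \in G_0$ when $P$ commutes with $\eta$, which holds by $P \in W_{e^\eta}$) one obtains $n \in G_{0\eta} \cap W^*$. The quotient by the kernel $W$ of the map $W^* \to W(G,A)$ produces the stated identification, after which the short inclusion argument above finishes the proof.
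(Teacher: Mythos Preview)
Your proof is correct and follows the same line the paper has in mind; the paper simply records ``The proof is obvious'' and leaves the reader to unpack the identity $\mathcal{W}_\eta=(G_{0\eta}\cap W^*)/W$ together with the genericity hypothesis $G_{0\eta}=W$, which is exactly what you do. Your separate verification that $W\subseteq G_{0\eta}$ is harmless but unnecessary, since the equality $G_{0\eta}=W$ is assumed outright and the computation $G_{0\eta}\cap W^*=W\cap W^*=W$ follows at once from $W\subseteq W^*$.
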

    The proof is obvious.
    
    \begin{example}
     \label{examplesu3forsu7}
     We consider the Lie algebra $\mathfrak{h}=sl(3,\mathbb{C})$. It possesses the Cartan 
     decomposition $sl(3,\mathbb{C})=su(3) \oplus i su(3)$. Let $\mathfrak{a}=\{ \eta=diag(a_{1},a_{2}
     ,a_{3}) \}$ be a maximal Abelian Lie algebra of $\mathfrak{p}$, where $\sum a_{i}=0, \; a_{i} \in 
     \mathbb{R}$. In addition, let $\lambda_{i},i=1,\dots,8$ denote the Gell-Mann matrices 
     generating $SU(3)$. \\
     First, let $\eta \in \mathfrak{a}$ be generic, i.e. $\eta=(a_{1},a_{2},a_{3})$ where the $a_{i}$ are
     all distinct.  Then $G_{0\eta}=T=W$ is a Cartan subgroup of $SU(3)$.
     Since $G_{0\eta}=W$, it follows that $\mathcal{W}_{\eta}$ is trivial. \\
     Second, let $\eta \in \mathfrak{a}$ be nongeneric, e.g. $\eta=(a_{1},a_{2},a_{3})$ where 
     $a_{1}=a_{2}$. In this case the unbroken subgroup $G_{0\eta}$ of $G=SU(3)$ is generated 
     by $\{ \lambda_{1},\lambda_{2},\lambda_{3},\lambda_{8} \}$ and consequently 
     $G_{0\eta}=SU(2) \times U(1)$. The Weyl group
     $\mathcal{W}_{\eta}$ is $\mathcal{S}_{2}$, that is the permutation group of the two elements
     $(a_{1},a_{2})$.
    \end{example}
    
    Next we consider the case where $P \neq 1$. Then $\mathfrak{g}=\text{Lie} \; G$ splits as
    \begin{equation}
      \mathfrak{g}=\mathfrak{g}_{0} \oplus \mathfrak{g}_{1}
    \end{equation} 
    according to the action of $P$ on $\mathfrak{g}$. Hence the orbifold unbroken gauge group
    $G_{0}$ has Lie algebra $\mathfrak{g}_{0}$. 
    Suppose $\Phi \in H$ fulfils the sharpened orbifold condition.
    According to Theorem \ref{theoremorbifoldconditionsfornupt}, we can write
    \begin{equation}
     \Phi = e^{A_{y}} \; e^{\eta} \; e^{A_{y}} \; ,
    \end{equation}
    and
    \begin{eqnarray} 
     P A_{y} P^{-1} & = & -A_{y} \; , \\
     P \eta P^{-1}  & = & \eta \; ,
    \end{eqnarray}
    where $A_{y} \in \mathfrak{g}_{1}$ and $\eta \in \mathfrak{a}$. 
    We consider the Higgs potential 
    \begin{equation}
     \label{higgspotnontrivailp}
     V(\Phi)=V(e^{A_{y}} \; e^{\eta} \; e^{A_{y}}) \; .
    \end{equation} 
    Since $G$ is broken to $G_{0}$, $\Phi$ transforms under a unitary gauge transformation as
    \begin{equation}
     \Phi \mapsto S_{0}(x) \Phi S_{0}(x)^{-1} \; ,
    \end{equation}
    where $S_{0}(x) \in G_{0}$. In contrast to the case where $P=1$, the unitary factors $e^{A_{y}}$ in
    (\ref{higgspotnontrivailp}) cannot be gauged away due to the lack of gauge invariance. 
    Therefore the Higgs potential $V(\Phi)$ depends also $A_{y}$ and we have 
    \begin{theorem}
     \label{theoremnuhiggspotential}
     Suppose that $\Phi \in H$ can be written as
     \begin{equation}
      \Phi = e^{A_{y}} \; e^{\eta} \; e^{A_{y}} \; ,
     \end{equation}
     and
     \begin{eqnarray} 
      P A_{y} P^{-1} & = & -A_{y}  \; ,\\
      P \eta P^{-1}  & = & \eta \; ,
     \end{eqnarray}
     where $A_{y} \in \mathfrak{g}_{1}$, $\eta \in \mathfrak{a}$ for a suitable choice of $\mathfrak{a}$. The 
     action of $P \in W_{e^{\eta}}$ leads to a split $
     \mathfrak{g}=\mathfrak{g}_{0} \oplus \mathfrak{g}_{1}$,
     where the orbifold unbroken gauge group $G_{0}$ has Lie algebra $\mathfrak{g}_{0}$. Then
     \begin{enumerate}
      \item 
       the Higgs potential $V(\Phi)$ is $G_{0}$-invariant 
       \begin{equation}
        \label{theoremreducedgaugeinvariance}
        V(S_{0}(x) \Phi S_{0}(x)^{-1})=V(\Phi)
       \end{equation}
       for all $S_{0}(x) \in G_{0}, \Phi \in H$.
      \item
       there exists a function $\mathcal{V}$ on 
       $\mathfrak{a} \times  \mathfrak{g}_{1}$ such that 
       \begin{equation}
        \label{theoremhiggspotentialetapluswilsonlines}
        V(\Phi)=V(e^{A_{y}} \; e^{\eta} \; e^{A_{y}})
        =\mathcal{V}(\eta,A_{y}) \; . 
       \end{equation}
       Due to (\ref{theoremreducedgaugeinvariance}) we have
       \begin{equation}
        \mathcal{V}(\eta^{\prime},A^{\prime}_{y})=\mathcal{V}(\eta,A_{y}) \; ,
       \end{equation}
       when $\eta^{\prime} \in \mathfrak{a}^{\prime}=Ad(g_{0}) \mathfrak{a}$,
       $A_{y}^{\prime}=g_{0} A_{y} g_{0}^{-1} \in \mathfrak{g}_{1}$ for some $g_{0} \in G_{0}$.
      \item
       in (\ref{theoremhiggspotentialetapluswilsonlines}) $A_{y}$ cannot be gauged away because
       $S_{0}(x)$ in (\ref{theoremreducedgaugeinvariance}) is restricted to $G_{0} \subset G$.
     \end{enumerate}
    \end{theorem}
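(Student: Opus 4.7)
The plan is to establish the three items in order, treating (1) as the foundational statement and (2), (3) as structured consequences of the canonical decomposition obtained in Theorem \ref{theoremorbifoldconditionsfornupt}.

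For part (1), I would argue as follows. The Higgs potential $V(\Phi)$ arises from the RG-flow from a bulk gauge-invariant theory with gauge group $G$. After orbifolding, the boundary gauge group is reduced to $G_0$, and the residual four-dimensional gauge transformations act on $\Phi$ by $\Phi \mapsto S_0(x) \Phi S_0(x)^{-1}$ with $S_0(x) \in G_0$. Since any effective potential inherited from an RG-flow of a $G$-invariant action must in particular be invariant under the surviving $G_0$ symmetry, equation (\ref{theoremreducedgaugeinvariance}) follows. One must additionally check that $G_0$-conjugation preserves the structural form $\Phi = e^{A_y} e^\eta e^{A_y}$, which reduces to verifying $\mathrm{Ad}(G_0) \mathfrak{g}_1 \subseteq \mathfrak{g}_1$ and $\mathrm{Ad}(G_0) \mathfrak{a} \subseteq \mathfrak{p}$; the former is immediate from the decomposition $\mathfrak{g} = \mathfrak{g}_0 \oplus \mathfrak{g}_1$ with the commutation relation $[\mathfrak{g}_0,\mathfrak{g}_1] \subseteq \mathfrak{g}_1$, the latter from the Cartan decomposition.

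For part (2), the decomposition $\Phi = e^{A_y} e^\eta e^{A_y}$ already expresses $\Phi$ as a function of the pair $(\eta, A_y) \in \mathfrak{a} \times \mathfrak{g}_1$. I would define $\mathcal{V}(\eta, A_y) := V(e^{A_y} e^\eta e^{A_y})$, which is well-defined as a map from $\mathfrak{a} \times \mathfrak{g}_1$ to $\mathbb{R}$. To establish the equivariance under $G_0$, I would conjugate $\Phi$ by $g_0 \in G_0$: using $[g_0, P] = 0$ one verifies that $g_0 A_y g_0^{-1} \in \mathfrak{g}_1$ and $g_0 \eta g_0^{-1} \in \mathrm{Ad}(g_0) \mathfrak{a}$, and that the conjugate again has the canonical form with $(\eta',A_y') = (g_0 \eta g_0^{-1}, g_0 A_y g_0^{-1})$. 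Invoking (\ref{theoremreducedgaugeinvariance}) from part (1) then gives $\mathcal{V}(\eta', A_y') = \mathcal{V}(\eta, A_y)$.

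For part (3), I would argue by contradiction: to gauge $A_y$ to zero one would need a unitary $S(x)$ such that $S e^{A_y} S^{-1} = 1$, i.e. an element lying in the same $\mathrm{Ad}$-orbit as the identity within the coset $\exp \mathfrak{g}_1$. Within the full group $G$ this could be arranged by absorbing the factor $e^{A_y}$ into a bulk gauge transformation, but the only residual gauge freedom is $G_0$, and $\mathrm{Ad}(G_0)$ maps $\mathfrak{g}_1$ into itself and fixes the identity --- it cannot carry a nonzero $A_y \in \mathfrak{g}_1$ to $0$. Hence $A_y$ persists as a genuine physical degree of freedom in $\mathcal{V}(\eta, A_y)$.

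The main obstacle I expect is part (2): the $KAK$-decomposition underlying $\Phi = e^{A_y} e^\eta e^{A_y}$ is not unique (cf.\ Theorem \ref{theoremKAKdecomposition} and the remark about $\mathcal{W}_\eta$), so one has to check that $\mathcal{V}$ is in fact well-defined as a function of $(\eta, A_y)$ modulo the residual Weyl-type ambiguity, and that the equivariance statement is compatible with the freedom in choosing $\mathfrak{a}$. This amounts to verifying that two choices leading to the same $\Phi$ give the same value of $V$, which is automatic since $V$ is defined on $\Phi$ itself, but it is the point requiring the most care.
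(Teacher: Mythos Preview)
Your proposal is correct and follows essentially the same reasoning as the paper. In fact, the paper does not supply a separate formal proof of this theorem at all: it is stated as a summary of the short discussion immediately preceding it, which simply observes that after orbifolding the residual gauge transformations are restricted to $G_{0}$, and that therefore the unitary factor $e^{A_{y}}$ with $A_{y}\in\mathfrak{g}_{1}$ cannot be gauged away. Your treatment is more explicit than the paper's --- you spell out the $\mathrm{Ad}(G_{0})$-stability of $\mathfrak{g}_{1}$, the well-definedness of $\mathcal{V}$, and the contradiction argument for part (3) --- but the underlying ideas are identical, and your extra care about the non-uniqueness of the $KAK$-decomposition is a point the paper leaves entirely implicit.
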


 \section{The customary approximation scheme of a truncated $S^{1}/\mathbb{Z}_{2}$ orbifold model}

  In section \ref{sectioneBTLM} we have obtained
  that for a non-Abelian gauge theory with gauge group $G=SU(N)$ and trivial orbifold 
  projection $P$
  \begin{enumerate}
   \item an eBTLM with minimum of the Higgs potential $V(\Phi)$ of the form 
         \begin{equation}
          \label{miniumphiebtlm}
          \Phi_{min}=\rho_{min} \; \frac{1}{\sqrt{2}} \mathbf{1}_{N} \; ,
         \end{equation} 
         see (\ref{mimimumphinonabelian}), 
         leads to the common mass term 
         \begin{equation}
          m=g \rho_{min} 
         \end{equation}
         for all first excited KK mode gauge fields.
   \item such an eBTLM with $\Phi_{min}$ given by (\ref{miniumphiebtlm}) equals a $S^{1}/\mathbb{Z}_{2}$ 
         continuum orbifold model with trivial orbifold projection and a Fourier
         mode expansion truncated for all fields at the first Kaluza-Klein mode, if we require  
         \begin{equation}
          g \rho_{min}=\frac{1}{R} \; ,
         \end{equation}
         see (\ref{kkmoderequirement}).
  \end{enumerate}
  Note that in this equation $g$ is the dimensionless four-dimensional effective gauge coupling constant of 
  the eBTLM and thus $\rho_{min}$ has mass dimension $1$. 

  We will now derive
  (\ref{miniumphiebtlm}). In the following calculations we restrict ourselves for simplicity to
  the bulk gauge group $G=SU(2)$.
  However, the results of this section can be generalised to $G=SU(N)$ in a straightforward
  way.  
  For $G=SU(2)$, we assume that the holonomy group is given by
  $H=\mathbb{R}_{\ast}^{+} \; SL(2{,}\mathbb{C})$, with $\mathbb{R}_{\ast}^{+}=\mathbb{R}^{+} / \{0\}$.
  Let $\Phi \in H$ fulfil the sharpened orbifold condition. Then $\Phi \in H$ can be written
  according to Theorem \ref{theoremorbifoldconditionsfornupt} as
  \begin{equation}
   \label{expansionphiunbrokensu2}
   \Phi=\rho \; U_{L} e^{\eta} U_{R}^{\ast}=\rho \; U_{L} \; e^{\eta} \; P U_{L}^{\ast} P^{-1} \; ,
  \end{equation}
  where $\eta \in \mathfrak{a}$, $U_{L},U_{R} \in SU(2)$, $\rho \in \mathbb{R}_{\ast}^{+}$,
  $\mathfrak{a}$ such that $P \eta P^{-1}=\eta$
  and $U_{L},U_{R}$ such that $U_{R}=P U_{L} P^{-1}$. We have to make a choice for
  $\mathfrak{a}$. Since $H$ is complex we can always choose a maximal Abelian Lie algebra 
  $\mathfrak{a} \subset i \mathfrak{su}(2)$, such that $P \eta P^{-1}=\eta$ is automatically fulfilled.
  Without loss of generality suppose $P$ is diagonal. Then
  \begin{equation}
   \label{maximalablianunbrokensu2}
   \mathfrak{a}=\{ \eta=diag(a_{1},a_{2}) \} \quad , \quad a_{1}=-a_{2} \; , \quad a_{i} \in 
     \mathbb{R}
  \end{equation}
  is a maximal Abelian Lie algebra of $i \mathfrak{su(2)}$ and $P \eta P^{-1}=\eta$ for all $\eta 
  \in \mathfrak{a}$. We first focus on the case where $P$ is trivial, i.e. $P=\text{diag}(1,1)$.
  Hence (\ref{expansionphiunbrokensu2}) reads
  \begin{equation}
   \Phi=\rho \; U_{L} \; e^{\eta} \; U_{L}^{\ast} \; .
  \end{equation}
  Let us consider the Higgs potential 
  \begin{equation}
   \label{Higgspotentialunbrokensu2}
   V(\Phi)=V(\rho \; U_{L} \; e^{\eta} \; U_{L}^{\ast}) \; .
  \end{equation}
  $\Phi$ transforms under a unitary gauge transformation according to
  $\Phi \to S(x) \Phi S(x)^{-1}$, where $S(x) \in SU(2)$. Consequently the unitary factor $U_{L}$ in 
  (\ref{Higgspotentialunbrokensu2}) can be transformed away. Thus, according to 
  Corollary \ref{coralarytrivialphiggspotential}, the Higgs potential $V(\phi)$ depends only on
  $\eta$ and $\rho$
  \begin{equation}
   \label{fieldphiunbrokensu2}
   V(\rho \; U_{L} \; e^{\eta} \; U_{L}^{\ast})=V(\rho \; e^{\eta})=\mathcal{V}(\rho,\eta) \; .
  \end{equation}
  Let $V(\Phi)$ assume its minimum at $\Phi_{min}$. According to 
  (\ref{fieldphiunbrokensu2}) we can parametrise any $\Phi_{min}$ as   
  \begin{equation}
   \label{minimumofphi1}
   \Phi_{min}=\rho_{min} \frac{1}{\sqrt{2}} \left( \begin{array}{cc} e^{a_{1}} & 0 \\ 
                      0 & e^{a_{2}} \end{array} \right)
   \; , \; a_{1}=-a_{2} \; , \; a_{i} \in \mathbb{R} \; .
  \end{equation}
  Note that the $a_{i}$ in (\ref{minimumofphi1}) are dimensionless parameters.
  Since $P$ is trivial, the bulk gauge group $G=SU(2)$ remains unbroken at both boundaries. 
  Figure \ref{figureunbrokensu2} summarises the setting.
  \begin{figure}
   \begin{equation*}
   \begin{picture} (18,4.5)
   \thicklines
   \multiput(2,1.5) (0,2.5) {2} {\line (1,0) {9}}
   \thinlines
   \multiputlist(11.5,1.5)(0,2.5){$L$,$R$}
   \multiputlist(6.5,4.5)(7,0){$G_{0}=SU(2)$}
   \multiputlist(6.5,2.6)(2.5,0){$G=SU(2) \subset H=\mathbb{R}_{\ast}^{+} \; SL(2{,}\mathbb{C})$}
   \multiputlist(6.5,1)(7,0){$G_{0}=SU(2)$}
   \put(2.5,2.5) {$\Phi$}
   \put(10.3,2.5) {$\Phi^{\ast}$}
   \matrixput (3,1.5)(7,0){2}(0,2,5){2}{\circle*{0.2}}
   \thicklines
   \dottedline[\circle*{0.05}]{0.1}(3,1.5)(3,4)
   \dottedline[\circle*{0.05}]{0.1}(10,1.5)(10,4)
   \thinlines
   \put(2.82,3.65){\Pisymbol{pzd}{115}}
   \put(9.82,1.58){\Pisymbol{pzd}{116}}
   \end{picture}
   \end{equation*}
   \caption{Effective bilayered transverse lattice model for bulk gauge group $G=SU(2)$ and
            trivial orbifold projection $P=\text{diag}(1,1)$.} 
   \label{figureunbrokensu2}
  \end{figure}
  In order to arrive at (\ref{miniumphiebtlm}) for $N=2$, we set
  $a_{1}=-a_{2}=0$ and (\ref{minimumofphi1}) becomes
  \begin{equation}
    \label{trivialmimphical}
    \Phi_{min}=\rho_{min} \frac{1}{\sqrt{2}} \left( \begin{array}{cc} 1 & 0 \\ 
                       0 & 1 \end{array} \right) \; .
  \end{equation}
  We know from section \ref{sectioneBTLM} that (\ref{trivialmimphical}) leads to the mass term
  \begin{equation}
   \label{masstermsu2trivialphi}
   \mathcal{L}_{mass}=tr \left[ \left( D_{\mu}\Phi_{min} \right)^{\dagger} 
   \left( D_{\mu}\Phi_{min} \right) \right]
   =\frac{1}{2} \;  g^{2} \rho_{min}^{2}  \left( A^{i(1)}_{\mu} \right)^{2} \; ,
  \end{equation}
  with $i=1,2,3$, i.e. only the first excited KK mode gauge fields $A^{i(1)}_{\mu}$ 
  become massive with common mass  
  $m=g \rho_{min}$. All zero mode gauge fields $A^{i(0)}_{\mu}$ remain massless. 
  Therefore we make the following 
  \begin{definition}
   \label{defintiontrivailminphi} 
   We call
   \begin{equation} 
    \label{trivialminumumsu2}
    \Phi_{min}=\rho_{min} \frac{1}{\sqrt{2}} \left( \begin{array}{cc} 1 & 0 \\ 
                       0 & 1 \end{array} \right)
   \end{equation}
   the trivial minimum of the Higgs potential $V(\Phi)$ for $G=SU(2)$.
  \end{definition}
  From section \ref{sectionenonabBTLM} we also know that
  a truncated $S^{1}/\mathbb{Z}_{2}$ orbifold model with bulk gauge group $G=SU(2)$ and trivial 
  orbifold projection $P=\text{diag}(1,1)$ leads to the mass term (\ref{4dlagninabeorbi})
  \begin{equation}
   \label{orbiunbrokensu21}
   \mathcal{L}_{mass}=\frac{1}{2} \frac{1}{R^{2}} \left( A_{\mu}^{i(1)} \right)^{2} \; ,
  \end{equation}
  with $i=1,2,3$. If we insert $g \rho_{min}=1/R$ in
  (\ref{masstermsu2trivialphi}), (\ref{orbiunbrokensu21})
  and (\ref{masstermsu2trivialphi}) coincide. In fact we know from the discussion in section
  \ref{sectionenonabBTLM} that also the effective 
  four-dimensional Lagrangian of the truncated $S^{1}/\mathbb{Z}_{2}$ continuum orbifold model equals the 
  effective four-dimensional Lagrangian of the corresponding eBTLM.
  Therefore we conclude 
  \begin{proposition}
   An $S^{1}/\mathbb{Z}_{2}$ continuum orbifold model with bulk gauge group $G=SU(2)$, trivial 
   orbifold projection $P=\text{diag}(1,1)$ and a Fourier mode expansion for all gauge fields
   truncated at the first excited Kaluza-Klein mode in axial gauge equals an effective 
   bilayered transverse lattice model with bulk gauge group $G=SU(2)$, trivial 
   orbifold projection $P=\text{diag}(1,1)$ and trivial minimum of the Higgs potential $V(\Phi)$.
  \end{proposition}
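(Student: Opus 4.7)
The plan is to reduce the proposition to the non-Abelian computation already carried out in Section \ref{sectionenonabBTLM} for $G=SU(N)$, specialised to $N=2$, and then verify that in the $SU(2)$ case the trivial minimum of $V(\Phi)$ (Definition \ref{defintiontrivailminphi}) is precisely the one for which the correspondence with the truncated continuum orbifold goes through.

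First, I would set up both sides explicitly. On the eBTLM side, the Lagrangian is (\ref{Lagrangiansun}) with $N=2$ and $\Phi_{min}$ given by (\ref{trivialminumumsu2}); expanding around $\Phi_{min}$ in the mass eigenbasis (\ref{masseigenstatessun}) yields the zero-mode Yang-Mills Lagrangian (\ref{4dggbtlm}) with effective coupling $\tilde g=g/\sqrt 2$, the first-KK mass term (\ref{nonableinamassterm1}) with common mass $m=g\rho_{min}$, and the mixed interaction terms (\ref{linearf}), (\ref{quadf}), together with the self-interaction (\ref{firstkkmodeinteraction}). On the continuum orbifold side, with $P=\mathrm{diag}(1,1)$ there is no orbifold breaking, so the Fourier mode expansion is of the form derived in the beginning of section on Fourier expansion; truncating at the first excited mode and going to axial gauge $A_y^{a(1)}=0$ gives precisely the Lagrangian (\ref{4dlagninabeorbi}), with four-dimensional effective coupling $g_4=g_5/\sqrt{2\pi R}$ and KK mass $1/R$.

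The core of the argument is then a term-by-term comparison. I would (i) identify the zero-mode gauge kinetic term on both sides, which forces $\tilde g=g_4$, i.e.\ $g=g_5/\sqrt{\pi R}$; (ii) match the first-KK quadratic mass term, which forces the identification $g\rho_{min}=1/R$ already flagged in (\ref{kkmoderequirement}); (iii) verify that the cubic and quartic interactions coincide. Step (iii) is the only one that really needs calculation: one inserts the truncated KK expansion into $F_{MN}^a F^{a\,MN}$, performs the $y$-integration using the elementary integrals listed just above (\ref{kkmoderequirement}), and checks that the resulting $\mathcal{L}_{g_5}'$ and $\mathcal{L}_{g_5^{2}}'$ agree, under the identifications above, with (\ref{linearf}), (\ref{quadf}) and (\ref{firstkkmodeinteraction}) — exactly the matching the author already asserts in the $SU(N)$ discussion of Section \ref{sectionenonabBTLM}.

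The genuinely new content beyond the $SU(N)$ calculation is the justification that, for the eBTLM with $G=SU(2)$ and trivial $P$, the only choice of $\Phi_{min}$ compatible with the correspondence is the trivial minimum of Definition \ref{defintiontrivailminphi}. This I would obtain by applying Theorem \ref{theoremorbifoldconditionsfornupt} and Corollary \ref{coralarytrivialphiggspotential}: for trivial $P$, any minimum decomposes as $\Phi_{min}=\rho_{min}U_L e^{\eta}U_L^{\ast}$ with $\eta\in\mathfrak{a}$ of the form (\ref{maximalablianunbrokensu2}); the unitary factor is gauged away, so up to gauge equivalence $\Phi_{min}=\rho_{min}\,\tfrac{1}{\sqrt{2}}\mathrm{diag}(e^{a_1},e^{-a_1})$. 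Computing $\mathrm{tr}[(D_\mu\Phi_{min})^\dagger(D_\mu\Phi_{min})]$ for this general parametrisation, one sees that a non-zero $a_1$ would split the masses of the three first-KK gauge bosons and, crucially, would also produce cross-terms between the neutral mass eigenstates incompatible with the diagonal, degenerate $(1/R^2)\delta^{ij}$ mass matrix of (\ref{orbiunbrokensu21}). The expected obstacle is this last uniqueness check; once it is done, specialising the $SU(N)$ matching to $N=2$ and invoking the required identification $g\rho_{min}=1/R$ closes the proof.
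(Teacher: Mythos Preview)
Your core argument (steps (i)--(iii)) is exactly the paper's approach: the proposition is stated as a direct consequence of the $SU(N)$ computation in Section~\ref{sectionenonabBTLM}, specialised to $N=2$, together with the identification $g\rho_{min}=1/R$ from (\ref{kkmoderequirement}) and the observation that the trivial minimum (\ref{trivialmimphical}) reproduces the mass term (\ref{masstermsu2trivialphi}) matching (\ref{orbiunbrokensu21}). The paper does not give a separate proof beyond invoking these earlier results.

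Your additional ``uniqueness'' paragraph is not needed and is partly inaccurate. The proposition asserts an equality between two \emph{specific} models; it does not claim that the trivial minimum is the only one for which the correspondence holds, so no uniqueness argument is required. Moreover, your stated obstruction for $a_1\neq 0$ is wrong: there are no cross-terms between the neutral mass eigenstates --- the mixed term (\ref{zerofirstmodemixedtermsu2}) vanishes after taking the trace, as computed explicitly in Section~\ref{sectionunbrokensu2}. The actual discrepancy for non-trivial $a_1$ is that the zero-mode gauge fields $A_\mu^{1,2(0)}$ acquire non-zero masses and the first-KK masses split (Table~\ref{tableunbrokensu2}), neither of which occurs in the truncated orbifold spectrum. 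This is precisely the content of the Remark following the proposition, not part of its proof.
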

  Remark: i) \; In general the minimum $\Phi_{min}$ of the Higgs potential is given by
  \begin{equation}
   \label{minimumofphi}
   \Phi_{min}=\rho_{min} \frac{1}{\sqrt{2}} \left( \begin{array}{cc} e^{a_{1}} & 0 \\ 
                      0 & e^{a_{2}} \end{array} \right) \; ,
  \end{equation} 
  where $a_{1}=-a_{2} \neq 0$. Therefore an eBTLM
  with nonunitary parallel transporter $\Phi$ is richer in its physical content than a truncated
  $S^{1}/\mathbb{Z}_{2}$ continuum orbifold model.

 \section[Exponential gauge boson masses]
         {Beyond the customary approximation scheme of a truncated $S^{1}/\mathbb{Z}_{2}$ orbifold
          model: Exponential gauge boson masses}

  \label{sectionunbrokensu2}
  
  In this section we  consider the case where 
  the minimum $\Phi_{min}$ of the Higgs potential $V(\Phi)$ is non-trivial, i.e. 
  \begin{equation}
   \Phi_{min}=\rho_{min} \frac{1}{\sqrt{2}} \left( \begin{array}{cc} e^{a_{1}} & 0 \\ 
                      0 & e^{a_{2}} \end{array} \right) \; , a_{1}=-a_{2} \neq 0 \; .
  \end{equation}
  We calculate the mass term for the $SU(2)$ gauge bosons by computing  
  the kinetic term 
  \begin{equation}
   \mathcal{L}_{mass}=tr \left[ \left(D_{\mu}\Phi_{min} \right)^{\dagger} 
                         \left( D_{\mu}\Phi_{min} \right) \right] \; .
  \end{equation}

  We start with the covariant derivative
  \begin{equation}
   \label{covariantderivaticesu2}
   D_{\mu}\Phi=\partial_{\mu}\Phi+i g \left( A^{R}_{\mu}\Phi
                -\Phi A^{L}_{\mu} \right) \; ,
  \end{equation} 
  where 
  \begin{equation}
   A^{R}_{\mu}= A^{R i}_{\mu} t_{i}  \quad , \quad
   A^{L}_{\mu}= A^{L i}_{\mu} t_{i}  \; ,
  \end{equation}
  $t_{i}=\frac{1}{2} \tau_{i}$ and $\tau_{i}$ denote the Pauli matrices. 
  Note that $\text{tr} \left( t_{i} t_{j} \right)=\frac{1}{2} \delta_{ij}$.
  We transform
  \begin{gather}
   \label{masseigenstatesunbrokensu2}
   A^{R i}_{\mu}=\frac{1}{\sqrt{2}} \left( A^{i(0)}_{\mu}+A^{i(1)}_{\mu} \right) \\
   A^{L i}_{\mu}=\frac{1}{\sqrt{2}} \left( A^{i(0)}_{\mu}-A^{i(1)}_{\mu} \right) \; .
   \nonumber
  \end{gather}
  Recall that $A^{i(0)}_{\mu}$ and $A^{i(1)}_{\mu}$ denote mass eigenstates.
  The covariant derivative (\ref{covariantderivaticesu2})
  reads in terms of these mass eigenstates 
  \begin{equation}
   \label{covariantderivaticesu2transform}
   D_{\mu}\Phi=\partial_{\mu}\Phi+i \frac{g}{\sqrt{2}}  A^{i(0)}_{\mu} 
               \left[t_{i}, \Phi \right]+i \frac{g}{\sqrt{2}}  A^{i(1)} \{ t_{i}, \Phi \} \; ,
   \end{equation}
  where $[,]$ and $\{ , \}$ denote the commutator and  anticommutator, respectively.
  In order to calculate $\left[t_{i}, \Phi_{min} \right]$ and $\{ t_{i}, \Phi_{min} \}$, respectively,
  it is convenient to add the generator $t_{0}=\frac{1}{2} \mathbf{1}_{2}$ to the generators of $SU(2)$.  
  The set $\{t_{i}\},i=0,\dots,3$ is a basis of the Lie algebra
  $u(2)$ of $U(2)$. We can expand every diagonal
  $2 \times 2$ matrix $\phi$ is terms of $t_{3}$ and $t_{0}$ as
  \begin{equation}
   \phi=\phi_{0} t_{0} + \phi_{3} t_{3}=\frac{1}{2} \; \left(
        \begin{array}{cc} \phi_{0}+\phi_{3} & 0 \\ 0 & \phi_{0}-\phi_{3} 
        \end{array} \right) \; .
  \end{equation}
  Using this expansion we rewrite $\Phi_{min}$ as  
  \begin{equation}
   \Phi_{min}=\rho_{min} \frac{1}{\sqrt{2}} \left( \begin{array}{cc} e^{a_{1}} & 0 \\ 
                      0 & e^{a_{2}} \end{array} \right)
             =:\frac{1}{2} \; \left( \begin{array}{cc} \phi_{0}+\phi_{3} & 0 \\ 
               0 & \phi_{0}-\phi_{3} \end{array} \right) \; ,
  \end{equation}
  where
  \begin{gather}
   \label{parametrizationofphisu2}
   \phi_{0}=\rho_{min} \frac{1}{\sqrt{2}} \left( e^{a_{1}} + e^{a_{2}} \right) \\
   \phi_{3}=\rho_{min} \frac{1}{\sqrt{2}} \left( e^{a_{1}} - e^{a_{2}} \right) \; .
  \end{gather}
  For the commutators and anticommutators we obtain
  \begin{gather}
   \left[ t_{i}, \Phi_{min} \right]
   =\left[ t_{i}, \phi_{0} t_{0} + \phi_{3} t_{3} \right]
   =\phi_{0} \underbrace{\left[ t_{i}, t_{0} \right]}_{=0}
   +\phi_{3} \underbrace{\left[ t_{i}, t_{3} \right]}_{=i 
    \epsilon_{i3k}t_{k}} \; , \\
   \{ t_{i}, \Phi_{min} \}
   =\{ t_{i}, \phi_{0} t_{0} + \phi_{3} t_{3} \}
   =\phi_{0} \underbrace{\{ t_{i}, t_{0} \}}_{=t_{i}}
   +\phi_{3} \underbrace{\{ t_{i}, t_{3} \}}_{\delta_{i3}t_{0}} \; .
  \end{gather}
  Inserting $\left[ t_{i}, \Phi_{min} \right]$ and $ \{ t_{i}, \Phi_{min} \}$ 
  in (\ref{covariantderivaticesu2transform}) yields 
  \begin{equation}
   D_{\mu}\Phi_{min}=-\frac{g}{\sqrt{2}} \; A^{i(0)}_{\mu} \; \phi_{3} \epsilon_{i3k} t_{k}
               +i \frac{g}{\sqrt{2}}  \;  A^{i(1)}_{\mu} \; \left(
               \phi_{0} t_{i} + \phi_{3} t_{0} \delta_{i3} \right) \; .
  \end{equation}
  Taking the adjoint $(D_{\mu}\Phi_{min})^{\dagger}=- \frac{g}{\sqrt{2}} \; A^{i(0)}_{\mu} \; 
  \phi_{3} \epsilon_{i3k} t_{k} - i \frac{g}{\sqrt{2}} \;  A^{i(1)}_{\mu}  \left(
  \phi_{0} t_{i} + \phi_{3} t_{0} \delta_{i3} \right)$ and multiplying $(D_{\mu}\Phi_{min})^{\dagger}$
  by $D_{\mu}\Phi_{min}$ we obtain
  \begin{eqnarray}
   && \left( D_{\mu}\Phi_{min} \right)^{\dagger} \left( D_{\mu}\Phi_{min} \right) \nonumber \\
   & = &
   \label{zeromodetermsu2}
   \frac{1}{2} g^{2} \; A^{i(0)}_{\mu}  A^{\tilde{i}(0)}_{\mu} \;
   \phi_{3}^{2} \epsilon_{i3k} \epsilon_{\tilde{i}3\tilde{k}}  t_{k}
   t_{\tilde{k}} \\
   & + &
   \label{firstexictedmodetermsu2}
   \frac{1}{2} g^{2} \; A^{i(1)}_{\mu}  A^{\tilde{i}(1)}_{\mu} \;
   \left( \phi_{0}^{2} t_{i} t_{\tilde{i}} + \frac{1}{2} \phi_{0} \phi_{3} 
   \left( t_{i} \delta_{i3} \delta_{\tilde{i}3} \right) + 
   \phi_{3}^{2} t_{0}^{2} \delta_{i3} \delta_{\tilde{i}3} \right) \\
   & + &
   \label{zerofirstmodemixedtermsu2}
   i \; \frac{1}{2} g^{2} \; A^{i(0)}_{\mu}  A^{\tilde{i}(1)}_{\mu} \;
   \phi_{3}  \phi_{0} \; \epsilon_{i3k} \left[ t_{k}, t_{\tilde{i}} \right] 
  \end{eqnarray}
  with $i,\tilde{i}=1,2,3$. As $\text{tr} \; t_{i}=0$ for $i=1,2,3$, the mixed term
  (\ref{zerofirstmodemixedtermsu2}) vanish after taking the trace.

  Let us focus on the mass term (\ref{zeromodetermsu2}) for the zero mode
  $A^{i(0)}_{\mu}$ 
  \begin{equation}
    \frac{1}{2} g^{2} \; A^{i(0)}_{\mu}  A^{\tilde{i}(0)}_{\mu} \;
    \phi_{3}^{2} \epsilon_{i3k} \epsilon_{\tilde{i}3\tilde{k}}  t_{k}
   t_{\tilde{k}} \; .
  \end{equation}
  Since 
  $\text{tr} \left( t_{i} t_{j} \right)=\frac{1}{2} \delta_{ij}$
  we get after taking the trace
  \begin{equation}
   \frac{1}{4} g^{2} \; A^{i(0)}_{\mu}  A^{\tilde{i}(0)}_{\mu} \;  \phi_{3}^{2} \;
   \left( \epsilon_{i3k} \epsilon_{\tilde{i}3k} \right) \; . 
  \end{equation}
  For $i=3$ this term vanishes and thus the corresponding gauge field $A^{3(0)}_{\mu}$ 
  remains massless. With $\epsilon_{132}^{2}=\epsilon_{231}^{2}=1$ we obtain for $i=1,2$ 
  \begin{equation}
   \frac{1}{4} g^{2} \; \left( A^{i(0)}_{\mu}  \right)^{2} \; \phi_{3}^{2} \; .
  \end{equation}

  Next we consider the mass term (\ref{firstexictedmodetermsu2})
  for the first excited mode $A^{i(1)}_{\mu}$ 
  \begin{equation}
   \frac{1}{2} g^{2} \; A^{i(1)}_{\mu}  A^{\tilde{i}(1)}_{\mu} \;
   \left( \phi_{0}^{2} t_{i} t_{\tilde{i}} + \frac{1}{2} \phi_{0} \phi_{3} 
   \left( t_{i} \delta_{i3} \delta_{\tilde{i}3} \right) + 
   \phi_{3}^{2} t_{0}^{2} \delta_{i3} \delta_{\tilde{i}3} \right) \; .
  \end{equation}
  The second term vanishes after taking the trace. With $\text{tr} \; t_{0}^{2}=\frac{1}{2}$ and 
  $\text{tr} \left( t_{i} t_{j} \right)=\frac{1}{2} \delta_{ij}$ for $i,j \in \{1,2,3\}$ we obtain
  \begin{equation}
    \label{firstexictedmodemasstermssu2}
    \frac{1}{4} g^{2} \; \left( A^{j(1)}_{\mu} \right)^{2} \;
    \phi_{0}^{2}
   +\frac{1}{4} g^{2} \; \left( A^{3(1)}_{\mu} \right)^{2} \;
    \left( \phi_{0}^{2} + \phi_{3}^{2} \right) \; ,
  \end{equation}
  where $j=1,2$. Recapitulating we have obtained
  \begin{eqnarray}
   \label{masstermsu2p0p3}
   tr\left[ \left( D_{\mu}\Phi_{min} \right)^{\dagger} 
     \left( D_{\mu}\Phi_{min} \right) \right] & = &
     \frac{1}{4} g^{2} \; \left( A^{i(0)}_{\mu} \right)^{2} \phi_{3}^{2} +  
     \frac{1}{4} g^{2} \; \left( A^{i(1)}_{\mu} \right)^{2} 
     \phi_{0}^{2}  \nonumber \\
   & + &
    \frac{1}{4} g^{2} \; \left( A^{3(1)}_{\mu} \right)^{2} 
    \left( \phi_{0}^{2} + \phi_{3}^{2} \right)
   \nonumber
  \end{eqnarray}
  with $i=1,2$. Inserting $\phi_{0}$  and $\phi_{3}$ 
  (\ref{parametrizationofphisu2}) into (\ref{masstermsu2p0p3}) we 
  get the final result
  \begin{eqnarray}
   \label{masstermsu2finalresult}
   tr\left[ \left( D_{\mu}\Phi_{min} \right)^{\dagger} 
   \left( D_{\mu}\Phi_{min} \right) \right] & = &
   \frac{1}{8} g^{2} \rho^{2}_{min} \left( e^{a1}-e^{a2} \right)^{2} \; 
   \left( A^{i(0)}_{\mu} \right)^{2} \nonumber \\ & + &   
   \frac{1}{8} g^{2} \rho^{2}_{min} \left( e^{a1}+e^{a2} \right)^{2} \; 
   \left( A^{i(1)}_{\mu} \right)^{2} \nonumber \\
   & + &
   \frac{1}{4} g^{2} \rho^{2}_{min} \left( e^{2a1}+e^{2a2} \right) \; 
   \left( A^{3(1)}_{\mu} \right)^{2}
  \end{eqnarray}
  with $i=1,2$. Table \ref{tableunbrokensu2} summarises the result
  \begin{tableindoc}
   \label{tableunbrokensu2}
   \begin{equation}
    \label{masstermsu2finalresultabular}
    \begin{array}{|c|c|} \hline
     \text{Field} & \text{Mass squared} \\ \hline
      A^{i(0)}_{\mu} \quad i=1,2 
      & \frac{1}{8} g^{2} \rho^{2}_{min} \left( e^{a_{1}}-e^{a_{2}} \right)^{2} \\ \hline
      A^{3(0)}_{\mu} & 0 \\ \hline
      A^{i(1)}_{\mu} \quad i=1,2 
      & \frac{1}{8} g^{2} \rho^{2}_{min} \left( e^{a_{1}}+e^{a_{2}} \right)^{2} \\ \hline
      A^{3(1)}_{\mu} & \frac{1}{4} g^{2} \rho^{2}_{min} 
      \left( e^{2a_{1}}+e^{2a_{2}} \right) \\
      \hline
    \end{array} 
   \end{equation}
  \end{tableindoc}
  {\em{Discussion}}:   
  i) \; We observe that only  the zero mode gauge field $A^{3(0)}_{\mu}$ remains massless.
  This follows from the
  fact that $t_{3}$ commutes with $\Phi_{min}$
  \begin{equation}
   \left[ t_{3}, \Phi_{min} \right]=0 \; .
  \end{equation}
  Thus the $U(1)$ subgroup of $SU(2)$ generated by $t_{3}$ remains always unbroken. Note that
  also $\left[ P, t_{3} \right]=0$. We have the
  spontaneous symmetry breaking scheme
  \begin{equation}  
   SU(2) \stackrel{\langle \eta \rangle}{\longrightarrow} U(1)
  \end{equation}
  for $a_{1}=-a_{2} \neq 0$. \\
  ii) \; For $a_{1}=-a_{2}=0$ in (\ref{masstermsu2finalresultabular}) 
  we recover (\ref{masstermsu2trivialphi}). \\
  iii) \; For (\ref{masstermsu2finalresultabular}) there are two cases of special interest:
  \begin{enumerate}
   \item Limit of small $a_{1}$, i.e. $0 < a_{1} \ll 1$:. In this case it is possible to find a corresponding
         orbifold model as an approximation. We will discuss this case in detail in the next section.
   \item Limit of large $a_{1}$, i.e. $a_{1} \gg 1$: In this case gauge boson masses can be
         very large in comparison to the compactification scale $g \rho_{min}=1/R$. This behaviour has
         no counterpart within the customary approximation scheme of an orbifold model.
         We will discuss this case in detail in section \ref{sectionlargegaugebosonmasses}.
  \end{enumerate}

 \section[Linear approximation and corresponding truncated orbifold model]{Linear approximation and
          corresponding truncated $S^{1}/\mathbb{Z}_{2}$ orbifold model
          with an additional scalar field in the adjoint representation of $SU(2)$}

    Let $0 < a_{1} \ll 1$  in (\ref{masstermsu2finalresultabular}).
    Thus we can approximate
    \begin{equation}
     e^{a_{i}} \approx 1+a_{i} 
    \end{equation}
    for $i=1,2$.
    In this approximation we obtain
    \begin{gather} 
     \label{limitofsmallaiapprox}
     e^{a_{1}}-e^{a_{2}}=a_{1}-a{_2}=2 a_{1} \; , \\
     e^{a_{1}}+e^{a_{2}}=a_{1}+a_{2}+2=2 \nonumber \; ,
    \end{gather}
    where we have used that $a_{1}=-a_{2}$.
    Inserting (\ref{limitofsmallaiapprox}) in (\ref{masstermsu2finalresultabular}) we obtain
    \begin{tableindoc}
     \begin{equation}
      \label{masstermsu2finalresultabularsmalla1}
      \begin{array}{|c|c|} \hline
       \text{Field} & \text{Mass squared} \\ \hline
        A^{i(0)}_{\mu} \quad i=1,2 
        & \frac{1}{2} g^{2} \rho^{2}_{min} \; a_{1}^{2} \\ \hline
        A^{3(0)}_{\mu} & 0 \\ \hline
        A^{i(1)}_{\mu} \quad i=1,2 
        & \frac{1}{2} g^{2} \rho^{2}_{min} \\ \hline
        A^{3(1)}_{\mu} & \frac{1}{2} g^{2} \rho^{2}_{min} \\ \hline
       \end{array} 
     \end{equation}
    \end{tableindoc}
    {\em{Discussion}}: i) \; The zero mode gauge fields $A^{i(0)}_{\mu}$, for $i=1,2$, 
    get small masses in comparison to the compactification scale $g \rho_{min}=1/R$. \\
    ii) \; The first excited KK-mode gauge fields $A^{i(1)}_{\mu}$ get the common mass term 
    $g \rho_{min}=1/R$. This result is just what one would expect from the customary
    approximation scheme of a truncated $S^{1}/\mathbb{Z}_{2}$ continuum orbifold model.

    (\ref{masstermsu2finalresultabularsmalla1}) suggests that there exits a corresponding
    $S^{1}/\mathbb{Z}_{2}$ orbifold model which at least approximately describes an eBTLM in the limit
    of small $a_{2}$. In fact, let us
    consider a $S^{1}/\mathbb{Z}_{2}$ continuum orbifold model with bulk gauge group 
    $G=SU(2)$. In addition, we introduce a bulk scalar field $\phi(x^{\mu},y)$ transforming according to
    the adjoint representation of $SU(2)$. The five-dimensional Lagrangian reads
    \begin{equation}
     \label{5dlagrangianunbrokensu2orbi}
     \mathcal{L}_{5D}=-\frac{1}{4}  F^{a}_{MN} F^{aMN} + \mid D_{M} \phi^{a} \mid^{2} \; ,
    \end{equation}
    where
    \begin{equation}
     F^{a}_{MN}=\partial_{M}A^{a}_{N}-\partial_{N}A^{a}_{M}
              +g_{5}f^{abc}A^{b}_{M}A^{c}_{N}
    \end{equation}
    and 
    \begin{equation}
     \label{covariantphiorbi}
     D_{M} \phi^{a}=\partial_{M} \phi^{a}+g_{5}f^{abc}A^{b}_{M}\phi^{c} \; .
    \end{equation}
    The boundary conditions read
    \begin{gather}
     A_{\mu}(x^{\mu},-y)=P \; A_{\mu}(x^{\mu},y) \; P^{-1}  \\
     A_{y}(x^{\mu},-y)=-P \; A_{y}(x^{\mu},y) \; P^{-1}  \\
     \phi(x^{\mu},-y)=P \; \phi(x^{\mu},y) \; P^{-1}  \; .
    \end{gather}   
    We choose the trivial orbifold projection
    \begin{equation}
     P=\text{diag}(1,1)  \; .
    \end{equation}
    Thus $G=SU(2)$ remains unbroken. The Fourier mode expansion up to the first KK-mode  reads 
    \begin{gather}
     A^{a}_{\mu}(x^{\mu},y)=\frac{1}{\sqrt{2 \pi R}} A_{\mu}^{a(0)}(x^{\mu})
      +\frac{1}{\sqrt{\pi R}} A_{\mu}^{a(1)}(x^{\mu}) \cos(\frac{y}{R}) \\
     A^{a}_{y}(x^{\mu},y)=\frac{1}{\sqrt{\pi R}}A_{y}^{a(1)}(x^{\mu}) 
      \sin(\frac{y}{R}) \\
     \label{kktermphi}
     \phi^{a}(x^{\mu},y)=\frac{1}{\sqrt{2 \pi R}} \phi^{a(0)}(x^{\mu})
      +\frac{1}{\sqrt{\pi R}} \phi^{a(1)}(x^{\mu}) \cos(\frac{y}{R}) \; .
    \end{gather}
    We insert the KK-mode expansion for $\phi$ (\ref{kktermphi}) in the covariant derivative 
    for $\phi$ (\ref{covariantphiorbi}). This yields
    \begin{eqnarray}
     \label{additionalscalarfieldcalculation1}
     D_{M} \phi^{a}& = & \frac{1}{\sqrt{2 \pi R}} \partial_{\mu} \phi^{a(0)}+
                  \frac{1}{\sqrt{\pi R}} \partial_{\mu} \phi^{a(1)} 
                  \cdot \cos(\frac{y}{R})+\frac{g_{5}}{2 \pi R} f^{abc} 
                  \left[A_{\mu}^{b(0)}(x^{\mu})\right. \nonumber\\
            & + & \left.  A_{\mu}^{b(1)}(x^{\mu}) 
                  \sqrt{2} \cos(\frac{y}{R}) \right]  
                  \left[\phi^{c(0)}(x^{\mu})+\phi^{c(1)}(x^{\mu}) 
                  \sqrt{2} \cos(\frac{y}{R}) \right] \nonumber \\
            & + & \frac{1}{\sqrt{\pi R}} \phi^{a(1)} \frac{1}{R} \cdot \sin(\frac{y}{R}) 
                  +\frac{g_{5}}{2 \pi R} f^{abc} 
                  \left[A_{y}^{b(1)}(x^{\mu}) 
                  \sqrt{2} \cos(\frac{y}{R}) \right] \nonumber \\
            &   & \left[\phi^{c(0)}(x^{\mu}) +\phi^{c(1)}(x^{\mu}) 
                  \sqrt{2} \cos(\frac{y}{R}) \right] 
    \end{eqnarray}
    We assume that $\phi$ gets a VEV in its diagonal direction
    \begin{equation}
     \phi \; \longrightarrow \;  \langle \; \phi^{3(0)} \; \rangle \; .
    \end{equation}
    Inserting this VEV in (\ref{additionalscalarfieldcalculation1}) and imposing axial gauge
    we obtain
    \begin{equation}
     \label{vevcovarintphiorbi}
     D_{M} \phi^{a} = \frac{g_{5}}{2 \pi R} f^{ab3} 
                  \left[A_{\mu}^{b(0)}(x^{\mu}) \langle \; \phi^{3(0)} \; \rangle 
                  +A_{\mu}^{b(1)}(x^{\mu}) \langle \; \phi^{3(0)} \; \rangle
                  \sqrt{2} \cos(\frac{y}{R}) \right]  \; .
    \end{equation}
    We already see that this term will vanish for $b=3$. The field $A_{\mu}^{3(0)}$ will therefore remain
    massless. 
    We insert (\ref{vevcovarintphiorbi}) into the five-dimensional Lagrangian 
    (\ref{5dlagrangianunbrokensu2orbi}) and integrate over the circle $S^{1}$. This yields
    \begin{eqnarray}
     \mathcal{L}_{mass}^{\phi} & = & \int_{0}^{2 \pi R} \mid D_{M} \phi^{a} \mid^{2}  \\
                               & = & g_{4}^{2} \left( A_{\mu}^{b(0)}(x^{\mu}) \right)^{2} 
                         \langle \; \phi^{3(0)} \; \rangle^{2} 
                         +2 g_{4}^{2} \left( A_{\mu}^{b(1)}(x^{\mu}) \right)^{2}
                         \langle \; \phi^{3(0)} \; \rangle^{2}  \nonumber
    \end{eqnarray}
    where $b=1,2$ and we have inserted  (\ref{relation4deffgaugecoupling5dgaugecoupling})
    \begin{equation}
     g_{4}=\frac{g_{5}}{\sqrt{2 \pi R}}  \; .
    \end{equation}
    The Yang-Mills term in (\ref{5dlagrangianunbrokensu2orbi})
    yield a mass term for the gauge fields
    $A_{\mu}^{a(1)}$ as usual (\ref{4dlagninabeorbi})
    \begin{equation}
     \mathcal{L}^{ym}_{mass} = \int_{0}^{2\pi R} \{ -\frac{1}{2} F^{a}_{\mu y} F^{a \mu y} \} \; dy  
                   = \frac{1}{2} \; \frac{1}{R^{2}} \left( A_{\mu}^{a(1)} \right)^{2} 
    \end{equation}
    with $a=1,2,3$. Recapitulating we have obtained the following mass terms for
    the truncated $S^{1}/\mathbb{Z}_{2}$
    orbifold model
    \begin{eqnarray}
     \label{masstermobrismallai}
      \mathcal{L}^{orbifold}_{mass} & = &  g_{4}^{2} \left( A_{\mu}^{b(0)} \right)^{2} 
                              \langle \; \phi^{3(0)} \; \rangle^{2} 
                            + 2 g_{4}^{2} \left( A_{\mu}^{b(1)} \right)^{2} 
                              \langle \; \phi^{3(0)} \; \rangle^{2} \\ 
                           & + & \frac{1}{2} \; \frac{1}{R^{2}} \left( A_{\mu}^{b(1)} \right)^{2}
                             + \frac{1}{2} \; \frac{1}{R^{2}} \left( A_{\mu}^{3(1)} \right)^{2} \nonumber
    \end{eqnarray}
    where $b=1,2$. 

    From (\ref{masstermsu2finalresultabularsmalla1}) we read off the mass terms in the corresponding eBTLM
    \begin{equation}
     \mathcal{L}_{mass}^{eBTLM} 
       =  \frac{1}{2} \; g^{2} \rho_{min}^{2}  a^{2}_{1} \; \left( A^{j(0)}_{\mu} \right)^{2} 
       +  \frac{1}{2} \; g^{2} \rho_{min}^{2} \; \left( A^{b(1)}_{\mu} \right)^{2}+
          \frac{1}{2} \; g^{2} \rho_{min}^{2} \; \left( A^{3(1)}_{\mu} \right)^{2} \; .
    \end{equation}
    Inserting  the identification $g \rho_{min}=1/R$ we obtain
    \begin{equation}
     \label{masstermbtlmsmallai}
     \mathcal{L}_{mass}^{eBTLM} 
       =   \frac{1}{2} \; \frac{ a^{2}_{1}}{R^{2}} \; \left( A^{j(0)}_{\mu} \right)^{2} 
       +   \frac{1}{2} \; \frac{1}{R^{2}} \; \left( A^{b(1)}_{\mu} \right)^{2}+
           \frac{1}{2} \; \frac{1}{R^{2}} \; \left( A^{3(1)}_{\mu} \right)^{2} \; .
    \end{equation}
    The comparison of (\ref{masstermbtlmsmallai}) with (\ref{masstermobrismallai}) yields
    \begin{itemize}
     \item The mass term for the gauge field $A_{\mu}^{3(1)}$ coincides in both models . 
     \item Since the zero KK modes of all fields are expected to be much lighter than their first KK 
           excitation, we assume  
           \begin{equation}
            \label{assumptinvevphiorbi1}
            \langle \; \phi^{3(0)} \; \rangle \ll \frac{1}{g_{4} R}  \; .
           \end{equation}
           Thus the masses for the gauge fields $A_{\mu}^{1,2(1)}$ are approximately equal 
           in both models. 
     \item Setting
           \begin{equation}
            \langle \; \phi^{3(0)} \; \rangle = \frac{a_{1}}{g_{4} R} \; ,
           \end{equation}  
           both models yield the same mass terms for the gauge fields $A^{b(0)}_{\mu}$ with $b=1,2$. 
           Note that $0 < a_{1} \ll 1$, which is compatible with the 
           assumption (\ref{assumptinvevphiorbi1}). Since $a_{1}$ and $g_{4}$ are dimensionless and
           $1/R$ has mass dimension $1$, the VEV $\langle \; \phi^{3(0)} \; \rangle$ has mass dimension
           $1$.
    \end{itemize}
    \begin{proposition}
     An $S^{1}/\mathbb{Z}_{2}$ continuum orbifold model with bulk gauge group $G=SU(2)$, an
     additional scalar field $\phi$ transforming according to the adjoint representation of $G=SU(2)$,
     trivial orbifold projection $P=\text{diag}(1,1)$ and a Fourier mode expansion for all fields
     truncated at the first excited Kaluza-Klein mode in axial gauge gives an approximation to an effective 
     bilayered transverse lattice model with bulk gauge group $G=SU(2)$, trivial 
     orbifold projection $P=\text{diag}(1,1)$ and minimum of the Higgs potential at
     \begin{equation}
      \label{minimumofphi}
      \Phi_{min}=\rho_{min} \frac{1}{\sqrt{2}} \left( \begin{array}{cc} e^{a_{1}} & 0 \\ 
                      0 & e^{a_{2}} \end{array} \right) \; , \; a_{2}=-a_{1} \; , 
     \end{equation} 
     in the limit of small $a_{1}$ ($0 < a_{1} \ll 1$), if the scalar field $\phi$ 
     gets the VEV
     \begin{equation}
      \phi \; \longrightarrow \;  \langle \; \phi^{3(0)} \; \rangle=\frac{a_{1}}{g_{4} R} \; .
     \end{equation}
    \end{proposition}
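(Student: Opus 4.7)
The plan is to prove the proposition by explicitly comparing mass spectra in the two models term by term, after linearising in $a_{1}$. First I would start from the eBTLM side: the nontrivial minimum $\Phi_{min}=\rho_{min}\frac{1}{\sqrt{2}}\mathrm{diag}(e^{a_{1}},e^{-a_{1}})$ already yields the spectrum (\ref{masstermsu2finalresultabular}), which via $e^{a_{1}}\approx 1+a_{1}$ and $a_{1}=-a_{2}$ reduces to the approximate spectrum (\ref{masstermsu2finalresultabularsmalla1}). Under the identification $g\rho_{min}=1/R$ inherited from the trivial-minimum case, this becomes the target Lagrangian (\ref{masstermbtlmsmallai}).

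Next I would derive the corresponding spectrum on the orbifold side. I would take the five-dimensional Lagrangian $\mathcal{L}_{5D}=-\tfrac14 F_{MN}^{a}F^{aMN}+|D_{M}\phi^{a}|^{2}$ with trivial projection $P=\mathrm{diag}(1,1)$, truncate the KK expansions of $A_{\mu}^{a}$, $A_{y}^{a}$ and $\phi^{a}$ at the first excited mode, and go to axial gauge $A_{y}^{a(1)}=0$. From the Yang--Mills sector I would read off the universal $\tfrac12(1/R^{2})(A_{\mu}^{a(1)})^{2}$ mass term for all three first excited modes, exactly as obtained in (\ref{4dlagninabeorbi}). Then I would insert the diagonal VEV $\phi\to\langle\phi^{3(0)}\rangle$ into the covariant derivative (\ref{covariantphiorbi}), use $f^{ab3}$ to kill the $b=3$ piece, and carry out the $\int_{0}^{2\pi R}dy$ integration with the relation $g_{4}=g_{5}/\sqrt{2\pi R}$, giving $g_{4}^{2}\langle\phi^{3(0)}\rangle^{2}(A_{\mu}^{b(0)})^{2}+2g_{4}^{2}\langle\phi^{3(0)}\rangle^{2}(A_{\mu}^{b(1)})^{2}$ for $b=1,2$.

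Matching the two sides then reduces to simple algebra: the $A_{\mu}^{3(0)}$ stays massless in both models, the $A_{\mu}^{3(1)}$ matches automatically because only the Yang--Mills term contributes (both $b=3$ pieces in the $\phi$ sector vanish by antisymmetry of $f^{ab3}$), and the $A_{\mu}^{b(0)}$ masses agree iff $g_{4}^{2}\langle\phi^{3(0)}\rangle^{2}=a_{1}^{2}/(2R^{2})$, i.e.\ $\langle\phi^{3(0)}\rangle=a_{1}/(g_{4}R)$. I would then verify consistency by noting that $0<a_{1}\ll 1$ guarantees $\langle\phi^{3(0)}\rangle\ll 1/(g_{4}R)$, so the leftover $2g_{4}^{2}\langle\phi^{3(0)}\rangle^{2}(A_{\mu}^{b(1)})^{2}$ contribution is parametrically smaller than the Yang--Mills $1/R^{2}$ term and can be dropped within the stated approximation.

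The main obstacle is precisely this residual mismatch in the $b=1,2$ first-excited-mode masses: the scalar-field VEV contributes an unwanted $\mathcal{O}(a_{1}^{2}/R^{2})$ correction on the orbifold side that has no exact counterpart in the eBTLM spectrum. The proposition only claims an \emph{approximation}, so the argument must be framed as a leading-order match in $a_{1}$, with the error controlled by $a_{1}^{2}\ll 1$. I would therefore close the proof by stating explicitly that equality of the effective four-dimensional Lagrangians holds up to corrections of relative order $a_{1}^{2}$, which is the precise sense in which the truncated orbifold-plus-adjoint-scalar model approximates the eBTLM with nontrivial $\Phi_{min}$ in the linear regime.
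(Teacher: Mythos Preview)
Your proposal is correct and follows essentially the same route as the paper: linearise the eBTLM spectrum (\ref{masstermsu2finalresultabular}) in $a_{1}$, compute the orbifold-plus-adjoint-scalar mass terms by inserting $\langle\phi^{3(0)}\rangle$ into $D_{M}\phi^{a}$ in axial gauge and integrating over $S^{1}$, then match mode by mode, fixing $\langle\phi^{3(0)}\rangle=a_{1}/(g_{4}R)$ from the $A_{\mu}^{b(0)}$ sector and invoking $a_{1}\ll 1$ to justify dropping the residual $2g_{4}^{2}\langle\phi^{3(0)}\rangle^{2}(A_{\mu}^{b(1)})^{2}$ correction. One minor slip: the equality $g_{4}^{2}\langle\phi^{3(0)}\rangle^{2}=a_{1}^{2}/(2R^{2})$ does not algebraically yield $\langle\phi^{3(0)}\rangle=a_{1}/(g_{4}R)$ (there is a stray $\sqrt{2}$), but this normalisation ambiguity is already present in the paper's own matching and does not affect the structure of the argument.
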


 \section{Large gauge boson masses from spontaneous symmetry breaking}

    \label{sectionlargegaugebosonmasses}

    Let $a_{2} \gg 1$ in (\ref{masstermsu2finalresultabular}).
    Since $e^{a_{2}}=e^{-a_{1}}$ and $a_{2} \gg 1$, it follows $e^{a_{1}} \approx 0$ 
    and we obtain the following mass squared terms
    \begin{tableindoc}
     \begin{equation}
      \label{masstermsu2finalresultabularlargeai}
       \begin{array}{|c|c|} \hline
        \text{Field} & \text{Mass squared} \\ \hline
        A^{i(0)}_{\mu} \quad i=1,2 
        & \frac{1}{8} \; g^{2} \rho^{2}_{min} \; e^{2a_{1}} \\ \hline
        A^{3(0)}_{\mu} & 0 \\ \hline
        A^{i(1)}_{\mu} \quad i=1,2 
        & \frac{1}{8} \; g^{2} \rho^{2}_{min} \; e^{2a_{1}} \\ \hline
        A^{3(1)}_{\mu} & \frac{1}{4} \; g^{2} \rho^{2}_{min} \; e^{2a_{1}} \\ \hline
       \end{array} 
     \end{equation}
    \end{tableindoc}
    We recognise that for $a_{1} \gg 1$ the gauge field masses show 
    an exponential dependence on $a_{1}$ and can therefore be very large. {\em{It is
    remarkable that already the zero mode gauge fields $A^{1,2(0)}_{\mu}$ can have masses
    much above the compactification scale $1/R$.}} 
    This behaviour has no counterpart within the customary approximation scheme of
    an ordinary orbifold model. 

    \begin{proposition}
 
     \label{propositionlargegaugebosonmasses}

     An effective bilayered transverse lattice model with bulk gauge group $G=SU(2)$, trivial orbifold 
     projection $P=\text{diag}(1,1)$ and minimum of the Higgs potential at
     \begin{equation}
      \Phi_{min}=\rho_{min} \frac{1}{\sqrt{2}} \left( \begin{array}{cc} e^{a_{1}} & 0 \\ 
                      0 & e^{a_{2}} \end{array} \right) \; , \; a_{2}=-a_{1} \; , 
     \end{equation} 
     in the limit of large $a_{1}$ ($a_{1} \gg 1$) allows masses for some zero mode
     and first excited KK-mode gauge fields, which are much larger than the compactification scale 
     $g \rho_{min}=1/R$.
    \end{proposition}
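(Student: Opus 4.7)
The plan is to simply specialize the mass formulas already derived in the proof of the gauge boson mass spectrum in section \ref{sectionunbrokensu2} to the asymptotic regime $a_{1} \gg 1$ (with $a_{2}=-a_{1}$), and then read off the resulting exponential enhancement. Since the full computation of the kinetic term $\text{tr}\,[(D_{\mu}\Phi_{min})^{\dagger}(D_{\mu}\Phi_{min})]$ was carried out in general, all the work needed is an asymptotic estimate of the entries of the mass table (\ref{masstermsu2finalresultabular}).

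First I would recall from (\ref{masstermsu2finalresultabular}) that the mass-squared of the $A^{i(0)}_{\mu}$ modes ($i=1,2$) is $\tfrac{1}{8}g^{2}\rho_{min}^{2}(e^{a_{1}}-e^{a_{2}})^{2}$, that of the $A^{i(1)}_{\mu}$ modes ($i=1,2$) is $\tfrac{1}{8}g^{2}\rho_{min}^{2}(e^{a_{1}}+e^{a_{2}})^{2}$, that of $A^{3(1)}_{\mu}$ is $\tfrac{1}{4}g^{2}\rho_{min}^{2}(e^{2a_{1}}+e^{2a_{2}})$, and that $A^{3(0)}_{\mu}$ remains massless because $t_{3}$ commutes with $\Phi_{min}$. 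Then, substituting $a_{2}=-a_{1}$ and using $a_{1}\gg 1$ so that $e^{a_{2}}=e^{-a_{1}}\to 0$, each of the expressions $(e^{a_{1}}\pm e^{a_{2}})^{2}$ and $e^{2a_{1}}+e^{2a_{2}}$ is dominated by $e^{2a_{1}}$, yielding exactly the entries of Table (\ref{masstermsu2finalresultabularlargeai}).

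To finish, I would invoke the matching condition $g\rho_{min}=1/R$ established earlier (after eq.~(\ref{kkmoderequirement})) so that the mass scale prefactor $g\rho_{min}$ is identified with the compactification scale. The exponential factor $e^{a_{1}}$ is then an independent parameter of the Higgs minimum, uncorrelated with $1/R$, so the masses of $A^{1,2(0)}_{\mu}$, $A^{1,2(1)}_{\mu}$ and $A^{3(1)}_{\mu}$ can be made arbitrarily larger than $1/R$ by choosing $a_{1}$ sufficiently large. In particular, the fact that the zero modes $A^{1,2(0)}_{\mu}$ acquire a mass of order $e^{a_{1}}/R$ is the content of the proposition that has no counterpart in the ordinary truncated orbifold approximation, where the zero modes stay light.

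The proof is therefore essentially an asymptotic evaluation and contains no real obstacle; the only conceptual point that requires care is ensuring that the choice of $\Phi_{min}$ with $a_{2}=-a_{1}$ is consistent with the sharpened orbifold condition of Theorem \ref{theoremorbifoldconditionsfornupt}. This follows because for the trivial projection $P=\mathbf{1}_{2}$ the condition $P\eta P^{-1}=\eta$ is automatic, and the chosen $\mathfrak{a}$ in (\ref{maximalablianunbrokensu2}) contains precisely the diagonal traceless matrices $\eta=\text{diag}(a_{1},-a_{1})$, so $\Phi_{min}$ really lies in the admissible space of nonunitary parallel transporters and the general mass formula applies without modification.
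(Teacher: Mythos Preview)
Your proposal is correct and follows exactly the paper's own argument: the paper simply specializes the general mass table (\ref{masstermsu2finalresultabular}) to the regime where one of the $a_i$ is large, drops the exponentially suppressed term $e^{-a_1}$, and reads off Table~(\ref{masstermsu2finalresultabularlargeai}) together with the identification $g\rho_{min}=1/R$. Your additional remark about consistency with the sharpened orbifold condition is sound but not something the paper bothers to re-check at this point.
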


 \section[EBTLM and continuous Wilson line breaking]{Effective bilayered transverse
          lattice model and continuous Wilson line breaking}

   In this section we consider the case where the gauge group $G=SU(2)$
   is broken via orbifolding to its subgroup $G_{0}=U(1)$. We embed the orbifold 
   projection $P$ in $G$ by setting
   \begin{equation}
    \label{nontrivialpsuu1}
    P=\exp(2 \pi i \; t_{3})=diag(1,-1)  \; .
   \end{equation}
   This choice for $P$ breaks $G=SU(2)$ down to $G_{0}=U(1)$ where $G_{0}$
   is generated by $t_{3}$. 
   As in section \ref{sectionunbrokensu2} we assume that the holonomy group $H$ is given by
   $\mathbb{R}_{\ast}^{+} \; SL(2{,}\mathbb{C})$, with $\mathbb{R}_{\ast}^{+}=\mathbb{R}^{+} / \{0\}$.
   The action of $P$ on $G$ leads to the split
   \begin{equation}
    \mathfrak{su}(2)=\mathfrak{u}(1) \oplus \mathfrak{su}(2)/\mathfrak{u}(1) \; ,
   \end{equation}
   where $\mathfrak{u}(1)=\text{Lie} \; G_{0}$. Figure \ref{figbrokensu2u1table} summarises the setting.
   \begin{figure}[h]
    \begin{equation*}
     \begin{picture} (18,4.5)
     \thicklines
     \multiput(2,1.5) (0,2.5) {2} {\line (1,0) {9}}
     \thinlines
     \multiputlist(11.5,1.5)(0,2.5){$L$,$R$}
     \multiputlist(6.5,4.5)(7,0){$G_{0}=U(1)$}
     \multiputlist(6.5,2.6)(2.5,0){$G=SU(2) \subset H=\mathbb{R}_{\ast}^{+} \; SL(2{,}\mathbb{C})$}
     \multiputlist(6.5,1)(7,0){$G_{0}=U(1)$}
     \put(2.5,2.5) {$\Phi$}
     \put(10.3,2.5) {$\Phi^{\ast}$}
     \matrixput (3,1.5)(7,0){2}(0,2,5){2}{\circle*{0.2}}
     \thicklines
     \dottedline[\circle*{0.05}]{0.1}(3,1.5)(3,4)
     \dottedline[\circle*{0.05}]{0.1}(10,1.5)(10,4)
     \thinlines
     \put(2.82,3.65){\Pisymbol{pzd}{115}}
     \put(9.82,1.58){\Pisymbol{pzd}{116}}
     \end{picture}
    \end{equation*}
    \caption{Effective bilayered transverse lattice model for bulk gauge group $G=SU(2)$ and
             non-trivial orbifold projection $P=\text{diag}(1,-1)$. The bulk gauge group $G=SU(2)$
             is broken to its subgroup $G_{0}=U(1)$ via orbifolding.} 
    \label{figbrokensu2u1table}
   \end{figure}
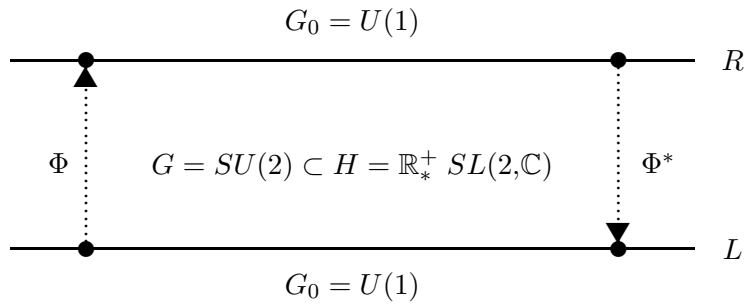 

   Let $\Phi \in H$ fulfil the sharpened orbifold condition. Then $\Phi$ can be written
   according to Theorem \ref{theoremorbifoldconditionsfornupt} as
   \begin{equation}
    \label{decphisuu1translattice}
    \Phi =\rho \; e^{A_{y}} \; e^{\eta} \; e^{A_{y}} \; ,
   \end{equation}
   where
   \begin{eqnarray} 
    P A_{y} P^{-1} & = & -A_{y}  \; ,\\
    P \eta P^{-1}  & = & \eta \; ,
   \end{eqnarray}
   $\eta \in \mathfrak{a}$, for an appropriate choice of $\mathfrak{a}$, and 
   $A_{y} \in \mathfrak{su}(2)/\mathfrak{u}(1)$. As in section
   \ref{sectionunbrokensu2} we choose
   \begin{equation}
    \mathfrak{a}=\{ \eta=diag(a_{1},a_{2}) \} \quad , \quad a_{1}=-a_{2} \; , \quad a_{i} \in 
     \mathbb{R} \; .
   \end{equation}
   Thus we have $ P \eta P^{-1}=\eta$ for all $\eta \in \mathfrak{a}$. 

   Let us consider the Higgs potential 
   \begin{equation}
     \label{higgspotentialbroken}
     V(\Phi)=V(\rho \; e^{A_{y}} \; e^{\eta} \; e^{A_{y}}) \; .
   \end{equation}
   According to Theorem \ref{theoremnuhiggspotential}, $V(\Phi)$ is invariant under unitary gauge 
   transformations 
   \begin{equation}
    V(S_{0}(x) \Phi S_{0}(x)^{-1})=V(\Phi) 
   \end{equation} 
   where $S_{0}(x) \in G_{0}$. Thus 
   $e^{A_{y}}$ in (\ref{higgspotentialbroken}) cannot be gauged away. Consequently
   the Higgs potential $V(\Phi)$ depends on $\rho$, $\eta$ and $A_{y}$  
   \begin{equation}
    V(\Phi)=V(\rho \; e^{A_{y}} \; e^{\eta} \; e^{A_{y}})=\mathcal{V}(\rho,\eta,A_{y}) \; .
   \end{equation}

   The unitary factor $e^{A_{y}}$ in (\ref{decphisuu1translattice}) can be written as
   \begin{equation}
    \label{aywilsonline}
    \exp\left(2 \pi i \; g R \; \mathcal{A}_{y}^{(0)} \right) \; ,
   \end{equation}
   where $\mathcal{A}_{y}^{(0)} \in \mathfrak{su}(2)/\mathfrak{u}(1)$ is the zero of the extra-dimensional 
   component of the five-dimensional gauge field.
   We can expand
   \begin{equation} 
    \label{ayexand}
    \mathcal{A}_{y}^{(0)}=\mathcal{A}_{y}^{1(0)} t_{1} + \mathcal{A}_{y}^{2(0)} t_{2} \; .
   \end{equation}
   Let us consider the case where $\mathcal{A}_{y}$ assume a VEV. Without loss of generality
   we lay this VEV in the $t_{1}$-direction, i.e.
   \begin{equation}
    \label{vevfora2}
    \mathcal{A}_{y} \to \langle \mathcal{A}_{y}^{1(0)} \rangle t_{1} \; .
   \end{equation}
   Inserting (\ref{vevfora2}) in (\ref{aywilsonline}) we get
   \begin{equation} 
    \label{btlmconwilson}
    W=\exp(2 \pi i \; g R \; \langle \mathcal{A}_{y}^{1(0)} \rangle t_{1}) \; .
   \end{equation}     
   This is a Wilson line, compare with  
   (\ref{wilsonline}), and since $\left[ P,t_{1} \right] \neq 0$ it does
   not commute with $P$. Thus the VEV for $\langle \mathcal{A}_{y}^{1(0)} \rangle$ can be an arbitrary
   constant and thus $W$ is a continuous Wilson line, compare with (\ref{wilonlinecontious}).

   However $\mathcal{A}_{y}^{(0)}$ is not in its canonical
   four-dimensional form. Therefore we make the following   
   \begin{definition} 
    \label{defintionextradimvecpot}
    The unitary factor in the decomposition (\ref{decphisuu1translattice}) is given by
    \begin{equation} 
     \label{extradimvecpot}   
     \exp(A_{y})=\exp(i \; g_{4} R \; \mathcal{A}_{y}^{(0)}) \; .
    \end{equation}
   \end{definition}
   Remarks: 
   i) In (\ref{extradimvecpot}) $\mathcal{A}_{y}^{(0)}$ is the zero of the extra-dimensional 
   component of the five-dimensional gauge field in its canonical four-dimensional form and can be 
   interpreted as a usual four-dimensional Higgs field, and
   $g_{4}$ is the four-dimensional effective gauge coupling
   constant. This definition is convenient because the kinetic term  $\mathcal{L}_{mass}= 
   tr \left[ \left(D_{\mu}\Phi_{min} \right)^{\dagger}\left( D_{\mu}\Phi_{min} \right) \right]$ 
   should involve only rescaled four-dimensional terms, compare with section \ref{sectionenonabBTLM}. \\
   ii) In contrast to (\ref{aywilsonline}) we have rescaled $\mathcal{A}_{y}^{(0)}$
   \footnote{In an orbifold theory $\mathcal{A}_{y}$ and its zero mode $\mathcal{A}_{y}^{(0)}$ is related by
   $\mathcal{A}_{y}=\frac{1}{\sqrt{2 \pi R}} \mathcal{A}_{y}^{(0)}$ (\ref{fmexexdimvp}).}
   by a factor $2 \pi$. \\
   iii) $e^{A_{y}}$ and $g_{4}$ are dimensionless, $R$ has mass dimension $-1$ and 
   $\mathcal{A}_{y}^{(0)}$ has mass dimension $1$.

   Within Definition \ref{defintionextradimvecpot} the Wilson line (\ref{btlmconwilson}) becomes
   \begin{equation}
    \label{rescaledwilson}
    W=\exp( i \; g_{4} R \; \langle \mathcal{A}_{y}^{1(0)} \rangle t_{1}) \; .
   \end{equation}      
   It is convenient to rewrite the VEV $\langle \mathcal{A}_{y}^{1(0)} \rangle$ as
   \begin{equation}
    \label{vevforay}
    \langle \mathcal{A}_{y}^{1(0)} \rangle=\frac{\alpha_{1}}{g_{4} R} \; ,
   \end{equation}
   where $0 < \alpha_{1} < 1$ is a dimensionless parameter. 
   Inserting (\ref{vevforay}) in (\ref{rescaledwilson}) the Wilson line becomes
   \begin{equation} 
    W=\exp(i \; g_{4} R \; \langle \mathcal{A}_{y}^{1(0)} \rangle t_{1})
     =\exp(i \; \alpha_{1} t_{1}) \; .
   \end{equation}
   A VEV for $\mathcal{A}_{y}^{(0)}$ is usually much smaller than the compactification scale $1/R$. 
   Thus $0 < \alpha_{1} \ll  1$ in (\ref{vevforay}) and we can approximate
   \begin{equation}
    W=\exp(i \; \alpha_{1} t_{1}) 
    \approx 1+i \; \alpha_{1} t_{1}
    =\begin{pmatrix} 1 & i \frac{\alpha_{1}}{2}   \\ 
                       i \frac{\alpha_{1}}{2} & 1 \end{pmatrix} \; .
   \end{equation}

   According to (\ref{higgspotentialbroken})
   we can parametrise the minimum $\Phi_{min}$ of the Higgs potential as
   \begin{eqnarray}
    \label{phimingeneralsuu1}
    \Phi_{min} & = & \rho_{min}  \frac{1}{\sqrt{2}} \; 
                      \begin{pmatrix} 1 & i \frac{\alpha_{1}}{2} \\ 
                      i \frac{\alpha_{1}}{2} & 1 \end{pmatrix} \; 
                      \begin{pmatrix} e^{a_{1}} & 0  \\ 0 & e^{a_{2}} \end{pmatrix} \; 
                      \begin{pmatrix} 1 & i \frac{\alpha_{1}}{2}  \\
                      i \frac{\alpha_{1}}{2} & 1 \end{pmatrix} \nonumber  \\
               & = &  \rho_{min} \frac{1}{2} \; 
                      \begin{pmatrix} e^{a_{1}} & \left( e^{a_{1}} + e^{a_{2}} \right) 
                      i \frac{\alpha_{1}}{2} \\  
                      -\left( e^{a_{1}} + e^{a_{2}} \right) i \frac{\alpha_{1}}{2} &
                      e^{a_{2}} \end{pmatrix} \; + \mathcal{O}\left( \alpha_{1}^{2} \right) \; .
                      \nonumber \\
   \end{eqnarray}
   In the following since $0 < \alpha_{1} \ll  1$ we neglect terms of 
   $\mathcal{O}\left( \alpha_{1}^{2} \right)$.
   We calculate the mass terms for the 
   gauge fields $A^{3(0)}_{\mu}$ and $A^{3(1)}_{\mu}$. 
   The covariant derivative reads (\ref{covariantderivaticesu2transform}) 
   \begin{equation}
    \label{covariantderivativebrokensu2}
    D_{\mu}\Phi=\partial_{\mu}\Phi+i \frac{g}{\sqrt{2}}  A^{3(0)}_{\mu} 
               \left[t_{3}, \Phi \right]+i \frac{g}{\sqrt{2}}  A^{3(1)} \{ t_{3}, \Phi \}
   \end{equation}
   where $[,]$ and $\{ , \}$ denote the commutator and anticommutator, respectively.
   We restrict ourselves to the case where $a_{1}=-a_{2}=0$. Then (\ref{phimingeneralsuu1}) becomes 
   \begin{equation}
    \Phi_{min}=\rho_{min} \frac{1}{\sqrt{2}} \; 
               \left( \begin{array}{cc} 1 & i  \alpha_{1} \\
               i  \alpha_{1} & 1 \end{array} \right) \; .
   \end{equation}
   We can expand $\Phi_{min}$ in terms of $t_{0}$ and $t_{1}$ as
   \begin{eqnarray}
    \Phi_{min} & = & \phi_{0} t_{0} + \phi_{1} t_{1} \\
               & = & \frac{1}{2} \left( \begin{array}{cc} \phi_{0} & \phi_{1} \\ 
                      \phi_{1} & \phi_{0} \end{array} \right)
               := \rho_{min} \frac{1}{\sqrt{2}} \; 
               \left( \begin{array}{cc} 1 &  i \alpha_{1} \\
               i \alpha_{1}  & 1 \end{array} \right) \; .
   \end{eqnarray}
   where  
   \begin{gather}
    \label{sutou1phinullphi+-generalphi}
    \phi_{0}=\sqrt{2} \rho_{min}  \; , \\
    \phi_{1}=i \phi_{1}^{\prime} \; , \quad \phi_{1}^{\prime}=\sqrt{2} \rho_{min} \; \alpha_{1}
    \nonumber \; . 
   \end{gather}
   For the commutators $\left[ t_{3}, \Phi_{min} \right]$ and anticommutators $\{ t_{3}, \Phi_{min} \}$
   we obtain
   \begin{gather}
    \left[ t_{3}, \Phi_{min} \right]
    =\phi_{0} \underbrace{\left[ t_{3}, t_{0} \right]}_{=0}
     +\phi_{1} \underbrace{\left[ t_{3}, t_{1} \right]}_{=i t_{2}} \\
    \{ t_{3}, \Phi_{min} \}
    =\phi_{0} \underbrace{\{ t_{3}, t_{0} \}}_{=t_{3}}
     +\phi_{1} \underbrace{\{ t_{3}, t_{1} \}}_{=0} \; .
   \end{gather}
   Inserting $\left[ t_{3}, \Phi_{min} \right]$ and $\{ t_{3}, \Phi_{min} \}$ in
   (\ref{covariantderivativebrokensu2}) we find
   \begin{equation}
    D_{\mu}\Phi_{min}=- i \frac{g}{\sqrt{2}} \; \phi_{1}^{\prime} t_{2} \;  A^{3(0)}_{\mu} 
                      + i \frac{g}{\sqrt{2}} \; \phi_{0} t_{3} \; A^{3(1)}_{\mu} \; .
   \end{equation}
   Taking the adjoint $(D_{\mu}\Phi_{min})^{\dagger}=i \frac{g}{\sqrt{2}} \;  \phi_{1}^{\prime} t_{2}
   \; A^{3(0)}_{\mu} - i \frac{g}{\sqrt{2}} \; \phi_{0} t_{3} \; A^{3(1)}_{\mu}$, multiplying 
   $(D_{\mu}\Phi_{min})^{\dagger}$ with $D_{\mu}\Phi_{min}$ we obtain
   \begin{eqnarray}
    && 
    \label{productsutou1unitarycase1}
    \left( D_{\mu}\Phi_{min} \right)^{\dagger} \left( D_{\mu}\Phi_{min} \right)  \\
    & = & \frac{1}{2} \;  g^{2} \; \phi_{1}^{\prime 2} t_{1}^{2} \; \left( A^{3(0)}_{\mu} \right)^{2} 
          +\frac{1}{2} \; g^{2} \; \phi_{0}^{2} t_{3}^{2}  
          \; \left( A^{3(1)}_{\mu} \right)^{2}             
      -   \frac{1}{2} \; g^{2} \; \phi_{0} \phi_{1}^{\prime} \; \left[ t_{2}, t_{3} \right]  
          \;  A^{3(0)}_{\mu}  A^{3(1)}_{\mu} \nonumber
   \end{eqnarray}
   First we observe that the mixed term vanishes after taking the trace. 
   The mass term for the zero mode $A^{3(0)}_{\mu}$ becomes after taking the trace 
   \begin{equation}
    \frac{1}{4} g^{2} \; \phi_{1}^{\prime 2}  
    \; \left( A^{3(0)}_{\mu} \right)^{2} \; ,
   \end{equation}
   and the mass term for the first excited mode $A^{3(1)}_{\mu}$ in (\ref{productsutou1unitarycase1}) 
   becomes after taking the trace 
   \begin{equation}
    \frac{1}{4} \; g^{2} \phi_{0}^{2} 
    \; \left( A^{3(1)}_{\mu} \right)^{2} \; .
   \end{equation}  
   Recapitulating we have obtained
   \begin{equation}
    \text{tr} \left[\left( D_{\mu}\Phi_{min} \right)^{\dagger} 
        \left( D_{\mu}\Phi_{min} \right)\right]  \\
     =  \frac{1}{4} \; g^{2} \phi_{1}^{\prime 2} \; \left( A^{3(0)}_{\mu} \right)^{2} 
        + \frac{1}{4} \; g^{2} \phi_{0}^{2} 
          \; \left( A^{3(1)}_{\mu} \right)^{2} \; .
   \end{equation}
   Inserting $\phi_{0}$ and $\phi_{1}$ 
   (\ref{sutou1phinullphi+-generalphi}) we finally arrive at
   \begin{equation} 
      \label{masstermsu2tou1casea0a1equalzero}
      \text{tr} \left[\left( D_{\mu}\Phi_{min} \right)^{\dagger} 
          \left( D_{\mu}\Phi_{min} \right) \right] \\  
      =   \frac{1}{2} \; g^{2} \rho_{min}^{2} \alpha_{1}^{2} \; 
          \left( A^{3(0)}_{\mu} \right)^{2} + \frac{1}{2} \; g^{2} \rho_{min}^{2}  \; 
          \left( A^{3(1)}_{\mu} \right)^{2} \; .
   \end{equation}
   Table \ref{tablesu2u1a1a2ez} summarises the result
   \begin{tableindoc}
    \label{tablesu2u1a1a2ez}
    \begin{equation}
     \begin{array}{|c|c|} \hline
      \text{Field} & \text{Mass squared} \\ \hline
      A^{3(0)}_{\mu} &  g^{2} \rho_{min}^{2} \; \alpha_{1}^{2}  \\ \hline
      A^{3(1)}_{\mu} &  g^{2} \rho_{min}^{2} \\ \hline
     \end{array} 
    \end{equation} 
   \end{tableindoc}   
   {\em{Discussion}}: i) 
   The fields $A_{\mu}^{1,2(0)}$ and $A_{\mu}^{1,2(1)}$ are integrated out due to the choice 
   of the orbifold projection (\ref{nontrivialpsuu1}). \\
   ii) The mass for the zero mode gauge boson $A^{3(0)}_{\mu}$ is
       \begin{equation}
        m=\frac{\alpha_{1}}{R} \; ,
       \end{equation}
       where we have inserted $g \rho_{min}=1/R$ in (\ref{masstermsu2tou1casea0a1equalzero}). 
       For $0 < \alpha_{1} \ll 1$ the mass of  $A^{3(0)}_{\mu}$ is much lower than the compactification
       scale $1/R$. \\
   iii) For $\langle \mathcal{A}_{y}^{1(0)} \rangle \neq 0$, the orbifold unbroken gauge group $G_{0}=U(1)$
           is completely broken and we have the breaking scheme
           \begin{equation}
            SU(2) \stackrel{P}{\longrightarrow} U(1) 
            \stackrel{\langle \mathcal{A}_{y}^{1(0)} \rangle}{\longrightarrow} \emptyset \; .
           \end{equation}
           Thus the $rank$ of the gauge group $G=SU(2)$ is reduced. This follows from the fact that  
           \begin{equation} 
            W=\exp(i \; g_{4} R \; \langle \mathcal{A}_{y}^{1(0)} \rangle t_{1}) 
           \end{equation}
           is a continuous Wilson line. \\
   iv) The first excited KK-mode gauge bosons $A^{3(1)}_{\mu}$ acquire the mass  
       $g \rho_{min}=1/R$. This result is just what one would expect from the customary
       approximation scheme of a truncated $S^{1}/\mathbb{Z}_{2}$ continuum orbifold model.   

   We compare this result to a
   $S^{1}/\mathbb{Z}_{2}$ continuum orbifold model with bulk gauge group $G=SU(2)$ and non-trivial
   orbifold projection $P$.
   The five-dimensional Lagrangian reads
   \begin{equation}
    \label{lagrangiansuu1orbi}
    \mathcal{L}_{5D}=-\frac{1}{4}  F^{a}_{MN} F^{aMN} \; ,
   \end{equation}
   where
   \begin{equation}
    F^{a}_{MN}=\partial_{M}A^{a}_{N}-\partial_{N}A^{a}_{M}
              +g_{5}f^{abc}A^{b}_{M}A^{c}_{N} \; .
   \end{equation}
   The boundary conditions read
   \begin{gather}
    A_{\mu}(x^{\mu},-y)=P \; A_{\mu}(x^{\mu},y) \; P^{-1}  \\
    A_{y}(x^{\mu},-y)=-P \; A_{y}(x^{\mu},y) \; P^{-1}  \; .
   \end{gather}   
   We break $G=SU(2)$ down to $U(1)$ by choosing 
   \begin{equation}
    P=\text{diag}(1,-1)  \; ,
   \end{equation}
   compare with (\ref{nontrivialpsuu1}).
   The Fourier mode expansion up to the first Kaluza-Klein mode reads 
   \begin{gather}
    A^{3}_{\mu}(x^{\mu},y)=\frac{1}{\sqrt{2 \pi R}} A_{\mu}^{3(0)}(x^{\mu})
      +\frac{1}{\sqrt{\pi R}} A_{\mu}^{3(1)}(x^{\mu}) \cos(\frac{y}{R}) \; , \\
    A^{1,2}_{\mu}(x^{\mu},y)=\frac{1}{\sqrt{\pi R}} A_{\mu}^{1,2(1)}(x^{\mu}) 
      \sin(\frac{y}{R}) \; , \\
    \label{fmexexdimvp}
    A^{1,2}_{y}(x^{\mu},y)=\frac{1}{\sqrt{2 \pi R}} A_{y}^{1,2(0)}(x^{\mu})
      +\frac{1}{\sqrt{\pi R}} A_{y}^{1,2(1)}(x^{\mu}) \cos(\frac{y}{R}) \; ,\\
    A^{3}_{y}(x^{\mu},y)=\frac{1}{\sqrt{\pi R}} A_{y}^{3(1)}(x^{\mu}) 
      \sin(\frac{y}{R}) \; .
   \end{gather}
   We calculate $F^{a}_{\mu y}$ in axial gauge, i.e. we set $A_{y}^{a(1)}=0$ for $a=1,2,3$. The result is 
   \begin{eqnarray}
    F^{a}_{\mu y} & = & \partial_{\mu}A^{a}_{y}-\partial_{y}A^{a}_{\mu}+
                        g_{5}f^{abc} A^{b}_{\mu}A^{c}_{y} \label{fmu5su2u1orbi} \\ 
                  & = & \frac{1}{\sqrt{2 \pi R}} \partial_{\mu} A_{y}^{1,2(0)}(x^{\mu}) 
                    +   \frac{1}{\sqrt{\pi R}} A_{\mu}^{3(1)} \frac{1}{R} 
                        \sin(\frac{y}{R})-\frac{1}{\sqrt{\pi R}} A_{\mu}^{1,2(1)} \frac{1}{R} 
                        \cos(\frac{y}{R}) \nonumber \\ 
                  & + & \frac{g_{5}}{2 \pi R} f_{a3c} 
                        \left[ A_{\mu}^{3(0)}(x^{\mu})+A_{\mu}^{3(1)}(x^{\mu}) 
                        \sqrt{2} \cos(\frac{y}{R}) \right] \left[ A_{y}^{c(0)}(x^{\mu}) \right] \nonumber \\
                  & + & \frac{g_{5}}{\sqrt{2} \pi R} \left( 
                        \left[ A_{\mu}^{1(1)}(x^{\mu}) \sin(\frac{y}{R}) \right] 
                        \left[ A_{y}^{2(0)}(x^{\mu}) \right]
                        - \left[ A_{\mu}^{2(1)}(x^{\mu}) \sin(\frac{y}{R}) \right] 
                        \left[ A_{y}^{1(0)}(x^{\mu}) \right] \right)   \nonumber
   \end{eqnarray}
   with $c=1,2$.
   We assume that $A_{y}$ gets a VEV in its $t_{1}$ direction
   \begin{equation}
    A_{y} \; \longrightarrow \; \langle \; A_{y}^{1(0)} \; \rangle \; .
   \end{equation}
   Inserting this VEV in (\ref{fmu5su2u1orbi}), we obtain \footnote{Note that $\left[ t_{i},t_{j} \right]=
   i f_{ijk} t_{k}$ with $f_{ijk}=\epsilon_{ijk}$}
   \begin{eqnarray}
    F^{a}_{\mu y} & = & \frac{1}{\sqrt{\pi R}} A_{\mu}^{3(1)} \frac{1}{R} 
                        \sin(\frac{y}{R})-\frac{1}{\sqrt{\pi R}} A_{\mu}^{1,2(1)} \frac{1}{R} 
                        \cos(\frac{y}{R}) \\ 
                  & + & \frac{g_{5}}{2 \pi R}  
                        \left[ A_{\mu}^{3(0)}(x^{\mu})\langle \; A_{y}^{1(0)} \; \rangle 
                        +A_{\mu}^{3(1)}(x^{\mu}) \sqrt{2} \cos(\frac{y}{R}) 
                        \langle \; A_{y}^{1(0)} \; \rangle\right] \nonumber \\
                  & - & \frac{g_{5}}{\sqrt{2} \pi R}  
                        \left[ A_{\mu}^{1(1)}(x^{\mu}) \sin(\frac{y}{R}) \right] 
                        \langle \; A_{y}^{1(0)} \; \rangle  \nonumber \; .
   \end{eqnarray}
   Inserting $F^{a}_{\mu y}$ in the five-dimensional Lagrangian
   (\ref{lagrangiansuu1orbi}) and integrating over the circle $S^{1}$ we find
   \begin{eqnarray}
    \label{su2tou1orbifoldmassterm}
    \mathcal{L}_{mass}^{orbifold} & = & \int_{0}^{2\pi R} 
                              \{ -\frac{1}{2} F^{a}_{\mu y} F^{a\mu y} \} \; dy  \\ 
                           & = & \frac{1}{2} \; g_{4}^{2} \left( A_{\mu}^{3(0)}(x^{\mu}) \right)^{2} 
                                 \langle \; A_{y}^{1(0)} \; \rangle^{2} 
                                 +\frac{1}{2} \; g_{4}^{2} \left( A_{\mu}^{3(1)}(x^{\mu}) \right)^{2}
                                 \langle \; A_{y}^{1(0)}  \; \rangle^{2}   \nonumber \\
                           & + & \frac{1}{2} \; g_{4}^{2} \left( A_{\mu}^{1(1)}(x^{\mu}) \right)^{2}
                                 \langle \; A_{y}^{1(0)}  \; \rangle^{2} 
                                 +\frac{1}{2} \; \frac{1}{R^{2}} \left( A_{\mu}^{a(1)} \right)^{2} \nonumber
   \end{eqnarray}
   for $a=1,2,3$.
   We compare this result to (\ref{masstermsu2tou1casea0a1equalzero})
   \begin{eqnarray}
    \label{masstermbtlmsuu1}
    \mathcal{L}_{mass}^{eBTLM} 
    & = &  \frac{1}{2} \; g^{2} \rho_{min}^{2} \alpha_{1}^{2}  \; 
           \left( A^{3(0)}_{\mu} \right)^{2} + \frac{1}{2} \; g^{2} \rho_{min}^{2}  \; 
           \left( A^{3(1)}_{\mu} \right)^{2}  \nonumber \\
    & = &  \frac{1}{2} \; \frac{ \alpha_{1}^{2}}{R^{2}}   \; 
           \left( A^{3(0)}_{\mu} \right)^{2} + \frac{1}{2} \; \frac{1}{R^{2}} \; 
           \left( A^{3(1)}_{\mu} \right)^{2} 
   \end{eqnarray}
   where we have inserted $g \rho_{min}=1/R$ in the second step. The comparison
   of (\ref{su2tou1orbifoldmassterm}) with (\ref{masstermbtlmsuu1}) yield
   \begin{itemize}
    \item The zero KK modes of all fields are expected to be much lighter than their first KK 
          excitation. Therefore we can assume  
          \begin{equation}
           \label{assumptinvevayorbi}
           \langle \; A^{1(0)}_{y} \; \rangle \ll \frac{1}{g_{4} R}  \; .
          \end{equation}
          Then the mass of the gauge field $A_{\mu}^{3(1)}$ is approximately equal 
          in both models. 
    \item For
          \begin{equation}
           \label{orbifoldvev}
           \langle \; A^{1(0)}_{y} \; \rangle = \frac{\alpha_{1}}{g_{4} R} \; ,
          \end{equation}  
          both models yield the same mass term for $A^{3(0)}_{\mu}$, i.e.
          \begin{equation}
           m=\frac{\alpha_{1}}{R} \; .
          \end{equation}
          This is a consequence of the fact that we have rescaled the extra-dimensional vector potential
          in the eBTLM such that the VEV for $A^{2(0)}_{y}$ is given by (\ref{orbifoldvev}).
          Note that $0 < \alpha_{2} \ll  1$ which is compatible with the 
          assumption (\ref{assumptinvevayorbi}).
    \end{itemize}
    \begin{proposition}
     An $S^{1}/\mathbb{Z}_{2}$ continuum orbifold model with bulk gauge group $G=SU(2)$, non-trivial
     orbifold projection $P=\text{diag}(1,-1)$ and a Fourier mode expansion for all fields truncated
     at the first excited Kaluza-Klein mode in axial gauge 
     gives an approximation to an effective bilayered transverse
     lattice model with bulk gauge group $G=SU(2)$, non-trivial
     orbifold projection $P=\text{diag}(1,-1)$ and the minimum of the Higgs potential at
     \begin{equation}
      \Phi_{min}=\rho_{min} \frac{1}{\sqrt{2}} \; 
               \left( \begin{array}{cc} 1 & i \alpha_{1} \\
                i \alpha_{1} & 1 \end{array} \right) 
     \end{equation}
     with $0 < \alpha_{1} \ll 1$.
    \end{proposition}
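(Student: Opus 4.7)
The plan is to establish the stated approximate equivalence by computing the four-dimensional gauge boson mass spectrum on both sides and matching them under an explicit parameter dictionary, working only to leading order in $\alpha_{1}$ so that the Wilson line $\exp(i\alpha_{1}t_{1}) \approx 1 + i\alpha_{1}t_{1}$ used in the parametrisation of $\Phi_{min}$ in (\ref{phimingeneralsuu1}) is justified.

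First I would record the eBTLM side, which is essentially prepared by the preceding analysis. The sharpened orbifold condition together with the choice $\mathfrak{a}=\{\mathrm{diag}(a_{1},-a_{1})\}\subset i\mathfrak{su}(2)$ gives $\Phi = e^{A_{y}}e^{\eta}e^{A_{y}}$ with $\{P,A_{y}\}=0$, $[P,\eta]=0$, so the continuous Wilson line factor $e^{A_{y}}=\exp(i g_{4} R\, \mathcal{A}_{y}^{(0)})$ from Definition \ref{defintionextradimvecpot} lives along $t_{1},t_{2}$. Setting $a_{1}=0$ and putting $\langle\mathcal{A}_{y}^{(0)}\rangle$ in the $t_{1}$ direction yields $\Phi_{min}$ in the stated form; the projection $P=\mathrm{diag}(1,-1)$ removes $A^{1,2(0)}_{\mu}$ and $A^{1,2(1)}_{\mu}$ from the spectrum, and the trace of $(D_{\mu}\Phi_{min})^{\dagger}(D_{\mu}\Phi_{min})$, expanded in $t_{0},t_{1},t_{2},t_{3}$ using the commutator/anticommutator identities of section \ref{sectionunbrokensu2}, produces (\ref{masstermsu2tou1casea0a1equalzero}), so $A^{3(0)}_{\mu}$ has mass $g\rho_{min}\alpha_{1}$ and $A^{3(1)}_{\mu}$ has mass $g\rho_{min}$, with mixed terms vanishing after the trace.

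Second I would reproduce the same spectrum from the truncated $S^{1}/\mathbb{Z}_{2}$ continuum orbifold model. For $P=\mathrm{diag}(1,-1)$, the first excited KK mode expansion keeps $A^{3(0)}_{\mu},A^{3(1)}_{\mu},A^{1,2(1)}_{\mu},A^{1,2(0)}_{y}$ and $A^{3(1)}_{y}$; axial gauge eliminates all $A^{a(1)}_{y}$, and giving $A^{1(0)}_{y}$ a constant VEV is admissible precisely because $\{P,t_{1}\}=0$, i.e.\ it generates the continuous Wilson line of section \ref{sectioncontinuousdiscretewislonlineshosotani}. Substituting the truncated expansion into $F^{a}_{\mu y}$ and integrating $\int_{0}^{2\pi R} dy$ using $g_{4}=g_{5}/\sqrt{2\pi R}$ delivers (\ref{su2tou1orbifoldmassterm}): a mass $g_{4}\langle A^{1(0)}_{y}\rangle$ for $A^{3(0)}_{\mu}$, a mass $\sqrt{g_{4}^{2}\langle A^{1(0)}_{y}\rangle^{2}+1/R^{2}}$ for $A^{3(1)}_{\mu}$, and a mass of the same form for the $P$-projected-out $A^{1(1)}_{\mu}$.

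Third I would dictate the dictionary $g\rho_{min}=1/R$ and $\langle A^{1(0)}_{y}\rangle = \alpha_{1}/(g_{4}R)$. Under this identification, $m^{2}(A^{3(0)}_{\mu})=\alpha_{1}^{2}/R^{2}$ on both sides, while $m^{2}(A^{3(1)}_{\mu}) = 1/R^{2}$ on the eBTLM side matches $1/R^{2}+\alpha_{1}^{2}/R^{2}$ on the orbifold side up to an $\mathcal{O}(\alpha_{1}^{2})$ correction of the same order as the terms already discarded in linearising the Wilson line. The main obstacle is bookkeeping the sense in which the two models only \emph{approximately} coincide: the orbifold model produces the additional massive mode $A^{1(1)}_{\mu}$ (with mass $\mathcal{O}(1/R)$) that has no counterpart in the eBTLM spectrum after $P$ projection, and it generates $\mathcal{O}(\alpha_{1}^{2}/R^{2})$ corrections to the first KK level that the eBTLM side discards together with the quadratic term in the Wilson line expansion. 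One must therefore frame the claim as an equality of effective Lagrangians modulo terms of order $\alpha_{1}^{2}/R^{2}$ and modulo $P$-odd first KK modes that decouple because they do not mix at this order with the $P$-even sector; the rescaling by $2\pi$ in Definition \ref{defintionextradimvecpot} is precisely what makes the VEV dictionary come out with coefficient one on both sides.
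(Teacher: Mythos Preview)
Your approach is essentially the same as the paper's: compute the mass spectrum on the eBTLM side (arriving at (\ref{masstermsu2tou1casea0a1equalzero})), compute it on the truncated orbifold side (arriving at (\ref{su2tou1orbifoldmassterm})), and then match via the dictionary $g\rho_{min}=1/R$, $\langle A_{y}^{1(0)}\rangle=\alpha_{1}/(g_{4}R)$, observing that the zero-mode masses agree exactly and the first excited KK masses agree up to $\mathcal{O}(\alpha_{1}^{2})$. You are in fact more explicit than the paper about what ``approximation'' means: you correctly flag that the orbifold side retains the $P$-odd first KK modes $A_{\mu}^{1,2(1)}$ which have no counterpart in the eBTLM boundary spectrum (where $A_{\mu}^{L,R}\in\mathfrak{g}_{0}=\mathfrak{u}(1)$), and that the match for $A_{\mu}^{3(1)}$ holds only modulo the $\alpha_{1}^{2}/R^{2}$ shift already discarded in linearising the Wilson line; the paper simply records this as an ``approximation'' without spelling out these two sources of mismatch.
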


\chapter{$SU(7)$ unified model}
   
  \label{su7model}
  
  \section{Introduction: Why $SU(7)$ ?}

    In this chapter we will present a realistic five-dimensional Gauge-Higgs unification model based on the 
    unified gauge group $SU(7)$. The gauge group $SU(7)$ unifies electroweak-, flavour- and Higgs 
    interactions in one single gauge group. Colour will be ignored. In the following we will outline 
    the basic considerations that will lead to the unified gauge group $SU(7)$.
    
    Let us start with the electroweak gauge group of the SM
    \begin{equation}   
     SU(2)_{L} \times U(1)_{Y} \; .
    \end{equation}
    In order to proceed we need to add a suitable flavour gauge group. In chapter \ref{chapterintrodcution}
    we have given an overview of flavour groups discussed in the literature. In particular, there are 
    the four continuous flavour groups: $SU(2)_{F}$, $SU(3)_{F}$, $SO(3)_{F}$ and $U(1)_{F}$. 
    Note that a possible flavour gauge group should remain unbroken by the orbifold projection $P$.
    The considerations of the last chapter suggest that the flavour gauge group in our model should 
    be $SO(3)_{F}$. There are three reasons that motivates this choice:
    \begin{enumerate}
     \item We want to explain naturally why there are three generations in the SM. The flavour gauge
           groups $SU(2)_{F}$, $SU(3)_{F}$ and $SO(3)_{F}$ possess all an irreducible three dimensional 
           representation in which the three generations of the SM can fit. For this reason we exclude
           $U(1)_{F}$. 
     \item Masses for all flavour gauge fields must be very large $\mathcal{O}(10^{3})-\mathcal{O}(10^{5})$
           TeV in comparison to the electroweak breaking scale $\mathcal{O}(246)$ GeV in order
           to suppress tree-level FCNC. Thus for
           a compactification scale $1/R$ of the theory of $\mathcal{O}(1)$ TeV such flavour gauge fields 
           must receive masses from VEVs for the selfadjoint part of $\Phi$. In section 
           \ref{sectionlargegaugebosonmasses}, see Proposition \ref{propositionlargegaugebosonmasses}, 
           we have obtained that in the limit of 
           large $a_{1}$ gauge field masses for some {\em{zero}} and first excited KK mode gauge fields
           are much larger than the compactification scale $1/R$.
           This behaviour is what we need here. Note that
           a Wilson line breaking of the flavour gauge group would lead to flavour gauge field masses 
           below the compactification scale $1/R$. 

           The three-dimensional representation of $SU(2)_{F}$
           is not faithful (faithful representations of $SU(2)_{F}$ have even dimension). However the
           three-dimensional representation of $SU(2)_{F}$ is a 
           faithful representation of $SO(3)_{F}$. The three generations of the SM can fit is this
           three-dimensional representation of $SO(3)_{F}$. If we embed $SO(3)_{F}$ into $SU(7)$ in an
           appropriate way it is possible for a suitable minimum of the Higgs potential that all
           $SO(3)_{F}$ gauge fields can receive masses from VEVs for the selfadjoint part of $\Phi$ much
           above the compactification scale $1/R$. The three generations of the SM can also fit is the
           three-dimensional representation of $SU(3)_{F}$. In this chapter we discuss also
           the embedding of $SU(3)_{F}$ into $SU(7)$. Note that we have the embedding scheme:
           $SO(3)_{F} \subset SU(3)_{F} \subset SU(7)$. However for the embedding of $SU(3)_{F}$ into
           $SU(7)$ there remains at least an $U(1)_{F} \times U(1)_{F}$ left unbroken
           by VEVs for the selfadjoint part of $\Phi$.  Note again that
           we do not want to break the flavour gauge group by orbifolding. Thus in this setting we exclude 
           $SU(3)_{F}$.
     \item The three-dimensional representation of $SO(3)_{F}$ is anomaly-free while 
           the  three-dimensional representation of $SU(3)_{F}$ is not anomaly-free. 
           This is an additional reason why we exclude $SU(3)_{F}$. We will discuss the issue
           of anomaly cancellation in the $SU(7)$ model in detail in section \ref{anomalycancellation}.     
    \end{enumerate}

    If we add the flavour gauge group $SO(3)_{F}$ to the electroweak gauge group of the SM we arrive at 
    \begin{equation}   
     SU(2)_{L} \times U(1)_{Y} \times SO(3)_{F} \; .
    \end{equation} 
    Since VEVs for the selfadjoint
    part of $\Phi$ break only the flavour gauge group, i.e.
    \begin{equation}   
     SU(2)_{L} \times U(1)_{Y} \times SO(3)_{F} \stackrel{\langle \eta \rangle}{\rightarrow}  
     SU(2)_{L} \times U(1)_{Y} \; ,
    \end{equation} 
    there remains an unbroken electroweak gauge group. The question is now:
    \begin{itemize}
     \item 
      How can we include electroweak symmetry breaking in the model?
    \end{itemize}
    Note that we do not want to introduce extra Higgs fields in the model (besides the nonunitary parallel
    transporters $\Phi$). The answer to this question is the following:
    First we embed the flavour gauge group $SO(3)_{F}$ into $SU(3)_{F}$. Consequently we
    arrive at $SU(2)_{L} \times U(1)_{Y} \times SU(3)_{F}$. The purpose is
    to unify weak- and flavour interactions in one single gauge group $SU(6)_{L}$ \cite{Ponce:1989wr}.
    The embedding of $SU(2)_{L} \times SU(3)_{F}$ in $SU(6)_{L}$ is a special maximal one.
    The flavour gauge group $SU(3)_{F}$ itself appears 
    only at an intermediate step towards the unified gauge group $SU(6)_{L}$ and not as an
    unbroken symmetry for the reasons mentioned above. 
    Second, we embed $SU(6)_{L} \times U(1)_{Y}$ into $SU(7)$ in an appropriate way
    \begin{equation}
     SU(6)_{L} \times U(1)_{Y} \subset SU(7) \; ,
    \end{equation}
    and thus arrive at the unified gauge group $SU(7)$. 
 
    Starting with the unified bulk gauge group $G=SU(7)$ on the five-dimensional space-time 
    $M^{4} \times S^{1}/\mathbb{Z}_{2}$, we put $S^{1}/\mathbb{Z}_{2}$ on a lattice,
    calculate the RG-flow and consequently arrive at an eBTLM with unitary bulk gauge group
    $G=SU(7)$ and holonomy group $H=\mathbb{R}^{+}_{\ast} \; 
    SL(7,\mathbb{C})$ \footnote{Note that the linear span of $SU(7)$ reads
    $\mathbb{R}^{+}_{\ast} \; SL(7,\mathbb{C})$ and thus $H$ is unique.},
    with $\mathbb{R}^{+}_{\ast}=\mathbb{R}^{+}/\{0\}$.
    This procedure was explained in detail in the last chapter. The model contains nonunitary
    PTs $\Phi \in H$ in the extra dimension. 
   
    The main idea is now to choose a non-trivial orbifold projection $P$. Via
    a non-trivial orbifold projection $P$ the unified gauge group $SU(7)$ is broken down to
    $SU(6)_{L} \times U(1)_{Y}$ at the orbifold fixed points, i.e.
    \begin{equation}
     SU(7) \stackrel{P}{\longrightarrow} SU(6)_{L} \times U(1)_{Y} \; .
    \end{equation}
    At first view, this seems to be curious: First embedding $SU(6)_{L} \times U(1)_{Y}$ 
    into $SU(7)$ and one step later breaking $SU(7)$ again down to $SU(6)_{L} \times U(1)_{Y}$. However, 
    there is a bonus. If $\Phi$ fulfils the sharpened orbifold condition we can write $\Phi \in H$ as
    \begin{equation}
     \Phi=\rho \; e^{A_{y}} e^{\eta} e^{A_{y}}
    \end{equation}
    where $\eta \in \mathfrak{a}$ for an appropriate choice of $\mathfrak{a}$, $A_{y} \in
    \mathfrak{su}(6) \oplus \mathfrak{u}(1)$ and $\rho \in \mathbb{R}^{+}_{\ast}$.
    The Higgs potential $V(\Phi)$ will therefore also depend
    on $A_{y}$:
    \begin{equation}
     V(\Phi)=V(\rho \; e^{A_{y}} e^{\eta} e^{A_{y}})=\mathcal{V}(\rho,\eta,A_{y}) \; .
    \end{equation}
    Note that $A_{y}$ in $V(\Phi)$ cannot be gauged away because $G=SU(7)$ is broken to $G_{0}=
    SU(6)_{L} \times U(1)_{Y}$. The gauge group
    $G_{0}=SU(6)_{L} \times U(1)_{Y}$ is broken further to $SU(2)_{L} \times U(1)_{Y} \times
    SO(3)_{F}$ by imposing Dirichlet and Neumann boundary conditions. 

    We will show that zero modes $A_{y}^{(0)}$ of $A_{y}$ have the
    required properties to serve as a substitute for the SM Higgs. In particular they are $SU(2)_{L}$ 
    doublets and carry hypercharge $1/2$. In contrast to the SM, the model includes three Higgs doublets,
    one for the first, one for the second and one for the third generation.

    If the zero modes $A_{y}^{(0)}$ acquire VEVs in their $SU(2)_{L}$ down component, 
    the electroweak gauge group
    $SU(2)_{L} \times U(1)_{Y}$ is broken down to $U(1)_{em}$:
    \begin{equation}
     SU(2)_{L} \times U(1)_{Y} \stackrel{\langle A_{y}^{(0)} \rangle}{\longrightarrow} U(1)_{em}
    \end{equation}
    This breaking is equivalent to Wilson
    line breaking or Hosotani breaking. Recapitulating we have the following spontaneous symmetry breaking
    pattern:
    \begin{equation}   
     SU(2)_{L} \times U(1)_{Y} \times SO(3)_{F} 
      \stackrel{\langle \eta \rangle}{\longrightarrow} SU(2)_{L} \times U(1)_{Y}  
     \stackrel{\langle A_{y}^{(0)} \rangle}{\longrightarrow} U(1)_{em} \; ,
    \end{equation}
    where the breaking of $SO(3)_{F}$ takes place at energies much above the compactification scale 
    $1/R=\mathcal{O}(1)$ TeV. This way tree-level FCNC are naturally suppressed. The electroweak gauge bosons
    $W^{\pm},Z$ receive masses only from VEVs for $A_{y}^{(0)}$. Their masses will therefore be
    $\mathcal{O}(246)$ GeV.

  \section{Family unification in $SU(6)_{L} \times U(1)_{Y}$}

    As already mentioned in the introduction $SU(6)_{L}$ 
    unifies \cite{Ponce:1989wr} the weak gauge group $SU(2)_{L}$ 
    of the SM with the flavour gauge group $SU(3)_{F}$. Note that
    $SU(2)_{L} \times SU(3)_{F}$ is a special maximal subgroup of $SU(6)_{L}$. Since    
    $SU(6)_{L} \times U(1)_{Y}$ is broken to $SU(2)_{L} \times U(1)_{Y} \times SO(3)_{F}$ by orbifold
    and additional boundary conditions its full meaning is of no importance for us. 
    However, because $SU(2)_{L}$ and $SO(3)_{F} \subset SU(3)_{F}$ are subgroups of $SU(6)_{L}$ 
    we will discuss a model \cite{Gaitan-Lozano:1995sm,Cotti:1998de}
    based on the gauge group $SU(6)_{L} \times U(1)_{Y}$ shortly on its own.
    The gauge group $SU(6)_{L}$ has $35$ generators $L_{i}$ which in the 
    $SU(2)_{L} \times SU(3)_{F}$ basis can be written as
    \begin{itemize}
     \item   \begin{equation}
              \label{su2su6gen}
               L_{i}=\frac{1}{2\sqrt{3}} \; \sigma_{i} \; \otimes \; \mathbf{1}_{3} \; ,
             \end{equation}
             where $\sigma_{i},i=1,2,3$ are the Pauli matrices and $\frac{1}{2} \sigma_{i} \;  \otimes \; 
             \mathbf{1}_{3}$ are the generators of $SU(2)_{L}$. The symbol $\mathbf{1}_{3}$ stand 
             for the $3 \times 3$ unit matrix.
     \item   \begin{equation}
              \label{su3su6gen}
               L_{i^{\prime}}=\frac{1}{2\sqrt{2}} \; \mathbf{1}_{2} \; \otimes \;  \lambda_{j} \; ,
             \end{equation}
             where $i^{\prime}=4,\dots,11$ \footnote{
             $L_{4}=\frac{1}{2\sqrt{2}} \; \mathbf{1}_{2} \; \otimes \;  \lambda_{1}, \dots,
             L_{11}=\frac{1}{2\sqrt{2}} \; \mathbf{1}_{2} \; \otimes \;  \lambda_{8}$}, $j=1,\dots,8$, 
             $\lambda_{j}$ are the Gell-Mann matrices and $\frac{1}{2}\mathbf{1}_{2} \; \otimes \lambda_{j}$
             are the generators of $SU(3)_{F}$. The symbol $\mathbf{1}_{2}$ stand 
             for the $2 \times 2$ unit matrix.
     \item   \begin{equation}
              \label{su6coseysusu3gen}
              L_{i^{\prime\prime}}=\frac{1}{2\sqrt{2}} \; \sigma_{i} \; \otimes \; \lambda_{j} \; ,
             \end{equation} 
             where $i^{\prime\prime}=12,\dots,35$ 
             \footnote{
              $L_{12}=\frac{1}{2\sqrt{2}} \; \sigma_{1} \; \otimes \; \lambda_{1}, \dots,
              L_{19}=\frac{1}{2\sqrt{2}} \; \sigma_{1} \; \otimes \; \lambda_{8},
              L_{20}=\frac{1}{2\sqrt{2}} \; \sigma_{2} \; \otimes \; \lambda_{1}, \dots,
              L_{27}=\frac{1}{2\sqrt{2}} \; \sigma_{2} \; \otimes \; \lambda_{8},
              L_{28}=\frac{1}{2\sqrt{2}} \; \sigma_{3} \; \otimes \; \lambda_{1}, \dots,
              L_{35}=\frac{1}{2\sqrt{2}} \; \sigma_{3} \; \otimes \; \lambda_{8}$},
              $i=1,2,3$ and $j=1,\dots,8$.
    \end{itemize}
    Note that all generators of $SU(6)_{L}$ are equally normalised as
    \begin{equation}
     \text{tr} \left( L_{i} L_{j} \right)=\frac{1}{2} \delta_{ij} \; .
    \end{equation} 
    The gauge group $SU(6)_{L} \times U(1)_{Y}$ gives rise to $36$ gauge bosons: $35$ are linked to the 
    generators of $SU(6)_{L}$ and
    one is linked to the generator of $U(1)_{Y}$. Besides the standard model gauge bosons there are
    $32$ extra gauge bosons which can be divided into four groups
    \begin{itemize}
     \item
      12 charged gauge bosons associated to the generators 
      $\frac{1}{2\sqrt{2}} \sigma_{i} \otimes \lambda_{j}$
      where $i=1,2$ and $j=1,2,4,5,6,7$. These gauge bosons perform transitions among families. They
      couple to {\em{family changing charged currents}} (FCCC). \\ Example: In the $SU(6)_{L} \times U(1)_{Y}$
      one introduces left-handed quarks in the fundamental representation $\mathbf{6}$ of $SU(6)_{L}$, i.e.
      $q_{L}=(u,c,t,d,s,b)_{L}$. We pick as an example the generator
      $L_{15}=\frac{1}{2\sqrt{2}} \sigma_{1} \otimes \lambda_{4}$. Ignoring the normalisation of $L_{15}$,
      the corresponding family changing charged 
      current reads
      \begin{equation}
       \bar{q}_{L} \; \gamma_{\mu} \; \left( \sigma_{1} \otimes \lambda_{4} \right) 
       \; q_{L}=\bar{u} \gamma_{\mu} b + \bar{t} \gamma_{\mu} d 
       + \bar{d} \gamma_{\mu} t + \bar{b} \gamma_{\mu} u \; .
      \end{equation}
      This means that the corresponding gauge bosons perform the transitions 
      $u \leftrightarrow b$ and $t \leftrightarrow d$.
     \item
      4 charged gauge bosons associated to the generators
      $\frac{1}{2\sqrt{2}} \sigma_{i} \otimes \lambda_{j}$
      where $i=1,2$ and $j=3,8$. These gauge bosons make no transitions among families but their 
      couplings are {\em{family dependent}}. They couple to {\em{non-universal family diagonal charged 
      currents}} (NUFDCC). \\ Example: Using the notations above, we
      pick as an example $L_{14}=\frac{1}{2\sqrt{2}} \sigma_{1} \otimes \lambda_{3}$. The corresponding
      gauge boson perform the transitions $u \leftrightarrow d$ and $s \leftrightarrow c$. 
     \item
      12 neutral gauge bosons associated to the generators 
      $\frac{1}{2\sqrt{2}} \sigma_{3} \otimes \lambda_{j}$
      and $\mathbf{1}_{2} \otimes \lambda_{j}$ where $j=1,2,4,5,6,7$. These gauge bosons perform
      transitions among families and couple to {\em{flavour changing neutral currents}} (FCNC). \\ 
      Example: Using the notations above, we pick as an example $L_{9}=\frac{1}{2\sqrt{2}}
      \mathbf{1}_{2} \otimes \lambda_{6}$. The corresponding gauge bosons 
      perform the transitions $u \leftrightarrow t$ and $d \leftrightarrow c$.
     \item
      4 neutral gauge bosons associated to the generators $\frac{1}{2\sqrt{2}} \sigma_{3} \otimes \lambda_{j}$
      and $\frac{1}{2\sqrt{2}} \mathbf{1}_{2} \otimes \lambda_{j}$ where $j=3,8$. 
      These gauge bosons make no transition 
      among families but their couplings are {\em{family dependent}}.
      They couple to {\em{non-universal family
      diagonal neutral currents}} (NUFDNC). Example: Using the notations above, we pick as an 
      example $L_{30}=\frac{1}{2\sqrt{2}} \sigma_{3} \otimes \lambda_{3}$. The corresponding gauge bosons 
      perform the transitions $u \leftrightarrow u$, $d \leftrightarrow d$, $c \leftrightarrow c$ and 
      $s \leftrightarrow s$.  
    \end{itemize}
    After the additional symmetry breaking by imposing Dirichlet and Neumann boundary conditions 
    \begin{equation}
     SU(6)_{L} \times U(1)_{Y} \rightarrow SU(2)_{L} \times U(1)_{Y} \times SO(3)_{F}  \; ,
    \end{equation}
    besides the SM gauge group, 
    only the flavour gauge group $SO(3)_{F}$ survives. It is generated by
    \begin{equation} 
     \frac{1}{2} \; \mathbf{1}_{2} \; \otimes \;  \lambda_{j}  \; ,
    \end{equation}
    where $j=2,5,7$. Thus the corresponding gauge bosons lead to FCNC. We note that since the bulk is 
    completely integrated out the $SU(7)$ model leads only to FCNC. FCCC, NUFDCC and NUFDNC are
    absent.

  \section{Embedding of $SU(6) \times U(1)_{Y}$ in $SU(7)$}

    In this section we define the generators of $SU(7)$. The unified gauge group $SU(7)$ is broken
    again down to $SU(6) \times U(1)_{Y}$ via orbifolding. This orbifold breaking can be achieved by choosing
    e.g. the orbifold projection $P=\text{diag}(-1,-1,-1,-1,-1,-1,1)$. If we embed the generators of 
    $SU(6) \times U(1)_{Y}$ in $SU(7)$ as upper $6 \times 6$ matrices and $U(1)_{Y}$  in $SU(7)$ as a
    diagonal $7 \times 7$ matrix
    an orbifold breaking by $P=\text{diag}(-1,-1,-1,-1,-1,-1,1)$  leave the $SU(6) \times U(1)_{Y}$ subgroup
    of $SU(7)$ unbroken. To be more precise, 
    take for example the $SU(6)_{L}$ generator 
    \begin{equation}
     L_{1}=\frac{1}{2\sqrt{3}}\left( \sigma_{1}\times\mathbf{1}_{3}\right)=
           \frac{1}{2\sqrt{3}} \begin{pmatrix} \mathbf{0}_{3} & \mathbf{1}_{3} \\ 
           \mathbf{1}_{3} & \mathbf{0}_{0} \end{pmatrix} \; ,
    \end{equation}
    where $\mathbf{1}_{3}(\mathbf{0}_{3})$ stands for the $3 \times 3$ unit(zero) matrix.
    $L_{1}$ is embedded in $SU(7)$ as 
    \begin{equation}
              \tilde{L}_{1}= \frac{1}{2\sqrt{3}} \begin{pmatrix} 
               \mathbf{0} & \mathbf{0} & \mathbf{0} & \mathbf{1} & \mathbf{0} & \mathbf{0} & 0 \\
               \mathbf{0} & \mathbf{0} & \mathbf{0} & \mathbf{0} & \mathbf{1} & \mathbf{0} & 0 \\
               \mathbf{0} & \mathbf{0} & \mathbf{0} & \mathbf{0} & \mathbf{0} & \mathbf{1} & 0 \\
               \mathbf{1} & \mathbf{0} & \mathbf{0} & \mathbf{0} & \mathbf{0} & \mathbf{0} & 0 \\
               \mathbf{0} & \mathbf{1} & \mathbf{0} & \mathbf{0} & \mathbf{0} & \mathbf{0} & 0 \\
               \mathbf{0} & \mathbf{0} & \mathbf{1} & \mathbf{0} & \mathbf{0} & \mathbf{0} & 0 \\
               0 & 0 & 0 & 0 & 0 & 0 & 0
              \end{pmatrix} \; ,
    \end{equation}
    and $P$ acts on $\tilde{L}_{1}$ as
    \begin{equation} 
     \tilde{L}_{1} \to P \tilde{L}_{1} P^{-1}=\tilde{L}_{1} \; .
    \end{equation}
    The same relation holds for all other generators of the $SU(6)_{L} \times U(1)_{Y}$ subgroup of $SU(7)$.
    In order to simplify notations we drop the tilde and write the generators of the 
    $SU(6)_{L}$ subgroup of $SU(7)$ just as $6 \times 6$ matrices. However by this notation we always 
    mean that they are embedded in $SU(7)$ as described above.
    In the following, we choose for all generators of $SU(7)$ the normalisation
    \begin{equation}
     Tr\left( L_{i} L_{j} \right)=\frac{1}{2} \delta_{ij} \; .
    \end{equation}
    Let again $\sigma_{i}, i=1,\dots,3$ denote the Pauli matrices,
    $\lambda_{j}, j=1,\dots,8$ the Gell-Mann matrices,  
    $\mathbf{1}_{2}$ the $2 \times 2$ unit matrix and $\mathbf{1}_{3}$ the $3 \times 3$ unit matrix,
    respectively.

    The gauge group $SU(7)$ has $48$ generators 
    \begin{itemize}
     \item $36$ generators belonging to the $SU(6)_{L}$ subgroup of $SU(7)$:
           \begin{equation}
            L_{i}=\frac{1}{2\sqrt{3}} \; \sigma_{i} \; \otimes \; 
            \mathbf{1}_{3} \quad , \quad
            L_{i^{\prime}}=\frac{1}{2\sqrt{2}} \; \mathbf{1}_{2} \; \otimes \; \lambda_{j}  \quad , \quad
            L_{i^{\prime\prime}}=\frac{1}{2 \sqrt{2}} \; \sigma_{i} \; \otimes \; \lambda_{j} \; ,
           \end{equation}
           compare (\ref{su2su6gen}), (\ref{su3su6gen}) and (\ref{su6coseysusu3gen}),
           where $i=1,2,3$, $i^{\prime}=4,\dots, 11$, $i^{\prime\prime}=12,\dots,35$ and $j=1,\dots,8$
           \footnote{$L_{4}=\frac{1}{2\sqrt{2}} \; \mathbf{1}_{2} \; \otimes \;  \lambda_{1}, \dots,
            L_{11}=\frac{1}{2\sqrt{2}} \; \mathbf{1}_{2} \; \otimes \;  \lambda_{8},
            L_{12}=\frac{1}{2\sqrt{2}} \; \sigma_{1} \; \otimes \; \lambda_{1}, \dots,
            L_{19}=\frac{1}{2\sqrt{2}} \; \sigma_{1} \; \otimes \; \lambda_{8},
            L_{20}=\frac{1}{2\sqrt{2}} \; \sigma_{2} \; \otimes \; \lambda_{1}, \dots,
            L_{27}=\frac{1}{2\sqrt{2}} \; \sigma_{2} \; \otimes \; \lambda_{8},
            L_{28}=\frac{1}{2\sqrt{2}} \; \sigma_{3} \; \otimes \; \lambda_{1}, \dots,
            L_{35}=\frac{1}{2\sqrt{2}} \; \sigma_{3} \; \otimes \; \lambda_{8}$}. Note that $L_{1},\dots,
            L_{35}$ are embedded in $SU(7)$ as upper $6 \times 6$ matrices as described above.
     \item $1$ generator belonging to the $U(1)_{Y}$ subgroup of $SU(7)$: 
           \begin{equation}
             L_{36}=\frac{1}{2\sqrt{21}} diag(1,1,1,1,1,1,-6) \; .
           \end{equation}
           Note that $L_{36}$ commutes with $L_{1},\dots,L_{35}$.
     \item $12$ generators belonging to the coset $SU(7)/SU(6)_{L}\times U(1)_{Y}$:
           \begin{equation*}
               L_{37}=\frac{1}{2} \; \begin{pmatrix} 
               0 & 0 & 0 & 0 & 0 & 0 & \mathbf{1} \\
               0 & 0 & 0 & 0 & 0 & 0 & \mathbf{0} \\
               0 & 0 & 0 & 0 & 0 & 0 & \mathbf{0} \\
               0 & 0 & 0 & 0 & 0 & 0 & \mathbf{0} \\
               0 & 0 & 0 & 0 & 0 & 0 & \mathbf{0} \\
               0 & 0 & 0 & 0 & 0 & 0 & \mathbf{0} \\
               \mathbf{1} & \mathbf{0} & \mathbf{0} & \mathbf{0} & \mathbf{0} & \mathbf{0} & \mathbf{0}    
              \end{pmatrix}
              \; , \dots \; , \; 
              L_{48}=\frac{1}{2} \; \begin{pmatrix} 
               0 & 0 & 0 & 0 & 0 & 0 & \mathbf{0} \\
               0 & 0 & 0 & 0 & 0 & 0 & \mathbf{0} \\
               0 & 0 & 0 & 0 & 0 & 0 & \mathbf{0} \\
               0 & 0 & 0 & 0 & 0 & 0 & \mathbf{0} \\
               0 & 0 & 0 & 0 & 0 & 0 & \mathbf{0} \\
               0 & 0 & 0 & 0 & 0 & 0 & \mathbf{-i} \\
               \mathbf{0} & \mathbf{0} & \mathbf{0} & \mathbf{0} & \mathbf{0} & \mathbf{i} & \mathbf{0}    
              \end{pmatrix}
           \end{equation*}  
    \end{itemize}

    Next we identify the generators of weak gauge group $SU(2)_{L}$, the hypercharge $U(1)_{Y}$ and 
    flavour gauge group $SO(3)_{F}$ as follows
    \begin{itemize}
     \item  $SU(2)_{L}:$ The weak generators of the SM are identified with
      \begin{equation}
       \label{smweakgaugegroup}
         T_{i}=\sqrt{3}  L_{i}=\frac{1}{2} \; \sigma_{i} \; \otimes \; \mathbf{1}_{3}  \; ,
      \end{equation} 
      where $i=1,2,3$.
     \item $U(1)_{Y}:$ The hypercharge generator of the SM is identified with
      \begin{equation}
       \label{smhyperchargegroup}
        Y=\sqrt{21} L_{36}=\frac{1}{2} \text{diag}(1,1,1,1,1,1,-6) \; .
      \end{equation}
     \item $ SO(3)_{F}:$ The generators of the flavour gauge group $SO(3)_{F}$ will be identified with
      \begin{equation}
       \label{so3flavourgaugegroup}         
       H_{j}=\sqrt{2} L_{i^{\prime}}=\frac{1}{2} 
                      \; \mathbf{1}_{2} \; \otimes \; \lambda_{j} \; ,
      \end{equation}
      where $i^{\prime}=5,8,10$, $j=2,5,7$.
    \end{itemize} 
    Since the hypercharge operator is normalised as
    \begin{equation}
     Y=\sqrt{21} L_{36}=\frac{1}{2} \text{diag}(1,1,1,1,1,1,-6) \; ,
    \end{equation}
    we can define the electric charge operator as usual
    \begin{equation}
     \label{electricchargeoperator}
     Q=T_{3}+Y \; .
    \end{equation}

  \section{Matter fields in the $SU(7)$ model}

    \label{su7matterfields}
 
    In this section we come to the fermionic content of the $SU(7)$ model. After symmetry breaking by
    orbifolding and imposing Dirichlet and Neumann boundary conditions the orbifold fixed points possess the 
    gauge symmetry
    \begin{equation}
     \label{matterfieldsunbrokensymmetry} 
     SU(2)_{L} \times U(1)_{Y} \times SO(3)_{F} \; .
    \end{equation}
    If we put matter fields at the orbifold fixed points, i.e. as brane fields, they have to transform
    according to the unbroken gauge group (\ref{matterfieldsunbrokensymmetry}) only and not according to
    the unified gauge group $SU(7)$. Therefore we can introduce the SM matter at the orbifold
    fixed points without any difficulty. In the following, by $(\mathbf{X},Y,\mathbf{Z})$ we denote the
    irreducible representations of $SU(2)_{L} \times U(1)_{Y} \times SO(3)_{F}$ where $Y$ denotes the
    hypercharge.   
    \begin{itemize}
     \item Left-handed quarks localised on the $L$-boundary \\
      \begin{equation}
       \label{lefthandedquarks}
       q_{L}=\left( u \atop d \right)_{L}=\left( \begin{matrix} u \\ c \\ t \\ d \\ s \\ b 
       \end{matrix} \right)_{L} \; : \; (\mathbf{2},\frac{1}{6},\mathbf{3}) \; .
      \end{equation}
     \item Left-handed leptons are localised on the $L$-boundary \\
      \begin{equation}
       l_{L}=\left( \nu_{e} \atop e \right)_{L}=\left( \begin{matrix} 
       \nu_{e} \\ \nu_{\mu} \\ \nu_{\tau} \\ e \\ \mu \\ \tau \end{matrix} \right)_{L} 
       \; : \; (\mathbf{2},-\frac{1}{2},\mathbf{3}) \; .
      \end{equation}
     \item Right-handed quarks localised on the $R$-boundary \\
      \begin{eqnarray}
       && u_{R}=\left( u_{R},c_{R},t_{R} \right) \; : \;  
          (\mathbf{1},\frac{2}{3},\mathbf{1}) \; , \\
       && d_{R}=\left( d_{R},s_{R},b_{R} \right) \; : \; 
          (\mathbf{1},-\frac{1}{3},\mathbf{1}) \; .
      \end{eqnarray}
      We put $u_{R}$ and $d_{R}$ together in a vector of two components 
      \begin{equation}
       \label{righthandedquarks}
       q_{R}=\left( u_{R} \atop d_{R} \right)=\begin{pmatrix} u_{R} \\ c_{R} \\ t_{R} \\ d_{R} \\ 
              s_{R} \\ b_{R} \end{pmatrix} \; .
      \end{equation}
     \item Right-handed leptons localised on the $R$-boundary \\
       \begin{eqnarray}
       \label{righthandednu}
       && \nu_{R}=\left( \nu_{R1},\nu_{R2},\nu_{R3} \right) 
          \; : \; (\mathbf{1},0,\mathbf{1}) \; , \\
       && e_{R}=\left( e_{R},\mu_{R},\tau_{R} \right) \; : \; (\mathbf{1},-1,\mathbf{1}) \; .
      \end{eqnarray}
      We put $\nu_{R}$ and $e_{R}$ together in a vector of two components 
      \begin{equation}
       l_{R}=\left( \nu_{R} \atop e_{R} \right)=\begin{pmatrix} \nu_{R1} \\ \nu_{R2} \\ \nu_{R3} \\ e_{R} \\ 
              \mu_{R} \\ \tau_{R} \end{pmatrix} \; .
      \end{equation}
    \end{itemize}
    Figure \ref{figuresu7model} summarises the assignment of matter fields. \\
    Remarks: i) \;Note that we have introduced also right-handed neutrinos (\ref{righthandednu}) in the 
    model. The reason is that we want to be able to give neutrinos a mass. This topic will be discussed 
    in detail in the next subsection. \\
    ii) \; Since left-handed matter transforms according to the $\mathbf{3}$ representation 
    of $SO(3)_{F}$ while right-handed matter transforms according to the $\mathbf{1}$ representation
    of $SO(3)_{F}$, the $SU(7)$ model is a model with a chiral gauged flavour symmetry. The reason why
    we need right-handed matter to transform according to the $\mathbf{1}$ representation 
    of $SO(3)_{F}$ will also be discussed in the next subsection. 
    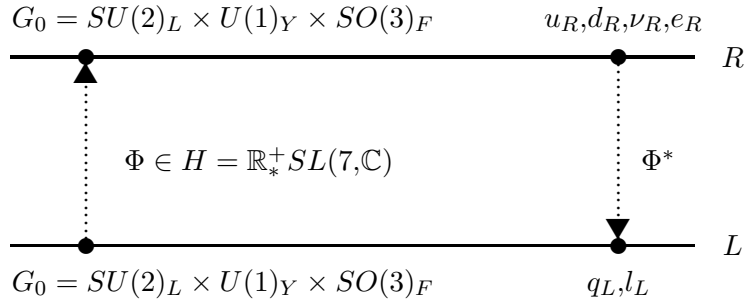
\begin{figure}[h]
     \begin{equation*}
      \begin{picture} (18,4.5)
       \thicklines
       \multiput(2,1.5) (0,2.5) {2} {\line (1,0) {9}}
       \thinlines
       \multiputlist(11.5,1.5)(0,2.5){$L$,$R$}
       \multiputlist(4.8,4.5)(5.2,0){$G_{0}=SU(2)_{L} \times U(1)_{Y} \times SO(3)_{F}$,
                                     $u_{R}{,}d_{R}{,}\nu_{R}{,}e_{R}$}
       \multiputlist(4.8,1)(5.2,0){$G_{0}=SU(2)_{L} \times U(1)_{Y} \times SO(3)_{F}$,$q_{L}{,}l_{L}$}
       \put(3.5,2.5) {$\Phi \in H=\mathbb{R}^{+}_{\ast} SL(7{,}\mathbb{C})$}
       \put(10.3,2.5) {$\Phi^{\ast}$}
       \matrixput (3,1.5)(7,0){2}(0,2,5){2}{\circle*{0.2}}
       \thicklines
       \dottedline[\circle*{0.05}]{0.1}(3,1.5)(3,4)
       \dottedline[\circle*{0.05}]{0.1}(10,1.5)(10,4)
       \thinlines
       \put(2.82,3.65){\Pisymbol{pzd}{115}}
       \put(9.82,1.58){\Pisymbol{pzd}{116}}
      \end{picture}
     \end{equation*}
     \caption{Assignment of SM matter fields in the eBTLM with bulk gauge group $SU(7)$.
              The unified gauge group $SU(7)$ is broken to $SU(2)_{L} \times U(1)_{Y} \times SO(3)_{F}$
              by orbifolding and imposing Dirichlet and Neumann boundary conditions. The
              holonomy group reads $H=\mathbb{R}^{+}_{\ast} SL(7{,}\mathbb{C})$. Since left- and right-handed
              matter transforms different under $SO(3)_{F}$ the $SU(7)$ model is a model with a chiral
              $SO(3)_{F}$ gauged flavour symmetry.}
     \label{figuresu7model}
    \end{figure}

  \subsection{Neutrino masses and the see-saw mechanism}

    In 1998 the Super-Kamiokande experiment \cite{Fukuda:1998mi} showed that muon neutrinos undergo 
    flavour oscillations. This implies that also neutrinos like charged fermions are massive.
    In order to give neutrinos a mass we have introduced right-handed neutrinos $\nu_{R}$ 
    (\ref{righthandednu}) in the model.
    However, if one introduces an ordinary Dirac mass term for neutrinos
    \begin{equation}
     \mathcal{L}_{mass}^{(\nu)}=m_{\nu} \bar{\nu}_{L} \nu_{R} \; ,
    \end{equation}
    one needs tiny Yukawa couplings which is a quite unnatural assumption. 
    A possible solution for this problem is the so called see-saw mechanism \cite{Gell-Mann:1980vs}. 
    The see-saw mechanism involves the introduction of an additional 
    Majorana mass term for $\nu_{R}$. In the SM a Majorana mass term for $\nu_{R}$  is possible since
    right-handed neutrinos carry no colour, weak isospin or hypercharge and thus are gauge-singlets.
    Note that a Majorana mass term breaks lepton number symmetry. 
 
    In the $SU(7)$ model there is an additional flavour gauge group $SO(3)_{F}$. 
    However, since we have introduced $\nu_{R}$ in the
    \begin{equation}
     (\mathbf{1},0,\mathbf{1})
    \end{equation}
    of $SU(2)_{L} \times U(1)_{Y} \times SO(3)_{F}$, right-handed neutrinos are in particular
    $SO(3)_{F}$ singlets.
    Therefore a Majorana mass term for $\nu_{R}$ is allowed and we can write
    \begin{equation}
     \mathcal{L}_{mass}^{(\nu)}=m_{dirac} \bar{\nu}_{L} \nu_{R} 
                                + \frac{1}{2} M_{major} \bar{\nu_{R}^{c}} \nu_{R} + h.c. \; ,
    \end{equation}
    where $\nu^{c}_{R}$ is the CP conjugate of $\nu_{R}$ (representing left-handed antineutrinos), $m_{dirac}$
    is the $3 \times 3$ Dirac mass matrix and the $M_{major}$ is the  heavy $3 \times 3$ Majorana mass 
    matrix. It is important that $M_{major}$ is not generated by a
    nonunitary parallel transporter. Therefore Majorana masses can be several orders of magnitude larger 
    than 
    ordinary quark and lepton masses.  An attractive assumption is that $M_{major}$ may be generated 
    somewhere at the GUT scale \cite{Buchmuller:2002xm,Mohapatra:2006gs}. In contrast, Dirac masses for all 
    leptons (see Proposal \ref{proposalyukawa} on page \pageref{proposalyukawa} for details)
    are given through Yukawa interactions by the nonunitary parallel transporter $\Phi^{lepton}$.
    In this setup the see-saw mechanism works as usual.  
    We note that if we had introduced right-handed SM matter $q_{R}$ and $l_{R}$ in the 
    $\mathbf{3}$ of $SO(3)_{F}$ a Majorana mass term for $\nu_{R}$ would not be possible and the
    see-saw mechanism would not work.

  \section{Anomalies in the $SU(7)$ model}

   \label{anomalycancellation}

   In this section, we discuss the topic of anomalies and anomaly cancellation in the $SU(7)$ model.
   Since we have introduced
   (chiral) SM matter at {\em{different}} orbifold fixed points the cancellation of anomalies in the 
   $SU(7)$ model is in contrast to the SM not automatic. Before we come in detail to the $SU(7)$ model
   we first discuss the issue of anomaly cancellation in orbifold models more generally. For simplicity
   we focus on five-dimensional orbifolds.
   In orbifold theories two types of anomalies can arise:
   \begin{itemize}
     \item
      four-dimensional anomalies intrinsic to the orbifold fixed points.
     \item
      five-dimensional anomalies intrinsic to the bulk.
   \end{itemize}
   For the low energy consistency of the theory, it is necessary that both the anomaly at the orbifold 
   fixed points {\em{and}} the anomaly in the bulk cancels. 
   Let us assume that the four-dimensional anomaly in the effective low-energy
   theory cancels. We then may ask: Is the cancellation of the four-dimensional
   anomaly sufficient to cancel also the five-dimensional anomaly? As it has be worked out by Arkani-Hamed
   and others \cite{Arkani-Hamed:2001is,Scrucca:2001eb,Pilo:2002hu} this is indeed the case. 
   More precisely, for a collection of five-dimensional fermions all one has to care about is that
   their {\em{zero modes}} form an anomaly-free representation of the low-energy four-dimensional gauge 
   group. This means that the five-dimensional anomaly is independent of the details of the physics
   in the bulk.

   \subsection{Anomaly cancellation mechanisms in the $SU(7)$ model}

    We now come in detail to anomaly cancellation mechanisms in the $SU(7)$ model.
    In section \ref{su7matterfields}, we have introduced chiral SM matter at the orbifold fixed points.
    For the following discussion let us initially ignore the flavour gauge group $SO(3)_{F}$.
    We consider two different scenarios.
   
    In the first scenario, we put {\em{both}} left- and
    right-handed SM matter on the same orbifold fixed point. Without loss of generality
    let this orbifold fixed point
    be the $L$-boundary. Figure \ref{figureanomalie1} summarises the setting.
    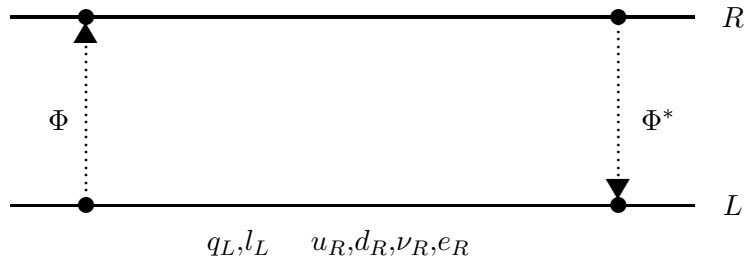
\begin{figure}[h]
     \begin{equation*}
      \begin{picture} (18,4.5)
       \thicklines
       \multiput(2,1.5) (0,2.5) {2} {\line (1,0) {9}}
       \thinlines
       \multiputlist(11.5,1.5)(0,2.5){$L$,$R$}
       \multiputlist(5,1)(2,0){$q_{L}{,}l_{L}$,$u_{R}{,}d_{R}{,}\nu_{R}{,}e_{R}$}
       \put(2.5,2.5) {$\Phi$}
       \put(10.3,2.5) {$\Phi^{\ast}$}
       \matrixput (3,1.5)(7,0){2}(0,2,5){2}{\circle*{0.2}}
       \thicklines
       \dottedline[\circle*{0.05}]{0.1}(3,1.5)(3,4)
       \dottedline[\circle*{0.05}]{0.1}(10,1.5)(10,4)
       \thinlines
       \put(2.82,3.65){\Pisymbol{pzd}{115}}
       \put(9.82,1.58){\Pisymbol{pzd}{116}}
      \end{picture}
     \end{equation*}
     \caption{Anomaly cancellation scenario: Left- and right-handed SM matter on the same orbifold fixed 
              point.}
     \label{figureanomalie1}
    \end{figure}
    We observe that all anomalies cancel {\em{locally}}, in particular at the $L$-boundary, 
    thanks to the usual cancellation of anomalies in the SM. 
    Note that this scenario is {\em{not}} assumed in the $SU(7)$ model.

    In the second scenario, we put left-handed SM matter on the $L$-boundary and right-handed SM matter 
    on the $R$-boundary. This scenario is known as chiral delocalisation \cite{Hill:2006ei}. 
    This is exactly what we have adopted in the $SU(7)$ model.
    This means that since the anomaly of three $U(1)$ gauge bosons is nonzero \cite{Huang:1982ik}
    \begin{equation}
     \label{threeu1anomaly}
     \text{Tr} \left[ Y_{L}^{3} \right] = -\frac{2}{9} \; , \; \text{Tr}
     \left[ Y_{R}^{3} \right] = -\frac{2}{9} \; ,  
    \end{equation}
    this scenario leads to localised SM anomalies at the different orbifold fixed points.
    Figure \ref{figureanomalies2} summarises the setting.
    \begin{figure}[h]
     \begin{equation*}
      \begin{picture} (18,4.5)
       \thicklines
       \multiput(2,1.5) (0,2.5) {2} {\line (1,0) {9}}
       \thinlines
       \multiputlist(11.5,1.5)(0,2.5){$L$,$R$}
       \multiputlist(4,4.5)(4,0){$u_{R}{,}d_{R}{,}\nu_{R}{,}e_{R}$,
                                 $\text{Tr}\left[Y_{R}^{3} \right]=-\frac{2}{9}$}
       \multiputlist(4,1)(4,0){$q_{L}{,}l_{L}$,$\text{Tr}\left[Y_{L}^{3} \right]=-\frac{2}{9}$}
       \put(2.5,2.5) {$\Phi$}
       \put(10.3,2.5) {$\Phi^{\ast}$}
       \matrixput (3,1.5)(7,0){2}(0,2,5){2}{\circle*{0.2}}
       \thicklines
       \dottedline[\circle*{0.05}]{0.1}(3,1.5)(3,4)
       \dottedline[\circle*{0.05}]{0.1}(10,1.5)(10,4)
       \thinlines
       \put(2.82,3.65){\Pisymbol{pzd}{115}}
       \put(9.82,1.58){\Pisymbol{pzd}{116}}
      \end{picture}
     \end{equation*}
     \caption{Anomaly cancellation scenario: Left-handed SM matter on the $L$-boundary, right-handed
              SM matter on the $R$-boundary. The emerging anomalies are inscribed.}  
     \label{figureanomalies2}
    \end{figure}
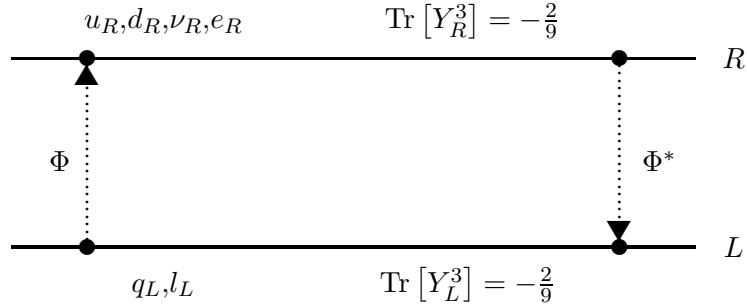
    At first sight, this scenario leads to an inconsistent theory.
    However, one can introduce a bulk Chern-Simons term with a jumping coefficient 
    \cite{Barbieri:2002ic,Pilo:2002hu,Scrucca:2001eb,Arkani-Hamed:2001is,Hill:2006ei}
    in order to {\em{locally}}
    cancel the SM anomalies arising from the three $U(1)$ gauge bosons. Note that this anomaly
    cancellation mechanism works only 
    if the {\em{integrated}} anomaly, i.e. the sum over all local contributions to the anomaly, vanishes.
    Due to (\ref{threeu1anomaly}), this is indeed the case. The work of Arkani-Hamed 
    and others \cite{Arkani-Hamed:2001is} describes a mechanism how the Chern-Simons term can be generated 
    by integrating out massive
    bulk fermions which transforms according to an anomaly-free representation of the low energy
    four-dimensional gauge group.

  \subsection{Contributions to the anomaly from the flavour gauge group $SO(3)_{F}$}

    In this subsection, we include the flavour gauge group $SO(3)_{F}$ in our 
    considerations. In section \ref{su7matterfields}, we have introduced left-handed SM matter 
    in the $\mathbf{3}$ of $SO(3)_{F}$ and right-handed matter SM 
    in the $\mathbf{1}$ of $SO(3)_{F}$. The $\mathbf{3}$ of $SO(3)_{F}$ is formed by the
    generators  
    \begin{equation}
     H_{j}=\frac{1}{2} \; \mathbf{1}_{2} \; \otimes \; \lambda_{j} \; ,
    \end{equation}
    where $j=2,5,7$. Since they fulfil
    \begin{equation}
     \text{Tr} \left[ \{ H_{i} , H_{j} \} H_{k} \right]=0 \; ,
    \end{equation}
    where $i,j,k \in \{ 2,5,7 \}$, the $\mathbf{3}$ of $SO(3)_{F}$
    is an anomaly-free representation of $SO(3)_{F}$. Thus the $SU(7)$ model with
    left- and right-handed matter introduced as in section \ref{su7matterfields} is 
    {\em{free of anomalies}}.   

    Remarks: i) \; Concerning anomalies it doesn't matter that we have introduced left-handed SM matter and 
    right-handed SM matter in different representation of $SO(3)_{F}$. \\ 
    ii)\; If we had used the flavour gauge group $SU(3)_{F}$ instead of $SO(3)_{F}$ the model would contain a
    non-vanishing anomaly due to
    \begin{equation}
     Tr \left[ \{ \lambda_{i} , \lambda_{j} \} \lambda_{k} \right]=4 i d_{ijk} \; ,
    \end{equation}
    where $d_{ijk}$ are the completely symmetric coefficients of $\mathfrak{su}_{3}$.

  \section{Orbifold breaking in the $SU(7)$ model and the electroweak Higgs}

    In this section we describe how a symmetry breaking by a non-trivial orbifold 
    projection $P$ will not only break the unitary gauge group $SU(7)$ down to
    $SU(6)_{L} \times U(1)_{Y}$ but also leads to non-trivial unitary factors  $e^{A_{y}}$ in the 
    decomposition $\Phi =\rho \; e^{A_{y}} \; e^{\eta} \; e^{A_{y}}$. In order to break $SU(7)$ down to
    $SU(6)_{L} \times U(1)_{Y}$ we choose 
    \begin{equation}
     \label{orbifoldprojectionsu7}
     P=\text{diag}(-1,-1,-1,-,1-,1,-1,1) \; .
    \end{equation}
    The branching rule for the adjoint representation $\mathbf{48}$ of $SU(7)$ with respect to 
    $SU(6)_{L}$ reads
    \begin{equation}
     \label{branchingrule}
     \mathbf{48} \rightarrow \mathbf{35} + \mathbf{6} + \mathbf{\bar{6}} 
     + \mathbf{1} \; ,
    \end{equation}
    where $\mathbf{35}$ is the adjoint representation of $SU(6)_{L}$, $\mathbf{1}$ is 
    the trivial representation of $SU(6)_{L}$ and $\mathbf{6}$ ($\mathbf{\bar{6}}$)
    is the fundamental (complex conjugate fundamental) representation of $SU(6)_{L}$. 
    According to the action of $P$ on $\mathfrak{g}=\mathfrak{su}(7)$ we have 
    \begin{equation}
     \mathfrak{g}_{0}=\mathbf{35} + \mathbf{1} \; , 
     \quad  \mathfrak{g}_{1}=\mathbf{6} + \mathbf{\bar{6}} \; .
    \end{equation}
    For $G=SU(7)$ the corresponding holonomy group reads $H=\mathbb{R}^{+}_{\ast} SL(7{,}\mathbb{C})$ 
    \footnote{Note that the linear span of $SU(7)$ reads
    $\mathbb{R}^{+}_{\ast} \; SL(7,\mathbb{C})$ and thus $H$ is unique, compare with
    (\ref{blockspin})}. The Cartan decomposition
    for $\mathfrak{sl}(7{,}\mathbb{C})=\text{Lie} \; SL(7{,}\mathbb{C})$ reads
    \begin{equation}
     \mathfrak{sl}(7{,}\mathbb{C})=\mathfrak{su}(7) + i \mathfrak{su}(7)=\mathfrak{g} + i \mathfrak{g} \; .
    \end{equation}
    We choose an $\mathfrak{a} \subset i \mathfrak{su}(7)$ such that $P \eta P=\eta$
    is automatically fulfilled. In fact for 
    \begin{equation}
     \mathfrak{a}=\{ \eta=\text{diag}(a_{1},a_{2},a_{3},a_{4},a_{5},a_{6},a_{7}) \} 
                    \quad , \quad \sum_{i} a_{i}=0 \; , \quad a_{i} \in \mathbb{R} \; ,
    \end{equation}
    $P \eta P=\eta$ holds for any $\eta \in \mathfrak{a}$.
    Let $\Phi \in \mathbb{R}^{+}_{\ast} SL(7{,}\mathbb{C})$ fulfil the sharpened orbifold condition.
    Then $\Phi$ can be written as
    \begin{equation}
     \label{nonuniptsu7}
     \Phi =\rho \; e^{A_{y}} \; e^{\eta} \; e^{A_{y}} \; ,
    \end{equation}
    where
    \begin{eqnarray} 
     P A_{\mu}^{L(R)} P^{-1} & = & A_{\mu}^{L(R)} \; , \\
     P A_{y} P^{-1} & = & -A_{y}  \; ,\\
    \end{eqnarray}
    for $A_{\mu}^{L(R)} \in \mathbf{35} + \mathbf{1}$, $A_{y} \in \mathbf{6} + \mathbf{\bar{6}}$ and 
    $\rho \in  \mathbb{R}^{+}_{\ast}$. 
    This means that the generators $L_{1},\dots,L_{36}$ of the $SU(6)_{L} \times U(1)_{Y}$ subgroup
    of $SU(7)$  remain unbroken and the gauge fields $A_{\mu}^{L,R}$ can be written as
    \begin{equation}
     \label{su6u1gaugefields}
     A_{\mu}^{L(R)}=\sum_{a=1}^{36} A_{\mu}^{L(R) a} L_{a}  \; .
    \end{equation}

   \subsection{The electroweak Higgs}  

    Let us consider the unitary factor $e^{A_{y}}$ in (\ref{nonuniptsu7}). According to Definition
    \ref{defintionextradimvecpot} (see page \pageref{defintionextradimvecpot}), $e^{A_{y}}$ can be written as
    \begin{equation}   
     \label{contiwilosnsu7}
     \exp(A_{y})=\exp(i \; g_{4} R \; \mathcal{A}_{y}^{(0)}) \; ,
    \end{equation}
    where $\mathcal{A}_{y}^{(0)}$ is the zero mode 
    of the extra-dimensional component of the five-dimensional gauge field in its canonical 
    four-dimensional form.
    With $A_{y}^{(0)} \in \mathbf{6} + \mathbf{\bar{6}}$ we can expand  $\mathcal{A}_{y}^{(0)}$ as   
    \begin{equation}
     \mathcal{A}_{y}^{(0)}=\sum_{\hat{a}=37}^{48} \mathcal{A}_{y}^{\hat{a}(0)} L_{\hat{a}}  \; ,
    \end{equation}
    where $L_{37},\dots,L_{48}$ are the generators of the coset space $SU(7)/SU(6)_{L} \times U(1)_{Y}$.
    We determine the hypercharge of $\mathcal{A}_{y}^{(0)}$. Looking at the generators
    $L_{37},\dots,L_{48}$, e.g. 
    \begin{equation}
     L_{37}=\frac{1}{2} \; \begin{pmatrix} 
               0 & 0 & 0 & 0 & 0 & 0 & \mathbf{1} \\
               0 & 0 & 0 & 0 & 0 & 0 & \mathbf{0} \\
               0 & 0 & 0 & 0 & 0 & 0 & \mathbf{0} \\
               0 & 0 & 0 & 0 & 0 & 0 & \mathbf{0} \\
               0 & 0 & 0 & 0 & 0 & 0 & \mathbf{0} \\
               0 & 0 & 0 & 0 & 0 & 0 & \mathbf{0} \\
               \mathbf{1} & \mathbf{0} & \mathbf{0} & \mathbf{0} & \mathbf{0} & \mathbf{0} & \mathbf{0}    
              \end{pmatrix} \; ,
    \end{equation}
    we see that $\mathcal{A}_{y}^{(0)}$ carries the hypercharge 
    \begin{equation}
     Y=\frac{1}{2} \; .
    \end{equation}      
    Next we consider the fundamental representation $\mathbf{6}$ of $SU(6)$. The fundamental representation
    $\mathbf{6}$ will remain irreducible when restricted to $SU(2)_{L} \times SU(3)_{F} 
    \left( SO(3)_{F} \right)$ 
    \begin{equation}
     \mathbf{6}=\left( \mathbf{2}, \mathbf{3} \right) \; .
    \end{equation}
    This means that $\mathcal{A}_{y}^{(0)}$ is a doublet with respect to $SU(2)_{L}$ and a triplet with 
    respect to $SU(3)_{F} \left( SO(3)_{F} \right)$, respectively. We explicitly write
    \begin{eqnarray}
     \label{mathcalay}
     \mathcal{A}_{y}^{(0)} & = & \sum_{\hat{a}=37}^{48} \mathcal{A}_{y}^{\hat{a}(0)} L_{\hat{a}} \\
     & = & \frac{1}{2} 
      \begin{pmatrix} 
      \begin{matrix} 0 \quad \;  & 0 \quad \; & 0 \quad  \; \\ 0 \quad \; & 0 \quad \; & 0 \quad \; \\ 
                     0 \quad \; \; & 0 \quad \; & 0 \quad \; \end{matrix} & 
      \begin{matrix} 0 \quad \; & 0 \quad \; & 0 \quad \; \\ 0 \quad \; & 0 \quad \; & 0 \quad \; \\ 
                     0 \quad \; & 0 \quad \; & 0 \quad \; \end{matrix} &
      \begin{pmatrix} \mathcal{A}_{y}^{1+(0)} \\ 
                      \mathcal{A}_{y}^{2+(0)} \\  
                      \mathcal{A}_{y}^{3+(0)} \end{pmatrix} \\
      \begin{matrix} 0 \quad \; & 0 \quad \; & 0 \quad \; \\ 0 \quad \; & 0 \quad \; & 0 \quad \; \\ 
                     0 \quad \; & 0 \quad \; & 0 \quad \; \end{matrix} &
      \begin{matrix} 0 \quad \; & 0 \quad \; & 0 \quad \; \\ 0 \quad \; & 0 \quad \; & 0 \quad \; \\ 
                     0 \quad \; & 0 \quad \; & 0 \quad \; \end{matrix} &
      \begin{pmatrix} \mathcal{A}_{y}^{10(0)}  \\  
                      \mathcal{A}_{y}^{20(0)}  \\  
                      \mathcal{A}_{y}^{30(0)} \end{pmatrix} \\
      \begin{pmatrix} \mathcal{A}_{y}^{1+(0) \ast} & 
                      \mathcal{A}_{y}^{2+(0) \ast} &  
                      \mathcal{A}_{y}^{3+(0) \ast} \\ 
                       \end{pmatrix} &
      \begin{pmatrix} \mathcal{A}_{y}^{10(0) \ast} & 
                      \mathcal{A}_{y}^{20(0) \ast} &  
                      \mathcal{A}_{y}^{30(0) \ast} \end{pmatrix} \\  
     \end{pmatrix} \; , \nonumber
    \end{eqnarray}
    where 
    \begin{gather*}
     \mathcal{A}_{y}^{1+(0)}=\mathcal{A}_{y}^{37(0)}-i \mathcal{A}_{y}^{38(0)} \; , \;
     \mathcal{A}_{y}^{2+(0)}=\mathcal{A}_{y}^{39(0)}-i \mathcal{A}_{y}^{40(0)} \; , \;
     \mathcal{A}_{y}^{3+(0)}=\mathcal{A}_{y}^{41(0)}-i \mathcal{A}_{y}^{42(0)} \\
     \mathcal{A}_{y}^{10(0)}=\mathcal{A}_{y}^{43(0)}-i \mathcal{A}_{y}^{44(0)} \; , \;
     \mathcal{A}_{y}^{20(0)}=\mathcal{A}_{y}^{45(0)}-i \mathcal{A}_{y}^{46(0)} \; , \;
     \mathcal{A}_{y}^{30(0)}=\mathcal{A}_{y}^{47(0)}-i \mathcal{A}_{y}^{48(0)} \; .
    \end{gather*}
    Thus  
    $\mathcal{A}_{y}^{1 +(0)}$, $\mathcal{A}_{y}^{2 +(0)}$, $\mathcal{A}_{y}^{3 +(0)}$ form the
    up-component of the doublet while
    $\mathcal{A}_{y}^{1 0(0)}$, $\mathcal{A}_{y}^{2 0(0)}$, $\mathcal{A}_{y}^{3 0(0)}$ 
    form the down-component of the doublet. 
    We see from (\ref{mathcalay}) that the $SU(7)$ model contains three $SU(2)_{L}$ doublets,
    one for each flavour. This comes out if we we consider the mass terms for quarks and leptons,
    respectively. In the next chapter we will discuss the topic of fermion masses in detail.
    However, anticipating a little, the mass term for quarks is given by 
    \begin{equation}
     \bar{q}_{L} \Phi q_{R}=\overline{\left( u \; c \; t \; d \; s \; b \right)}_{L}
     \left( \rho \; e^{A_{y}} \; e^{\eta} \; e^{A_{y}} \right)
     \begin{pmatrix} u_{R} \\ c_{R} \\ t_{R} \\ d_{R} \\ s_{R} \\ b_{R} \end{pmatrix}  \; .
    \end{equation}
    We see that 
    \begin{equation}
     \label{higgsdoubleth1}
     H_{1}=\left( \mathcal{A}_{y}^{1+(0)} \atop \mathcal{A}_{y}^{10(0)} \right)
     =\left( \mathcal{A}_{y}^{37(0)}-i \mathcal{A}_{y}^{38(0)} 
      \atop \mathcal{A}_{y}^{43(0)}-i \mathcal{A}_{y}^{44(0)}
      \right)
    \end{equation}
    is a $SU(2)_{L}$ Higgs doublet coupled to $u,d,\nu_{e},e$,
    \begin{equation}
     \label{higgsdoubleth2}
     H_{2}=\left( \mathcal{A}_{y}^{2+(0)} \atop \mathcal{A}_{y}^{20(0)} \right)
     =\left( \mathcal{A}_{y}^{39(0)}-i \mathcal{A}_{y}^{40(0)} 
      \atop \mathcal{A}_{y}^{45(0)}-i \mathcal{A}_{y}^{46(0)}
      \right)
    \end{equation}
    is a $SU(2)_{L}$ Higgs doublet coupled to $c,s,\nu_{\mu},\mu$ and 
    \begin{equation}
     \label{higgsdoubleth3}
     H_{3}=\left( \mathcal{A}_{y}^{3+(0)} \atop \mathcal{A}_{y}^{30(0)} \right)
     =\left( \mathcal{A}_{y}^{41(0)}-i \mathcal{A}_{y}^{42(0)} 
      \atop \mathcal{A}_{y}^{47(0)}-i \mathcal{A}_{y}^{48(0)}
      \right)
    \end{equation}
    is a $SU(2)_{L}$ Higgs doublet coupled to $t,b,\nu_{\tau},\tau$.
    Thus the model includes three $SU(2)_{L}$ doublets $\{ H_{i} \}$, $i=1,2,3$, one for each flavour.    
    Finally we determine the electric charge of $\mathcal{A}_{y}^{(0)}$ and $\{ H_{i} \}$, respectively.
    For the up-component $\mathcal{A}^{1 +(0)}_{y},\mathcal{A}^{2 +(0)}_{y},\mathcal{A}^{3 +(0)}_{y}$
    of the $\{ H_{i} \}$ we obtain 
    \begin{equation}
     Q=T_{3}+Y=\frac{1}{2}+\frac{1}{2}=1 \; ,
    \end{equation}
    and for the down-component $\mathcal{A}^{1 0(0)}_{y},\mathcal{A}^{2 0(0)}_{y},\mathcal{A}^{3 0(0)}_{y}$
    of the $\{ H_{i} \}$ we obtain  
    \begin{equation}
     Q=T_{3}+Y=\frac{1}{2}-\frac{1}{2}=0 \; .
    \end{equation} 
   
    We summarise: The zero modes of  
    extra-dimensional component of the five-dimensional 
    gauge field $\mathcal{A}_{y}^{(0)}$ have the following properties
    \begin{itemize}
     \item They appear from the four-dimensional point of view as scalar fields.
     \item They are doublets with respect to the weak SM gauge group $SU(2)_{L}$.
     \item They carry hypercharge $\frac{1}{2}$.
     \item Their $SU(2)_{L}$ down-component is electrically neutral and their $SU(2)_{L}$
           up-component has electric charge $+1$.
     \item They include three $SU(2)_{L}$ doublets $\{ H_{i} \}$, $i=1,2,3$, one for first, one for the 
           second and one for the third generation.
    \end{itemize}
    {\bf{Conclusion:}} 
    {\em{ The zero modes of the extra-dimensional component of the five-dimensional 
    gauge field $\mathcal{A}_{y}^{(0)}=
    \sum_{\hat{a}=37}^{48} \mathcal{A}_{y}^{\hat{a}(0)} L_{\hat{a}}$ 
    are a substitute for the SM Higgs. $\mathcal{A}_{y}^{(0)}$ includes three Higgs doublets $\{ H_{i} \}$, 
    $i=1,2,3$, one for each flavour. They generate the unitary factors $e^{A_{y}}$ via
    $A_{y}=i \; g_{4} R \; \mathcal{A}_{y}^{(0)}$ in the decomposition
    $\Phi=\rho \; e^{A_{y}} e^{\eta} e^{A_{y}}$}}. \\

    Making use of the residual $SU(6)_{L} \times U(1)_{Y}$ global symmetry, it is possible to
    transform away the up-components of the $SU(2)_{L}$ doublets
    \begin{gather}
     \mathcal{A}_{y}^{(0)}= \sum_{\hat{a}=37}^{48} \mathcal{A}_{y}^{\hat{a}(0)} L_{\hat{a}}=\frac{1}{2} \;  
      \left( \begin{array} {ccccccc} 0 & 0 & 0 & 0 & 0 & 0 & \mathcal{A}_{y}^{1 +(0)} \\
        0 & 0 & 0 & 0 & 0 & 0 & \mathcal{A}_{y}^{2 +(0)} \\ 0 & 0 & 0 & 0 & 0 & 0 & 
        \mathcal{A}_{y}^{3 +(0)} \\
        0 & 0 & 0 & 0 & 0 & 0 & \mathcal{A}_{y}^{1 0(0)} \\ 0 & 0 & 0 & 0 & 0 & 0 & 
        \mathcal{A}_{y}^{2 0(0)} \\
        0 & 0 & 0 & 0 & 0 & 0 & \mathcal{A}_{y}^{3 0(0)} \\ 
        \mathcal{A}_{y}^{1 + (0) \ast} & \mathcal{A}_{y}^{2 + (0) \ast} & \mathcal{A}_{y}^{3 + (0) \ast} & 
        \mathcal{A}_{y}^{1 0 (0) \ast} &
        \mathcal{A}_{y}^{2 0 (0) \ast} & \mathcal{A}_{y}^{3 0 (0) \ast} & 0 \end{array} \right) \nonumber \\
      \Downarrow \nonumber \\
      \label{a5unitarygauge}
      \mathcal{A}_{y}^{(0)}= \sum_{\hat{a}=43}^{48} \mathcal{A}_{y}^{\hat{a}(0)} L_{\hat{a}}=\frac{1}{2}
      \left( \begin{array} {ccccccc} 0 & 0 & 0 & 0 & 0 & 0 & 0 \\
        0 & 0 & 0 & 0 & 0 & 0 & 0 \\ 0 & 0 & 0 & 0 & 0 & 0 & 0 \\
        0 & 0 & 0 & 0 & 0 & 0 & \mathcal{A}_{y}^{1 0 (0)} \\ 0 & 0 & 0 & 0 & 0 & 0 & 
        \mathcal{A}_{y}^{2 0 (0)} \\
        0 & 0 & 0 & 0 & 0 & 0 & \mathcal{A}_{y}^{3 0 (0)} \\ 
        0 & 0 & 0 & 
        \mathcal{A}_{y}^{1 0 (0) \ast} &
        \mathcal{A}_{y}^{2 0 (0) \ast} & \mathcal{A}_{y}^{3 0 (0) \ast} & 0 \end{array} \right) \; .
    \end{gather}
    This transformation results in a vanishing mass term for the photon and is in analogy to the SM.

  \section[Gauge symmetry breaking by boundary conditions]{Additional gauge symmetry breaking 
           by Dirichlet and Neumann boundary conditions}

    In this section we describe how the symmetry breaking
    \begin{equation}
     SU(6)_{L} \times U(1)_{Y} \to SU(2)_{L} \times U(1)_{Y} \times SO(3)_{F}
    \end{equation}
    can be achieved by imposing Dirichlet and Neumann boundary conditions for the
    gauge fields (\ref{su6u1gaugefields}) which are unaffected by the orbifold projection $P$.
    The reason why we need this additional symmetry breaking is that the orbifold $S^{1}/\mathbb{Z}_{2}$
    possesses only one orbifold projection $P$. Note that the underlying orbifold in the $SU(7)$ model 
    is consider as the orbifold $S^{1}/\mathbb{Z}_{2}$ with twisted boundary conditions, see
    discussion in section \ref{sectionorbifolds1z2}, and thus we
    have one orbifold projection $P$ given by (\ref{orbifoldprojectionsu7}) and one continuous Wilson line 
    $W$ given by (\ref{wilsonsu7}). 
    
    In the next subsection we first describe the issue of gauge symmetry breaking through 
    Dirichlet and Neumann boundary conditions more generally. We define our theory in five dimensions
    between two parallel branes. One brane is located at $y=0$ and the other brane is located at $y=\pi R$.
    The two branes are considered as four-dimensional boundaries. This way we can compare 
    gauge symmetry breaking through Dirichlet and Neumann boundary conditions to gauge symmetry
    breaking on the orbifold $S^{1}/\mathbb{Z}_{2}$. As in the orbifold case $y$ denote the
    coordinate of the extra dimension.

   \subsection{Gauge symmetry breaking by Dirichlet and Neumann boundary conditions and its relation
               to gauge symmetry through orbifolding}

    Let $G$ be the gauge group we want to break with Lie algebra $\mathfrak{g}$. In addition, let
    $G_{0}$ be the subgroup of $G$ we want to obtain as the unbroken gauge group  
    with Lie algebra $\mathfrak{g}_{0}$. 
    By $\{T^{A}\}$ we denote the set of generators creating $G$ and by $\{T^{a}\}$ the set of generators
    creating $G_{0}$. We consider the split 
    \begin{equation}
     \mathfrak{g}=\mathfrak{g}_{0}+\mathfrak{g}_{1} \; ,
    \end{equation}
    where $\mathfrak{g}_{1}$ generate the coset space $G/G_{0}$. By $\{ T^{\hat{a}} \}$ we denote
    the set of generators of the coset space $G/G_{0}$. In order to achieve the symmetry breaking 
    $G \to G_{0}$ we demand 
    \begin{equation}
     A_{\mu}^{\hat{a}}=0 \quad , \quad \partial_{y} A_{\mu}^{a}=0
    \end{equation}
    at both boundaries $y=0$ and $y=\pi R$.
    \begin{itemize}
     \item $A_{\mu}^{\hat{a}}=0$ are Dirichlet boundary conditions for the broken gauge
           fields $A_{\mu}^{\hat{a}}$.
     \item $\partial_{y} A_{\mu}^{a}=0$ are Neumann boundary conditions for the
           unbroken gauge fields $A_{\mu}^{a}$. 
    \end{itemize} 

    We compare this gauge symmetry breaking through boundary conditions to the gauge symmetry breaking on
    the orbifold $S^{1}/\mathbb{Z}_{2}$. Recall that on
    the orbifold $S^{1}/\mathbb{Z}_{2}$ gauge and scalar fields have to fulfil the
    boundary conditions (\ref{boundaryconditionsfromtwistp}), (\ref{boundaryconditionsfromtwistpsc}) 
    \begin{gather}
     \label{boundaryconditionsfromtwistpsu7}
     A_{\mu}(x^{\mu},-y)=P \; A_{\mu}(x^{\mu},y) \; P^{-1}  \\
     \label{boundaryconditionsfromtwistpscsu7}
     A_{y}(x^{\mu},-y)=-P \; A_{y}(x^{\mu},y) \; P^{-1} 
    \end{gather}
    and the periodicity condition (\ref{periodictyforsu7})
    \begin{equation}
     \label{periodboundarsu7}
     A_{M}(x^{\mu},y+2 \pi R)=W \; A_{M}(x^{\mu},y) \; W^{-1} \; ,
    \end{equation}
    In following discussion we admit the trivial periodicity condition, i.e. we set $W=1$ in 
    (\ref{periodboundarsu7}). 
    The boundary condition (\ref{boundaryconditionsfromtwistpsu7}) breaks the bulks gauge group $G$ down to 
    $G_{0}^{\prime}$ 
    \begin{equation} 
     G_{0}^{\prime}=\{ g \in G \mid Pg=gP \}
    \end{equation}
    at $y=0$. 
    Let $\{ T^{a^{\prime}}\}$ denote the set of generators creating $G_{0}^{\prime}$ and let
    $\{ T^{\hat{a}^{\prime}} \}$ denote the set of generators creating the coset space $G/G_{0}^{\prime}$. 
    According to (\ref{boundaryconditionsfromtwistpsu7}) and (\ref{boundaryconditionsfromtwistpscsu7}) 
    unbroken gauge $A^{a^{\prime}}_{\mu}(x^{\mu},y)$ and the scalar fields 
    $A^{\hat{a}^{\prime}}_{y}(x^{\mu},y)$
    are even functions, i.e.
    \begin{gather}
     A^{a^{\prime}}_{\mu}(x^{\mu},-y)=A^{a^{\prime}}_{\mu}(x^{\mu},y) \\
     A^{\hat{a}^{\prime}}_{y}(x^{\mu},-y)=A^{\hat{a}^{\prime}}_{y}(x^{\mu},y)  \nonumber \; .
    \end{gather}
    Thus we can Fourier expand  
    \begin{eqnarray}
     \label{fourierevenfunctionsu7}
     && A^{a^{\prime}}_{\mu}(x^{\mu},y)=\frac{1}{\sqrt{2 \pi R}}
     A_{\mu}^{a^{\prime}}{}^{(0)}(x^{\mu})+\frac{1}{\sqrt{\pi R}}\sum_{n=1}^{\infty} 
     A_{\mu}^{a^{\prime}}{}^{(n)}(x^{\mu}) \cos(\frac{n y}{R}) \\
     && A^{\hat{a}^{\prime}}_{y}(x^{\mu},y)=\frac{1}{\sqrt{2 \pi R}}
     A_{y}^{\hat{a}^{\prime}}{}^{(0)}(x^{\mu})+ \frac{1}{\sqrt{\pi R}}
     \sum_{n=1}^{\infty} 
     A_{y}^{\hat{a}^{\prime}}{}^{(n)}(x^{\mu}) \cos(\frac{n y}{R})  \nonumber \; .
    \end{eqnarray} 
    On the other hand, according to (\ref{boundaryconditionsfromtwistpsu7}) and 
    (\ref{boundaryconditionsfromtwistpscsu7}) broken gauge $A^{\hat{a}^{\prime}}_{\mu}(x^{\mu},y)$ and
    the scalar fields $A^{a^{\prime}}_{y}(x^{\mu},y)$ are odd functions, i.e.
    \begin{gather}
     A^{\hat{a}^{\prime}}_{\mu}(x^{\mu},-y)=-A^{\hat{a}^{\prime}}_{\mu}(x^{\mu},y) \\
     A^{a^{\prime}}_{y}(x^{\mu},-y)=-A^{a^{\prime}}_{y}(x^{\mu},y)   \nonumber \; .
    \end{gather}
    Thus we can Fourier expand 
    \begin{eqnarray}
     \label{fourieroddfunctionsu7}
     && A^{\hat{a}^{\prime}}_{\mu}(x^{\mu},y)=\frac{1}{\sqrt{\pi R}}\sum_{n=1}^{\infty} 
     A_{\mu}^{\hat{a}^{\prime}}{}^{(n)}(x^{\mu}) \sin(\frac{n y}{R}) \\ 
     && A^{a^{\prime}}_{y}(x^{\mu},y)=\frac{1}{\sqrt{\pi R}}\sum_{n=1}^{\infty} 
     A_{y}^{a^{\prime}}{}^{(n)}(x^{\mu}) \sin(\frac{n y}{R})  \nonumber \; .
    \end{eqnarray} 
    The expansions (\ref{fourieroddfunctionsu7}) lead to
    Dirichlet boundary conditions for broken gauge and scalar fields, i.e. 
    \begin{gather}
     A_{\mu}^{\hat{a}^{\prime}}(x^{\mu},0)=A_{\mu}^{\hat{a}^{\prime}}(x^{\mu},\pi R)=0 \\
     \label{bouncongauge}
     A^{a^{\prime}}_{y}(x^{\mu},0)=A^{a^{\prime}}_{y}(x^{\mu},\pi R)=0  \nonumber \; .
    \end{gather}
    The expansions (\ref{fourierevenfunctionsu7}) lead to 
    Neumann boundary conditions for unbroken gauge fields and scalar fields, i.e.
    \begin{gather}
     \label{bouncon2}
     \partial_{y} A_{\mu}^{a^{\prime}}(x^{\mu},0)=\partial_{y} A_{\mu}^{a^{\prime}}(x^{\mu},\pi R)=0  \\ 
     \partial_{y} A_{y}^{\hat{a}^{\prime}}(x^{\mu},0)
     =\partial_{y} A_{y}^{\hat{a}^{\prime}}(x^{\mu},\pi R)=0  \nonumber \; .
    \end{gather}

    The advantage of gauge symmetry breaking by boundary conditions is that unlike in the 
    orbifold case, one can obtain {\em{any}} subgroup $G_{0}$ of $G$. In contrast, in the orbifold case
    only very special subgroups of $G$ compatible with the
    action of $P$ on the Lie algebra of $\mathfrak{g}$ can be obtained.

   \subsection{Gauge symmetry breaking by Dirichlet and Neumann boundary conditions in the $SU(7)$ model}

    Let us return to the $SU(7)$ model. We consider the breaking
    \begin{equation} 
     SU(6)_{L} \times U(1)_{Y} \to SU(2)_{L} \times U(1)_{Y} \times SO(3)_{F} \; .
    \end{equation} 
    This breaking can be achieved by demanding 
    \begin{equation}
     A_{\mu}^{\hat{a}}=0 \quad , \quad  \partial_{y} A_{\mu}^{a}=0 \; ,
    \end{equation}
    for
    \begin{gather}
     A_{\mu}^{a} \in \mathfrak{su}(2) \oplus \mathfrak{u}(1) \oplus \mathfrak{so}(3) \\ 
     A_{\mu}^{\hat{a}} \in \mathfrak{su}(6) \oplus \mathfrak{u}(1)/
     \mathfrak{su}(2) \oplus \mathfrak{u}(1) \oplus \mathfrak{so}(3)  \nonumber
    \end{gather}
    at both boundaries $L$ and $R$.

  \section{Gauge coupling unification and the weak mixing angle}

   The gauge group $SU(7)$ unifies inter alia the weak gauge group $SU(2)_{L}$ and the hypercharge
   gauge group $U(1)_{Y}$ of the SM. This means that in the unified theory
   there exists only one five-dimensional gauge coupling constant which we denote by 
   \begin{equation}
    g_{5}^{SU(7)} \; .
   \end{equation} 
   Therefore it is possible to calculate the effective four-dimensional coupling constants for
   $SU(2)_{L}$ and $U(1)_{Y}$, respectively, and thus the weak mixing angle $\theta_{W}$ of the SM.
   Recall that in the SM the covariant derivative reads \cite{Huang:1982ik}
   \begin{equation}
    D_{\mu}=\partial_{\mu} + i g \; \mathbf{W_{\mu}} \cdot \mathbf{t} + i g^{\prime} \; W^{\mu}_{0} t_{0} \; ,
   \end{equation} 
   where $g$ and $g^{\prime}$ are the four-dimensional coupling constants of 
   $SU(2)_{L}$ and $U(1)_{Y}$, respectively. The weak mixing angle $\theta_{W}$ is given by
   \begin{equation}
    \label{weakmixinganglesm}
    \sin^{2} \theta_{W}=\frac{g^{\prime 2}}{g^{2}+g^{\prime 2}} \; .
   \end{equation}
   In order to compute $\theta_{W}$ in the $SU(7)$ model we have to determine the effective four-dimensional
   coupling constants
   \begin{equation}
    \label{4deffectivegaugecouplingconstantssu2u1}
    g \equiv g_{4}^{SU(2)_{L}} \quad , \quad  g^{\prime} \equiv g_{4}^{U(1)_{Y}} \; .
   \end{equation}
   To calculate (\ref{4deffectivegaugecouplingconstantssu2u1}) we have to take into account the
   normalisation of the generators $T_{3}$ and $Y$, see (\ref{smweakgaugegroup}) and 
   (\ref{smhyperchargegroup}). Thus due to (\ref{smweakgaugegroup}) the five-dimensional 
   gauge coupling constant $g_{5}^{SU(2)_{L}}$ of the $SU(2)_{L}$ subgroup of 
   $SU(7)$ in related to the five-dimensional gauge coupling constant $g_{5}^{SU(7)}$ by 
   \begin{equation}
    \label{5drealtionsu2}
    g_{5}^{SU(2)_{L}}=\frac{g_{5}^{SU(7)}}{\sqrt{3}} 
   \end{equation}
   and due to (\ref{smhyperchargegroup}) the five-dimensional gauge coupling constant $g_{5}^{U(1)_Y}$ 
   of the $U(1)_{Y}$ subgroup of $SU(7)$ is related to the five-dimensional 
   gauge coupling constant $g_{5}^{SU(7)}$  by 
   \begin{equation}
    \label{5drealtionu1}
    g_{5}^{U(1)_{Y}}=\frac{g_{5}^{SU(7)}}{\sqrt{21}} \; .
   \end{equation}
   In addition, due to (\ref{relation4deffgaugecoupling5dgaugecoupling}) an effective four-dimensional gauge 
   coupling constant $g_{4}$ is related to a five-dimensional gauge coupling constant $g_{5}$ via
   \begin{equation}
    g_{4}=\frac{g_{5}}{\sqrt{2 \pi R}} \; ,
   \end{equation}
   where $R$ is the compactification radius. Thus we obtain the following four-dimensional effective
   SM coupling constants 
   \begin{equation}
    \label{smcouplingconstants}
    g=g_{4}^{SU(2)_{L}}=\frac{g_{5}^{SU(7)}}{\sqrt{6 \pi R}}  \quad , \quad
    g^{\prime}=g_{4}^{U(1)_Y}=\frac{g_{5}^{SU(7)}}{\sqrt{42 \pi R}} \; ,
   \end{equation}
   where we have inserted (\ref{5drealtionsu2}) and (\ref{5drealtionu1}).
   Inserting further (\ref{smcouplingconstants}) in (\ref{weakmixinganglesm}) we obtain for
   the weak mixing angle in the $SU(7)$ model: 
   \begin{equation}
    \label{weakmixinganglesu7}
    \sin^{2} \theta_{W}^{SU(7)}=0.125 \; .
   \end{equation}
   We compare this result with the experimental value \cite{Eidelman:2004wy}
   \begin{equation}
    \sin^{2} \theta_{W}^{\text{exp}} \approx 0.23 \; .
   \end{equation}
   We see that the obtained value for $\theta_{W}$ is too small by approximately a factor of two. This
   problem can however be solved by starting with a slightly different unified gauge group. This issue will
   be discussed in the outlook.

  \section{The minimum of the Higgs potential $V(\Phi)$}

    We consider the Higgs potential 
    \begin{equation}
      \label{higgspotentialsu7}
      V(\Phi)=V(\rho \; e^{A_{y}} \; e^{\eta} \; e^{A_{y}}) \; .
    \end{equation}
    According to Theorem \ref{theoremnuhiggspotential} $V(\Phi)$ is invariant under unitary gauge 
    transformations 
    \begin{equation}
     V(S_{0}(x) \Phi S_{0}(x)^{-1})=V(\Phi) 
    \end{equation} 
    where $S_{0}(x) \in SU(2)_{L} \times SO(3)_{F} \times U(1)_{Y}$. Thus $e^{A_{y}}$ in
    (\ref{higgspotentialsu7}) cannot be gauged away and
    the Higgs potential $V(\Phi)$ depends on $\rho$, $\eta$ and $A_{y}$.
    The unitary factor $e^{A_{y}}$ in (\ref{higgspotentialsu7}) is given by
    \begin{equation}
     \label{unitaryfactosu7}
     e^{A_{y}}=e^{i \; g_{4} R \; \mathcal{A}_{y}^{(0)}}
    \end{equation}  
    where 
    \begin{equation}
     \mathcal{A}_{y}^{(0)}= \sum_{\hat{a}=43}^{48} \mathcal{A}_{y}^{\hat{a}(0)} L_{\hat{a}} \; ,
    \end{equation}
    compare with (\ref{a5unitarygauge}).
    Note that $\mathcal{A}_{y}^{(0)}$ denote the neutral components of the three electroweak 
    Higgs doublets (\ref{higgsdoubleth1}), (\ref{higgsdoubleth2}) and (\ref{higgsdoubleth3}).
    We consider now the case where the $\mathcal{A}_{y}^{(0)}$ assume a VEV. Without loss of generality
    we lay it in the $L_{\hat{43}}$- $L_{\hat{45}}$- and $L_{\hat{47}}$-direction, i.e.
    \begin{equation}
     \label{expandvevs}
     \mathcal{A}_{y} \to 
     \langle \mathcal{A}_{y}^{(0)} \rangle=\sum_{\hat{a}=\hat{43},\hat{45},\hat{47}} 
     \langle \mathcal{A}_{y}^{\hat{a}(0)} \rangle L_{\hat{a}} \; .
    \end{equation}
    Inserting (\ref{expandvevs}) in (\ref{unitaryfactosu7}) we get 
    \begin{equation}
     \label{wilsonsu7}
     W=e^{i \;  g_{4} R \; \sum_{\hat{a}} \langle \mathcal{A}_{y}^{\hat{a}(0)} \rangle L_{\hat{a}}} \; .
    \end{equation}
    Since $[P,L_{\hat{a}}] \neq 0$ for $\hat{a}=\hat{43},\hat{45},\hat{47}$ all VEVs 
    $\langle \mathcal{A}_{y}^{\hat{a}(0)} \rangle$
    can be arbitrary constants and thus $W$ is a
    continuous Wilson line. We parametrise the VEVs $\langle \mathcal{A}_{y}^{\hat{a}(0)} \rangle$ as
    \begin{equation}
     \label{vevsthreehiggsdoublets}
     \langle \mathcal{A}_{y}^{\hat{43}(0)} \rangle=\frac{\alpha_{\hat{43}}}{g_{4} R} \quad , \quad
     \langle \mathcal{A}_{y}^{\hat{45}(0)} \rangle=\frac{\alpha_{\hat{45}}}{g_{4} R} \quad , \quad
     \langle \mathcal{A}_{y}^{\hat{47}(0)} \rangle=\frac{\alpha_{\hat{47}}}{g_{4} R} 
    \end{equation}
    where the $\alpha_{\hat{a}}$ are dimensionless parameters.
    Inserting (\ref{vevsthreehiggsdoublets}) in (\ref{wilsonsu7}) we get
    \begin{equation}   
     \label{wilsonlinesu7model}
     W=e^{i \; \sum_{\hat{a}} \alpha_{\hat{a}} L_{\hat{a}}} \; .
    \end{equation}
    The VEVs $\langle \mathcal{A}_{y}^{\hat{a}(0)} \rangle$ 
    are much smaller than the compactification scale $1/R$. Thus
    $0 < \alpha_{\hat{a}} \ll 1$ in (\ref{wilsonlinesu7model}) and we can
    approximate
    \begin{eqnarray}
     \label{a5expansion}
     W & = & e^{i \; \sum_{\hat{a}} \alpha_{\hat{a}} L_{\hat{a}}} \approx \mathbf{1} + \sum_{\hat{a}}
     i \alpha_{\hat{a}} L_{\hat{a}} \\
     & = & 
       \left( \begin{array} {ccccccc} 1 & 0 & 0 & 0 & 0 & 0 & 0 \\
        0 & 1 & 0 & 0 & 0 & 0 & 0 \\ 0 & 0 & 1 & 0 & 0 & 0 & 0 \\
        0 & 0 & 0 & 1 & 0 & 0 & i  \frac{\alpha_{\hat{43}}}{2} \\ 
        0 & 0 & 0 & 0 & 1 & 0 & i  \frac{\alpha_{\hat{45}}}{2} \\
        0 & 0 & 0 & 0 & 0 & 1 & i  \frac{\alpha_{\hat{47}}}{2} \\ 
        0 & 0 & 0 & i \frac{\alpha_{\hat{43}}}{2} &  i \frac{\alpha_{\hat{45}}}{2} 
        & i \frac{\alpha_{\hat{47}}}{2} & 1 \end{array} \right) \; .
   \end{eqnarray}
   According to (\ref{higgspotentialsu7}) we can  parameterise the minimum $\Phi_{min}$ of the Higgs
   potential as
   \begin{equation}
    \label{minimumphi}
    \Phi_{min}=\rho_{min} \frac{1}{\sqrt{2}} \; \begin{pmatrix} e^{a_{1}} & 0 & 0 & 0 & 0 & 0 & 0 \\
                   0 & e^{a_{2}}  & 0 & 0 & 0 & 0 & 0 \\  0 & 0 & e^{a_{3}} & 0 & 0 & 0 & 0 \\
                   0 & 0 & 0 & e^{a_{4}} & 0 & 0 & i \alpha_{\hat{43}}^{\prime} \\
                   0 & 0 & 0 & 0 & e^{a_{5}} & 0 & i \alpha_{\hat{45}}^{\prime} \\
                   0 & 0 & 0 & 0 & 0 & e^{a_{6}} & i \alpha_{\hat{47}}^{\prime} \\
                   0 & 0 & 0 & i \alpha_{\hat{43}}^{\prime} & i \alpha_{\hat{45}}^{\prime} 
                   & i \alpha_{\hat{47}}^{\prime} & e^{a_{7}} \end{pmatrix}
                   + \mathcal{O}(\alpha_{a}^{2})
   \end{equation}
   where $\sum_{i=1}^{7} a_{i}=0$, $a_{i} \in \mathbb{R}$ and
   \begin{equation}
     \label{a5vev}
     \alpha_{\hat{43}}^{\prime}=\frac{\alpha_{\hat{43}}}{2} \left( e^{a_{4}}+ e^{a_{7}} \right) 
     \quad , \quad
     \alpha_{\hat{45}}^{\prime}=\frac{\alpha_{\hat{45}}}{2} \left( e^{a_{5}}+ e^{a_{7}} \right) 
     \quad , \quad
     \alpha_{\hat{47}}^{\prime}=\frac{\alpha_{\hat{47}}}{2} \left( e^{a_{6}}+ e^{a_{7}} \right) \; .
   \end{equation}
   In the following we neglect terms of $\mathcal{O}(\alpha_{a}^{2})$.
   {\em{In order to have a spontaneous symmetry breaking we assume that $V(\Phi)$
   is minimised at non-trivial $\Phi_{min}$, i.e. we assume 
   \begin{equation}
    a_{i} \neq 0 \quad , \quad \alpha_{\hat{a}}^{\prime} \neq 0
   \end{equation} 
   for $i=1,\dots,7$ and $\hat{a}=\hat{43},\hat{45},\hat{47}$ in (\ref{minimumphi}).}}
   For later use we make the following 
   \begin{definition}
    We call a minimum of the Higgs potential $\Phi_{min}$ quasi $\mathcal{S}_{2}$ symmetric if 
    \begin{equation}
     a_{i}=a_{j}
    \end{equation}
    for the pairs $(i,j)=(1,4),(2,5),(3,6)$.
   \end{definition}
   Note that fluctuations of $a_{i}$ and $\alpha_{a}$ around the minimum $\Phi_{min}$ (\ref{minimumphi})
   give rise to altogether $10$ Higgs particles. This topic will be discussed in the outlook.

  \section{Calculation of gauge field masses in the $SU(7)$ model}

    In this section we calculate the masses of all gauge fields in the $SU(7)$ model. Recall that
    $SU(2)_{L} \times U(1)_{Y} \times SO(3)_{F}$ is left unbroken by the orbifolding
    and imposing Dirichlet and Neumann boundary conditions. Consequently we have to compute
    \begin{equation}
     \label{covariantderivativephiminsu7}
     D_{\mu} \Phi_{min}= i \frac{g}{\sqrt{2}} \; A_{\mu}^{i(0)} 
                           \left[ L_{i}, \Phi_{min} \right] + 
                         i \frac{g}{\sqrt{2}} \; A_{\mu}^{i(1)} \{ L_{i}, \Phi_{min} \} \; ,
    \end{equation}
    for the generators $\{ L_{i} \}$ with $i=1,2,3,5,8,10,36$. Note that $T_{j}=\sqrt{3} L_{j}$ 
    (\ref{smweakgaugegroup}), $Y=\sqrt{21} L_{36}$ (\ref{smhyperchargegroup}) and 
    $H_{k}=\sqrt{2} L_{l}$ (\ref{so3flavourgaugegroup}) with $j=1,2,3$, $k=2,5,7$ and $l=5,8,10$ denote
    the generators of the weak gauge group, the hypercharge and the $SO(3)_{F}$
    flavour gauge group, respectively. In the following calculations we can us either the set
    $\{ L_{i} \}$ or the set $\{ T_{j},Y,H_{k} \}$ in 
    (\ref{covariantderivativephiminsu7}). However only the $\{ L_{i} \}$ are normalised as
    $\text{tr} \left( L_{i} L_{j} \right)=\frac{1}{2} \delta_{ij}$ and imply conventional normalisation 
    of the four-dimensional kinetic terms for $A_{\mu}^{i(0)}$ and $A_{\mu}^{i(1)}$ as we have shown 
    explicitly in section \ref{sectionenonabBTLM}.
    Therefore we have to use the set $\{ L_{i} \}$ instead of $\{ T_{j},Y,H_{k} \}$.  
    $\Phi_{min}$ in (\ref{covariantderivativephiminsu7}) is given by equation (\ref{minimumphi}). 
    {\em{We assume now that $\Phi_{min}$ is quasi $\mathcal{S}_{2}$ symmetric}}. The reason will be explained
    in the next subsection. We can expand a quasi $\mathcal{S}_{2}$ symmetric $\Phi_{min}$ as
    \begin{equation}
     \label{paramequasis3phimin}
     \Phi_{min} = \Phi_{min}^{diag} + \Phi_{min}^{offdiag} 
                = \sum_{j=1,2,3,4} \; \phi_{j} \; \tilde{L}_{j} 
                + \sum_{k=\hat{43},\hat{45},\hat{47}} \; \phi_{k} \;  L_{k} \; .
    \end{equation}
    where 
    \begin{gather}
     \label{liquasis2sym}
     \tilde{L}_{1}=\frac{1}{2} \; \text{diag}(1,0,0,1,0,0)=\frac{1}{2} \; \text{diag}(l_{1},l_{1}) \quad , 
        \quad  l_{1}=\text{diag}(1,0,0)  \\            
     \tilde{L}_{2}=\frac{1}{2} \; \text{diag}(0,1,0,0,1,0)=\frac{1}{2} \; \text{diag}(l_{2},l_{2}) \quad , 
        \quad  l_{2}=\text{diag}(0,1,0)  \\            
     \tilde{L}_{3}=\frac{1}{2} \; \text{diag}(0,0,1,0,0,1)=\frac{1}{2} \; \text{diag}(l_{3},l_{3}) \quad , 
        \quad  l_{3}=\text{diag}(0,0,1)  \\                             
     \tilde{L}_{4}=\frac{1}{2} \; \text{diag}(0,0,0,0,0,0,1)  
    \end{gather}
    {\em{Remarks:}} i) \; By writing a matrix 
    just as a $6 \times 6$ matrix we mean that this matrix is embedded in a $7 \times 7$ matrix as an upper
    $6 \times 6$ matrix. This convention is the same convention as we have made for the generators
    $L_{1},\dots,L_{35}$ of $SU(6)_{L} \subset SU(7)$ and we will use this convention in the following
    calculations. \\
    ii) \; $\sum_{j=1,2,3,4} \; \phi_{j} \; \tilde{L}_{j}$ form the diagonal part $\Phi_{min}^{diag}$ of
    $\Phi_{min}$ while $\sum_{k=\hat{43},45,\hat{47}} \; \phi_{k} \;  L_{k}$ form the off-diagonal part 
    $\Phi_{min}^{offdiag}$ of $\Phi_{min}$. \\
    iii) \;  The $\tilde{L}_{i}$, $i=1,2,3,4$, are normalised such that $\text{tr}
     \; (\tilde{L}_{i} \tilde{L}_{j})=\frac{1}{2} \delta_{ij}$. \\
    The $\phi_{j}$ and $\phi_{k}$ in (\ref{paramequasis3phimin}) are given by
    \begin{gather} 
     \label{definitionp0p6p11}
     \phi_{1}=\sqrt{2} \rho_{min} e^{a_{1}}  \; , \;
     \phi_{2}=\sqrt{2} \rho_{min} e^{a_{2}} \; \;
     \phi_{3}=\sqrt{2} \rho_{min} e^{a_{3}}  \; , \; 
     \phi_{4}=\sqrt{2} \rho_{min} e^{a_{7}} \\    
     \phi_{k}=i \phi_{k}^{\prime} \; , \quad \phi_{k}^{\prime}=\sqrt{2} \rho_{min} \alpha_{k}^{\prime}
    \end{gather} 
    for $k=\hat{43},\hat{45},\hat{47}$.
    We then have to calculate the commutators $[ L_{i}, \Phi_{min} ]$ 
    and anticommutators $\{ L_{i}, \Phi_{min} \}$ for $\{L_{i}\}$, $i=1,2,3,5,8,10,35$,  
    insert the results of this computation in (\ref{covariantderivativephiminsu7}) and 
    evaluate the covariant derivative $D_{\mu} \Phi_{min}$. After taking the adjoint 
    $(D_{\mu} \Phi_{min})^{\dagger}$, multiplying the adjoint with
    $(D_{\mu} \Phi_{min})$ and taking the trace we obtain
    \begin{equation} 
     \label{su7generalmass}
     \mathcal{L}_{\text{mass}}=
     tr\left[\left(D_{\mu} \Phi_{min}\right)^{\dagger} \left(D_{\mu} \Phi_{min}\right)\right]
    \end{equation} 
    We remind the reader that the basis of mass eigenstates $A_{\mu}^{i(0)}$ and $A_{\mu}^{i(1)}$ 
    is already diagonal and no mixed terms between the zero mode and the first excited mode in
    (\ref{su7generalmass}) mode occur.

   \subsection{Mass term for the SM weak gauge fields from the off-diagonal 
               part of $\Phi_{min}$}

    In this subsection we calculate mass squared for the the zero mode gauge fields $A_{\mu}^{i(0)}$ 
    associated to the generators $L_{i}$ ($i=1,2$) and weak generators
    $T_{i}$ ($i=1,2$), respectively. First since $\left[ L_{i},\tilde{L}_{j}\right]=0$
    for $j=1,2,3,4$ we see that the diagonal part of $\Phi_{min}$ gives no contribution to the
    mass of zero mode gauge fields $A_{\mu}^{i(0)}$. 
    {\em{This is a consequence of the $\mathcal{S}_{2}$ quasi
    symmetry of the minimum $\Phi_{min}$ of the Higgs potential.}} Therefore the fields 
    $A_{\mu}^{i(0)}$ get their mass only from $\Phi_{min}^{offdiag}$. 
    The covariant derivative  reads
    \begin{eqnarray*}
      D_{\mu}\Phi_{min}^{offdiag} & = & i \frac{g}{\sqrt{2}} \; 
               A_{\mu}^{i(0)} \; \left[ L_{i}, \Phi_{min}^{offdiag} \right] \\
         & = &  i \frac{g}{\sqrt{2}} \; 
         A_{\mu}^{1(0)} \left( \phi_{\hat{43}} 
         \underbrace{\left[ L_{1}, L_{\hat{43}} \right]}_{=\frac{1}{2\sqrt{3}} i L_{38}}
         + \phi_{\hat{45}} \underbrace{\left[ L_{1}, L_{\hat{45}} \right]}_{=\frac{1}{2\sqrt{3}} i L_{40}}
         + \phi_{\hat{47}} \underbrace{\left[ L_{1}, L_{\hat{47}} \right]}_{=\frac{1}{2\sqrt{3}} i L_{42}}
         \right) \\
         & + &  i \frac{g}{\sqrt{2}} \; 
         A_{\mu}^{2(0)} \left( \phi_{\hat{43}} 
         \underbrace{\left[ L_{2}, L_{\hat{43}} \right]}_{=-\frac{1}{2\sqrt{3}} i L_{37}}
         + \phi_{\hat{45}} \underbrace{\left[ L_{2}, L_{\hat{45}} \right]}_{=-\frac{1}{2\sqrt{3}} i L_{39}}
         + \phi_{\hat{47}} \underbrace{\left[ L_{2}, L_{\hat{47}} \right]}_{=-\frac{1}{2\sqrt{3}} i L_{41}} 
         \right) \\
         & = & - i \frac{g}{2\sqrt{6}} \; 
         A_{\mu}^{1(0)} \left( \phi_{\hat{43}}^{\prime} L_{38} + \phi_{\hat{45}}^{\prime} L_{40} 
                        + \phi_{\hat{47}}^{\prime} L_{42} \right) \\
         & + & i \frac{g}{2\sqrt{6}} \; 
         A_{\mu}^{2(0)} \left( \phi_{\hat{43}}^{\prime} L_{37} + \phi_{\hat{45}}^{\prime} L_{39} 
                        + \phi_{\hat{47}}^{\prime} L_{41} \right)  \nonumber \; .
    \end{eqnarray*}        
    Taking the adjoint $(D_{\mu}\Phi_{min}^{offdiag})^{\dagger}$,
    multiplying $(D_{\mu}\Phi_{min}^{offdiag})^{\dagger}$ and $(D_{\mu}\Phi_{min}^{offdiag})$
    and taking the trace we arrive at 
    \begin{equation}  
     \text{tr} \left[\left(D_{\mu} \Phi_{min}^{offdiag}\right)^{\dagger}
       \left(D_{\mu} \Phi_{min}^{offdiag}\right)\right]
     = \frac{1}{24} \; g^{2} \rho_{min}^{2} \; \sum_{a=\hat{43},\hat{45},\hat{47}}  
       \alpha_{a}^{\prime 2} 
       \; \left( A_{\mu}^{i(0)}\right)^{2}
    \end{equation}
    with $i=1,2$. Thus the mass squared of the gauge fields $A_{\mu}^{i(0)}$ reads
    \begin{equation}
     \label{masstermzerowboson}
     m^{2}=\frac{1}{12} \; g^{2} \rho_{min}^{2} \; \sum_{a=\hat{43},\hat{45},\hat{47}}  
       \alpha_{a}^{\prime 2}=\frac{1}{12} \; \frac{1}{R^{2}} \sum_{a=\hat{43},\hat{45},\hat{47}}  
       \alpha_{a}^{\prime 2} \; , 
    \end{equation}
    where we have inserted $g \rho_{min}=1/R$ in the second step.
    For $0 < \alpha_{a}^{\prime} \ll 1$ we deduce that $1/R=\mathcal{O}(1)$ TeV so that 
    $m=m_{W}=80.4$ GeV \cite{Eidelman:2004wy}. Therefore \\
    \\
    {\em{The zero mode gauge fields $A_{\mu}^{1,2(0)}$ are identified with the SM weak gauge fields
    $W_{\mu}^{1,2}$.}} \\ 
    \\

   \subsection{Mass term for the SM $Z$ gauge field from off-diagonal part 
               of $\Phi_{min}$}

    We consider the linear combination of
    the gauge fields $A_{\mu}^{3(0)}$ and $A_{\mu}^{36(0)}$ associated to the generators $L_{3},L_{36}$
    \begin{gather}
     \label{w3hypermixing}
     A_{\mu}^{(0)}=A_{\mu}^{36(0)} \cos \theta_{W}^{SU(7)} + W_{\mu}^{3(0)} \sin \theta_{W}^{SU(7)} \\
     Z_{\mu}^{(0)}=-A_{\mu}^{36(0)} \sin \theta_{W}^{SU(7)} + W_{\mu}^{3(0)} \cos \theta_{W}^{SU(7)} 
     \nonumber
    \end{gather}
    where $\theta_{W}^{SU(7)}$ is the weak mixing angle in the $SU(7)$ model given by equation
    (\ref{weakmixinganglesu7}). The gauge field $Z_{\mu}^{(0)}$ get its mass, like the weak gauge 
    fields $W_{\mu}^{1,2}$, only from $\Phi_{min}^{offdiag}$ and the gauge field $A_{\mu}^{(0)}$ remains
    massless.
    The relation between the mass $m_{Z}$ of the gauge field $Z_{\mu}^{(0)}$ the 
    mass squared $m_{W}$ (\ref{masstermzerowboson}) of the weak gauge
    fields $W_{\mu}^{1,2}$ is given by
    \begin{equation}
     \label{contextzwmass}
     m_{Z}=\frac{m_{W}}{\cos \theta_{W}^{SU(7)}}
    \end{equation}
    Therefore
    \\
    {\em{The zero mode gauge field $Z_{\mu}^{(0)}$ are identified with the SM $Z$ gauge field $Z_{\mu}$
    and the zero mode gauge field $A_{\mu}^{(0)}$ are identified with the SM photon field $A_{\mu}$.}} \\
    \\
    The relation (\ref{contextzwmass}) is familiar in the SM. However according to (\ref{contextzwmass}) 
    the mass of the $Z$ gauge field 
    turns out to be too low which is a consequence of then wrong weak mixing angle $\theta_{W}^{SU(7)}$.

  \subsection{Mass term for the first excited KK mode of the SM weak gauge fields
              from diagonal part of $\Phi_{min}$}

    In this subsection we calculate the mass squared for the first excited KK mode of SM weak gauge
    fields $W_{\mu}^{1,2}$. Since we have identified $W_{\mu}^{1,2}=A_{\mu}^{1,2(0)}$ the first
    excited KK mode gauge fields of the weak gauge fields $W_{\mu}^{1,2}$ are $A_{\mu}^{1,2(1)}$. 
    Therefore in the following we write $W_{\mu}^{1,2(1)}=A_{\mu}^{1,2(1)}$.   
    To calculate the mass squared of $W_{\mu}^{1,2(1)}$, we consider the scenario where $a_{i} \gg 1$ 
    for some $a_{i}$ in $\Phi_{min}^{diag}$. Hence the off-diagonal part $\Phi_{min}^{offdiag}$ will 
    give only a very small contribution to the mass of $W_{\mu}^{1,2(1)}$. For this reason we
    neglect it in the following calculations. Consequently we have to compute  
    \begin{equation}
     \label{covariantderivativephiminmain}
     D_{\mu} \Phi^{diag}_{min}= i \frac{g}{\sqrt{2}} \; W_{\mu}^{i(1)} \{ L_{i}, \Phi^{diag}_{min} \}
    \end{equation}
    with $i=1,2$.
    We calculate the anticommutators
    \begin{eqnarray}
      \label{anticommfirstkksu2}
      \{ L_{1}, \Phi^{diag}_{min} \} & = & \phi_{1} \{L_{1},\tilde{L} _{1} \}
                                          +\phi_{2} \{L_{1},\tilde{L} _{2} \}
                                          +\phi_{3} \{L_{1},\tilde{L} _{3} \} \\
                                     & = & \phi_{1} \; \frac{1}{2\sqrt{3}} \; \begin{pmatrix}
                                           0 & l_{1} \\ l_{1} & 0 \end{pmatrix} +
                                           \phi_{2} \; \frac{1}{2\sqrt{3}} \; \begin{pmatrix}
                                           0 & l_{2} \\ l_{2} & 0 \end{pmatrix} +
                                           \phi_{3} \; \frac{1}{2\sqrt{3}} \; \begin{pmatrix}
                                           0 & l_{3} \\ l_{3} & 0 \end{pmatrix} \nonumber \\
      \{ L_{2}, \Phi^{diag}_{min} \} & = & \phi_{1} \{L_{1},\tilde{L} _{1} \}
                                          +\phi_{2} \{L_{1},\tilde{L} _{2} \}
                                          +\phi_{3} \{L_{1},\tilde{L} _{3} \} \nonumber \\
                                     & = & \phi_{1} \; \frac{1}{2\sqrt{3}} \; \begin{pmatrix}
                                           0 & -il{1} \\ il_{1} & 0 \end{pmatrix} +
                                           \phi_{2} \; \frac{1}{2\sqrt{3}} \; \begin{pmatrix}
                                           0 & -il_{2} \\ il_{2} & 0 \end{pmatrix} +
                                           \phi_{3} \; \frac{1}{2\sqrt{3}} \; \begin{pmatrix}
                                           0 & -il_{3} \\ il_{3} & 0 \end{pmatrix} \; , \nonumber 
    \end{eqnarray}
    where $0$ denotes the $3 \times 3$ zero matrix. Recall  that 
    $l_{1}=\text{diag}(1,0,0)$ and $l_{2}=\text{diag}(0,1,0)$, see (\ref{liquasis2sym}).                    
    Inserting (\ref{anticommfirstkksu2}) in (\ref{covariantderivativephiminmain}), taking the adjoint
    $(D_{\mu} \Phi^{diag}_{min})^{\dagger}$, multiplying the adjoint with $D_{\mu} \Phi^{diag}_{min}$
    and finally taking the trace we get
    \begin{equation}
     \text{tr} \left[(D_{\mu} \Phi^{diag}_{min})^{\dagger} \left( D_{\mu} \Phi^{diag}_{min} \right) \right] =
     \frac{1}{12} \; g^{2} \; \left( \phi_{1}^{2} + \phi_{2}^{2} + \phi_{3}^{2} \right) \; 
     \left(  W_{\mu}^{1,2(1)} \right)^{2} 
    \end{equation}
    Note that $tr \; (\tilde{L}_{i} \tilde{L}_{j})
    =\frac{1}{2} \delta_{ij}$.
    Inserting the expression (\ref{definitionp0p6p11}) for $\phi_{i}$ we finally get
    \begin{equation}
     \text{tr} \left[(D_{\mu} \Phi^{diag}_{min})^{\dagger} \left( D_{\mu} \Phi^{diag}_{min} \right) \right]
      = \frac{1}{6} \; g^{2} \rho_{min}^{2}  \; \left( e^{2a_{1}} + e^{2a_{2}} + 
           e^{3a_{2}} \right) \left(  W_{\mu}^{1,2(1)} \right)^{2} 
    \end{equation}
    Thus the mass squared for the first excited KK mode of the SM weak gauge fields $W_{\mu}^{1,2}$ reads
    \begin{equation}
     \label{masssquaredfemw}
     m^{2}_{W^{1,2(1)}}=\frac{1}{3} \frac{1}{R^{2}}  \left( e^{2a_{1}} + e^{2a_{2}} + 
           e^{3a_{2}} \right)
    \end{equation}
    where we have inserted $g \rho_{min}=1/R$. \\
    \\
    {\em{If $a_{i} \gg 1$ for at least one $i=1,2,3$ the first excited KK mode of the SM
    weak gauge fields $W_{\mu}^{1,2}$ receives very large masses from the Higgs mechanism
    in comparison to the compactification scale $1/R$}}.

   \subsection{Mass term for the first excited KK mode of the SM $Z$ gauge field and SM 
               photon field from diagonal part of $\Phi_{min}$}

    The first excited KK mode of the SM $Z$ gauge field $Z_{\mu}^{(1)}$ and the first excited KK mode
    of the photon field $A_{\mu}^{(1)}$ is a linear combination of
    the fields $A_{\mu}^{3(1)}$ and $A_{\mu}^{36(1)}$ associated to the generators $L_{3}$ and $L_{36}$
    \begin{gather}
     \label{w3hypermixing}
     A_{\mu}^{(1)}=A_{\mu}^{36(1)} \cos \theta_{W}^{SU(7)} + A_{\mu}^{3(1)} \sin \theta_{W}^{SU(7)} \\
     Z_{\mu}^{(1)}=-A_{\mu}^{36(1)} \sin \theta_{W}^{SU(7)} + A_{\mu}^{3(1)} \cos \theta_{W}^{SU(7)} 
     \nonumber
    \end{gather}
    As in the last section we consider the scenario, where $a_{i} \gg 1$ for some $a_{i}$ in 
    $\Phi_{min}^{diag}$. Hence the off-diagonal part $\Phi_{min}^{offdiag}$ will again give only
    a very small contribution to the mass of $Z_{mu}^{(1)}$ and $A_{\mu}^{(1)}$. For this reason we
    neglect it in the following calculations. Consequently we have to compute 
    \begin{equation}
     \label{covariantderivativefempz}
     D_{\mu} \Phi^{diag}_{min}= i \frac{g}{\sqrt{2}} \; W_{\mu}^{3(1)} \{ L_{3}, \Phi^{diag}_{min} \}
                               +i \frac{g}{\sqrt{2}} \; B_{\mu}^{(1)} \{ L_{36}, \Phi^{diag}_{min} \}
    \end{equation}
    We calculate the anticommutators
    \begin{eqnarray}
      \{ L_{3}, \Phi^{diag}_{min} \} & = & \phi_{1} \{L_{3},\tilde{L} _{1} \}
                                          +\phi_{2} \{L_{3},\tilde{L} _{2} \}
                                          +\phi_{3} \{L_{3},\tilde{L} _{3} \} \\
                                     & = & \phi_{1} \; \frac{1}{2\sqrt{3}} \; \begin{pmatrix}
                                           l_{1} & 0 \\ 0 & -l_{1} \end{pmatrix} +
                                           \phi_{2} \; \frac{1}{2\sqrt{3}} \; \begin{pmatrix}
                                           l_{2} & 0 \\ 0 & -l_{2} \end{pmatrix} +
                                           \phi_{3} \; \frac{1}{2\sqrt{3}} \; \begin{pmatrix}
                                           l_{3} & 0 \\ 0 & -l_{3} \end{pmatrix} \nonumber \\
      \{ L_{36}, \Phi^{diag}_{min} \} & = & \phi_{1} \{L_{36},\tilde{L} _{1} \}
                                          +\phi_{2} \{L_{36},\tilde{L} _{2} \}
                                          +\phi_{3} \{L_{36},\tilde{L} _{3} \}
                                          +\phi_{4} \{L_{36},\tilde{L} _{4} \}  \nonumber \\
                                     & = & \phi_{1} \; \frac{1}{2\sqrt{21}} \; \begin{pmatrix}
                                           l_{1} & 0 \\ 0 & l_{1} \end{pmatrix} +
                                           \phi_{2} \; \frac{1}{2\sqrt{21}} \; \begin{pmatrix}
                                           l_{2} & 0 \\ 0 & l_{2} \end{pmatrix} +
                                           \phi_{3} \; \frac{1}{2\sqrt{21}} \; \begin{pmatrix}
                                           l_{3} & 0 \\ 0 & l_{3} \end{pmatrix}  \nonumber \\
                                     & - & \phi_{4} \; \frac{1}{2\sqrt{21}} \; \text{diag}(0,0,0,0,0,0,6)
                                           \nonumber  
    \end{eqnarray}               
    Inserting this equations in (\ref{covariantderivativefempz}), taking the adjoint
    $(D_{\mu} \Phi^{diag}_{min})^{\dagger}$, multiplying the adjoint with $(D_{\mu} \Phi^{diag}_{min})$
    and finally taking the trace we get
    \begin{eqnarray*}
     \text{tr} \left[(D_{\mu} \Phi^{diag}_{min})^{\dagger} \left( D_{\mu} 
      \Phi^{diag}_{min} \right) \right] & = &
      \frac{1}{12} \; g^{2} \left( \phi_{1}^{2} + \phi_{2}^{2} + \phi_{3}^{2} \right) \; 
      \left(  A_{\mu}^{3(1)} \right)^{2}  \\
     & + & \frac{1}{84} \; g^{2} \left( \phi_{1}^{2} + \phi_{2}^{2} 
      + \phi_{3}^{2} + 18 \phi_{4}^{2} \right) \;
      \left(  A_{\mu}^{36(1)} \right)^{2} 
    \end{eqnarray*}
    Inserting the expression (\ref{definitionp0p6p11}) for $\phi_{i}$ we get 
    \begin{eqnarray*}
     \text{tr} \left[(D_{\mu} \Phi^{diag}_{min})^{\dagger}
      \left( D_{\mu} \Phi^{diag}_{min} \right) \right] & = &
      \frac{1}{6} \; g^{2} \rho_{min}^{2} \left( e^{2 a_{1}} + e^{2 a_{2}} + e^{2 a_{3}} \right) \; 
      \left(  A_{\mu}^{3(1)} \right)^{2}  \\
     & + & \frac{1}{42} \; g^{2} \rho_{min}^{2} 
      \left( e^{2 a_{1}} + e^{2 a_{2}} + e^{2 a_{3}} + 18 e^{2 a_{7}}
      \right) \; \left(  A_{\mu}^{36(1)} \right)^{2} 
    \end{eqnarray*}
    If we assume that $a_{7} \ll 1$ we can neglect $e^{2 a_{7}}$ for at least one of the $a_{1}$, $a_{2}$ or
    $a_{3}$ large and thus after transforming
    $A_{\mu}^{36(1)}=A_{\mu}^{(1)} \cos \theta_{W}^{SU(7)} - Z_{\mu}^{(1)} \sin \theta_{W}^{SU(7)}$, $
     A_{\mu}^{3(1)}=A_{\mu}^{(1)} \sin \theta_{W}^{SU(7)} + Z_{\mu}^{(1)} \cos \theta_{W}^{SU(7)}$
    which follows from (\ref{w3hypermixing}) we get the final result
    \begin{eqnarray}
     \text{tr} \left[(D_{\mu} \Phi^{diag}_{min})^{\dagger} 
      \left( D_{\mu} \Phi^{diag}_{min} \right) \right] & = &
      \frac{1}{6} \; g^{2} \rho_{min}^{2} \left( e^{2 a_{1}} + e^{2 a_{2}} + e^{2 a_{3}} \right) \; 
      \left(  Z_{\mu}^{(1)} \right)^{2}  \nonumber \\
     & + & \frac{1}{42} \; g^{2} \rho_{min}^{2} \left( e^{2 a_{1}} + e^{2 a_{2}} + e^{2 a_{3}} 
      \right) \; \left(  A_{\mu}^{(1)} \right)^{2} \nonumber \\
    \end{eqnarray}
    Thus the mass squared for the first excited KK mode of the SM Z gauge field and the
    SM photon field reads
    \begin{equation}
     \label{masssquaredfemzp}
     m^{2}_{Z^{(1)}}=\frac{1}{3} \frac{1}{R^{2}}  \left( e^{2a_{1}} + e^{2a_{2}} + 
           e^{3a_{2}} \right) \quad , \quad m^{2}_{\gamma^{(1)}}=\frac{1}{21} \frac{1}{R^{2}}
           \left( e^{2a_{1}} + e^{2a_{2}} + e^{2a_{2}} \right)
    \end{equation}
    where we have inserted $g \rho_{min}=1/R$.  \\
    \\
    {\em{If $a_{i} \gg 1$ for at least one $i=1,2,3$ the first excited KK mode of the SM
    $Z$ gauge field and the first excited KK mode of the SM
    photon field receive very large masses from the Higgs mechanism 
    in comparison to the compactification scale $1/R$}}.

   \subsection{Mass term for $SO(3)_{F}$ flavour gauge fields from diagonal part of $\Phi_{min}$}

    In this subsection we calculate the mass squared for the zero and the first excited KK mode of the flavour
    gauge fields $A_{\mu}^{j(0,1)}$ associated to the generators $L_{j}$, where $j=5,8,10$. Since the
    corresponding flavour gauge bosons couple to FCNC its mass terms must be very large
    $\mathcal{O}(10^{3})-\mathcal{O}(10^{5})$ TeV. Therefore, in order to be consistent with experiment,
    we consider again the case where $a_{i} \gg 1$ for some $a_{i}$ in $\Phi_{min}^{diag}$. Hence the
    contribution to the mass squared for the flavour gauge fields $A_{\mu}^{j(0,1)}$ from the off-diagonal
    part of $\Phi_{min}$ can be neglected. Consequently we get have to compute
    \begin{equation}
     \label{flavourcov}
     D_{\mu}\Phi^{diag}_{min}=  i \frac{g}{\sqrt{2}} \; A_{\mu}^{j(0)} \left[ A_{\mu}^{j},
                                   \Phi^{diag}_{min} \right]
                               + i \frac{g}{\sqrt{2}} \; A_{\mu}^{j(1)} \{ A_{\mu}^{j}, \Phi^{diag}_{min} \}
    \end{equation}
    where $j=5,8,10$. 
    We calculate the commutators
    \begin{eqnarray}
     \label{flavourcomm}
     \left[ L_{5}, \Phi^{diag}_{min} \right] & = &
                                           \phi_{1} \left[L_{5},\tilde{L} _{1} \right]
                                          +\phi_{2} \left[L_{5},\tilde{L} _{2} \right]
                                          +\phi_{3} \left[L_{5},\tilde{L} _{3} \right] \\
                                     & = &  \frac{i}{4\sqrt{2}} \phi_{1}  \begin{pmatrix}
                                             \lambda_{1} & 0 \\ 0 & \lambda_{1} \end{pmatrix}
                                           - \frac{i}{4\sqrt{2}} \phi_{2}  \begin{pmatrix} 
                                             \lambda_{1} & 0 \\ 0 & \lambda_{1} \end{pmatrix} \nonumber \\ 
     \left[  L_{8}, \Phi^{diag}_{min} \right] & = & 
                                           \phi_{1} \left[ L_{8},\tilde{L} _{1} \right]
                                          +\phi_{2} \left[ L_{8},\tilde{L} _{2} \right]
                                          +\phi_{3} \left[ L_{8},\tilde{L} _{3} \right] \nonumber \\
                                     & = &         \frac{i}{4\sqrt{2}} \phi_{1}  \begin{pmatrix}
                                                   \lambda_{4} & 0 \\ 0 & \lambda_{4} \end{pmatrix}
                                                 - \frac{i}{4\sqrt{2}} \phi_{3}  \begin{pmatrix}
                                                   \lambda_{4} & 0 \\ 0 & \lambda_{4} \end{pmatrix} 
                                                   \nonumber \\
     \left[  L_{10}, \Phi^{diag}_{min} \right] & = & 
                                           \phi_{1} \left[ L_{10},\tilde{L} _{1} \right]
                                          +\phi_{2} \left[ L_{10},\tilde{L} _{2} \right]
                                          +\phi_{3} \left[ L_{10},\tilde{L} _{3} \right] \nonumber \\
                                     & = &     \frac{i}{4\sqrt{2}} \phi_{2}  \begin{pmatrix}
                                                   \lambda_{6} & 0 \\ 0 & \lambda_{6} \end{pmatrix}
                                                  - \frac{i}{4\sqrt{2}} \phi_{3}  \begin{pmatrix}
                                                   \lambda_{6} & 0 \\ 0 & \lambda_{6} \end{pmatrix} \nonumber 
    \end{eqnarray} 
    and anticommutators
    \begin{eqnarray}
     \label{flavouracomm}
     \{  L_{5}, \Phi^{diag}_{min} \} & = &  \phi_{1} \{ L_{5},\tilde{L} _{1} \}
                                          +\phi_{2} \{ L_{5},\tilde{L} _{2} \}
                                          +\phi_{3} \{ L_{5},\tilde{L} _{3} \} \\
                                     & = &     \frac{1}{4\sqrt{2}} \phi_{1}  \begin{pmatrix}
                                           \lambda_{2} & 0 \\ 0 & \lambda_{2} \end{pmatrix}
                                         + \frac{1}{4\sqrt{2}} \phi_{2}  \begin{pmatrix} 
                                           \lambda_{2} & 0 \\ 0 & \lambda_{2} \end{pmatrix} \nonumber \\
     \{  L_{8}, \Phi^{diag}_{min} \} & = &  \phi_{1} \{L_{8},\tilde{L} _{1} \}
                                          +\phi_{2} \{L_{8},\tilde{L} _{2} \}
                                          +\phi_{3} \{L_{8},\tilde{L} _{3} \} \nonumber \\
                                     & = &     \frac{1}{4\sqrt{2}} \phi_{1}  \begin{pmatrix}
                                           \lambda_{5} & 0 \\ 0 & \lambda_{5} \end{pmatrix}
                                         + \frac{1}{4\sqrt{2}} \phi_{3}  \begin{pmatrix} 
                                           \lambda_{5} & 0 \\ 0 & \lambda_{5} \end{pmatrix} \nonumber \\
     \{ L_{10}, \Phi^{diag}_{min} \} & = &  \phi_{1} \{L_{10},\tilde{L} _{1} \}
                                          +\phi_{2} \{L_{10},\tilde{L} _{2} \}
                                          +\phi_{3} \{L_{10},\tilde{L} _{3} \} \nonumber \\
                                     & = &     \frac{1}{4\sqrt{2}} \phi_{2}  \begin{pmatrix}
                                           \lambda_{7} & 0 \\ 0 & \lambda_{7} \end{pmatrix}
                                         + \frac{1}{4\sqrt{2}} \phi_{3}  \begin{pmatrix} 
                                           \lambda_{7} & 0 \\ 0 & \lambda_{7} \end{pmatrix} \nonumber
    \end{eqnarray}
    Let us first calculate the mass squared for the zero mode flavour gauge fields $A_{\mu}^{j(0)}$. 
    Inserting (\ref{flavourcomm}) in (\ref{flavourcov}) we get 
    \begin{eqnarray*}
     D_{\mu} \Phi^{diag}_{min} & = & - \frac{1}{8} \; g \;  \left( A^{5(0)}_{\mu} \left( 
                                     \phi_{1}  \begin{pmatrix}
                                     \lambda_{1} & 0 \\ 0 & \lambda_{1} \end{pmatrix}
                                     - \phi_{2}  \begin{pmatrix} 
                                     \lambda_{1} & 0 \\ 0 & \lambda_{1} \end{pmatrix} \right) \right. \\
                               &&    \left. + A^{8(0)}_{\mu} \left( 
                                     \phi_{1}  \begin{pmatrix}
                                     \lambda_{4} & 0 \\ 0 & \lambda_{4} \end{pmatrix}
                                     - \phi_{3}  \begin{pmatrix} 
                                     \lambda_{4} & 0 \\ 0 & \lambda_{4} \end{pmatrix} \right) \right. \\
                               &&    \left. + A^{10(0)}_{\mu} \left( 
                                     \phi_{1}  \begin{pmatrix}
                                     \lambda_{6} & 0 \\ 0 & \lambda_{6} \end{pmatrix}
                                     - \phi_{3}  \begin{pmatrix} 
                                     \lambda_{6} & 0 \\ 0 & \lambda_{6} \end{pmatrix} \right) 
                                     \right) \nonumber
    \end{eqnarray*}
    Taking the adjoint $(D_{\mu} \Phi^{diag}_{min})^{\dagger}$ multiplying with 
    $(D_{\mu} \Phi^{diag}_{min})$ and taking the trace we obtain
    \begin{eqnarray*}
      && 
      \text{tr} \left[\left( D_{\mu}\Phi^{diag}_{min} \right)^{\dagger} 
         \left( D_{\mu}\Phi^{diag}_{min} \right) \right]  \nonumber \\
      & = & \frac{1}{16} \; g^{2} \;  \left( \left( A^{2(0)}_{\mu} \right)^{2}  \;
          \left( \phi_{1} - \phi_{2} \right)^{2} + \left( A^{5(0)}_{\mu} \right)^{2}  \;
          \left( \phi_{1} - \phi_{3} \right)^{2} + \left( A^{7(0)}_{\mu} \right)^{2}  \;
          \left( \phi_{2} - \phi_{3} \right)^{2} \right)
     \end{eqnarray*}
     Inserting the expressions for $\phi_{1},\phi_{2}$ and $\phi_{3}$ (\ref{definitionp0p6p11})
     we finally get
     \begin{eqnarray*} 
      \text{tr} \left[\left( D_{\mu}\Phi^{diag}_{min} \right)^{\dagger} 
         \left( D_{\mu}\Phi^{diag}_{min} \right) \right]_{\text{zero mode}}
      & = & \frac{1}{8} \; g^{2} \rho_{min}^{2}  \; \left( e^{a1} - e^{a2} \right)^{2}
           \left( A^{5(0)}_{\mu} \right)^{2} \\
      & + & \frac{1}{8} \; g^{2} \rho_{min}^{2}  \; \left( e^{a1} - e^{a3} \right)^{2} 
           \left( A^{8(0)}_{\mu} \right)^{2} \\
      & + & \frac{1}{8} \; g^{2} \rho_{min}^{2}  \; \left( e^{a3} - e^{a2} \right)^{2} 
           \left( A^{10(0)}_{\mu} \right)^{2} 
     \end{eqnarray*} 
     To summarise we have obtained
     \begin{equation}
      \label{masssquaredzmflavour}
      \begin{array}{|c|c|} \hline
       \text{Field} & \text{Mass squared} \\ \hline
       A^{5(0)}_{\mu} & \frac{1}{4} \; \frac{1}{R^{2}}   
       \; \left( e^{a1} - e^{a2} \right)^{2} \\ \hline
       A^{8(0)}_{\mu} & \frac{1}{4} \; \frac{1}{R^{2}}  
       \; \left( e^{a1} - e^{a3} \right)^{2}  \\ \hline
       A^{10(0)}_{\mu} & \frac{1}{4} \; \frac{1}{R^{2}}  
       \; \left( e^{a2} - e^{a3} \right)^{2}  \\ \hline
      \end{array} 
     \end{equation}
     where we have inserted $g \rho_{min}=1/R$. \\
     \\
     {\em{For an appropriate choice of the $a_{i}$, so that in particular $a_{1} \neq a_{2}$,
     $a_{1} \neq a_{3}$, $a_{2} \neq a_{3}$ and
     two $a_{i} \gg 1$, the zero mode flavour gauge fields will receive very large masses from the
     Higgs mechanism in comparison to the compactification scale $g \rho_{min}=1/R$}}. \\
     \\
     Next we calculate the masses for the first excited KK mode flavour gauge fields $A_{\mu}^{j(1)}$. 
     Inserting (\ref{flavouracomm}) in (\ref{flavourcov}) we get
     \begin{eqnarray*}
     D_{\mu} \Phi^{diag}_{min} & = & i \; \frac{1}{8} \; g \; \left( A^{5(1)}_{\mu} \left( 
                                     \phi_{1}  \begin{pmatrix}
                                     \lambda_{2} & 0 \\ 0 & \lambda_{2} \end{pmatrix}
                                     + \phi_{2}  \begin{pmatrix} 
                                     \lambda_{2} & 0 \\ 0 & \lambda_{2} \end{pmatrix} \right) \right. \\
                               &&    \left. + A^{8(1)}_{\mu} \left( 
                                     \phi_{1}  \begin{pmatrix}
                                     \lambda_{5} & 0 \\ 0 & \lambda_{5} \end{pmatrix}
                                     + \phi_{3}  \begin{pmatrix} 
                                     \lambda_{5} & 0 \\ 0 & \lambda_{5} \end{pmatrix} \right) \right. \\
                               &&    \left. + A^{10(1)}_{\mu} \left( 
                                     \phi_{1}  \begin{pmatrix}
                                     \lambda_{7} & 0 \\ 0 & \lambda_{7} \end{pmatrix}
                                     + \phi_{3}  \begin{pmatrix} 
                                     \lambda_{7} & 0 \\ 0 & \lambda_{7} \end{pmatrix} \right) 
                                     \right) \nonumber
    \end{eqnarray*}
    Taking the adjoint $(D_{\mu} \Phi^{diag}_{min})^{\dagger}$ multiplying with 
    $(D_{\mu} \Phi^{diag}_{min})$ and taking the trace we get
    \begin{eqnarray*}
      && 
      \text{tr} \left[\left( D_{\mu}\Phi^{diag}_{min} \right)^{\dagger} 
         \left( D_{\mu}\Phi^{diag}_{min} \right) \right]  \nonumber \\
      & = & \frac{1}{16} \; g \;  \left( \left( A^{5(1)}_{\mu} \right)^{2}  \;
          \left( \phi_{1} + \phi_{2} \right)^{2} + \left( A^{8(1)}_{\mu} \right)^{2}  \;
          \left( \phi_{1} + \phi_{3} \right)^{2} + \left( A^{10(1)}_{\mu} \right)^{2}  \;
          \left( \phi_{2} + \phi_{3} \right)^{2} \right)
    \end{eqnarray*}
    Inserting the expressions for $\phi_{1},\phi_{2}$ and $\phi_{3}$ (\ref{definitionp0p6p11})
    we finally get
    \begin{eqnarray*} 
      \text{tr} \left[\left( D_{\mu}\Phi^{diag}_{min} \right)^{\dagger} 
         \left( D_{\mu}\Phi^{diag}_{min} \right) \right]_{\text{first excited mode}}
      & = & \frac{1}{8} \; g^{2} \rho_{min}^{2}  \; \left( e^{a1} + e^{a2} \right)^{2}
           \left( A^{5(1)}_{\mu} \right)^{2} \\
      & + & \frac{1}{8} \; g^{2} \rho_{min}^{2}  \; \left( e^{a1} + e^{a3} \right)^{2} 
           \left( A^{8(1)}_{\mu} \right)^{2} \\
      & + & \frac{1}{8} \; g^{2} \rho_{min}^{2}  \; \left( e^{a3} + e^{a2} \right)^{2} 
           \left( A^{10(1)}_{\mu} \right)^{2} 
    \end{eqnarray*} 
    To summarise we have obtained
     \begin{equation}
      \label{masssquaredfemflavour}
      \begin{array}{|c|c|} \hline
       \text{Field} & \text{Mass squared} \\ \hline
       A^{5(1)}_{\mu} & \frac{1}{4} \; \frac{1}{R^{2}}  
       \; \left( e^{a1} + e^{a2} \right)^{2} \\ \hline
       A^{8(1)}_{\mu} & \frac{1}{4} \; \frac{1}{R^{2}}  
       \; \left( e^{a1} + e^{a3} \right)^{2}  \\ \hline
       A^{10(1)}_{\mu} & \frac{1}{4} \; \frac{1}{R^{2}}  
       \; \left( e^{a2} + e^{a3} \right)^{2}  \\ \hline
      \end{array} 
    \end{equation} 
    where we have inserted $g \rho_{min}=1/R$.

  \section{Suppression of tree-level FCNC}

    Looking at the different results  (\ref{masstermzerowboson}), (\ref{contextzwmass}), 
    (\ref{masssquaredfemw}), (\ref{masssquaredfemzp}), (\ref{masssquaredfemflavour}) and
    (\ref{masssquaredzmflavour}), 
    we observe that the strongest constraint in the model is the suppression of tree-level FCNC. 
    This suppression leads to the following conditions 
    \begin{enumerate}
     \item 
      \begin{equation}
       a_{1} \neq a_{2} \; , \;
       a_{1} \neq a_{3} \; , \; a_{2} \neq a_{3}
      \end{equation}
     \item
      \begin{equation}
       a_{i} \gg 1
      \end{equation}
      for at least for two $a_{i}$
    \end{enumerate}
    for the diagonal part of $\Phi_{min}$, see (\ref{masssquaredzmflavour}).   
    These conditions are also sufficient to give large masses to the first excited KK mode
    of SM gauge bosons, see (\ref{masssquaredfemw}) and (\ref{masssquaredfemzp}), and to the first 
    excited KK mode of flavour gauge boson, see (\ref{masssquaredfemflavour}).

 \chapter[Fermion masses and CKM mixing matrix]{Fermion masses and CKM mixing matrix in the $SU(7)$ model}

   \label{su7fermionmasses}

   In this chapter we describe how fermion masses are generated by the Higgs mechanism in
   the context of nonunitary parallel transporters $\Phi$.
   In general, fermion masses are given by Yukawa interactions. For example a term
   \begin{equation}
    \bar{q}_{L} \Phi q_{R} \; ,
   \end{equation}
   where $q_{L}$ and $q_{R}$ are given by (\ref{lefthandedquarks}) and (\ref{righthandedquarks}),
   respectively, will lead to a mass term for quarks. However, since quarks and leptons have different 
   masses their mass terms cannot be generated by the same nonunitary parallel transporter $\Phi$.
   In addition we will see that if the Higgs potential $V(\Phi)$ is minimised at a quasi
   $\mathcal{S}_{2}$ symmetric $\Phi_{min}$ it is not possible to obtain the correct quark and
   lepton masses nor the correct CKM mixing matrix. Therefore we make the following  
   \begin{proposal}
    \label{proposalyukawa}
    We introduce a nonunitary parallel transporter $\Phi^{quark}$ which leads to a mass term for
    the quarks via 
    \begin{equation}
     \bar{q}_{L} \; \Phi^{quark} \; q_{R} 
    \end{equation}
    and a nonunitary parallel transporter $\Phi^{lepton}$ which leads to a mass terms for the leptons via
    \begin{equation}
     \bar{l}_{L} \; \Phi^{lepton} \; l_{R} 
    \end{equation}
    and make a clear distinction between $\Phi^{quark}$ and $\Phi^{lepton}$. In particular
    $\Phi^{quark} \neq \Phi^{lepton} \neq \Phi^{gauge}$ where $\Phi^{gauge}=\Phi$
    is the nonunitary parallel transporter which leads to masses for the gauge bosons. 
   \end{proposal}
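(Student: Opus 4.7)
The plan is to motivate the Proposal not as a statement derivable from group theory alone, but as the minimal modification of the $SU(7)$ setup forced by confrontation with the measured fermion spectrum and the CKM matrix. The strategy is to first spell out what a single-$\Phi$ ansatz would predict for the Yukawa bilinears, then show that those predictions are incompatible with experiment in precisely the regime of $\Phi_{min}$ singled out in Chapter \ref{su7model}, and only then declare the sectoral split $\Phi^{quark}\neq\Phi^{lepton}\neq\Phi^{gauge}$ as the natural resolution.

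First I would write down the induced mass matrices. Using the KAK decomposition of Theorem \ref{theoremorbifoldconditionsfornupt} together with the parametrisation (\ref{minimumphi}) of $\Phi_{min}$, the bilinear $\bar q_L \Phi q_R$ produces a $6\times 6$ quark mass matrix whose diagonal entries are proportional to $\rho_{min}e^{a_i}$ and whose off-diagonal entries are controlled by the Wilson-line parameters $\alpha_{\hat a}^{\prime}$ of (\ref{a5vev}). The same computation applied to $\bar l_L \Phi l_R$ yields a lepton mass matrix built from the very same $\{a_i,\alpha_{\hat a}^{\prime}\}$; only the fixed embedding of the lepton multiplets in $SU(7)$ differs. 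Hence a single $\Phi$ would correlate quark and lepton spectra through a common overall scale $\rho_{min}$, making ratios such as $m_t/m_\tau\sim 10^{2}$ and $m_b/m_\tau\sim 2$ structurally incompatible.

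Second I would confront the quasi $\mathcal{S}_2$ symmetric regime $a_i=a_{i+3}$ ($i=1,2,3$) — precisely the regime used in Chapter \ref{su7model} to obtain an acceptable gauge boson spectrum and suppress tree-level FCNC — with the observed CKM mixing. In that regime the up-type and down-type $3\times 3$ blocks of the quark mass matrix are, to leading order in the $\alpha_{\hat a}^{\prime}$, proportional to the same diagonal matrix $\rho_{min}\,\mathrm{diag}(e^{a_1},e^{a_2},e^{a_3})$. A biunitary diagonalisation then collapses the CKM matrix to the identity at zeroth order and generates at first order only three independent small angles controlled by $\alpha_{\hat{43}}^{\prime},\alpha_{\hat{45}}^{\prime},\alpha_{\hat{47}}^{\prime}$, which is insufficient to fit the hierarchy $|V_{us}|\gg|V_{cb}|\gg|V_{ub}|$ and to accommodate a CP-violating Jarlskog invariant. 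The same obstruction simultaneously prevents reproducing the Dirac part of the charged-lepton mixing.

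Third, I would conclude by relaxing the single-$\Phi$ assumption. Introducing independent transporters $\Phi^{quark}$ and $\Phi^{lepton}$, each still satisfying the sharpened orbifold condition and the KAK-structure of Theorem \ref{theoremorbifoldconditionsfornupt} but carrying its own spectrum $\{a_i^{quark}\}$, $\{a_i^{lepton}\}$ and its own Wilson-line parameters, immediately restores enough independent entries to fit the charged-lepton, up-quark and down-quark spectra together with the CKM angles. The main obstacle will lie in the second step: turning the parameter-counting argument into a genuine no-go, i.e.\ proving that \emph{no} choice of $(a_i,\alpha_{\hat a}^{\prime})$ within a single quasi $\mathcal{S}_2$ symmetric $\Phi$ can reproduce even an order-of-magnitude Cabibbo angle. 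Once that is in place, the Proposal is forced, and the detailed determination of the parameters of $\Phi^{quark}$ and $\Phi^{lepton}$ from data is the programme carried out in the remainder of the chapter.
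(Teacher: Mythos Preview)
Your instinct that the Proposal is a modeling assumption motivated by phenomenology rather than a derivable theorem is correct, and the paper treats it exactly that way: there is no proof environment, only two sentences of motivation preceding the statement. Those two sentences are (i) quarks and leptons have different measured masses, so a common $\Phi$ cannot generate both; and (ii) a quasi $\mathcal{S}_2$ symmetric $\Phi_{min}$ --- the form required in Chapter~\ref{su7model} for the gauge sector --- cannot reproduce the correct fermion masses or the CKM matrix. The paper then states the Proposal and moves on.

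Your three-step argument is considerably more detailed than anything the paper offers, and its core is aligned with the paper's reasoning. In particular, your observation that under quasi $\mathcal{S}_2$ symmetry the up- and down-type diagonal blocks both equal $\rho_{min}\,\mathrm{diag}(e^{a_1},e^{a_2},e^{a_3})$ is precisely the obstruction the paper has in mind, and which it makes explicit a few lines later when it stipulates that $\Phi^{quark}_{min}$ and $\Phi^{lepton}_{min}$ must \emph{not} be taken quasi $\mathcal{S}_2$ symmetric, unlike $\Phi^{gauge}_{min}$. However, your further claim that the three parameters $\alpha_{\hat{43}}',\alpha_{\hat{45}}',\alpha_{\hat{47}}'$ are structurally insufficient to fit the CKM hierarchy and a nonzero Jarlskog invariant overshoots. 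The paper's own analysis later in the chapter shows that, once a non-$\mathcal{S}_2$-symmetric $\Phi^{quark}$ is allowed, these three parameters reproduce $s_{13}$ and $s_{23}$ correctly in the small-mixing-angle approximation, with only $s_{12}$ off by a factor of two and CP violation absent; relaxing $m_x$ is argued to cure both. So the ``genuine no-go'' you flag as the main obstacle in step two is both unnecessary --- the up/down mass degeneracy already forces $\Phi^{quark}\neq\Phi^{gauge}$ --- and stronger than what the paper establishes or needs.
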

   It is important that we have three different parallel transporters in the theory. This way we can
   get different quark and lepton masses as it is observed in nature. We may speculate that at the
   GUT scale $\Phi^{quark}=\Phi^{lepton}$. 

   Since we have three different parallel transporters in the theory, we also have three Higgs potentials
   $V(\Phi^{gauge})=V(\Phi)$, $V(\Phi^{quark})$ and $V(\Phi^{lepton})$. The minimum of $V(\Phi^{quark})$ 
   and $V(\Phi^{lepton})$ is denoted by 
   $\Phi_{min}^{quark}$ and $\Phi_{min}^{lepton}$, respectively. According to (\ref{minimumphi})
   they can be parameterised as  
   \begin{equation}
    \label{minimumphiquark}
    \Phi^{quark}_{min}=\rho^{quark}_{min} \begin{pmatrix} e^{a_{1}^{quark}} & 0 & 0 & 0 & 0 & 0 & 0 \\
                   0 & e^{a_{2}^{quark}}  & 0 & 0 & 0 & 0 & 0 \\  0 & 0 & e^{a_{3}^{quark}} & 0 & 0 & 0 & 0 \\
                   0 & 0 & 0 & e^{a_{4}^{quark}} & 0 & 0 & i \alpha_{\hat{43}}^{quark \; \prime} \\
                   0 & 0 & 0 & 0 & e^{a_{5}^{quark}} & 0 & i \alpha_{\hat{45}}^{quark \; \prime} \\
                   0 & 0 & 0 & 0 & 0 & e^{a_{6}^{quark}} & i \alpha_{\hat{47}}^{quark \; \prime} \\
                   0 & 0 & 0 & i \alpha_{\hat{43}}^{quark \; \prime} & i \alpha_{\hat{45}}^{quark \; \prime} 
                   & i \alpha_{\hat{47}}^{quark \; \prime} & e^{a_{7}^{quark}} \end{pmatrix}
    \end{equation} 
   and 
   \begin{equation}
    \label{minimumphilepton}
    \Phi^{lepton}_{min}=\rho^{lepton}_{min} \begin{pmatrix} e^{a_{1}^{lepton}} & 0 & 0 & 0 & 0 & 0 & 0 \\
                0 & e^{a_{2}^{lepton}}  & 0 & 0 & 0 & 0 & 0 \\  0 & 0 & e^{a_{3}^{lepton}} & 0 & 0 & 0 & 0 \\
                0 & 0 & 0 & e^{a_{4}^{lepton}} & 0 & 0 & i \alpha_{\hat{43}}^{lepton \; \prime} \\
                0 & 0 & 0 & 0 & e^{a_{5}^{lepton}} & 0 & i \alpha_{\hat{45}}^{lepton \; \prime} \\
                0 & 0 & 0 & 0 & 0 & e^{a_{6}^{lepton}} & i \alpha_{\hat{47}}^{lepton \; \prime} \\
                0 & 0 & 0 & i \alpha_{\hat{43}}^{lepton \; \prime} & i \alpha_{\hat{45}}^{lepton \; \prime} 
                  & i \alpha_{\hat{47}}^{lepton \; \prime} & e^{a_{7}^{lepton}} \end{pmatrix} \; .
   \end{equation}
   The factor $1/\sqrt{2}$ is absorbed in $\rho_{min}^{quark}$ and $\rho_{min}^{lepton}$, respectively. 
   We assume that $V(\Phi^{quark})$ and $V(\Phi^{lepton})$
   is minimised at non-trivial $\Phi^{quark}_{min}$ and $\Phi^{lepton}_{min}$, respectively,
   i.e. we assume 
   \begin{equation}
    a^{quark}_{i} \neq 0 \quad , \quad \alpha_{\hat{a}}^{quark \prime} \neq 0 \quad , \quad
    a^{lepton}_{i} \neq 0 \quad , \quad \alpha_{\hat{a}}^{lepton \prime} \neq 0
   \end{equation} 
   for $i=1,\dots,7$ and $\hat{a}=\hat{43},\hat{45},\hat{47}$ in (\ref{minimumphiquark}) and
   (\ref{minimumphilepton}). Note that in general $a^{quark}_{i} \neq  a^{lepton}_{i} 
   \neq  a^{gauge}_{i}=a_{i}$ and $\alpha_{\hat{a}}^{quark \prime} \neq \alpha_{\hat{a}}^{lepton \prime} 
   \neq  \alpha_{\hat{a}}^{gauge \prime}=\alpha_{\hat{a}}^{\prime}$. 
   {\em{However, in contrast to $\Phi_{min}^{gauge}$, 
   we do not assume that  $\Phi^{quark}_{min}$ and $\Phi^{lepton}_{min}$
   are  quasi $\mathcal{S}_{2}$ symmetric}}. 
   The minimum $\Phi^{quark}_{min}$ and $\Phi^{lepton}_{min}$ of the Higgs potential 
   $V(\Phi^{quark})$ and $V(\Phi^{lepton})$, respectively, will fix the quark and lepton masses. In addition,
   $\Phi^{quark}_{min}$ also fixes the CKM mixing matrix. In the next section we will review
   the CKM matrix in the SM and in particular some familiar parametrisations of the CKM matrix. This review 
   will also clarify the notation and conventions we use in this 
   chapter.

  \section{Quark masses and the CKM mixing matrix in the SM}

   In the SM the Yukawa interactions for quarks is given by \cite{Hocker:2006xb}
   \begin{equation}
    \label{smyukawawm}
    \mathcal{L}_{Y}=-Y^{d}_{ij} \bar{Q}_{Li} \phi \; d_{Rj}-Y^{u}_{ij} 
                     \bar{Q}_{Li}  \epsilon \phi^{\ast} u_{Rj} + h.c. \; ,
   \end{equation} 
   where $Y^{u,d}$ are  complex $3 \times 3$ matrices, $\phi$ is the 
   SM Higgs field, $i$ and $j$ are generation labels, $\epsilon$ is the $2 \times 2$ antisymmetric tensor, 
   $Q_{L}$ are the left-handed quark 
   doublets and $d_{R}$, $u_{R}$ are the right-handed down- and up-type quark
   singlets, respectively.  When $\phi$ acquires a VEV $\langle \phi \rangle=\left( 0
   \atop v/\sqrt{2} \right)$ the Yukawa interactions yields Dirac mass
   terms for the quarks
   \begin{equation}
    \label{yukawasm}
    M^{u}=\frac{v Y^{u}}{\sqrt{2}} \quad , \quad
    M^{d}=\frac{v Y^{d}}{\sqrt{2}} \; ,
   \end{equation}
   where $M^{u}$ is the $3 \times 3$ up-type quark mass matrix and
   $M^{d}$ is the $3 \times 3$ down-type quark mass matrix. Since $M^{u}$ and $M^{d}$ are given 
   in the basis of flavour eigenstates we must diagonalise $M^{u}$ and
   $M^{d}$ by biunitary transformations
   \begin{gather}
    U_{L}^{u} M^{u} U_{R}^{u \dagger}=M^{u}_{diag} \\
    U_{L}^{d} M^{d} U_{R}^{d \dagger}=M^{d}_{diag}
   \end{gather}
   in order to move to the basis of mass eigenstates. 
   The CKM matrix $V_{CKM}$ is then given by
   \begin{equation}
    V_{CKM}=U_{L}^{u} U_{L}^{d \dagger}=\left( \begin{array}{ccc} 
                                         V_{ud} & V_{us} & V_{ub} \\
                                         V_{cd} & V_{cs} & V_{cb} \\
                                         V_{td} & V_{ts} & V_{tb} \\
                                         \end{array} \right) 
   \end{equation}
   and transforms electroweak eigenstates $(d^{\prime},s^{\prime},b^{\prime})$
   into mass eigenstates $(d,s,b)$. Since $U_{L}^{u},U_{L}^{d \dagger}$ are unitary  $3 \times 3$ matrices 
   $V_{CKM}$ is a unitary $3 \times 3$ matrix too. This feature ensures 
   the absence of tree-level FCNC in the SM.

 \subsection{Standard parametrisation and unitarity of the CKM matrix}

    Any unitary $3 \times 3$ matrix can be parametrised by three angles and
    six phases. Using the freedom to redefine the up- and down-type quarks fields one can remove 
    five unphysical phases. The CKM matrix can therefore be written
    as the product of three Euler matrices 
    \begin{equation} 
     \label{standardparaCKM}
     V_{CKM}=R_{23} U_{13} R_{12}=\left( \begin{array}{ccc} 
             c_{12}c_{13} & s_{12}c_{13} & s_{13}e^{-i\delta_{13}} \\
             -s_{12}c_{23}-c_{12}s_{23}s_{13}e^{i\delta_{13}} &
             c_{12}c_{23}-s_{12}s_{23}s_{13}e^{i\delta_{13}} & s_{23}c_{13} \\
             s_{12}s_{23}-c_{12}c_{23}s_{13}e^{i\delta_{13}} &
             -c_{12}s_{23}-s_{12}c_{23}s_{13}e^{i\delta_{13}} & c_{23}c_{13}
             \end{array} \right)  \;.
    \end{equation}
    where $c_{ij}=\cos{\theta_{ij}}$, $s_{ij}=\sin{\theta_{ij}}$ and
    $\delta_{13}$ is the CP violating phase. This parametrisation is known as the
    standard parametrisation \cite{Chau:1984fp}. The Euler matrices are given by 
    \begin{gather}
     \label{eulermatrices}
     R_{23}=\left( \begin{array}{ccc} 1 & 0 & 0 \\ 0 & c_{23} & s_{23} \\
                   0 & -s_{23} & c_{23} \end{array} \right)  \quad , \quad 
     U_{13}=\left( \begin{array}{ccc} c_{13} & 0 & s_{13}e^{-i\delta_{13}} \\
                   0 & 1 & 0 \\ -s_{13}e^{i\delta_{13}} & 0 & c_{13} 
                   \end{array} \right) \; , \\
     R_{12}=\left( \begin{array}{ccc} c_{12} & s_{12} & 0 \\
                   -s_{12} & c_{12} & 0 \\ 0 & 0 & 1 \end{array} \right) \nonumber
    \end{gather}          
    where $U_{13}$ is a complex Euler matrix and both $R_{23}$ and $R_{12}$ are real Euler matrices.
    The advantage of this parametrisation is that the mixing between
    two generations $i,j$ vanishes if the corresponding mixing angle 
    $\theta_{ij}$ it set to zero. Note that the standard parametrisation 
    satisfies exactly the unitarity relations.

    In the $SU(7)$ model we have tree-level FCNC mediated by the $SO(3)_{F}$
    flavour gauge bosons. This  tree-level FCNC violate the unitarity of the CKM matrix. 
    However this violation is extremely small due to the large flavour gauge bosons masses.
    The current experimental constraints on the unitarity of the CKM matrix are \cite{Hocker:2006xb}
    \begin{gather}
     \mid V_{ud} \mid^{2} +  \mid V_{us} \mid^{2} +  \mid V_{ub} \mid^{2}-1=-0.0008 \pm 0.0011 
     \quad \text{first row} \\
     \mid V_{cd} \mid^{2} +  \mid V_{cs} \mid^{2} +  \mid V_{cb} \mid^{2}-1=-0.03 \pm 0.18
     \quad \text{second row} \\
     \mid V_{ud} \mid^{2} +  \mid V_{cd} \mid^{2} +  \mid V_{td} \mid^{2}-1=-0.001 \pm 0.005
     \quad \text{first column} 
    \end{gather}
    The sum in the second column $\mid V_{us} \mid^{2} +  \mid V_{cs} \mid^{2} +  \mid V_{ts} \mid^{2}$ 
    is practically identical to that in the second row, as the errors in both cases are dominated 
    by $\mid V_{cs} \mid$. These experimental constraints show that the CKM matrix is almost unitary.  
    For simplicity we treat $V_{CKM}$ as a unitary $3 \times 3$ matrix in the following calculations.

  \subsection{Wolfenstein parametrisation of the CKM matrix}

    For later calculations we introduce another familiar parametrisation of the CKM matrix known
    as Wolfenstein parametrisation \cite{Wolfenstein:1983yz}.
    Following the observation that the mixing angles $s_{12},s_{13},s_{23}$ fulfil the
    hierarchy $s_{13} \ll s_{23} \ll s_{12} \ll 1$ Wolfenstein proposed an expansion of the
    CKM matrix in terms of the four parameters $\lambda,A,\rho$ and $\eta$ defined by
    \begin{equation}
     s_{12}:=\lambda \; , \quad s_{23}:=A \lambda^{2} \; , \quad 
     s_{13}:=A \lambda^{3} \left( \rho- i\eta \right)
    \end{equation}
    If we insert these definitions into (\ref{standardparaCKM}) we obtain a
    parametrisation of the CKM matrix as a function of $\lambda,A,\rho$ and $\eta$. If we expand all
    elements of the CKM matrix in powers of the small parameter $\lambda$ we get
    \begin{equation}
      V_{CKM}= \left( \begin{array}{ccc} 1-\frac{1}{2} \lambda^{2} 
               & \lambda & A \lambda^{3} (\rho-i\eta) \\
               -\lambda & 1-\frac{1}{2} \lambda^{2} & A \lambda^{2} \\  
               A \lambda^{3}(1-\rho-i\eta) & -A \lambda^{2} & 1 \\
               \end{array} \right) + \mathcal{O} \left( \lambda^{4}\right)
    \end{equation}
    Neglecting the terms of $\mathcal{O} \left( \lambda^{4}\right)$, this parametrisation is known as 
    Wolfenstein parametrisation. In the following calculations we use the numerical values
    \cite{Hocker:2006xb}
    \begin{gather}
     \label{ckmmixinganglescpphase}
     \lambda=s_{12}=0.22 \; , \quad  A \lambda^{2}=s_{23}=0.042 \\
     A \lambda^{3} (\rho-i\eta)=s_{13} \; e^{-i \delta_{13}} \; , \quad
     s_{13}=0.0039 \; , \quad \delta_{13}=\frac{2 \pi}{3}=60^{\circ}
     \nonumber
    \end{gather}

  \section{CKM mixing matrix and quark masses from $\Phi^{quark}_{min}$}

    \label{ckmandquarksection}
    
    As already mentioned in the introduction the quark masses and the CKM matrix are determined by
    \begin{equation}
     \bar{q}_{L} \Phi^{quark}_{min} q_{R} \; ,
    \end{equation}
    where 
    \begin{equation}
     \Phi^{quark}_{min}=\rho^{quark}_{min} \begin{pmatrix} e^{a_{1}^{quark}} & 0 & 0 & 0 & 0 & 0 & 0 \\
                   0 & e^{a_{2}^{quark}}  & 0 & 0 & 0 & 0 & 0 \\  0 & 0 & e^{a_{3}^{quark}} & 0 & 0 & 0 & 0 \\
                   0 & 0 & 0 & e^{a_{4}^{quark}} & 0 & 0 & i \alpha_{\hat{43}}^{quark \; \prime} \\
                   0 & 0 & 0 & 0 & e^{a_{5}^{quark}} & 0 & i \alpha_{\hat{45}}^{quark \; \prime} \\
                   0 & 0 & 0 & 0 & 0 & e^{a_{6}^{quark}} & i \alpha_{\hat{47}}^{quark \; \prime} \\
                   0 & 0 & 0 & i \alpha_{\hat{43}}^{quark \; \prime} & i \alpha_{\hat{45}}^{quark \; \prime} 
                   & i \alpha_{\hat{47}}^{quark \; \prime} & e^{a_{7}^{quark}} \end{pmatrix} \; ,
    \end{equation} 
    see (\ref{minimumphiquark}). $\Phi^{quark}_{min}$ fixes not only the quark masses but also the 
    CKM mixing matrix. This we will now be explained.

    First, comparing with (\ref{yukawasm}), the upper $3 \times 3$ submatrix
    of $\Phi^{quark}_{min}$ gives the up-type quark mass matrix
    \begin{equation}
     M^{u}=\left( \begin{array}{ccc} m_{u} & 0 & 0 \\ 0 & m_{c} & 0 \\
                  0 & 0 & m_{t} \end{array} \right)
    \end{equation}
    where 
    \begin{equation}
      \label{uptypequarkmassessu7}
      m_{u}=\rho^{quark}_{min} e^{a^{quark}_{1}} \; , \quad
      m_{c}=\rho^{quark}_{min} e^{a^{quark}_{2}} \; , \quad
      m_{t}=\rho^{quark}_{min} e^{a^{quark}_{3}} \; .
    \end{equation}
    \\
    {\bf{Conclusion: }} 
    {\em{The up-type quark masses $m_{u},m_{c},m_{t}$ are given by the
    diagonal part of  $\Phi^{quark}_{min}$ only.}} \\
    \\
    This result is in contrast to the SM where both up- and down-type quark masses are given by the
    (same) SM Higgs doublet (\ref{smyukawawm}).
   
    Second, the lower $4 \times 4$ submatrix of $\Phi^{quark}_{min}$ 
    \begin{equation}
     M=\left( \begin{array}{cccc} 
      \label{4t4mdmatrix}
      \tilde{m}_{d} & 0 & 0 & 
      i \; k_{\hat{43}} \; \langle \mathcal{A}_{y}^{\hat{43}(0)} \rangle^{quark} \\
      0 & \tilde{m}_{s} & 0 & 
      i \; k_{\hat{45}} \; \langle \mathcal{A}_{y}^{\hat{45}(0)} \rangle^{quark} \\
      0 & 0 & \tilde{m}_{b} & 
      i \; k_{\hat{47}} \; \langle \mathcal{A}_{y}^{\hat{47}(0)} \rangle^{quark} \\
      i \; k_{\hat{43}} \; \langle \mathcal{A}_{y}^{\hat{43}(0)} \rangle^{quark} &
      i \; k_{\hat{45}} \; \langle \mathcal{A}_{y}^{\hat{45}(0)} \rangle^{quark} &
      i \; k_{\hat{47}} \; \langle \mathcal{A}_{y}^{\hat{47}(0)} \rangle^{quark} &
      m_{x} \end{array} \right)
    \end{equation}
    where 
    \begin{equation}
      \label{summarymdquarks}
      \tilde{m}_{d}=\rho^{quark}_{min} e^{a^{quark}_{4}} \; , \quad
      \tilde{m}_{s}=\rho^{quark}_{min} e^{a^{quark}_{5}} \; , \quad
      \tilde{m}_{b}=\rho^{quark}_{min} e^{a^{quark}_{6}} \; , \quad
      m_{x}=\rho^{quark}_{min} e^{a^{quark}_{7}}
    \end{equation}
    and 
    \begin{equation}
     i \; k_{\hat{a}} \; \langle \mathcal{A}_{y}^{\hat{a}(0)} \rangle^{quark}=i \; \rho^{quark}_{min}
     \alpha_{\hat{a}}^{quark \; \prime}
    \end{equation}
    for $\hat{a}=\hat{43},\hat{45},\hat{47}$ will lead to the down-type
    quark mass matrix $M^{d}$. In (\ref{4t4mdmatrix}) we have
    introduced the VEVs
    \begin{equation}
     \label{vevsthreehiggsdoubletsquark}
     \langle \mathcal{A}_{y}^{\hat{43}(0)} \rangle^{quark}=\frac{\alpha_{\hat{43}}^{quark}}{g_{4} R}
     \quad , \quad
     \langle \mathcal{A}_{y}^{\hat{45}(0)} \rangle^{quark}=\frac{\alpha_{\hat{45}}^{quark}}{g_{4} R} 
     \quad , \quad
     \langle \mathcal{A}_{y}^{\hat{47}(0)} \rangle^{quark}=\frac{\alpha_{\hat{47}}^{quark}}{g_{4} R}  \; .
    \end{equation}
    in analogy with (\ref{vevsthreehiggsdoublets}). The $k_{\hat{a}}$ in (\ref{4t4mdmatrix}) are given by
    \footnote{This follows with $g \rho_{min}=1/R$, $g=g_{5}/\sqrt{\pi R}$, $g_{4}=g_{5}/\sqrt{2 \pi R}$,
    $\alpha_{\hat{43}}^{quark \prime}=\alpha_{\hat{43}}^{quark} \left(e^{a_{4}} + e^{a_{7}} \right)$
    $\alpha_{\hat{45}}^{quark \prime}=\alpha_{\hat{45}}^{quark} \left(e^{a_{5}} + e^{a_{7}} \right)$
    and $\alpha_{\hat{47}}^{quark \prime}=\alpha_{\hat{47}}^{quark} \left(e^{a_{6}} + e^{a_{7}} \right)$}
    \begin{equation}
     k_{\hat{43}}=\frac{\rho^{quark}_{min}}{\sqrt{2} \; \rho_{min}}
                  \left( e^{a_{4}} + e^{a_{7}} \right) 
     \; , \; 
     k_{\hat{45}}=\frac{\rho^{quark}_{min}}{\sqrt{2} \; \rho_{min}}
                  \left( e^{a_{5}} + e^{a_{7}} \right) 
     \; , \; 
     k_{\hat{47}}=\frac{\rho^{quark}_{min}}{\sqrt{2} \; \rho_{min}}
                  \left( e^{a_{6}} + e^{a_{7}} \right) 
    \end{equation}
    In order to obtain $M^{d}$ we have to bring $M$ on block diagonal form
    by a unitary transformation
    \begin{equation}
     \label{blockdiagonal}
     M \to U \; M \; U^{\dagger}=\left( \begin{array}{cc} M^{d} & 0 \\
                                0 & \tilde{m}_{x} \end{array} \right) \; ,
    \end{equation}
    where $M^{d}$ is the desired $3 \times 3$ down-type quark mass matrix.
    In a second step we diagonalise $M^{d}$ by a second unitary transformation
    \begin{equation}
     \label{mdtrafo}
     U^{d} \; M^{d} \; U^{d \dagger}=M^{d}_{diag} \; .
    \end{equation}
    Since $M^{u}$ is already diagonal the corresponding transformation
    for the up-type mass matrix $M^{u}$ is trivial
    \begin{equation}
     U^{u} \; M^{u} \; U^{u \dagger}=M^{u}_{diag}=M^{u} \; .
     \quad \longrightarrow \quad U^{u}=1
    \end{equation}
    Thus the CKM matrix is given by
    \begin{equation}
     V_{CKM}=U^{u} U^{d \dagger}=U^{d \dagger}  \; .
    \end{equation}  

    In the following calculations we use the numerical values   
    \cite{Eidelman:2004wy}   
    \begin{gather}
     m_{u}=1.5 \; \text{to} \; 4.5 \; \text{MeV} \quad , \quad
     m_{d}=5 \; \text{to} \; 8.5 \; \text{MeV} \\
     m_{c}=1.0 \; \text{to} \; 1.4 \; \text{GeV} \quad , \quad
     m_{s}=80 \; \text{to} \; 155 \; \text{MeV} \nonumber \\
     m_{t}=174.3 \; \pm \; 5.1 \; \text{GeV} \quad , \quad
     m_{b}=4.0 \; \text{to} \; 4.5 \; \text{GeV}  \nonumber 
    \end{gather}
    In particular for later calculations we make the (arbitrary) choice 
    \begin{equation}
     m_{d}=7.4 \; \text{MeV} \quad , \quad
     m_{s}=114.1 \; \text{MeV} \quad , \quad
     m_{b}=4250 \; \text{MeV} \; .
    \end{equation}

 \section{CKM mixing matrix from $M_{d}$}

    \label{ckmfrommd}

    According to (\ref{standardparaCKM}) $V_{CKM}$ can expressed as
    \begin{equation}
     V_{CKM}=R_{23} U_{13} R_{12} 
    \end{equation}
    where $R_{23}$, $U_{13}$ and $R_{12}$ are given by (\ref{eulermatrices}).
    We introduce the phase matrix
    \begin{equation}
     \label{p13}
     P_{13}=\left( \begin{array} {ccc} e^{i \phi_{1}} & 0 & 0 \\
            0 & 1 & 0 \\ 0 & 0 & e^{i \phi_{3}} \end{array} \right)
    \end{equation}
    where the phases $\phi_{1}$ and $\phi_{3}$ have to fulfil the constraint
    \begin{equation}
     \label{phasecon}
     \delta_{13}=\phi_{1}-\phi_{3} \; .
    \end{equation}
    Recall that $\delta_{13}$ is the CP violating phase and occurs in
    \begin{equation}
     U_{13}=\left( \begin{array}{ccc} c_{13} & 0 & s_{13}e^{-i\delta_{13}} \\
                   0 & 1 & 0 \\ -s_{13}e^{i\delta_{13}} & 0 & c_{13} 
                   \end{array} \right) \; .
    \end{equation}
    Without loss of generality let $\phi_{1}$ in (\ref{p13}) be arbitrary. The phase $\phi_{3}$ in 
    (\ref{p13})
    is then fixed by the constraint (\ref{phasecon}). With the help of (\ref{p13}) we can rewrite
    \begin{equation}
     U_{13}=P_{13}^{\ast} R_{13} P_{13}
    \end{equation}
    where $R_{13}$ is the corresponding real Euler matrix given by
    \begin{equation}
     R_{13}=\left( \begin{array}{ccc} c_{13} & 0 & s_{13} \\
                   0 & 1 & 0 \\ -s_{13} & 0 & c_{13} 
                   \end{array} \right) \; .
    \end{equation}
    The CKM matrix $V_{CKM}$ can then be written as the product
    \begin{equation}
     \label{vckmexpansion}
     V_{CKM}=R_{23} P_{13}^{\ast} R_{13} P_{13} R_{12} \; .
    \end{equation}
 
    \begin{lemma}
     Let $M^{d}$ be given by
     \begin{equation}
      M^{d}=\left( \begin{array}{ccc} \tilde{m}^{\prime}_{d} & m_{12} 
           & m_{13} \\ m_{12}^{\ast} & \tilde{m}^{\prime}_{s} & m_{23} \\  
             m_{13}^{\ast} & m_{23} & m_{b} \\ \end{array} \right)
     \end{equation}
     where 
     \begin{eqnarray}
      \label{ckmmatrixoffdiagonalelements}
      & \tilde{m}_{d}^{\prime}=13.4 \; \text{MeV}  
      & m_{12}=\hat{m}_{12} + \hat{m}_{13} s_{23} e^{i \; \delta_{13}}
              \approx 24.46 \; \text{MeV} \nonumber \\
      & \tilde{m}_{s}^{\prime}=119.2 \; \text{MeV} 
      &  m_{13}=\hat{m}_{12} s_{23} - \hat{m}_{13} e^{i \; \delta_{13}}
               \approx 16.65 \; e^{i \; \frac{2 \pi}{3}} \text{MeV} 
               \nonumber \\
      & m_{b}=4250 \; \text{MeV} 
      & m_{23}=173.8 \; \text{MeV} \; ,
     \end{eqnarray}
     and
     \begin{equation}
      \label{valueoffdiagonalelelmentscomplexckm}
      \hat{m}_{12}=24.8 \; \text{MeV}
      \quad , \quad \hat{m}_{13}=16.1 \; \text{MeV} 
      \quad , \quad \delta_{13}=\frac{2 \pi}{3}  \; .
     \end{equation}
     The unitary transformation 
     \begin{equation}
      \label{ckmlemmaunit}
      V_{CKM}^{\dagger} \; M^{d} \; V_{CKM}=M^{d}_{diag} \; .
     \end{equation}
     leads to the CKM matrix
     \begin{equation}
      V_{CKM}= \left( \begin{array}{ccc} 1-\frac{1}{2} \lambda^{2} 
               & \lambda & A \lambda^{3} (\rho-i\eta) \\
               -\lambda & 1-\frac{1}{2} \lambda^{2} & A \lambda^{2} \\  
               A \lambda^{3}(1-\rho-i\eta) & -A \lambda^{2} & 1 \\
               \end{array} \right)
     \end{equation}
     where $\lambda=s_{12}=0.22$, $A \lambda^{2}=s_{23}=0.042$, 
     $A \lambda^{3} (\rho-i\eta)=s_{13} \; e^{-i \delta_{13}}$, 
     $s_{13}=0.0039$ and $\delta_{13}=\frac{2 \pi}{3}$.
   \end{lemma}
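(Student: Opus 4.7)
The plan is to verify the lemma by diagonalizing $M^d$ in five stages that mirror the factorization $V_{CKM} = R_{23} P_{13}^\ast R_{13} P_{13} R_{12}$ from (\ref{vckmexpansion}). Starting from the innermost layer of the similarity $V_{CKM}^\dagger M^d V_{CKM}$, I would compute in turn $M^{(1)} = R_{23}^T M^d R_{23}$, then $M^{(2)} = P_{13} M^{(1)} P_{13}^\ast$, then $M^{(3)} = R_{13}^T M^{(2)} R_{13}$, then $M^{(4)} = P_{13}^\ast M^{(3)} P_{13}$, and finally $M^{(5)} = R_{12}^T M^{(4)} R_{12}$, fixing the mixing angle or phase at each stage so that one prescribed off-diagonal entry is annihilated, and check that $M^{(5)} = \operatorname{diag}(m_d, m_s, m_b)$.

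The first step carries most of the structural content of the lemma. Since $R_{23}$ acts only on the (2,3) block, its angle is fixed by $\tan\theta_{23} \approx m_{23}/(m_b - \tilde m_s') \approx 173.8/4131 \approx 0.042$, matching $s_{23}$ to Wolfenstein accuracy. The nontrivial point is that the specific decomposition $m_{12} = \hat m_{12} + \hat m_{13} s_{23} e^{i\delta_{13}}$ and $m_{13} = \hat m_{12} s_{23} - \hat m_{13} e^{i\delta_{13}}$ in (\ref{ckmmatrixoffdiagonalelements}) is engineered precisely so that, after the $R_{23}$ similarity, the (1,2) entry of $M^{(1)}$ reduces to $\hat m_{12}$ and the (1,3) entry reduces to $-\hat m_{13} e^{i\delta_{13}}$, modulo $\mathcal{O}(s_{23}^2)$ corrections. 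The subsequent rephasing by $P_{13}$ with $\phi_1 - \phi_3 = \delta_{13}$ then absorbs the CP phase, so that $M^{(2)}$ is real symmetric at leading order.

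The remaining three steps are elementary $2\times 2$ manipulations. $R_{13}$ with $\tan\theta_{13} \approx \hat m_{13}/m_b \approx 0.0039 \approx s_{13}$ annihilates the (1,3) entry; $P_{13}^\ast$ is essentially cosmetic at this point because $M^{(3)}$ has no nontrivial complex structure left at the working order; and $R_{12}$ with $\tan\theta_{12} \approx \hat m_{12}/(\tilde m_s' - \tilde m_d')$, which evaluates to roughly $0.22 = \lambda$ within Wolfenstein accuracy, completes the diagonalization. The resulting eigenvalues emerge as $\tilde m_d' - \hat m_{12}\tan\theta_{12} \approx m_d$, $\tilde m_s' + \hat m_{12}\tan\theta_{12} \approx m_s$, and $m_b$ up to corrections of order $\hat m_{13}^2/m_b$, agreeing with the quark masses quoted in Section \ref{ckmandquarksection}.

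The main obstacle is the phase bookkeeping: the CP phase $\delta_{13}$ is split between $P_{13}$ and its conjugate, and one must verify that neither the rephasing in Step 2 nor its undoing in Step 4 reintroduces unwanted phases into the surviving off-diagonal slots or into the diagonal. A secondary subtlety is that all of the claimed equalities hold only to $\mathcal{O}(\lambda^4)$; the numerical values in (\ref{valueoffdiagonalelelmentscomplexckm}) were evidently chosen so that the leading-order Wolfenstein parametrization is reproduced exactly, and one should not expect higher accuracy from a naive term-by-term verification. Once these two points are handled, the proof reduces to three independent small-angle rotations, each determined by a single ratio of entries of $M^d$, and the match with (\ref{ckmmixinganglescpphase}) can be read off directly.
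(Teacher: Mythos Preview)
Your proposal is correct and follows essentially the same five-step sequence as the paper's proof: $R_{23}$, $P_{13}$, $R_{13}$, $P_{13}^\ast$, $R_{12}$, with each rotation angle fixed by annihilating the appropriate off-diagonal entry, and you correctly identify that the decomposition of $m_{12},m_{13}$ in terms of $\hat m_{12},\hat m_{13}$ is engineered precisely so that the first rotation produces one real and one pure-phase entry in the first row.

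One small correction: your description of Step~4 as ``essentially cosmetic'' understates its role. After Step~2 the $(1,2)$ entry carries the phase $e^{i\phi_1}$ (since $P_{13}$ acts on row~1), and this phase survives the $R_{13}$ rotation unchanged; the conjugate rephasing $P_{13}^\ast$ in Step~4 is precisely what removes it so that the final $R_{12}$ rotation can be taken real. This is exactly the ``phase bookkeeping'' you flag as the main obstacle, so you are aware of the issue, but it is worth noting that Step~4 is not optional. A second minor point: in your $s_{12}$ estimate you use $\tilde m_s' - \tilde m_d'$, whereas by that stage the diagonal entries have already been shifted by the earlier rotations to $\tilde m_s \approx 111.9$~MeV and $\tilde m_d \approx 13.3$~MeV; the numerical difference is small but the paper tracks it.
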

   The proof can be found in Appendix \ref{appendixckmfrommd}.

  \section{Obtaining $M^{d}$ from $M$}

   In this section we finally describe how the down-type quark mass matrix $M^{d}$ is obtained
   from (\ref{4t4mdmatrix}) 
   \begin{equation*}
     M=\left( \begin{array}{cccc} 
      \tilde{m}_{d} & 0 & 0 & 
      i \; k_{\hat{43}} \; \langle \mathcal{A}_{y}^{\hat{43}(0)} \rangle^{quark} \\
      0 & \tilde{m}_{s} & 0 & 
      i \; k_{\hat{45}} \; \langle \mathcal{A}_{y}^{\hat{45}(0)} \rangle^{quark} \\
      0 & 0 & \tilde{m}_{b} & 
      i \; k_{\hat{47}} \; \langle \mathcal{A}_{y}^{\hat{47}(0)} \rangle^{quark} \\
      i \; k_{\hat{43}} \; \langle \mathcal{A}_{y}^{\hat{43}(0)} \rangle^{quark} &
      i \; k_{\hat{45}} \; \langle \mathcal{A}_{y}^{\hat{45}(0)} \rangle^{quark} &
      i \; k_{\hat{47}} \; \langle \mathcal{A}_{y}^{\hat{47}(0)} \rangle^{quark} &
      m_{x} \end{array} \right)
   \end{equation*}
   It is convenient to introduce the following abbreviation
   \begin{equation}
     \label{summaryabbre}
     \langle \mathcal{A}_{y}^{\hat{a}} \rangle^{q}_{k}:= 
     k_{\hat{a}} \; \langle \mathcal{A}_{y}^{\hat{a}(0)} \rangle^{quark}
   \end{equation}
   for $\hat{a}=\hat{43},\hat{45},\hat{47}$. With this abbreviation $M$ reads
   \begin{equation}
    \label{matrixMcalculations}
    M=\left( \begin{array}{cccc} 
     \tilde{m}_{d} & 0 & 0 &  \langle \mathcal{A}_{y}^{\hat{43}} \rangle^{q}_{k}
                              \; e^{i \; \frac{\pi}{2}}\\
     0 & \tilde{m}_{s} & 0 &  \langle \mathcal{A}_{y}^{\hat{45}} \rangle^{q}_{k}
                              \; e^{i \; \frac{\pi}{2}} \\
     0 & 0 & \tilde{m}_{b} &  \langle \mathcal{A}_{y}^{\hat{47}} \rangle^{q}_{k}
                              \; e^{i \; \frac{\pi}{2}} \\
     \langle \mathcal{A}_{y}^{\hat{43}} \rangle^{q}_{k} \; e^{i \; \frac{\pi}{2}} &
     \langle \mathcal{A}_{y}^{\hat{45}} \rangle^{q}_{k} \; e^{i \; \frac{\pi}{2}} &
     \langle \mathcal{A}_{y}^{\hat{47}} \rangle^{q}_{k}  \; e^{i \; \frac{\pi}{2}} &
     m_{x} \end{array} \right) 
   \end{equation}
   where we have written all phases explicitly. As already explained in section 
   \ref{ckmandquarksection} in order to obtain the $3 \times 3$ down-type quark mass 
   matrix $M^{d}$ we must transform $M$ to block 
   diagonal form
   \begin{equation}
    \label{unitarytrafo}
    M \to U \; M \; U^{\dagger}=\left( \begin{array}{cc} M^{d} & 0 \\
                                0 & \tilde{m}_{x} \end{array} \right) \; .
   \end{equation}
   We write $U$ as a product of a phase matrix and three Euler matrices
   \begin{equation}
    \label{summarymatrixudagger}
    U^{\dagger}=P_{34} R_{34} R_{24} R_{14}
   \end{equation}
   where
   \begin{gather}
    \label{mixinganglesmmd}
    P_{34}=\left( \begin{array}{cccc} 1 & 0 & 0 & 0 \\ 0 & 1 & 0 & 0 \\ 
           0 & 0 & e^{i \; \phi_{3}} & 0 \\ 0 & 0 & 0 & e^{i \; \phi_{4}}
           \end{array} \right) \; , \quad
    R_{34}=\left( \begin{array}{cccc} 1 & 0 & 0 & 0 \\
                  0 & 1 & 0 & 0 \\ 0 & 0 & c_{34} & s_{34} \\
                  0 & 0 & -s_{34} & c_{34} \end{array} \right) \\
    R_{24}=\left( \begin{array}{cccc} 1 & 0 & 0 & 0 \\ 
                  0 & c_{24} & 0 & s_{24} \\ 0 & 0 & 1 & 0 \\
                  0 & -s_{24} & 0 & c_{24} 
                  \end{array} \right) \; , \quad
    R_{14}=\left( \begin{array}{cccc} c_{14} & 0 & 0 & s_{14} \\
                   0 & 1 & 0 & 0 \\ 0 & 0 & 1 & 0 \\
                  -s_{14} & 0 & 0 & c_{14} \end{array} \right) \nonumber
   \end{gather}   
   Inserting this expansion in (\ref{unitarytrafo}) we get
   \begin{equation}
    \label{mmdcalcu}
    R^{t}_{14} R^{t}_{24} R^{t}_{34} P_{34}^{\ast} \; M \; P_{34} R_{34} R_{24} R_{14}=
                              \left( \begin{array}{cc} M^{d} & 0 \\
                                0 & \tilde{m}_{x} \end{array} \right) \; .
   \end{equation}
   The matrix (\ref{matrixMcalculations}) has seven undetermined parameters: The diagonal elements
   $\tilde{m}_{d},\tilde{m}_{s},\tilde{m}_{b},m_{x}$ and the off-diagonal elements 
   $\langle \mathcal{A}_{y}^{\hat{43}} \rangle^{q}_{k}$, $\langle \mathcal{A}_{y}^{\hat{45}} \rangle^{q}_{k}$,
   $\langle \mathcal{A}_{y}^{\hat{47}} \rangle^{q}_{k}$. The question is now: 
   \begin{itemize}
    \item How do these parameters determine the CKM mixing angles $s_{12},s_{23},s_{13}$ and the CP 
          violating phase $\delta_{13}$ ?
    \item How do  these parameters determine the down-type quark masses?
    \item What is the interpretation of the three off-diagonal elements 
          $\langle \mathcal{A}_{y}^{\hat{43}} \rangle^{q}_{k}$, 
          $\langle \mathcal{A}_{y}^{\hat{45}} \rangle^{q}_{k}$,
          $\langle \mathcal{A}_{y}^{\hat{47}} \rangle^{q}_{k}$ ?
    \item Can we get the SM case?
   \end{itemize}
   The situation is complicated because we do not know anything about the explicit numerical values 
   of these seven parameters. 
   
   In the next subsection we will consider the special but analytically exact solvable case, where we
   assume that
   \begin{enumerate}
    \item $m_{x}$ is the dominant element in $M$, i.e. 
          $m_{x} \gg \tilde{m}_{d},\tilde{m}_{s},\tilde{m}_{b}$ and $m_{x} \gg 
          \langle \mathcal{A}_{y}^{\hat{a}} \rangle^{q}_{k}$ for 
          $\hat{a}=\hat{43},\hat{45},\hat{47}$. We choose in the following calculations
          the (arbitrary) value $m_{x}=10000 \; m_{b}$. This assumption follows from the fact that 
          $m_{x}$ {\em{may}} be interpreted as the mass of a seventh quark. However we remind the
          reader that in the $SU(7)$ model $m_{x}$ is {\em{only}} a parameter. It will turn out
          that the results of the following calculations are independent of the explicit value for
          $m_{x}$ as long as $m_{x}$ is the dominant element is $M$. 
    \item the $d$-quark masses are approximately given by the diagonal part of (\ref{matrixMcalculations})
          \begin{equation}
           m_{d} \approx \tilde{m}_{d} \; , \quad  m_{s} \approx \tilde{m}_{s} \; , \quad 
           m_{b} \approx \tilde{m}_{b}
          \end{equation}
          It will turn out during the following calculations 
          that this assumption is nearly fulfilled for $m_{b}$ and poorly
          fulfilled for $m_{s}$ and $m_{d}$.
   \end{enumerate}
   Since $m_{x}$ is the dominant element in $M$ it will
   turn out that all mixing angles in (\ref{mixinganglesmmd}) are very small.  Therefore we call this
   case {\em{small mixing angle approximation}}.

  \subsection{$M^{d}$ from $M$ in small mixing angle approximation}

   In (\ref{matrixMcalculations}) we choose the following numerical values
   \begin{gather}
    \label{valuesvevso3}
    \langle \mathcal{A}_{y}^{\hat{43}} \rangle^{q}_{k}=63910 \; \text{MeV} \\
    \langle \mathcal{A}_{y}^{\hat{45}} \rangle^{q}_{k}=104240 \; \text{MeV} \nonumber \\
    \langle \mathcal{A}_{y}^{\hat{47}} \rangle^{q}_{k}=70950 \; \text{MeV}  \nonumber 
   \end{gather}
   The reason for this choice will be become clear later.
   In addition, as already mentioned above we choose
   \begin{equation}
    \label{valuemx}
    m_{x}=10000 \; m_{b}=4.25 \times 10^{7} \; \text{MeV}
   \end{equation}
   The diagonal elements $ \tilde{m}_{d}$, $\tilde{m}_{s}$ and $\tilde{m}_{b}$ will be fixed
   during the following computation. According to (\ref{mmdcalcu}) the calculation is divided 
   into four steps.  \\
   \\
   First step: \; We multiply $M$ by the phase matrix
   \begin{equation}
    P_{34}=\left( \begin{array}{cccc} 1 & 0 & 0 & 0 \\ 0 & 1 & 0 & 0 \\ 
           0 & 0 & e^{i \; \phi_{3}} & 0 \\ 0 & 0 & 0 & e^{i \; \phi_{4}}
           \end{array} \right)
   \end{equation}
   The purpose of this rephasing is to get a real angle $\theta_{34}$ in the next step. The
   complete transformation reads
   \begin{eqnarray}
     M_{1} & = & P_{34}^{L \ast} M P_{34}^{R} \\
           & = & \left( \begin{array}{cccc} 
           \tilde{m}_{d} & 0 & 0 & \langle \mathcal{A}_{y}^{\hat{43}} \rangle^{q}_{k} \; 
           e^{i \; (\frac{\pi}{2}+\phi_{4}^{R})}  \\
           0 & \tilde{m}_{s} & 0 & \langle \mathcal{A}_{y}^{\hat{45}} \rangle^{q}_{k} \; e^{i 
           \; (\frac{\pi}{2}+\phi_{4}^{R})} \\
           0 & 0 & \tilde{m}_{b} \; e^{i \; (\phi_{3}^{R}-\phi_{3}^{L})} & 
           \langle \mathcal{A}_{y}^{\hat{47}} \rangle^{q}_{k}
           \; e^{i \; (\frac{\pi}{2}+\phi_{4}^{R}-\phi_{3}^{L})} \\
           \langle \mathcal{A}_{y}^{\hat{43}} \rangle^{q}_{k} \; e^{i \; (\frac{\pi}{2}-\phi_{4}^{L})} &
           \langle \mathcal{A}_{y}^{\hat{45}} \rangle^{q}_{k} \; e^{i \; (\frac{\pi}{2}-\phi_{4}^{L})} & 
           \langle \mathcal{A}_{y}^{\hat{47}} \rangle^{q}_{k} 
           \; e^{i \; (\frac{\pi}{2}+\phi_{3}^{R}-\phi_{4}^{L})} & 
           m_{x}  \; e^{i \; (\phi_{4}^{R}-\phi_{4}^{L})}
           \end{array} \right) \nonumber 
   \end{eqnarray}
   Concerning this transformation we give a comment. In (\ref{unitarytrafo}) we have argued that
   $M$ can be brought to block diagonal form via a unitary transformation. However within the
   approximations we will make during the following calculations we have to
   consider a multiplication of $M$ from right with
   $P_{34}^{R}=diag(1,1,e^{i \; \phi^{R}_{3}},
   e^{i \; \phi^{R}_{4}})$ and from left with $P_{34}^{L \ast}=diag(1,1,e^{-i \; \phi^{L}_{3}},
   e^{-i \; \phi^{L}_{4}})$ and as a start have to make a distinction between $\phi_{3}^{L}$ and $\phi_{3}^{R}$
   and between $\phi_{4}^{L}$ and $\phi_{4}^{R}$, respectively. It will turns out during the following
   calculation that this distinction is necessary in order to get real mixing angles $\theta_{34}$
   and $\theta_{24}$. In addition we will compute explicit values for all phases. The
   resulting transformation (\ref{unitarytrafo}) will turn out to be in fact biunitary within the
   approximations we will make during the following calculations. 
   We introduce the following abbreviations
   \begin{gather}
    \delta_{3}=\phi^{R}_{3}-\phi^{L}_{3} \quad , \quad
    \delta_{4}=\phi^{R}_{4}-\phi^{L}_{4} \\
    \delta=\phi^{R}_{4}-\phi^{L}_{3}+\frac{\pi}{2} \quad , \quad
    \tilde{\delta}=\phi^{R}_{3}-\phi^{L}_{4}+\frac{\pi}{2} \nonumber
   \end{gather}
   \\
   Second step: \; 
   We perform the rotation $R_{34}$ on $M_{1}$. The purpose is to put
   zeroes in the $34,43$ elements.
   The zeroes in the $34,43$ elements are implemented by diagonalising the
   lower $34$ block of $M_{1}$. This block is obtained by striking out the row 
   and the column in which the unit elements of $R_{34}$ appear.
   Thus we get the reduced rotation
   \begin{equation}
    R^{t}_{34} \left( \begin{array}{cc} \tilde{m}_{b} \; e^{i \; \delta_{3}}
             & \langle \mathcal{A}_{y}^{\hat{47}} \rangle^{q}_{k} \; e^{i \; \delta} \\  
             \langle \mathcal{A}_{y}^{\hat{47}} \rangle^{q}_{k} \; e^{i \; \tilde{\delta}} & 
              m_{x}  \; e^{i \; \delta_{4}} \end{array} \right) R_{34} := 
    \left( \begin{array}{cc} m_{b} \; e^{i \; \delta_{b}} & 0 \\  
                0 & \tilde{m}_{x} \; e^{i \; \tilde{\delta}_{x}} \end{array} \right) 
   \end{equation}
   where we have introduced two new phases $\delta_{b}$ and $\delta_{x}$.
   From this matrix equation we obtain the mixing angle $\theta_{34}$ 
   \begin{equation}
    \tan{2 \theta_{34}}= 
    \frac{2\left(m_{x} \; e^{i \; \delta_{4}} \; \langle \mathcal{A}_{y}^{\hat{47}} \rangle^{q}_{k} 
          \; e^{i \; \delta} + \tilde{m}_{b} \; e^{i \; \delta_{3}} 
          \; \langle \mathcal{A}_{y}^{\hat{47}} \rangle^{q}_{k} \; e^{i \; \tilde{\delta}}\right)}
         {m^{2}_{x} \; e^{i \; 2\delta_{4}} - \tilde{m}^{2}_{b} \; e^{i \; 2\delta_{3}} 
          + \left( \langle \mathcal{A}_{y}^{\hat{47}} \rangle^{q}_{k} \right)^{2} \; \left( e^{i \; 2\delta} 
          - e^{i \; 2\tilde{\delta}} \right)}
   \end{equation}
   In order to simplify this equation let us fix
   \begin{equation}
    \label{phasecondition1}
    \delta=\tilde{\delta} \; \longrightarrow \; \phi^{R}_{4}-\phi^{L}_{3}=
    \phi^{R}_{3}-\phi^{L}_{4} \; .
   \end{equation}
   Using this fixation $\theta_{34}$ can be written as
   \begin{equation}
    \tan{2 \theta_{34}}=\frac{2 \; \langle \mathcal{A}_{y}^{\hat{47}} \rangle^{q}_{k}  
                \; e^{i \; \delta}}
                {m_{x} \; e^{i \; \delta_{4}} - \tilde{m}_{b} \; e^{i \; \delta_{3}}} \; .
   \end{equation}
   The requirement that the angle $\theta_{34}$ is real means that the
   numerator and the denominator  
   must have equal phases. This leads to the condition
   \begin{equation}
    \label{phasecondition2}
    m_{x} \sin(\delta-\delta_{4})=\tilde{m}_{b} \sin(\delta-\delta_{3}) \; .
   \end{equation} 
   Now we take into account that $m_{x}=10000 \; m_{b} \gg \tilde{m}_{b}$. This leads to the 
   phase condition $\delta \approx \delta_{4}$ and we obtain 
   \begin{equation}
     \label{phasecondition3}
    \delta=\phi^{R}_{4}-\phi^{L}_{3}+\frac{\pi}{2} \approx \phi^{R}_{4}-\phi^{L}_{4}
    \quad \longrightarrow \quad  \phi_{3}^{L} \approx \phi_{4}^{L}+\frac{\pi}{2} \; .
   \end{equation}
   Thus the mixing angle turns out to be
   \begin{equation}
    \theta_{34} \approx \frac{\langle \mathcal{A}_{y}^{\hat{47}} \rangle^{q}_{k}}{m_{x}}=0.0017 \ll 1 
   \end{equation}
   where we have inserted (\ref{valuesvevso3}) and (\ref{valuemx}). 
   For the diagonal elements one gets in small mixing angle approximation
   \begin{gather} 
    \label{massmb}
    m_{b} \; e^{i \; \delta_{b}}  \approx \tilde{m}_{b} \; e^{i \; \delta_{3}} 
          - \frac{\left( \langle \mathcal{A}_{y}^{\hat{47}} \rangle^{q}_{k} \right)^{2}}{m_{x}}
             \; e^{i \; \delta_{4}} \\
    \tilde{m}_{x} \; e^{i \; \tilde{\delta}_{x}} \approx m_{x} \; e^{i \; \delta_{4}}  
          + \frac{2 \left( \langle \mathcal{A}_{y}^{\hat{47}} \rangle^{q}_{k} \right)^{2}}{m_{x}}
            \; e^{i \; \delta_{4}} 
            \; . 
   \end{gather}
   We calculate from (\ref{valuesvevso3}) and (\ref{valuemx})
   \begin{equation}
    \frac{\left( \langle \mathcal{A}_{y}^{\hat{47}} \rangle^{q}_{k} \right)^{2}}{m_{x}} \approx 118 \; .
   \end{equation}
   Thus $m_{x}=10000 \; m_{b}$ is practically unchanged 
   \begin{equation}
    \tilde{m}_{x}=m_{x}
   \end{equation}
   and hence $\tilde{\delta}_{x}$ is given by 
   \begin{equation}
    \tilde{\delta}_{x}=\delta_{4} \; .
   \end{equation}
   Let us analyse equation (\ref{massmb}). Since 
   $\frac{\left( \langle \mathcal{A}_{y}^{\hat{47}} \rangle^{q}_{k} \right)^{2}}{m_{x}} \approx 118$
   the second  term can only give a small contribution to $m_{b}$. Therefore we 
   approximately get
   \begin{equation}
    \delta_{b} \approx \delta_{3} \; .
   \end{equation}
   The remaining elements are given by the reduced rotation in small mixing angle approximation 
   \begin{equation}
    \left( \begin{array}{cc} \tilde{m}_{13} & \tilde{m}_{14} \\
            \tilde{m}_{23} & \tilde{m}_{24} \end{array} \right)
    =\left( \begin{array}{cc} 0 & \langle \mathcal{A}_{y}^{\hat{43}} \rangle^{q}_{k} 
            \; e^{i \; (\frac{\pi}{2}+\phi_{4}^{R})} \\
            0 & \langle \mathcal{A}_{y}^{\hat{45}} \rangle^{q}_{k}  
            \; e^{i \; (\frac{\pi}{2}+\phi_{4}^{R})} \end{array} \right)    
    \left( \begin{array}{cc} 1 & \theta_{34} \\
            -\theta_{34} & 1 \end{array} \right) \; .
   \end{equation}
   This leads to
   \begin{gather}
    \tilde{m}_{13}=\theta_{34} \;  \langle \mathcal{A}_{y}^{\hat{43}} \rangle^{q}_{k}   
    \; e^{i \; (\frac{3\pi}{2}+\phi_{4}^{R})} \quad , \quad  
    \tilde{m}_{14}=\langle \mathcal{A}_{y}^{\hat{43}} \rangle^{q}_{k} 
    \; e^{i \; (\frac{\pi}{2}+\phi_{4}^{R})}  \\     
    \tilde{m}_{23}=\theta_{34} \; \langle \mathcal{A}_{y}^{\hat{45}} \rangle^{q}_{k}  
    \; e^{i \; (\frac{3\pi}{2}+\phi_{4}^{R})} \quad , \quad  
    \tilde{m}_{24}=\langle \mathcal{A}_{y}^{\hat{45}} \rangle^{q}_{k} 
     \; e^{i \; (\frac{\pi}{2}+\phi_{4}^{R})}  \; .
   \end{gather}
   In addition 
   \begin{equation}
    \left( \begin{array}{cc} \tilde{m}_{31} & \tilde{m}_{32} \\
            \tilde{m}_{41} & \tilde{m}_{42} \end{array} \right)
    = \left( \begin{array}{cc} 1 & -\theta_{34} \\
            \theta_{34} & 1 \end{array} \right)
      \left( \begin{array}{cc} 0 & 0 \\ \langle \mathcal{A}_{y}^{\hat{43}} \rangle^{q}_{k} 
             \; e^{i \; (\frac{\pi}{2}-\phi_{4}^{L})}
            & \langle \mathcal{A}_{y}^{\hat{45}} \rangle^{q}_{k} 
             \; e^{i \; (\frac{\pi}{2}-\phi_{4}^{L})} \end{array} \right)    
   \end{equation}
   leads to
   \begin{gather}
    \tilde{m}_{31}=\theta_{34} \; \langle \mathcal{A}_{y}^{\hat{43}} \rangle^{q}_{k} 
    \; e^{i \; (\frac{3\pi}{2}-\phi_{4}^{L})} \quad , \quad  
    \tilde{m}_{32}=\theta_{34} \; \langle \mathcal{A}_{y}^{\hat{45}} \rangle^{q}_{k}  
    \; e^{i \; (\frac{3\pi}{2}-\phi_{4}^{L})}  \\     
    \tilde{m}_{41}=\langle \mathcal{A}_{y}^{\hat{43}} \rangle^{q}_{k} \; e^{i \; (\frac{\pi}{2}-\phi_{4}^{L})} 
    \quad , \quad  
    \tilde{m}_{42}=\langle \mathcal{A}_{y}^{\hat{45}} \rangle^{q}_{k} 
    \; e^{i \; (\frac{\pi}{2}-\phi_{4}^{L})} \; .
   \end{gather}
   The full transformation reads 
   \begin{eqnarray}
     M_{2} & = & R_{34}^{t} M_{1} R_{34} \\
           & = & \left( \begin{array}{cccc} 
           \tilde{m}_{d} & 0 & \mid \tilde{m}_{13} \mid \; e^{i \; (\frac{3\pi}{2}+\phi_{4}^{R})}  
           &  \langle \mathcal{A}_{y}^{\hat{43}} \rangle^{q}_{k}   \; e^{i \; (\frac{\pi}{2}+\phi_{4}^{R})}  \\
           0 & \tilde{m}_{s} & \mid \tilde{m}_{23} \mid \; e^{i \; (\frac{3\pi}{2}+\phi_{4}^{R})} 
           &  \langle \mathcal{A}_{y}^{\hat{45}} \rangle^{q}_{k}  \; e^{i \; (\frac{\pi}{2}+\phi_{4}^{R})} \\
           \mid \tilde{m}_{13} \mid \; e^{i \; (\frac{3\pi}{2}-\phi_{4}^{L})}  
           & \mid \tilde{m}_{23} \mid \; e^{i \; (\frac{3\pi}{2}+\phi_{4}^{R})}  
           & m_{b} \; e^{i \; \delta_{3}} & 0 \\
           \langle \mathcal{A}_{y}^{\hat{43}} \rangle^{q}_{k}  \; e^{i \; (\frac{\pi}{2}-\phi_{4}^{L})} &
           \langle \mathcal{A}_{y}^{\hat{45}} \rangle^{q}_{k}  \; e^{i \; (\frac{\pi}{2}-\phi_{4}^{L})} & 
           0 & m_{x}  \; e^{i \; \delta_{4}} 
           \end{array} \right) \nonumber
   \end{eqnarray}
   \\
   Third step : \;
   We perform the real rotation $R_{24}$ on $M_{2}$. The purpose is to put
   zeroes in the $24,42$ elements.
   The zeroes in the $24,42$ elements are implemented by diagonalising the
   lower middle block of $M_{2}$. This block is obtained by striking out the 
   row and the column in which the unit elements of $R_{24}$ appear.
   Thus we get the reduced rotation
   \begin{equation}
    R^{t}_{24} \left( \begin{array}{cc} \tilde{m}_{s} 
             &  \langle \mathcal{A}_{y}^{\hat{45}} \rangle^{q}_{k}  
             \; e^{i \; (\frac{\pi}{2}+\phi_{4}^{R})}\\  
             \langle \mathcal{A}_{y}^{\hat{45}} \rangle^{q}_{k}  \; e^{i \; (\frac{\pi}{2}-\phi_{4}^{L})} & 
             m_{x}  \; e^{i \; \delta_{4}}  \end{array} \right) R_{24} := 
    \left( \begin{array}{cc} m_{s} \; e^{i \; \delta_{s}} & 0 \\  
                0 & \tilde{m}_{x} \; e^{i \; \tilde{\delta}_{x}^{\prime}} \end{array} \right) \; .
   \end{equation}
   where we have introduced two new phases $\delta_{s}$ and $\tilde{\delta}_{x}^{\prime}$.
   From this matrix equation we obtain the mixing angle $\theta_{24}$
   \begin{equation}
    \tan{2 \theta_{24}}= 
    \frac{2\left(m_{x} \; e^{i \; \delta_{4}} \; 
                  \langle \mathcal{A}_{y}^{\hat{45}} \rangle^{q}_{k}  \; e^{i \; (\frac{\pi}{2}+\phi_{4}^{R})}+
                  \tilde{m}_{s} \;  \langle \mathcal{A}_{y}^{\hat{45}} \rangle^{q}_{k} 
                  \; e^{i \; (\frac{\pi}{2}-\phi_{4}^{L})} \right)}
         {m^{2}_{x} \; e^{i \; 2\delta_{4}} - \tilde{m}^{2}_{s}  
           +  \left( \langle \mathcal{A}_{y}^{\hat{45}} \rangle^{q}_{k} \right)^{2} 
            \; \left( e^{i \; 2(\frac{\pi}{2}+\phi_{4}^{R})} 
             - e^{i \; 2(\frac{\pi}{2}-\phi_{4}^{L})} \right)}  
   \end{equation}
   Let us fix
   \begin{equation}
    \label{phasecondition4}
    \phi^{R}_{4}=-\phi^{L}_{4} \; .
   \end{equation}
   Using this fixation $\theta_{24}$ can be written as
   \begin{equation}
    \tan{2 \theta_{24}}=\frac{2 \;  \langle \mathcal{A}_{y}^{\hat{45}} \rangle^{q}_{k} 
                 \; e^{i \; \delta^{\prime}}}
                {m_{x} \; e^{i \; \delta_{4}} - \tilde{m}_{s}} 
   \end{equation}
   where $\delta^{\prime}:=\frac{\pi}{2}+\phi_{4}^{R}$.
   The requirement that $\theta_{24}$ is real means that the
   numerator and the denominator of this equation 
   must have equal phases. This leads to the condition
   \begin{equation}
    \label{phasecondition5}
    m_{x} \sin(\delta^{\prime}-\delta_{4})=\tilde{m}_{s} \sin(\delta^{\prime}) \; .
   \end{equation}   
   Now we take into account that $m_{x}=10000 m_{b} \gg \tilde{m}_{s}$. This leads to the 
   phase condition $\delta^{\prime} \approx \delta_{4}$ and the mixing angle $\theta_{24}$ turns out to be
   \begin{equation}
    \theta_{24} \approx \frac{ \langle \mathcal{A}_{y}^{\hat{45}} \rangle^{q}_{k} }{m_{x}}=0.0025 \ll 1
   \end{equation}
   where we have inserted (\ref{valuesvevso3}) and (\ref{valuemx}).
   The phase condition $\delta^{\prime} \approx \delta_{4}$ now fixes the absolute value of
   $\phi_{4}^{R}$ and  $\phi_{4}^{L}$
   \begin{equation}
    \delta^{\prime}=\frac{\pi}{2}+\phi_{4}^{R} \approx \delta_{4}=\phi_{4}^{R}-\phi_{4}^{L}
    \stackrel{!}{=} 2 \phi_{4}^{R}
    \quad \longrightarrow \quad  \phi_{4}^{R}=\frac{\pi}{2}
   \end{equation}
   where we have make use of (\ref{phasecondition4}) in the second step.
   Thus we obtain
   \begin{equation}
    \delta_{4}=2 \phi_{4}^{R}=\pi \; , 
   \end{equation}
   and
   \begin{equation}
    \delta_{3}=\phi_{3}^{R}-\phi_{3}^{L}=\phi_{4}^{R}+\phi_{4}^{L}-2\phi_{3}^{L} \approx 
                -2 (\phi_{4}^{L} + \frac{\pi}{2})=0
   \end{equation}
   where we have used (\ref{phasecondition1}) in the first and 
   (\ref{phasecondition3}) in the second step.
   We see that these values are in accordance with (\ref{phasecondition2}) and
   (\ref{phasecondition5}) for large $m_{x}$. 
   The phases in $P_{34}^{R}$ respectively $P_{34}^{L}$ read  
   \begin{equation}
    \phi_{4}^{R}=-\phi_{4}^{L}=\frac{\pi}{2} \quad , \quad \phi_{3}^{R}=-\phi_{3}^{L}=0 \; .
   \end{equation}
   With $\delta_{3}=0$ and  $\delta_{3}=\pi R$ equation (\ref{massmb}) becomes
   \begin{equation}
     m_{b} \; e^{i \; \delta_{b}}  \approx \tilde{m}_{b} \; e^{i \; \delta_{3}} 
          - \frac{\left( \langle \mathcal{A}_{y}^{\hat{45}} \rangle^{q}_{k} \right)^{2} }
            {m_{x}} \; e^{i \; \delta_{4}}
           =\tilde{m}_{b}+\frac{\left( \langle \mathcal{A}_{y}^{\hat{45}} \rangle^{q}_{k} \right)^{2}}
            {m_{x}} \\
   \end{equation}
   For $m_{b}=4250$ MeV we obtain
   \begin{equation}
    \tilde{m}_{b}=4132 \; \text{MeV} \;
   \end{equation}
   since $\frac{\left( \langle \mathcal{A}_{y}^{\hat{45}} \rangle^{q}_{k} \right)^{2}}{m_{x}}=105$.
   In addition we get
   \begin{equation}
    \delta_{b}=\delta_{3}=0 \; .
   \end{equation}
   
   Next we determine the diagonal elements. Since $\theta_{24} \ll 1$ we get in 
   the small mixing angle approximation
   \begin{equation} 
    m_{s} e^{i \; \delta_{s}} \approx \tilde{m}_{s}
          - \frac{2 \; \left( \langle \mathcal{A}_{y}^{\hat{45}} \rangle^{q}_{k} \right)^{2}}{m_{x}} 
            \; e^{i \; (\frac{\pi}{2}+\phi_{4}^{R})} 
          + \frac{\left( \langle \mathcal{A}_{y}^{\hat{45}} \rangle^{q}_{k} \right)^{2}}{m_{x}} 
            \; e^{i \; \delta_{4}}
          = \tilde{m}_{s} 
          + \frac{\left( \langle \mathcal{A}_{y}^{\hat{45}} \rangle^{q}_{k} \right)^{2}}{m_{x}} \\    
   \end{equation}
   and $m_{x}$ is again practically unchanged. This leads to
   \begin{equation}
    \tilde{\delta}_{x}^{\prime}=\delta_{4}=\pi \; .
   \end{equation}
   We calculate from (\ref{valuesvevso3}) and (\ref{valuemx})
   \begin{equation}
    \frac{\left( \langle \mathcal{A}_{y}^{\hat{45}} \rangle^{q}_{k} \right)^{2}}{m_{x}}=255 \; \text{MeV} \; .
   \end{equation}
   However, since $\tilde{m}_{s}$ cannot be negative we conclude that
   \begin{equation}
     m_{s}=m_{s}^{\star} > 255 \; \text{MeV}
   \end{equation}
   and the phase $\delta_{s}$ turns out to be
   \begin{equation}
    \delta_{s}=0 \; .
   \end{equation}
   The off-diagonal elements are given by the reduced rotation
   \begin{equation}
    \left( \begin{array}{cc} \tilde{m}^{\prime}_{12} 
            & \tilde{m}^{\prime}_{14} \\
            \tilde{m}^{\prime}_{32} 
            & \tilde{m}^{\prime}_{34} \end{array} \right)
    =\left( \begin{array}{cc} 0 &  \langle \mathcal{A}_{y}^{\hat{43}} \rangle^{q}_{k}  e^{i \; \pi}  \\
            \theta_{34} \; \langle \mathcal{A}_{y}^{\hat{45}} \rangle^{q}_{k} & 0 \end{array} \right) 
         \left( \begin{array}{cc} 1 & \theta_{24} \\
            -\theta_{24} & 1 \end{array} \right) \; .
   \end{equation}
   This leads to
   \begin{gather}
    \tilde{m}^{\prime}_{12}=\theta_{24}  \langle \mathcal{A}_{y}^{\hat{43}} \rangle^{q}_{k} 
    \quad , \quad  
    \tilde{m}^{\prime}_{14}=\langle \mathcal{A}_{y}^{\hat{43}} \rangle^{q}_{k} \; e^{i \; \pi}  \\     
    \tilde{m}^{\prime}_{32}=\theta_{34} \; \langle \mathcal{A}_{y}^{\hat{45}} \rangle^{q}_{k}   
    \quad , \quad  
    \tilde{m}^{\prime}_{34} \approx 0 \; .
   \end{gather}
   In addition
   \begin{equation}
    \left( \begin{array}{cc} \tilde{m}^{\prime}_{21} 
            & \tilde{m}^{\prime}_{23} \\
            \tilde{m}^{\prime}_{41} 
            & \tilde{m}^{\prime}_{43} \end{array} \right)
    = \left( \begin{array}{cc} 1 & -\theta_{24} \\
            \theta_{24} & 1 \end{array} \right)
       \left( \begin{array}{cc} 0 & \theta_{34} \; \langle \mathcal{A}_{y}^{\hat{45}} \rangle^{q}_{k} \\
               \langle \mathcal{A}_{y}^{\hat{43}} \rangle^{q}_{k}   \;
               e^{i \; \pi} & 0 \end{array} \right)                    
   \end{equation}
   leads to
   \begin{gather}
    \tilde{m}^{\prime}_{21}=\theta_{24} \; \langle \mathcal{A}_{y}^{\hat{43}} \rangle^{q}_{k} 
    \quad , \quad  
    \tilde{m}^{\prime}_{41}=\langle \mathcal{A}_{y}^{\hat{43}} \rangle^{q}_{k} \;  e^{i \; \pi}  \\     
    \tilde{m}^{\prime}_{23}=\theta_{34} \; \langle \mathcal{A}_{y}^{\hat{45}} \rangle^{q}_{k}   
    \quad , \quad  
    \tilde{m}^{\prime}_{43} \approx 0 \; .
   \end{gather}
   The whole transformation reads
   \begin{equation}
     M_{3}=R^{t}_{24} M_{2} R_{24}=\left( \begin{array}{cccc} 
           \tilde{m}_{d} & \tilde{m}^{\prime}_{12} & \tilde{m}^{\prime}_{13}  
           &  \langle \mathcal{A}_{y}^{\hat{43}} \rangle^{q}_{k}  \; e^{i \; \pi} \\
           \tilde{m}^{\prime}_{12} & m^{\star}_{s} & \tilde{m}^{\prime}_{23} & 0 \\
           \tilde{m}^{\prime}_{13} & \tilde{m}^{\prime}_{23} & m_{b} & 0 \\
            \langle \mathcal{A}_{y}^{\hat{43}} \rangle^{q}_{k}  \; e^{i \; \pi} & 0 & 0 & m_{x} 
            \; e^{i \; \pi}  
           \end{array} \right) 
   \end{equation} 
   where $\tilde{m}^{\prime}_{13}=\theta_{34} \; \langle \mathcal{A}_{y}^{\hat{43}} \rangle^{q}_{k}$.
   {\em{At this state of the calculation it is remarkable that all off-diagonal elements 
   $\tilde{m}_{12}$, $\tilde{m}_{13}$ and $\tilde{m}_{23}$ in the upper left
   $3 \times 3$ matrix are real and positive.}} \\
   \\
   Fourth step: \;
   We perform the real rotation $R_{14}$ on $M_{3}$. 
   The purpose is to put
   zeroes in the $14,41$ elements
   The zeroes in the $14,41$ elements are implemented by diagonalising the
   outer block of $M_{2}$. This block is obtained by striking out the 
   row and the column in which the unit elements of $R_{14}$ appear.
   Thus we get the reduced rotation 
   \begin{equation}
    \label{reducr14}
    R^{t}_{14} \left( \begin{array}{cc} \tilde{m}_{u} 
             & \langle \mathcal{A}_{y}^{\hat{43}} \rangle^{q}_{k} \; e^{i \; \pi} \\  
             \langle \mathcal{A}_{y}^{\hat{43}} \rangle^{q}_{k} \; e^{i \; \pi} & 
             m_{x}  \; e^{i \; \pi}  \end{array} \right) R_{14} := 
    \left( \begin{array}{cc} m_{d} \; e^{i \; \delta_{d}} & 0 \\  
                0 & m_{x} \; e^{i \; \pi} \end{array} \right) 
   \end{equation}
   where we have introduced new phase $\delta_{d}$. Again $m_{x}$ will be practically unchanged and
   therefore we have already written $m_{x} \; e^{i \; \pi}$ in (\ref{reducr14}).
    From this matrix equation we obtain the mixing angle $\theta_{14}$  
   \begin{equation}
     \theta_{14} \approx \frac{\langle \mathcal{A}_{y}^{\hat{43}} \rangle^{q}_{k} \; e^{i \; \pi}}{m_{x} 
                       \; e^{i \; \pi}-\tilde{m}_{d}}
                 \approx \frac{\langle \mathcal{A}_{y}^{\hat{43}} \rangle^{q}_{k}}{\mid m_{x} \mid} 
   \end{equation}
   Thus the mixing angle $\theta_{14}$ turns out to be
   \begin{equation}
    \theta_{14} \approx \frac{\langle \mathcal{A}_{y}^{\hat{43}} \rangle^{q}_{k}}{m_{x}}=0.0002 \ll 1 
   \end{equation}
   where we have inserted (\ref{valuesvevso3}) and (\ref{valuemx}).
   Next we determine the diagonal elements. Since $\theta_{14} \ll 1$ we obtain in 
   the small mixing angle approximation
   \begin{equation} 
    m_{d} \;  e^{i \; \delta_{d}} \approx \tilde{m}_{d}
           - \frac{\left( \langle \mathcal{A}_{y}^{\hat{43}} \rangle^{q}_{k} \right)^{2}}{m_{x}} 
            \; e^{i \; \pi}
           =\tilde{m}_{d} +  
            \frac{\left( \langle \mathcal{A}_{y}^{\hat{43}} \rangle^{q}_{k} \right)^{2}}{m_{x}}  \\    
   \end{equation}
   We calculate from (\ref{valuesvevso3}) and (\ref{valuemx})
   \begin{equation}
    \frac{\left( \langle \mathcal{A}_{y}^{\hat{43}} \rangle^{q}_{k} \right)^{2}}{m_{x}}=2.3 \; \text{MeV}
   \end{equation}
   This yields
   \begin{equation}
    \tilde{m}_{d}=11.0 \; \text{MeV} \quad , \quad \delta_{d}=0 \; .
   \end{equation}
   Finally, when calculating the off-diagonal elements, we observe that since $\theta_{14} \ll 1$ 
   they are approximately unchanged.
   The whole transformation reads
   \begin{equation}
     M_{4}=R^{t}_{14} M_{3} R_{14}=\left( \begin{array}{cccc} 
           m_{d} & \tilde{m}_{12}^{\prime} & \tilde{m}_{13}^{\prime}  & 0 \\
           \tilde{m}^{\prime}_{12} & m^{\star}_{s} & \tilde{m}_{23}^{\prime} & 0 \\
           \tilde{m}^{\prime}_{13} & \tilde{m}_{23}^{\prime} & m_{b} & 0 \\
           0 & 0 & 0 & m_{x} \; e^{i \; \pi}  
           \end{array} \right) \stackrel{!}{=} \left( \begin{array}{cc} M^{d} & 0 \\
                                0 & \tilde{m}_{x} \end{array} \right)  \; .
   \end{equation}
   Thus $M^{d}$ reads 
   \begin{equation}
     \label{mdresult}
     M^{d}=\left( \begin{array}{ccc} 
           m_{d} & m_{12} & m_{13} \\
           m_{12} & m^{\star}_{s} & m_{23} \\
           m_{13} & m_{23} & m_{b} \\ 
           \end{array} \right) 
   \end{equation}
   where 
   \begin{eqnarray}
    \label{msresults}
    m_{d}=13.4 \; \text{MeV} & , & m_{12} 
    =\theta_{24} \langle \mathcal{A}_{y}^{\hat{43}} \rangle^{q}_{k}
    =\frac{\langle \mathcal{A}_{y}^{\hat{43}} \rangle^{q}_{k} 
     \langle \mathcal{A}_{y}^{\hat{45}} \rangle^{q}_{k}}{m_{x}}
    = 24.46  \; \text{MeV} 
    \nonumber \\
    m_{s}^{\star} > 255 \; \text{MeV} & , &
    m_{13}
    =\theta_{34} \langle \mathcal{A}_{y}^{\hat{43}} \rangle^{q}_{k}
    =\frac{\langle \mathcal{A}_{y}^{\hat{43}} \rangle^{q}_{k} 
     \langle \mathcal{A}_{y}^{\hat{47}} \rangle^{q}_{k}}{m_{x}}
    = 16.65  \; \text{MeV} 
    \nonumber \\
    m_{b}=4250 \; \text{MeV} & , & m_{23}
    =\theta_{34} \langle \mathcal{A}_{y}^{\hat{45}} \rangle^{q}_{k}
    =\frac{\langle \mathcal{A}_{y}^{\hat{45}} \rangle^{q}_{k} \langle  
      \mathcal{A}_{y}^{\hat{47}} \rangle^{q}_{k}}{m_{x}}
    = 173.8  \; \text{MeV} \nonumber  \\
   \end{eqnarray}
   Note that in the second step we have inserted 
   \begin{equation}
    \theta_{24}=\frac{\langle \mathcal{A}_{y}^{\hat{45}} \rangle^{q}_{k}}{m_{x}} \quad , \quad
    \theta_{34}=\frac{\langle \mathcal{A}_{y}^{\hat{47}} \rangle^{q}_{k}}{m_{x}} \; .
   \end{equation}

  \section{The meaning of the triplet VEV of $SO(3)_{F}$}

   We compare the result (\ref{msresults})  
   with (\ref{ckmmatrixoffdiagonalelements}). We observe that 
   \begin{itemize}
    \item The off-diagonal elements $m_{12}$, $m_{13}$ and $m_{23}$ are equal.
    \item The diagonal elements $m_{d}$ and $m_{b}$ are equal.
    \item The diagonal element $m_{s}^{\star} > 255$ MeV turns out to be at least two times bigger
          than $\tilde{m}_{s}^{\prime}=119.2$ MeV.
    \item There is no CP violation.
   \end{itemize}
   {\em{We stress that these results are only valid if $m_{x}$ is the dominant element in $M$}}.  
   We will discuss the case where $m_{x}$ is lowered to $\mathcal{O}(\tilde{m}_{s})$ at the end of
   this section. In this case it is possible to obtain the correct value for $\tilde{m}_{s}$
   and a CP violation.

   We calculate the CKM mixing matrix and the down-type quark masses from (\ref{msresults}): 
   \begin{enumerate}
    \item For the CKM mixing angles we obtain
     \begin{equation}
      s_{12}=0.104 \quad , \quad s_{13}=0.0039 \quad , \quad s_{23}=0.043  \; .
     \end{equation}
     We compare these results to the SM case
     \begin{equation}
      s_{12}^{SM}=0.22 \quad , \quad  s_{13}^{SM}=0.0039 \quad , \quad  s_{13}^{SM}=0.042 \; .
     \end{equation}
     We observe that $s_{13}$ and $s_{23}$ are in accordance with the SM case. However the Cabibbo angle
     $s_{12}$ turns out to be small by approximately a factor of two.
     This is a consequence of the too large value for $m_{s}^{\star}$.
    \item For the down-type quark masses we obtain
     \begin{equation}
      m_{d}=10.7 \; \text{MeV} \quad , \quad m_{s}=251 \; \text{MeV}  \quad , \quad
      m_{b}=4250 \; \text{MeV} \; .
     \end{equation}
     We compare these results to the SM case 
     \begin{equation}
      m_{d}=5 \; \text{to} \; 8.5 \; \text{MeV} \quad , \quad m_{s}=80 \; \text{to} \; 155 
      \; \text{MeV} \quad , \quad
      m_{b}=4000 \; \text{to} \; 4500 \; \text{MeV} \; . 
     \end{equation}
     We observe that only $m_{b}$ is in accordance with
     the SM case. 
   \end{enumerate}
   We see that the too large value $m_{s}^{\star}$ in (\ref{msresults}) not only 
   influences $s_{12}$ and $m_{s}$ but also $m_{d}$. As already mentioned above in small mixing angle
   approximation the CP violating phase $\delta_{13}$ is zero.

   The advantage of the small mixing angle approximation case is that ratios for the off-diagonal elements 
   $m_{12}$, $m_{13}$ and $m_{23}$ of $M^{d}$ are given by
   \begin{equation}
    \label{relatinsoffdiagmd}
    \frac{m_{12}}{m_{13}}=\frac{\langle \mathcal{A}_{y}^{\hat{45}} \rangle^{q}_{k}}
                               {\langle\mathcal{A}_{y}^{\hat{47}} \rangle^{q}_{k}}
    \; , \quad
    \frac{m_{12}}{m_{23}}=\frac{\langle \mathcal{A}_{y}^{\hat{43}} \rangle^{q}_{k}}
                               {\langle \mathcal{A}_{y}^{\hat{47}} \rangle^{q}_{k}}
    \; , \quad
    \frac{m_{13}}{m_{23}}=\frac{\langle \mathcal{A}_{y}^{\hat{43}} \rangle^{q}_{k}}
                               {\langle \mathcal{A}_{y}^{\hat{45}} \rangle^{q}_{k}} \; .
   \end{equation}
   and involve only the triplet VEV of $SO(3)_{F}$. \\
   \\
   {\bf{Conclusion:}} 
   {\em{In small mixing angle approximation, i.e. $m_{x} \gg m_{b},m_{s},m_{d}$ and $m_{x} \gg 
   \langle \mathcal{A}_{y}^{\hat{a}} \rangle^{q}_{k}$ for $\hat{a}=\hat{43},\hat{45},\hat{47}$,
   ratios for the off-diagonal elements
   $m_{12}$, $m_{13}$, $m_{23}$ of $M^{d}$ are given by rations   
   of the triplet VEV of $SO(3)_{F}$ according to 
   (\ref{relatinsoffdiagmd})}}. \\
   \\
   Looking at the calculation for the CKM mixing angles $s_{12},s_{13},s_{23}$ in Appendix
   \ref{appendixckmfrommd} we obtain with (\ref{relatinsoffdiagmd}) the ratios
   \begin{equation}
    \label{tripletvevckm}
    \frac{s_{13}}{s_{23}}=\frac{\langle \mathcal{A}_{y}^{\hat{43}} \rangle^{q}_{k}}
                                {\langle \mathcal{A}_{y}^{\hat{45}} \rangle^{q}_{k}}
    \; , \quad
    \frac{s_{12}}{s_{13}}=\frac{\langle \mathcal{A}_{y}^{\hat{45}} \rangle^{q}_{k} \; m_{b}}
                                {\langle \mathcal{A}_{y}^{\hat{47}} \rangle^{q}_{k}  
                                \; \left(m_{s}-m_{d} \right)}
     \; , \quad
     \frac{s_{12}}{s_{23}}=\frac{\langle \mathcal{A}_{y}^{\hat{43}} \rangle^{q}_{k} \; m_{b}}
                                {\langle \mathcal{A}_{y}^{\hat{45}} \rangle^{q}_{k} \; 
                                 \left(m_{s}-m_{d} \right)}
   \end{equation}
   {\bf{Conclusion:}} 
   {\em{In small mixing angle approximation, i.e. $m_{x} \gg m_{b},m_{s},m_{d}$ and $m_{x} \gg 
   \langle \mathcal{A}_{y}^{\hat{a}} \rangle^{q}_{k}$ for $\hat{a}=\hat{43},\hat{45},\hat{47}$,
   the CKM mixing angles $s_{12},s_{13},s_{23}$ are determined by rations
   of the triplet VEV of $SO(3)_{F}$ 
   plus a correction coming from the down-type masses according to (\ref{tripletvevckm})}}. \\
   \\
   We note that these results are independent of the explicit value for $m_{x}$ as long as 
   $m_{x} \gg m_{b},m_{s},m_{d}$ and $m_{x} \gg 
   \langle \mathcal{A}_{y}^{\hat{a}} \rangle^{q}_{k}$ for $\hat{a}=\hat{43},\hat{45},\hat{47}$.
   This can easily be seen as follows. Suppose we replace
   \begin{equation}
    m_{x} \to m_{x} \cdot k^{2}
   \end{equation}
   where $k \in \mathbb{R}$. The elements $m_{12}, m_{13}$ and
   $m_{23}$ are invariant under this replacement if we demand
   \begin{equation} 
    \langle \mathcal{A}_{y}^{\hat{a}} \rangle^{q}_{k} \to 
    \langle \mathcal{A}_{y}^{\hat{a}} \rangle^{q}_{k} \cdot k
   \end{equation}
   $\hat{a}=\hat{43},\hat{45},\hat{47}$. However also
   \begin{equation}
    \frac{\left( \langle \mathcal{A}_{y}^{\hat{a}} \rangle^{q}_{k} \right)^{2}}{m_{x}}
   \end{equation}
   is unchanged by this replacement. 

   Let us discuss the alternative case where $m_{x} \approx \mathcal{O}(\tilde{m}_{s})$. In 
   this case it is possible to get a realistic value for $m_{s}$ and the CP violating 
   phase $\delta_{13}$. However it is not possible to get an exact analytically solution. 
   First from appendix \ref{appendixprocedureM} we see that for general $m_{x}$ the absolute values for the
   off-diagonal elements of $M^{d}$ are given by (\ref{appendixsincosine})
   and (\ref{appendixm23})
   \begin{equation*}
    \mid m_{12} \mid =c_{14}s_{24}c_{34} \langle \mathcal{A}_{y}^{\hat{43}} \rangle^{q}_{k} \; , \quad 
    \mid m_{13} \mid =c_{14}s_{34}       \langle \mathcal{A}_{y}^{\hat{43}} \rangle^{q}_{k} \; , \quad 
    \mid m_{23} \mid =c_{24}s_{34}       \langle \mathcal{A}_{y}^{\hat{45}} \rangle^{q}_{k}
   \end{equation*}
   and involve products of sine and cosine of the mixing angles $\theta_{14}$, $\theta_{24}$,
   and $\theta_{34}$. Thus the rations (\ref{relatinsoffdiagmd}) get modified and depend now also on 
   these sine and cosine. 
   Looking at the equation (\ref{appendixms}) that determines $m_{s}$
   \begin{equation}
    m_{s} \; e^{\delta_{s}}=\tilde{m}_{s}  c_{24}^{2} \; e^{\tilde{\delta}_{s}} -
          2 \langle \mathcal{A}_{y}^{\hat{45}} \rangle^{q}_{k} s_{24} c_{24} \; e^{\delta_{\hat{45}}} 
          + \tilde{m}_{x} s_{24}^{2} \; e^{\delta_{\tilde{m}_{x}}} 
   \end{equation}
   where $\delta_{s}$, $\tilde{\delta}_{s}$, $\delta_{\hat{45}}$ and $\delta_{\tilde{m}_{x}}$ denote 
   phase factors we see that it is possible to obtain a realistic 
   for $m_{s}$ in (\ref{msresults}) for appropriate phase factors and a
   suitable mixing angle $\theta_{24}$.   
   In addition, we will get also non-vanishing phase factors in $M^{d}$ that can lead to a
   CP violating phase $\delta_{13}$ in $V_{CKM}$. 
   However since it is not possible to get an exact analytically solution we suggest to make a 
   numerical analysis. This analysis has to clarify if it is possible to recover the SM case or not.

\chapter{Summary and outlook}

 \label{SummaryOutlook}

 In this thesis we have investigated a Gauge-Higgs-unification model in five dimensions with broken chiral
 $SO(3)_{F}$ flavour symmetry. The model is based on the gauge group $SU(7)$ which
 unifies electroweak-, flavour and Higgs interactions. The Higgs fields are identified with the zero mode
 of some extra components of the higher-dimensional gauge field. Hence the Higgs fields in this model 
 are prevented from obtaining quadratically divergent corrections to their mass by the higher dimensional
 gauge symmetry. Therefore the model provides a solution to the hierarchy problem and will be valid at energy
 scales much above the electroweak breaking scale. The SM Higgs is replaced by three $SU(2)_{L}$
 Higgs doublets $H_{1}$, $H_{2}$ and $H_{3}$ which couple to the first, second and third generation,
 respectively. The electroweak gauge group $SU(2)_{L} \times U(1)_{Y}$ is broken by these $\{ H_{i} \}$
 to $U(1)_{em}$. The model includes an anomaly-free
 chiral $SO(3)_{F}$ flavour symmetry which explains in a natural way why there
 are exactly three generations as it is observed in nature. All fermion masses and mixing
 angles are computable in principle and thus the $SU(7)$ model provides a solution to the flavour problem. 
 The $SO(3)_{F}$ flavour symmetry is broken by additional Higgs fields coming from the selfadjoint
 part of $\Phi$ at energies much above the compactification scale. This way tree-level FCNC are naturally
 suppressed due to the large $SO(3)_{F}$ flavour gauge boson masses. 

 The $SU(7)$ model is an effective bilayered transverse lattice model with nonunitary parallel transporters
 in the extra dimension. 
 In chapter \ref{chaptereffectivetheorie} we have shown explicitly how an effective bilayered transverse
 lattice model can be obtained by starting with an ordinary $S^{1}/\mathbb{Z}_{2}$ model. In a first step
 we have put $S^{1}/\mathbb{Z}_{2}$ on a lattice. In a second step one has to calculate
 the renormalisation group flow.
 The endpoint of the RG-flow is an extra dimension which consist of only two points: the orbifold fixed 
 points. The bulk is completely integrated out. As a result of the blockspin transformations
 the parallel transporters in the extra dimension become nonunitary. In addition, a Higgs potential
 emerges naturally. We have shown 
 that for trivial orbifold projection $P$ and trivial minimum of the Higgs potential the eBTLM equals
 a $S^{1}/\mathbb{Z}_{2}$ continuum orbifold model with trivial orbifold projection $P$ 
 and KK-mode expansion truncated for all fields at the first KK mode. 
 We have seen that the truncated $S^{1}/\mathbb{Z}_{2}$ model and 
 consequently also the eBTLM is renormalisable. 

 We have analysed orbifold conditions for nonunitary parallel transporters. As an important result we
 have seen that for complex holonomy groups $H$ it is always possible to choose a maximal noncompact
 Abelian subgroup $A$ with Lie algebra $\mathfrak{a}$ such that $ P \eta P^{-1}=\eta$ for $\eta \in 
 \mathfrak{a}$. Using this, the nonunitary parallel
 transporter $\Phi$ can be written as
 \begin{equation*}
  \Phi = e^{A_{y}} \; e^{\eta} \; e^{A_{y}} \; ,
 \end{equation*}
 with $A_{y}$ such that $P A_{y} P^{-1}=-A_{y}$. It is
 essential that $P$ is involutive. On $S^{1}/\mathbb{Z}_{2}$ this is always the case.
 We have seen that when spontaneous symmetry breaking occurs and the orbifold projection $P$ is
 non-trivial the Higgs potential $V(\Phi)$ does not only depend on the selfadjoint factor $e^{\eta}$
 but also on the unitary factor $e^{A_{y}}$, i.e.
 \begin{equation*}
  V(\Phi)=V(e^{A_{y}} \; e^{\eta} \; e^{A_{y}})=\mathcal{V}(\eta,A_{y}) \; .
 \end{equation*}

 In chapter \ref{chaptereffectivetheorie} we have furthermore analysed in detail an eBTLM based on the
 gauge group $SU(2)$. In particular we have studied two important cases, which provide a basis for the
 $SU(7)$ model:
 \begin{enumerate}
  \item For trivial orbifold projection, i.e $P=\text{diag}(1,1)$, and non-trivial minimum $\Phi_{min}$ 
        of the Higgs potential $V(\Phi)$ at 
        \begin{equation}
         \Phi_{min}=\rho_{min} \frac{1}{\sqrt{2}} \left( \begin{array}{cc} e^{a_{1}} & 0 \\ 
                      0 & e^{a_{2}} \end{array} \right) \; ,
        \end{equation}
        see (\ref{minimumofphi1}),  we have obtained 
        (see Proposition 
        \ref{propositionlargegaugebosonmasses}) that in the limit of large $a_{2}$ (i.e.
        $a_{2} \gg 1$) the eBTLM allows gauge boson masses for some zero mode  
        and first excited KK mode gauge fields, which are much larger than the compactification scale
        (see \ref{masstermsu2finalresultabularlargeai}).
        This behaviour has no counterpart in an ordinary $S^{1}/\mathbb{Z}_{2}$ model.
  \item For non-trivial orbifold projection (\ref{nontrivialpsuu1}), i.e. $P=\text{diag}(1,-1)$ , we have seen
        that the Higgs potential $V(\Phi)$ depends also on the unitary factors
        $e^{A_{y}}$, and if the latter assume a VEV the gauge group $G_{0}=U(1)$, which is left unbroken
        by the orbifold projection $P$, is spontaneously broken. This 
        spontaneous symmetry breaking is however completely different from the spontaneous symmetry
        breaking by VEVs for the selfadjoint factor $e^{\eta}$. Indeed we have seen that a spontaneous 
        symmetry breaking by a VEV for the unitary factors $e^{A_{y}}$ equals a continuous Wilson line 
        breaking or Hosotani breaking. This symmetry breaking allows the reduction of the $rank$ of the
        underlying gauge group $G_{0}=U(1)$, i.e. $G_{0}$ is completely broken. 
 \end{enumerate}
 
 Based on these two results, in chapter \ref{su7model} we have formulated 
 a realistic Gauge-Higgs unification model: The $SU(7)$ model. The group $SU(7)$ unifies 
 electroweak-, flavour- and Higgs interactions. 
 Colour was ignored.  As an intermediate step the model also unifies weak- and flavour
 interactions in the gauge group $SU(6)_{L} \subset SU(7)$. We have shown that zero modes of the 
 extra-dimensional component of the five-dimensional gauge fields 
 transform according to the fundamental representation of 
 $SU(2)_{L}$ and carry the hypercharge $\frac{1}{2}$. They serve as a substitute for the
 SM Higgs. The theory includes three $SU(2)_{L}$ Higgs doublets $H_{1}$ (\ref{higgsdoubleth1}), $H_{2}$
 (\ref{higgsdoubleth2}) and $H_{3}$ (\ref{higgsdoubleth3}). $H_{1}$ couples to the first, $H_{2}$ couples to
 the second and $H_{3}$ couples to the third generation.

 We have calculated all zero- and first KK mode gauge boson masses associated to the gauge group 
 $SU(2)_{L} \times U(1)_{Y} \times SO(3)_{F}$ in terms of the minimum $\Phi_{min}$ (\ref{minimumphi})
 of the Higgs potential. We have identified the SM gauge bosons, i.e. the $W$ bosons, the $Z$ boson and 
 the photon, as zero mode gauge bosons of the electroweak gauge group
 $SU(2)_{L} \times U(1)_{Y} \subset SU(7)$. We have seen
 that the $W$ and the $Z$ boson get masses only from the off-diagonal part of (\ref{minimumphi}). 
 Thus their masses are $\mathcal{O}(246)$ GeV. The photon remains massless.
 All other gauge bosons receive masses mainly from the diagonal part of 
 (\ref{minimumphi}). The diagonal part of (\ref{minimumphi}) reads
 \begin{equation}
  \label{summarydiagphi}
  \Phi^{diag}_{min}=\rho_{min}
  \frac{1}{\sqrt{2}} \text{diag}(e^{a_{1}},e^{a_{2}},e^{a_{3}},e^{a_{4}},e^{a_{5}},e^{a_{6}},e^{a_{7}}) \; .
 \end{equation}
 We stress that in order to obtain a realistic model, it is important that: 
 \begin{itemize}
  \item The minimum (\ref{minimumphi}) of the Higgs potential is quasi $\mathcal{S}_{2}$ symmetric, 
        i.e. we have $a_{4}=a_{1}$, $a_{5}=a_{2}$ and $a_{6}=a_{3}$ in 
        (\ref{minimumphi}) and (\ref{summarydiagphi}), respectively. Otherwise the
        $W$ bosons will get masses from the diagonal part of $\Phi_{min}$ and thus will
        be very heavy.  
  \item For a quasi $\mathcal{S}_{2}$ symmetric (\ref{minimumphi}) and (\ref{summarydiagphi}), respectively, 
        we need $a_{1} \neq a_{2} \neq a_{3}$ and 
        $a_{i} \gg 1$ for at
        least two $a_{i}$, $i=1,2,3$. If these
        conditions are fulfilled, all zero mode- and first excited KK mode flavour gauge boson receive
        masses from spontaneous symmetry breaking much above the compactification scale
        and thus tree-level FCNC are naturally suppressed.
        If $a_{i} \gg 1$ already for one $a_{i}$, $i=1,2,3$, all first excited KK modes of the SM gauge
        bosons, i.e. $W^{(1)}$, $Z^{(1)}$ and $\gamma^{(1)}$, get masses from spontaneous symmetry 
        breaking much above the compactification scale and thus will be very heavy. 
 \end{itemize}

 Furthermore we have calculated the weak mixing angle in the $SU(7)$ model which unfortunately turns out
 to be too small. Following Antoniadis, Benakli  and Quiros \cite{Antoniadis:2001cv} this problem could  
 be solved as follows: We start with the larger gauge group $SU(7) \times
 U(1)^{\prime\prime}$. The larger gauge group $SU(7) \times
 U(1)^{\prime\prime}$ is broken by orbifolding and the
 additional boundary conditions to  
 \begin{equation*}
  SU(7) \times U(1)^{\prime\prime} \stackrel{\text{P} + \text{b.c.}}{\longrightarrow}
  SU(2)_{L} \times U(1) \times SO(3)_{F}
  \times U(1)^{\prime\prime} \; ,
 \end{equation*}
 i.e. the extra $U(1)^{\prime\prime}$ is unaffected by the orbifold projection and the 
 additional boundary conditions.
 The hypercharge $U(1)_{Y}$ is identified as the sum of the $U(1)$
 and $U(1)^{\prime\prime}$ charges. We then have a gauge field $B_{\mu}$ associated to the hypercharge and
 a gauge field $A^{X}_{\mu}$ associated to its orthonormal combination \cite{Antoniadis:2001cv}. 
 The additional $U(1)^{\prime\prime}$ comes equipped with
 an additional coupling constant $g^{\prime\prime}_{5D}$. Since $g^{\prime\prime}_{5D}$ is undetermined we
 can set it to any desired value. This way we can restore the weak mixing angle of the SM 
 \cite{Antoniadis:2001cv}. Note that the additional $U(1)^{\prime\prime}$ is anomalous.
 However,
 these anomalies can be cancelled by a generalised Green-Schwarz mechanism \cite{Antoniadis:2001cv}.

 In chapter \ref{su7fermionmasses} we have analysed how fermion masses and CKM mixing angles 
 are generated by the Higgs mechanism in the context of nonunitary parallel transporters. 
 We have seen that in order to
 explain why quarks and leptons have different masses, we need to introduce two additional nonunitary
 parallel transporters $\Phi^{Quark}$ and $\Phi^{Lepton}$, and we have to make a clear distinction
 between $\Phi^{Quark}$, $\Phi^{Lepton}$ and $\Phi^{Gauge}=\Phi$. Hence we have also three different 
 Higgs potentials $V(\Phi^{Quark})$, $V(\Phi^{Lepton})$ and $V(\Phi^{Gauge})$ in the model. When
 spontaneous symmetry breaking occurs quark and lepton masses are given by the Yukawa interactions
 \begin{equation*}
  \bar{q}_{L} \Phi^{Quark}_{min} q_{R} \quad , \quad \bar{q}_{L} \Phi^{Lepton}_{min} q_{R} \; .
 \end{equation*} 

 The model has a large Higgs sector. The reason is that we have three different 
 nonunitary parallel transporters in the model and therefore also 
 three different Higgs potentials. The minima $\Phi_{min}^{Gauge}$,
 $\Phi^{Quark}_{min}$ and $\Phi^{Lepton}_{min}$ are parametrised by altogether
 thirty parameters.
 For instance ten parameters $a_{1}^{quark},\dots,a_{7}^{quark}$ and $\alpha_{\hat{43}}^{quark},
 \alpha_{\hat{45}}^{quark}, \alpha_{\hat{47}}^{quark}$ parametrise $\Phi_{min}^{Quark}$. 
 Fluctuations of $a_{1}^{quark},\dots,a_{7}^{quark}$ and $\alpha_{\hat{43}}^{quark},
 \alpha_{\hat{45}}^{quark}, \alpha_{\hat{47}}^{quark}$
 around the minimum $\Phi_{min}^{Quark}$ of $V(\Phi^{Quark})$
 give rise to $10$ Higgs particles:
 \begin{itemize}
  \item $3$ Higgs particles which are associated to fluctuations of  
        $\alpha_{\hat{43}}^{quark}, \alpha_{\hat{45}}^{quark}, \alpha_{\hat{47}}^{quark}$. 
        Their mass squared is given by
        \begin{equation*}
         \label{massesunitaryhiggs}
         m_{A_{y}^{\hat{a}(0)}}^{2} \sim \left(\frac{g_{4}}{R} \right)^{2} \; 
         \frac{\partial^{2} V(\Phi^{quark})}{\partial \alpha_{a}}
        \end{equation*}
        at the minimum $\Phi_{min}^{Quark}$ of $V(\Phi^{Quark})$.
  \item $7$ Higgs particles which are associated to fluctuations of   
        $a_{1}^{quark},\dots,a_{7}^{quark}$.
        Their mass squared is given by
        \begin{equation*}
         \label{massesselfadjointhiggs}
         m_{a_{i}}^{2} \sim \left(\frac{g_{4}}{R} \right)^{2} \; \frac{\partial^{2} 
         V(\Phi^{quark})}{\partial a^{quark}_{i}}
        \end{equation*}
        at the minimum $\Phi_{min}^{Quark}$ of $V(\Phi^{Quark})$.
 \end{itemize}
 For $\Phi^{Lepton}_{min}$ and  $\Phi^{Gauge}_{min}$ the situation is analogous.
 Thus we conclude that the $SU(7)$ model predicts $30$ Higgs particles 
 which may be found at the LHC/ILC.

 In chapter \ref{su7fermionmasses} we have also seen that the up-type quark masses $m_{u},m_{c},m_{t}$
 are given by the diagonal part of $\Phi_{min}^{Quark}$ only. This is an important result and stands
 in contrast to the SM case where up- and down-type masses are given by the same Higgs doublet.
 In particular, this means that we can produce Higgs particles associated to the fluctuations of 
 $a_{1}^{quark},\dots,a_{7}^{quark}$ and
 originating from the selfadjoint part of $\Phi^{quark}$ exclusively by e.g. $t\bar{t}$ scattering. 
  
 In chapter \ref{su7fermionmasses} we have also investigated how the CKM mixing angles and the
 down-type quark masses are determined
 by the parameters $a_{1}^{quark},\dots,a_{7}^{quark}$ and $\alpha_{\hat{43}}^{quark},
 \alpha_{\hat{45}}^{quark}, \alpha_{\hat{47}}^{quark}$. It turned out
 that for 
 \begin{equation} 
  \label{summarysmallmixing}
  \tilde{m}_{d}, \tilde{m}_{s}, \tilde{m}_{b}, 
  \langle \mathcal{A}_{y}^{\hat{43}} \rangle^{q}_{k}, 
  \langle \mathcal{A}_{y}^{\hat{45}} \rangle^{q}_{k},
  \langle \mathcal{A}_{y}^{\hat{47}} \rangle^{q}_{k} \ll m_{x} \; ,
 \end{equation}
 where $\tilde{m}_{d}$, $\tilde{m}_{s}$, $\tilde{m}_{b}$ and $m_{x}$ are given by (\ref{summarymdquarks})
 and $ \langle \mathcal{A}_{y}^{\hat{a}} \rangle^{q}_{k}$ for $\hat{a}=\hat{43},\hat{45},\hat{47}$
 are given by (\ref{summaryabbre}), it is possible to get an
 exact analytical solution for all down-type quark masses and CKM mixing angles. We have called this
 scenario small mixing angle approximation. As a result
 we have obtained the correct hierarchy for the CKM mixing angles, i.e. 
 \begin{equation*}
  s_{13} \ll s_{23} \ll s_{12} \; .
 \end{equation*}
 However the Cabibbo angles turn out to be too small by a factor of two. In addition, also the s-quark
 mass turned out to be too large by approximately a factor of two. We we have found that the
 off-diagonal entries in the down-type quark mass matrix (\ref{mdresult})
 \begin{equation}
    \label{summarymdmatrix}
    M^{d}=\left( \begin{array}{ccc} 
           \tilde{m}_{d} & m_{12} & m_{13} \\
           m_{12} & \tilde{m}_{s} & m_{23} \\
           m_{13} & m_{23} & \tilde{m}_{b} \\ 
           \end{array} \right) \;  
 \end{equation}
 i.e. $m_{12}$, $m_{13}$ and $m_{23}$, are given by the ratios
 \begin{equation*}
    \frac{m_{12}}{m_{13}}=\frac{\langle \mathcal{A}_{y}^{\hat{45}} \rangle^{q}_{k}}
                               {\langle\mathcal{A}_{y}^{\hat{47}} \rangle^{q}_{k}}
    \; , \quad
    \frac{m_{12}}{m_{23}}=\frac{\langle \mathcal{A}_{y}^{\hat{43}} \rangle^{q}_{k}}
                               {\langle \mathcal{A}_{y}^{\hat{47}} \rangle^{q}_{k}}
    \; , \quad
    \frac{m_{13}}{m_{23}}=\frac{\langle \mathcal{A}_{y}^{\hat{43}} \rangle^{q}_{k}}
                               {\langle \mathcal{A}_{y}^{\hat{45}} \rangle^{q}_{k}} \; .
 \end{equation*} 
 These ratios give an interpretation for the triplet VEV
 of $SO(3)_{F}$. 
 The triplet VEV of $SO(3)_{F}$ plus a correction coming from the down-type quark masses 
 determine the CKM mixing angles via
 \begin{equation*}
    \frac{s_{13}}{s_{23}}=\frac{\langle \mathcal{A}_{y}^{\hat{43}} \rangle^{q}_{k}}
                                {\langle \mathcal{A}_{y}^{\hat{45}} \rangle^{q}_{k}}
    \; , \quad
    \frac{s_{12}}{s_{13}}=\frac{\langle \mathcal{A}_{y}^{\hat{45}} \rangle^{q}_{k} \; m_{b}}
                                {\langle \mathcal{A}_{y}^{\hat{47}} \rangle^{q}_{k}  
                                \; \left(m_{s}-m_{d} \right)}
     \; , \quad
     \frac{s_{12}}{s_{23}}=\frac{\langle \mathcal{A}_{y}^{\hat{43}} \rangle^{q}_{k} \; m_{b}}
                                {\langle \mathcal{A}_{y}^{\hat{45}} \rangle^{q}_{k} \; 
                                 \left(m_{s}-m_{d} \right)}
 \end{equation*}
 in small mixing angle approximation (\ref{summarysmallmixing}). 
 In addition, in this case we have no CP violation. 

 We have argued that it is possible to cure the problem that $\tilde{m}_{s}$ is too large  
 while $s_{12}$ is too small by
 lowering $m_{x}$ to $\mathcal{O}(\tilde{m_{s}})$. In addition, in this case  
 we get non-vanishing phase factors in $M^{d}$ that can lead to CP violation. However it is not clear
 if we can recover the SM case or not.
 
 The latter problem is a motivation for further investigations and extensions. First, since the case 
 $m_{x} \sim \mathcal{O}(\tilde{m_{s}})$ is analytically not exactly solvable, we suggest to make a 
 detailed numerical analysis. This analysis has to clarify whether it is possible to recover the SM
 case or not. Suppose that this is not possible. Then of course the $SU(7)$ model in its simplest
 version is ruled out. However there are several
 possible extensions to the $SU(7)$ model.  
 One possible extension is to consider the $SU(7)$ model in two extra
 dimensions. In this case, the underlying orbifold can be e.g. $T^{2}/\mathbb{Z}_{2}$. It is known
 \cite{Antoniadis:2001cv,Scrucca:2003ut} that on the orbifold $T^{2}/\mathbb{Z}_{2}$ one has
 two independent Higgs doublets. In the case of the $SU(7)$ model this means that 
 (\ref{4t4mdmatrix}) can be non-symmetric, i.e. 
 \begin{equation}
     \label{nonsymmM}
     M=\left( \begin{array}{cccc} 
      \tilde{m}_{d} & 0 & 0 & 
      i \; k_{\hat{43}} \; \langle \mathcal{A}_{y}^{\hat{43}(0)} \rangle^{quark \; 1} \\
      0 & \tilde{m}_{s} & 0 & 
      i \; k_{\hat{45}} \; \langle \mathcal{A}_{y}^{\hat{45}(0)} \rangle^{quark \; 1} \\
      0 & 0 & \tilde{m}_{b} & 
      i \; k_{\hat{47}} \; \langle \mathcal{A}_{y}^{\hat{47}(0)} \rangle^{quark \; 1} \\
      i \; k_{\hat{43}} \; \langle \mathcal{A}_{y}^{\hat{43}(0)} \rangle^{quark \; 2} &
      i \; k_{\hat{45}} \; \langle \mathcal{A}_{y}^{\hat{45}(0)} \rangle^{quark \; 2} &
      i \; k_{\hat{47}} \; \langle \mathcal{A}_{y}^{\hat{47}(0)} \rangle^{quark \; 2} &
      m_{x} \end{array} \right)
 \end{equation} 
 where $\langle \mathcal{A}_{y}^{\hat{43}(0)} \rangle^{quark \; 1}$, 
 $\langle \mathcal{A}_{y}^{\hat{45}(0)} \rangle^{quark \; 1}$ and 
 $\langle \mathcal{A}_{y}^{\hat{47}(0)} \rangle^{quark \; 1}$
 denote VEVs for the neutral components of 
 the first three $SU(2)_{L}$ Higgs doublets while 
 $\langle \mathcal{A}_{y}^{\hat{43}(0)} \rangle^{quark \; 2}$, 
 $\langle \mathcal{A}_{y}^{\hat{45}(0)} \rangle^{quark \; 2}$ and 
 $\langle \mathcal{A}_{y}^{\hat{47}(0)} \rangle^{quark \; 2}$
 denote VEVs for the neutral components of  
 the second three $SU(2)_{L}$ Higgs doublets. In contrast to the $SU(7)$ model in five dimensions,
 $\langle \mathcal{A}_{y}^{\hat{a}(0)} \rangle^{quark \; 1}$ and 
 $\langle \mathcal{A}_{y}^{\hat{a}(0)} \rangle^{quark \; 2}$ for $\hat{a}=\hat{43},\hat{45},\hat{47}$ 
 are independent and can in particular get different VEVs, i.e. we could have
 \begin{equation*}
  \langle \mathcal{A}_{y}^{\hat{a}(0)} \rangle^{quark \; 1} \neq 
  \langle \mathcal{A}_{y}^{\hat{a}(0)} \rangle^{quark \; 2} 
 \end{equation*}
 for $\hat{a}=\hat{43},\hat{45},\hat{47}$. Thus $M$ will  
 in general be non-symmetric.
 The down-type quark mass matrix $M^{d}$ (\ref{summarymdmatrix}) 
 in this case is obtained by first bringing $M$ to block
 diagonal form by a biunitary transformation
 \begin{equation*}
     M \to U^{L} \; M \; U^{R \dagger}=\left( \begin{array}{cc} M^{d} & 0 \\
                                0 & \tilde{m}_{x} \end{array} \right) \; ,
 \end{equation*}
 where $U^{L}$ and $U^{R}$ are now different matrices that can be written as
 \begin{equation*}
  U^{L \dagger}=P^{L}_{34} R^{L}_{34} R^{L}_{24} R^{L}_{14} \; , \quad
  U^{R \dagger}=P^{R}_{34} R^{R}_{34} R^{R}_{24} R^{R}_{14} \; ,
 \end{equation*}
 compare with (\ref{summarymatrixudagger}).
 In contrast to the symmetric case we now have to determine four phases (two in the symmetric case)
 and six mixing angles (three in the symmetric case). For instance the mixing angles $\theta_{34}^{L}$ 
 respectively $\theta_{34}^{R}$ are, ignoring phases, given by  
 \footnote{Note that $\langle \mathcal{A}_{y}^{\hat{a}} \rangle^{q \; 1}_{k}:= 
 k_{\hat{a}} \; \langle \mathcal{A}_{y}^{\hat{a}(0)} \rangle^{quark \; 1}$ and
 $\langle \mathcal{A}_{y}^{\hat{a}} \rangle^{q \; 2}_{k}:= 
 k_{\hat{a}} \; \langle \mathcal{A}_{y}^{\hat{a}(0)} \rangle^{quark \; 2}$ 
 for $\hat{a}=\hat{43},\hat{45},\hat{47}$.}
 \begin{gather*}
   \tan{2 \theta_{34}^{L}}= 
     \frac{2 \left( m_{x} \langle \mathcal{A}_{y}^{\hat{47}} \rangle^{q \; 1}_{k} 
           + \tilde{m}_{b} \langle \mathcal{A}_{y}^{\hat{47}} \rangle^{q \; 2}_{k} \right)}
     {m^{2}_{x} - \tilde{m}_{b}^{2} + \left( \langle \mathcal{A}_{y}^{\hat{47}} \rangle^{q \; 2}_{k} 
           \right)^{2}
           -\left( \langle \mathcal{A}_{y}^{\hat{47}} \rangle^{q \; 1}_{k} 
           \right)^{2}}  \\
     \tan{2 \theta_{34}^{R}}=\frac{2 \left( m_{x} \langle \mathcal{A}_{y}^{\hat{47}} \rangle^{q \; 2}_{k} 
           + \tilde{m}_{b} \langle \mathcal{A}_{y}^{\hat{47}} \rangle^{q \; 1}_{k} \right)}
     {m^{2}_{x} - \tilde{m}_{b}^{2} + \left( \langle \mathcal{A}_{y}^{\hat{47}} \rangle^{q \; 1}_{k} 
           \right)^{2}
           -\left( \langle \mathcal{A}_{y}^{\hat{47}} \rangle^{q \; 2}_{k} 
           \right)^{2}}      \; .   
 \end{gather*} 
 Thus $\theta_{34}^{L}$ and $\theta_{34}^{R}$ are different if  
 $\langle \mathcal{A}_{y}^{\hat{47}(0)} \rangle^{quark \; 1}$ 
 and 
 $\langle \mathcal{A}_{y}^{\hat{47}(0)} \rangle^{quark \; 2}$ get different VEVs.
 In a second step we diagonalise $M^{d}$ by a second biunitary transformation
 \begin{equation*}
  U^{d}_{L} \; M^{d} \; U^{d \dagger}_{R}=M^{d}_{diag} \; ,
 \end{equation*}
 and the CKM matrix is given by
 \begin{equation*}
  V_{CKM}=U^{d \dagger}_{R} \; .
 \end{equation*}    
 The matrix $U^{d}_{L}$ involves three additional mixing angles $\theta_{12}^{L}$, $\theta_{13}^{L}$
 and $\theta_{23}^{L}$ and in principle two additional phases.  
 However, only the CP violating phase and three mixing angles of the CKM matrix
 $V_{CKM}=U^{d \dagger}_{R}$ are of importance. 
 We immediately see that in the non-symmetric case we have vastly more freedom 
 to recover the correct CP violation of the SM and the correct down-type quark masses and
 CKM mixing angles. However such a model may reintroduce unsuppressed tree-level FCNC
 and thus a carful analysis is needed. 
        
 Second, we would like to calculate the RG-flow for the $SU(7)$ model. The result of this RG-flow will 
 determine definite numerical values for all thirty parameters $a_{i},\dots,a_{7}$,
 $a_{i}^{quark},\dots,a_{7}^{quark}$,  $a_{i}^{lepton},\dots,a_{7}^{lepton}$, 
 $\alpha_{\hat{\hat{43}}}, \alpha_{\hat{\hat{45}}}, \alpha_{\hat{\hat{47}}}$, 
 $\alpha^{quark}_{\hat{\hat{43}}}, \alpha^{quark}_{\hat{\hat{45}}}, \alpha^{quark}_{\hat{\hat{47}}}$ and
 $\alpha^{lepton}_{\hat{\hat{43}}}, \alpha^{lepton}_{\hat{\hat{45}}}, \alpha^{lepton}_{\hat{\hat{47}}}$
 parametrising $\Phi^{gauge}_{min}$, $\Phi_{min}^{quark}$ and $\Phi_{min}^{lepton}$, respectively.
 This means that we can calculate definite numerical values for 
 \begin{itemize}
  \item all gauge boson masses: $W^{1,2}_{\mu}$, $Z_{\mu}$, $W^{1,2(1)}_{\mu}$, $Z^{(1)}_{\mu}$,
         $\gamma^{(1)}$, $H^{j(0)}_{\mu}$
        and $H_{\mu}^{j(1)}$
  \item the fermion masses: $m_{u},m_{c},m_{t},m_{d},m_{s},m_{b}$ and $e,\mu,\tau$
  \item the mixing angles: $s_{12},s_{13}$ and $s_{23}$ \; .
 \end{itemize}
 This is a remarkable feature of our model. Note that the neutrino masses are not determined
 by the results of the RG-flow.
  
 In addition, the RG-flow will also determine the shape of
 the Higgs potentials $V(\Phi^{qauge})$,  $V(\Phi^{quark})$ and  $V(\Phi^{lepton})$. Therefore it
 would be possible to calculate the masses of all thirty Higgs particles in the $SU(7)$ model.
 This is also a remarkable feature of our model. However there is one question which has to been
 answered: What is the origin of the quasi $\mathcal{S}_{2}$ symmetry? Is it an accidental symmetry
 or is there any principle behind this?

 Third, as already indicated above, we want to be able to extend our model to higher-dimensional 
 orbifolds, e.g. two dimensional orbifolds like $T^{2}/\mathbb{Z}_{2}$. In principle the same
 steps which have lead to an eBTLM can be repeated for the orbifold  $T^{2}/\mathbb{Z}_{2}$.
 As in the five-dimensional case, we can put  $T^{2}/\mathbb{Z}_{2}$ on a
 lattice which must be two-dimensional and then calculate the renormalisation group flow. Also
 we should be able to determine orbifold conditions for nonunitary parallel transporters. Note
 that on  $T^{2}/\mathbb{Z}_{2}$ the orbifold projection $P$ is still involutive. However 
 for two-dimensional orbifolds this not true in general. In particular,
 on the orbifold $T^{2}/\mathbb{Z}_{3}$, $T^{2}/\mathbb{Z}_{4}$
 and $T^{2}/\mathbb{Z}_{6}$ the orbifold projection $P$ has to fulfil $P^{3}=1$, $P^{4}=1$ and $P^{6}=1$,
 respectively. Therefore we have to reinvestigate orbifold conditions for nonunitary parallel transporters
 in the case of noninvolutive $P$. 

 Fourth we remind the reader that in the $SU(7)$ model we have completely ignored colour. If we
 include strong interactions we could ask whether it is possible to find a GUT extention for the
 $SU(7)$ model. A possible GUT group which is compatible with the assignments of fermions to the 
 different fixed points of the orbifold is the Pati-Salam group
 \begin{equation}     
  G_{PS}=SU(2)_{L} \times SU(2)_{R} \times SU(4)_{c} \; .
 \end{equation}
 We can extend $G_{PS}$ to
 \begin{gather}
  G_{extendedPS}=SU(6)_{L} \times SU(6)_{R} \times SU(4)_{c} \; .
 \end{gather}
 The extension from $SU(2)_{L}$ to $SU(6)_{L}$ is exactly the same
 as in the $SU(7)$ model. In addition, the analogue extension can be made for $SU(2)_{R}$. 
 It would be interesting to investigate how a Gauge-Higgs unification model can be built from
 $G_{extendedPS}$. In addition we have to determine the symmetry breaking pattern. 
 We expect that such a GUT breaking is possible only on higher-dimensional
 orbifolds.

\begin{appendix}

  \chapter{CKM mixing matrix from $M^{d}$}
  
    \label{appendixckmfrommd}

    \begin{lemma}
     Let $M^{d}$ be given by
     \begin{equation}
      M^{d}=\left( \begin{array}{ccc} \tilde{m}^{\prime}_{d} & m_{12} 
           & m_{13} \\ m_{12}^{\ast} & \tilde{m}^{\prime}_{s} & m_{23} \\  
             m_{13}^{\ast} & m_{23} & m_{b} \\ \end{array} \right)
     \end{equation}
     where 
     \begin{eqnarray}
      \label{ckmmatrixoffdiagonalelements1}
      & \tilde{m}_{d}^{\prime}=13.4 \; \text{MeV}  
      & m_{12}=\hat{m}_{12} + \hat{m}_{13} s_{23} e^{i \; \delta_{13}}
              \approx 24.46 \; \text{MeV} \nonumber \\
      & \tilde{m}_{s}^{\prime}=119.2 \; \text{MeV} 
      &  m_{13}=\hat{m}_{12} s_{23} - \hat{m}_{13} e^{i \; \delta_{13}}
               \approx 16.65 \; e^{i \; \frac{2 \pi}{3}} \text{MeV} 
               \nonumber \\
      & m_{b}=4250 \; \text{MeV} 
      & m_{23}=173.8 \; \text{MeV} \; ,
     \end{eqnarray}
     and
     \begin{equation}
      \label{valueoffdiagonalelelmentscomplexckm1}
      \hat{m}_{12}=24.8 \; \text{MeV}
      \quad , \quad \hat{m}_{13}=16.1 \; \text{MeV} 
      \quad , \quad \delta_{13}=\frac{2 \pi}{3}  \; .
     \end{equation}
     The unitary transformation 
     \begin{equation}
      \label{ckmlemmaunit1}
      V_{CKM}^{\dagger} \; M^{d} \; V_{CKM}=M^{d}_{diag} \; .
     \end{equation}
     leads to the CKM matrix
     \begin{equation}
      V_{CKM}= \left( \begin{array}{ccc} 1-\frac{1}{2} \lambda^{2} 
               & \lambda & A \lambda^{3} (\rho-i\eta) \\
               -\lambda & 1-\frac{1}{2} \lambda^{2} & A \lambda^{2} \\  
               A \lambda^{3}(1-\rho-i\eta) & -A \lambda^{2} & 1 \\
               \end{array} \right)
     \end{equation}
     where $\lambda=s_{12}=0.22$, $A \lambda^{2}=s_{23}=0.042$, 
     $A \lambda^{3} (\rho-i\eta)=s_{13} \; e^{-i \delta_{13}}$, 
     $s_{13}=0.0039$ and $\delta_{13}=\frac{2 \pi}{3}$.
   \end{lemma}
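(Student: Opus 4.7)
The plan is to diagonalise $M^{d}$ by successive Euler rotations, following the standard parametrisation $V_{CKM} = R_{23} U_{13} R_{12} = R_{23} P_{13}^{\ast} R_{13} P_{13} R_{12}$, with each angle chosen to annihilate a specific off-diagonal entry. First I would conjugate $M^{d}$ with $R_{23}$ to eliminate its $(2,3)$ and $(3,2)$ entries. Since $m_{23}$ is real and the diagonal entries $\tilde m_{s}', m_{b}$ are real, this reduces to the familiar two-by-two problem and yields $\tan 2\theta_{23} = 2 m_{23}/(m_{b} - \tilde m_{s}')$. Substituting the numerical values from \eqref{ckmmatrixoffdiagonalelements1} gives $s_{23} \approx 0.042$, consistent with \eqref{ckmmixinganglescpphase}. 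A short computation also shows that the shift $m_{b} - \tilde m_{b} = \mathcal{O}(s_{23}^{2} m_{b})$ lies below the Wolfenstein truncation order.

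The crucial observation is that $m_{12}$ and $m_{13}$ in \eqref{ckmmatrixoffdiagonalelements1} have been written precisely in the form that the inverse $R_{23}$ rotation would produce from the pair $(\hat m_{12}, \hat m_{13} e^{i\delta_{13}})$, modulo terms of order $s_{23}^{2}$. Consequently, after the $R_{23}$-conjugation the $(1,2)$ and $(1,3)$ entries collapse to $\hat m_{12}$ and $\hat m_{13} e^{i\delta_{13}}$ respectively. I would then introduce the phase matrix $P_{13} = \mathrm{diag}(e^{i\phi_{1}}, 1, e^{i\phi_{3}})$ with $\phi_{1} - \phi_{3} = \delta_{13}$, absorbing the complex phase so that the subsequent real rotation $R_{13}$ can act on a manifestly real two-by-two block.

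Demanding that $R_{13}$ annihilate the rephased $(1,3)$ and $(3,1)$ entries gives $\tan 2\theta_{13} \approx 2 \hat m_{13}/(m_{b} - \tilde m_{d}')$, and substituting the values from \eqref{valueoffdiagonalelelmentscomplexckm1} yields $s_{13} \approx 0.0039$. An analogous real rotation $R_{12}$ then removes the $(1,2)$ entry, with $\tan 2\theta_{12} \approx 2 \hat m_{12}/(\tilde m_{s}' - \tilde m_{d}')$, producing the Cabibbo angle $\lambda = s_{12} \approx 0.22$. Reassembling $V_{CKM} = R_{23} P_{13}^{\ast} R_{13} P_{13} R_{12}$ and expanding in powers of $\lambda$ up to $\mathcal{O}(\lambda^{3})$ reproduces the Wolfenstein matrix displayed in the lemma.

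The main obstacle is careful bookkeeping of the higher-order corrections left behind by each rotation — in particular the $\mathcal{O}(s_{23}^{2})$ residues in the $(1,2)$/$(1,3)$ block after the first rotation, and the $\mathcal{O}(s_{13}^{2}), \mathcal{O}(s_{12}^{2})$ mass shifts $m_{i} - \tilde m_{i}'$ produced by the subsequent rotations. One must verify that all such corrections are suppressed by at least $\lambda^{4}$, so that the truncated Wolfenstein form genuinely emerges. Since the input values in \eqref{ckmmatrixoffdiagonalelements1}–\eqref{valueoffdiagonalelelmentscomplexckm1} were engineered to match \eqref{ckmmixinganglescpphase}, this consistency check should go through, but the $c_{23} \to 1$ approximation implicit in the simplification of the $(1,2)/(1,3)$ entries is the one step that deserves explicit monitoring.
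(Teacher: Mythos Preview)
Your proposal is correct and follows essentially the same route as the paper: factor $V_{CKM}=R_{23}P_{13}^{\ast}R_{13}P_{13}R_{12}$, eliminate the $(2,3)$ block first, observe that the engineered form of $m_{12},m_{13}$ collapses to $\hat m_{12}$ and $\hat m_{13}e^{i\delta_{13}}$ after $R_{23}$, rephase with $P_{13}$, then do the real $R_{13}$ and $R_{12}$ rotations. The paper carries out exactly these five steps and tracks the intermediate mass shifts explicitly (e.g.\ $\tilde m_{s}'\to\tilde m_{s}=111.9$ MeV after $R_{23}$, $\tilde m_{d}'\to\tilde m_{d}=13.3$ MeV after $R_{13}$), so in the $\theta_{12}$ formula it uses these updated values rather than the original $\tilde m_{s}',\tilde m_{d}'$; it also spells out the second rephasing $P_{13}^{\ast}$ that removes the $e^{i\phi_{1}}$ picked up by the $(1,2)$ entry, which you left implicit in the factorisation.
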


   \begin{proof}
    We write $V_{CKM}$ as a product of three Euler matrices and a phase matrix 
    \begin{equation}
     V_{CKM}=R_{23} P_{13}^{\ast} R_{13} P_{13} R_{12}
    \end{equation}
    where
    \begin{equation}
     P_{13}=\left( \begin{array} {ccc} e^{i \phi_{1}} & 0 & 0 \\
            0 & 1 & 0 \\ 0 & 0 & e^{i \phi_{3}} \end{array} \right) \; ,
    \end{equation}
    $\phi_{1}$ arbitrary and $\phi_{3}$ such that $\phi_{1}-\phi_{3}=\delta_{13}=\frac{2 \pi}{3}$.   
    Inserting this expansion in (\ref{ckmlemmaunit1}) we obtain
    \begin{equation}
     R^{t}_{12} P_{13}^{\ast} R^{t}_{13} P_{13} R^{t}_{23} \; M_{d} \; 
     R_{23}  P^{\ast}_{13} R_{13} P_{13} R_{12}=M^{d}_{diag}
    \end{equation}
    The calculation will therefore be divided into five steps.\\
    \\
    First step : \;
    We perform the real rotation $R_{23}$ on $M^{d}$. The purpose is to put
    zeroes in the $23,32$ elements
    \begin{equation}
     M^{d}_{1}=R^{t}_{23} M^{d} R_{23}=\left( \begin{array}{ccc} 
              \tilde{m}^{\prime}_{d} 
              & \tilde{m}_{12} & \tilde{m}_{13} \\ \tilde{m}_{12} & 
              \tilde{m}_{s} & 0 \\  
              \tilde{m}^{\ast}_{13} & 0 & \tilde{m}_{b} \\ \end{array} \right) 
    \end{equation} 
    The zeroes in the $23,32$ elements are implemented by diagonalising the
    lower 23 block of $M^{d}$. This block is obtained by striking out the row 
    and the column in which the unit element of $R_{23}$ appears.
    Thus we get the reduced rotation
    \begin{equation}
     \label{reduced23}
     R^{t}_{23} \left( \begin{array}{cc} \tilde{m}^{\prime}_{s} & m_{23} \\  
                m_{23} & m_{b} \end{array} \right) R_{23} := 
     \left( \begin{array}{cc} \tilde{m}_{s} & 0 \\  
                0 & \tilde{m}_{b} \end{array} \right) \; .
    \end{equation}
    This matrix equation fixes the mixing angle $\theta_{23}$, i.e.
    \begin{equation}
     \tan{2 \theta_{23}}
     =\frac{2 m_{23}}{m_{b}-\tilde{m}^{\prime}_{s}}  \; .
    \end{equation}
    Inserting (\ref{ckmmatrixoffdiagonalelements1})  we obtain 
    \begin{equation}
     \label{proofse23}
     s_{23}=0.042=A \lambda^{2}
    \end{equation}
    For the matrix elements $\tilde{m}_{s}$ and $\tilde{m}_{b}$ we get from (\ref{reduced23})  
    \begin{gather}
     \tilde{m}_{s}=c_{23}^{2} \tilde{m}^{\prime}_{s}-2 s_{23} c_{23} m_{23}
                   + s_{23}^{2} m_{b} \\
     \tilde{m}_{b}=s_{23}^{2} \tilde{m}^{\prime}_{s}+2 s_{23} c_{23} m_{23}
                   + c_{23}^{2} m_{b} \nonumber \; .
    \end{gather}
    Inserting (\ref{ckmmatrixoffdiagonalelements1}) and (\ref{proofse23}) we
    obtain
    \begin{equation}
     \label{mstilde}
     \tilde{m}_{s}=111.9 \; \text{MeV} 
     \quad , \quad \tilde{m}_{b} \approx m_{b} \; .
    \end{equation}
    The remaining elements of $M_{1}^{d}$ are given by
    \begin{equation}
     \left( \tilde{m}_{12}, \tilde{m}_{13} \right)=
     \left( m_{12}, m_{13} \right) R_{23} \; , 
    \end{equation}
    and consequently, since 
    \begin{equation}
     R_{23}=\left( \begin{array}{cc} c_{23} & s_{23} \\ -s_{23}
             & c_{23} \end{array} \right) \; ,
    \end{equation}
    we obtain
    \begin{gather} 
     \tilde{m}_{12}=m_{12}-s_{23} m_{13} =(=:) \hat{m}_{12} \\ 
     \tilde{m}_{13}=s_{23} m_{12}+ m_{13} =(=:) \mid \hat{m}_{13} \mid
     e^{i \; \delta_{13}} \nonumber
    \end{gather}
    where we have ignored terms of the order $\mathcal{O}(s_{23}^{2})$.
    Indeed, as indicated in the brackets, this equations define $m_{12}$ and
    $m_{13}$ in (\ref{ckmmatrixoffdiagonalelements1}). 
    In addition we  get
    \begin{gather} 
     \tilde{m}_{21}=m_{12}^{\ast}-s_{23} m_{13}^{\ast}=\hat{m}_{12}
                   =\tilde{m}_{12}  \\ 
     \tilde{m}_{31}=s_{23} m_{12}^{\ast}+ m_{13}^{\ast}=\mid \hat{m}_{13} \mid 
                    e^{-i \; \delta_{13}}=\tilde{m}_{13}^{\ast} \nonumber \; . 
    \end{gather}
    In the following let us write all phases explicitly. Thus $M^{d}_{1}$ reads
    \begin{equation}
     \label{m1dphases}
     M^{d}_{1}=\left( \begin{array}{ccc} 
               \tilde{m}^{\prime}_{d} & \tilde{m}_{12} 
               & \mid \tilde{m}_{13} \mid e^{-i \; \delta_{13}} \\
               \tilde{m}_{12} & \tilde{m}_{s} & 0 \\  
               \mid \tilde{m}_{13} \mid e^{i \; \delta_{13}} & 0 
               & \tilde{m}_{b} \\ \end{array} \right) 
    \end{equation} 
    where $\delta_{13}=\phi_{1}-\phi_{3}$. \\
    \\
    Second step: \;
    We multiply $M_{1}^{d}$ by the phase matrix $P_{13}$. The purpose of this
    rephasing is to put real values in the $13,31$ elements
    in order to get a real angle $\theta_{13}$ in step three. Indeed
    \begin{equation}
     \label{phasetrafomd}
     M^{d}_{2}=P_{13} M^{d}_{1} P^{\ast}_{13}=\left( \begin{array}{ccc} 
             \tilde{m}^{\prime}_{d} & \mid \tilde{m}_{12} \mid e^{i \; \phi_{1}} 
             &  \tilde{m}_{13}   \\
             \mid \tilde{m}_{12} \mid e^{-i \; \phi_{1}} & \tilde{m}_{s} & 0 \\  
             \tilde{m}_{13} & 0 & \tilde{m}_{b} \\ \end{array} \right) 
    \end{equation}
    transforms away $\delta_{13}$ in (\ref{m1dphases}). \\
    \\
    Third step: \;
    We perform the real rotation $R_{13}$ on $M_{2}^{d}$. The purpose is to put
    zeroes in the $13,31$ elements
    \begin{equation}
     \label{md3}
     M^{d}_{3}=R^{t}_{13} M^{d}_{2} R_{13}=\left( \begin{array}{ccc} 
               \tilde{m}_{d} 
               & \mid \tilde{m}_{12}^{\prime} \mid e^{i \; \phi_{1}} & 0 \\ 
               \mid \tilde{m}_{12}^{\prime} \mid  e^{-i \; \phi_{1}} & 
               \tilde{m}_{s} & 0 \\  
               0 & 0 & \tilde{m}_{b} \\ \end{array} \right) \; .
    \end{equation}
    The zeroes in the $13,31$ elements are implemented by diagonalising the
    outer 13 block of $M^{d}_{2}$.   This block is obtained by striking out the row 
    and the column in which the unit element of $R_{13}$ appears.
    Thus we get the reduced rotation reads 
    \begin{equation}
     \label{reducedrot13}
     R^{t}_{13} \left( \begin{array}{cc} \tilde{m}_{d}^{\prime} & 
                \tilde{m}_{13} \\  \tilde{m}_{13} & m_{b} 
                \end{array} \right) R_{13} := 
     \left( \begin{array}{cc} \tilde{m}_{d} & 0 \\  
                0 & \tilde{m}_{b} \end{array} \right) \; .
    \end{equation}
    Note that all quantities in this matrix equation are real due to the phase
    transformation (\ref{phasetrafomd})in step two. This leads to
    the real mixing angle
    \begin{equation}
     s_{13} \approx \frac{\tilde{m}_{13}}{m_{b}-\tilde{m}_{d}^{\prime}} \; .
    \end{equation}
    Inserting (\ref{ckmmatrixoffdiagonalelements1}) and (\ref{valueoffdiagonalelelmentscomplexckm1}) we
    obtain
    \begin{equation}
     \label{anglea13}
     s_{13}=0.0039=A \lambda^{3} \rho \; .
    \end{equation}
    For the matrix elements $\tilde{m}_{d}$ and $\tilde{m}_{b}$ we obtain from (\ref{reducedrot13}) 
    \begin{gather}
     \tilde{m}_{d}=\tilde{m}^{\prime}_{d}-2 s_{13} m_{13} \\
     \tilde{m}_{b}=m_{b}+2 s_{13} m_{13} \nonumber \; .
    \end{gather}
    Inserting (\ref{ckmmatrixoffdiagonalelements1}), (\ref{valueoffdiagonalelelmentscomplexckm1})
    and (\ref{anglea13}) we obtain
    \begin{equation}
     \label{mdtilde}
     \tilde{m}_{d}=13.3 \; \text{MeV}
     \quad , \quad \tilde{m}_{b} \approx m_{b} \; .
    \end{equation}    
    The remaining elements of $M^{d}_{2}$ read
    \begin{equation} 
     \tilde{m}_{12}^{\prime}=\tilde{m}_{12} \quad , \quad
     \tilde{m}_{23}^{\prime}=s_{13} m_{12}=0.097 \approx 0 \; .
    \end{equation}
    
    Fourth step: \;
    We multiply $M_{3}^{d}$ by the phase matrix $P_{13}$. The purpose of this
    rephasing is to put real values in the $12,21$ elements of (\ref{md3}) 
    in order to get a real angle $\theta_{12}$ in step five. Indeed
    \begin{equation}
     M^{d}_{4}=P^{\ast}_{13} M^{d}_{3} P_{13}=\left( \begin{array}{ccc} 
               \tilde{m}_{d} & \tilde{m}_{12}^{\prime}  & 0 \\ 
               \tilde{m}_{12}^{\prime} & \tilde{m}_{s} & 0 \\  
               0 & 0 & \tilde{m}_{b} \\ \end{array} \right)
    \end{equation}
    transforms away $\phi_{1}$ in the $12,21$ elements  of (\ref{md3}). Thus we are left with
    a complete real matrix $M^{d}_{4}$. \\
    \\
    Fifth step: \;
    We perform the real rotation $R_{12}$ on $M_{4}^{d}$. The purpose is to put
    zeroes in the $12,21$ elements
    \begin{equation}
     M^{d}_{diag}=R^{t}_{12} M^{d}_{4} R_{12}=\left( \begin{array}{ccc} 
               m_{d} & 0 & 0 \\ 0 & m_{s} & 0 \\  
               0 & 0 & m_{b} \\ \end{array} \right) 
    \end{equation}
    The zeroes in the $12,21$ elements are implemented by diagonalising the
    upper 12 block of $M^{d}_{4}$.  This block is obtained by striking out the row 
    and the column in which the unit element of $R_{12}$ appears.
    Thus we get the reduced rotation 
    \begin{equation}
     R^{t}_{12} \left( \begin{array}{cc} \tilde{m}_{d} & 
                \tilde{m}^{\prime}_{12} \\  \tilde{m}^{\prime}_{12} & 
                \tilde{m}_{s} 
                \end{array} \right) R_{12} := 
     \left( \begin{array}{cc} m_{d} & 0 \\  
                0 & m_{s} \end{array} \right) \; .
    \end{equation}
    This matrix equation fixes the mixing angle $s_{12}$, i.e. 
    \begin{equation}
     \tan{2 \theta_{12}} = \frac{2 \tilde{m}^{\prime}_{12}}{\tilde{m}_{s}
                         -\tilde{m}_{d}} \; .
    \end{equation}
    Inserting (\ref{mstilde}), (\ref{mdtilde}) and (\ref{valueoffdiagonalelelmentscomplexckm1}) we obtain
    \begin{equation}
     \label{angles12}
     s_{12}=0.22=\lambda
    \end{equation}
    Using the approximation $c_{12}=0.973 \approx 1-\frac{\lambda^{2}}{2}$
    the reduced rotation $R_{12}$ reads
    \begin{equation}
     R_{12} \approx \left( \begin{array}{cc} 1-\frac{\lambda^{2}}{2} 
             & s_{12} \\ -s_{12}
             & 1-\frac{\lambda^{2}}{2} \end{array} \right) \; .
    \end{equation}
    For the matrix elements $\tilde{m}_{d}$ and $m_{b}$ one obtain
    \begin{gather}
     m_{d}=\tilde{m}_{d}(1-\frac{\lambda^{2}}{2})^2
           -2 \lambda (1-\frac{\lambda^{2}}{2}) \tilde{m}^{\prime}_{12}
           +\tilde{m}_{s}  \lambda^{2} \\
     m_{s}=\tilde{m}_{d} \lambda^{2}
           +2 \lambda (1-\frac{\lambda^{2}}{2}) \tilde{m}^{\prime}_{12}
           +\tilde{m}_{s} (1-\frac{\lambda^{2}}{2})^2  \nonumber \; . 
    \end{gather}
    Inserting (\ref{mstilde}), (\ref{mdtilde}),(\ref{valueoffdiagonalelelmentscomplexckm1}) 
    and (\ref{angles12}) we obtain 
    \begin{equation}
     m_{d}=7.4 \; \text{MeV} \quad , \quad m_{s}=114.1 \; \text{MeV}
    \end{equation}    
    Note that $m_{b}=4250$ is approximately unchanged by all transformations.   
   \end{proof}

  \chapter{General procedure for obtaining $M^{d}$ and $V_{CKM}$ from $M$}

   \label{appendixprocedureM}
   We start with the $4 \times 4$ matrix
   \begin{equation}
    M=\left( \begin{array}{cccc} 
      \tilde{m}_{d} & 0 & 0 & m_{14}  \\
      0 & \tilde{m}_{s} & 0 & m_{24} \\
      0 & 0 & \tilde{m}_{b} & m_{34} \\
      m_{14} & m_{24} & m_{34} & m_{x} \end{array} \right)
   \end{equation}
   where $m_{14}$, $m_{24}$ and $m_{34}$ denote the off-diagonal elements of $M$. {\em{In the following we
   ignore phases and treat $M$ as real}}.
   We must transform $M$ on block diagonal form
   \begin{equation}
    \label{appendixmdtrafo}
    M \to U \; M \; U^{\dagger}=\left( \begin{array}{cc} M^{d} & 0 \\
                                0 & \tilde{m}_{x} \end{array} \right) \; ,
   \end{equation}
   where $M^{d}$ is the $3 \times 3$ down-quark mass matrix.
   We write $U$ as a product of the real three Euler matrices
   \begin{equation}
    U^{\dagger}=R_{34} R_{24} R_{14}
   \end{equation}
   where
   \begin{gather}
    R_{34}=\left( \begin{array}{cccc} 1 & 0 & 0 & 0 \\
                  0 & 1 & 0 & 0 \\ 0 & 0 & c_{34} & s_{34} \\
                  0 & 0 & -s_{34} & c_{34} \end{array} \right) \\
    R_{24}=\left( \begin{array}{cccc} 1 & 0 & 0 & 0 \\ 
                  0 & c_{24} & 0 & s_{24} \\ 0 & 0 & 1 & 0 \\
                  0 & -s_{24} & 0 & c_{24} 
                  \end{array} \right) \\
    R_{14}=\left( \begin{array}{cccc} c_{14} & 0 & 0 & s_{14} \\
                   0 & 1 & 0 & 0 \\ 0 & 0 & 1 & 0 \\
                  -s_{14} & 0 & 0 & c_{14} \end{array} \right) \\
   \end{gather}   
   Inserting this expansion (\ref{appendixmdtrafo}) reads
   \begin{equation}
    R^{t}_{14} R^{t}_{24} R^{t}_{34} M R_{34} R_{24} R_{14}=
                              \left( \begin{array}{cc} M_{d} & 0 \\
                                0 & \tilde{m}_{x} \end{array} \right)
   \end{equation}
   The calculatation will be divided into three steps. \\
   \\
   First step : \;
   We perform the real rotation $R_{34}$ on $M$. The purpose is to put
   zeroes in the $34,43$ elements
   \begin{equation}
     M_{1}=R^{t}_{34} M R_{34}=\left( \begin{array}{cccc} 
               \tilde{m}_{d} & 0 & \tilde{m}_{13} & \tilde{m}_{14} \\ 
               0 & \tilde{m}_{s} & \tilde{m}_{23} & \tilde{m}_{24} \\
               \tilde{m}_{13} & \tilde{m}_{23} & m_{b} & 0 \\ 
               \tilde{m}_{14} & \tilde{m}_{24} & 0 & \tilde{m}_{x} 
               \end{array} \right) 
   \end{equation}
   The zeroes in the $34,43$ elements are implemented by diagonalising the
   lower 34 block of $M$. This block is obtained by striking out the row 
   and the column in which the unit elements of $R_{34}$ appear.
   Thus we get the reduced rotation
   \begin{equation}
    R^{t}_{34} \left( \begin{array}{cc} \tilde{m}_{b} & m_{34} \\  
                m_{43} & m_{x} \end{array} \right) R_{34} := 
    \left( \begin{array}{cc} m_{b} & 0 \\  
                0 & \tilde{m}_{x} \end{array} \right) \; .
   \end{equation}
   From this matrix we get the mixing angle $\theta_{34}$ 
   \begin{equation}
    \tan{2 \theta_{34}}=\frac{2 m_{34}}{m_{x}-\tilde{m}_{b}} 
   \end{equation} 
   and the diagonal elements 
   \begin{gather} 
    m_{b}=\tilde{m}_{b}c_{34}^{2}-2 m_{34} s_{34} c_{34}+m_{x} s_{34}^{2} \\
    \tilde{m}_{x}=\tilde{m}_{b}s_{34}^{2}+2 m_{34} s_{34} c_{34}+m_{x} c_{34}^{2} \; .
   \end{gather}
   The remaing elements are given by the reduced rotation
   \begin{equation}
    \left( \begin{array}{cc} \tilde{m}_{13} & \tilde{m}_{14} \\
            \tilde{m}_{23} & \tilde{m}_{24} \end{array} \right)
    =\left( \begin{array}{cc} 0 & m_{14} \\
            0 & m_{24} \end{array} \right)                    
    \left( \begin{array}{cc} c_{34} & s_{34} \\
            -s_{34} & c_{34} \end{array} \right) \; .
   \end{equation}
   This leads to
   \begin{gather}
    \tilde{m}_{13}=-s_{34} m_{14} \quad , \quad  
    \tilde{m}_{14}=c_{34} m_{14} \\     
    \tilde{m}_{23}=-s_{34} m_{24} \quad , \quad  
    \tilde{m}_{24}=c_{34} m_{24}
   \end{gather}
   \\
   Second step : \;
   We perform the real rotation $R_{24}$ on $M_{1}$. The purpose is to put
   zeroes in the $24,42$ elements
   \begin{equation}
     M_{2}=R^{t}_{24} M_{1} R_{24}=\left( \begin{array}{cccc} 
          \tilde{m}_{d} & \tilde{m}^{\prime}_{12} & \tilde{m}_{13} 
          & \tilde{m}^{\prime}_{14} \\ 
          \tilde{m}^{\prime}_{12} & m_{s} 
          & \tilde{m}^{\prime}_{23} & 0 \\
          \tilde{m}_{13} & \tilde{m}^{\prime}_{23} & m_{b} 
          & \approx 0 \\ 
          \tilde{m}^{\prime}_{14} & 0 & \approx 0 & \tilde{m}^{\prime}_{x} 
          \end{array} \right) 
   \end{equation}
   The zeroes in the $24,42$ elements are implemented by diagonalising the
   lower middle block of $M_{1}$. This block is obtained by striking out the 
   row and the column in which the unit elements of $R_{24}$ appear.
   Thus we get the reduced rotation
   \begin{equation}
    R^{t}_{24} \left( \begin{array}{cc} \tilde{m}_{s} & \tilde{m}_{24} \\  
                \tilde{m}_{24} & \tilde{m}_{x} \end{array} \right) R_{24} := 
    \left( \begin{array}{cc} m_{s} & 0 \\  
                0 & \tilde{m}^{\prime}_{x} \end{array} \right) \; .
   \end{equation}
   From this matrix we get the mixing angle $\theta_{24}$ 
   \begin{equation}
    \label{teta24}
    \tan{2 \theta_{24}}=\frac{2 \tilde{m}_{24}}{\tilde{m}_{x}-\tilde{m}_{s}}=
                        \frac{2 c_{34} m_{24}}{\tilde{m}_{x}-\tilde{m}_{s}}
   \end{equation} 
   and the diagonal elements 
   \begin{gather} 
    \label{appendixms}
    m_{s}=\tilde{m}_{s}c_{24}^{2}-2 \tilde{m}_{24} s_{24} c_{24}
                 +\tilde{m}_{x} s_{24}^{2} \\
    \tilde{m}^{\prime}_{x}=\tilde{m}_{s}s_{24}^{2}+2 \tilde{m}_{24} s_{24} c_{24}
                          +\tilde{m}_{x} c_{24}^{2} \; .
   \end{gather}
   The remaing elements are given by the reduced rotation
   \begin{equation}
    \left( \begin{array}{cc} \tilde{m}^{\prime}_{12} 
            & \tilde{m}^{\prime}_{14} \\
            \tilde{m}^{\prime}_{23} 
            & \tilde{m}^{\prime}_{34} \end{array} \right)
    =\left( \begin{array}{cc} 0 & \tilde{m}_{14} \\
            \tilde{m}_{23} & 0 \end{array} \right)                    
    \left( \begin{array}{cc} c_{24} & s_{24} \\
            -s_{24} & c_{24} \end{array} \right) \; .
   \end{equation}
   This leads to
   \begin{gather}
    \tilde{m}^{\prime}_{12}=-s_{24} \tilde{m}_{14}=-s_{24}c_{34} m_{14}
    \quad , \quad  
    \tilde{m}^{\prime}_{14}=c_{24} \tilde{m}_{14}=c_{24}c_{34} m_{14}  \\ 
    \label{appendixm23}    
    \tilde{m}^{\prime}_{23}=c_{24} \tilde{m}_{23}=-c_{24}s_{34} m_{24}
    \quad , \quad  
    \tilde{m}^{\prime}_{34}=s_{24} \tilde{m}_{32}=-s_{24}s_{34} m_{24}
    \approx 0
   \end{gather}
   \\
   Third step : \;
   We perform the real rotation $R_{14}$ on $M_{2}$. The purpose is to put
   zeroes in the $14,41$ elements
   \begin{equation}
     M_{3}=R^{t}_{14} M_{2} R_{14}=\left( \begin{array}{cccc} 
          m_{d} & \tilde{m}^{\prime\prime}_{12} 
          & \tilde{m}^{\prime}_{13} & 0 \\ 
          \tilde{m}^{\prime\prime}_{12} & m_{s} 
          & \tilde{m}^{\prime}_{23} & \approx 0 \\
          \tilde{m}^{\prime}_{13} & \tilde{m}^{\prime}_{23} & m_{b} 
          & \approx 0 \\ 
          0 & \approx 0 & \approx 0 & \tilde{m}^{\prime}_{x} 
          \end{array} \right) 
   \end{equation}
   The zeroes in the $14,41$ elements are implemented by diagonalising the
   outer block of $M_{2}$. This block is obtained by striking out the 
   row and the column in which the unit elements of $R_{14}$ appear.
   Thus we get the reduced rotation
   \begin{equation}
    R^{t}_{14} \left( \begin{array}{cc} \tilde{m}_{d} & \tilde{m}^{\prime}_{14} \\  
    \tilde{m}^{\prime}_{14} & \tilde{m}^{\prime}_{x} \end{array} \right) 
    R_{14} := 
    \left( \begin{array}{cc} m_{d} & 0 \\  
                0 & \tilde{m}^{\prime\prime}_{x} \end{array} \right) \; .
   \end{equation}
   From this matrix we get the mixing angle $\theta_{14}$ 
   \begin{equation}
    \label{teta14}
    \tan{2 \theta_{14}}
    =\frac{2 \tilde{m}^{\prime}_{14}}{\tilde{m}^{\prime}_{x}-\tilde{m}_{d}}=
    \frac{2 c_{24}c_{34} m_{14}}{\tilde{m}^{\prime}_{x}-\tilde{m}_{d}}
   \end{equation} 
   and the diagonal elements 
   \begin{gather} 
    m_{d}=\tilde{m}_{d}c_{14}^{2}-2 \tilde{m}^{\prime}_{14} s_{14} c_{14}
                 +\tilde{m}_{x} s_{14}^{2} \\
    \tilde{m}^{\prime\prime}_{x}=\tilde{m}_{d}s_{14}^{2}+2 \tilde{m}_{24} s_{14} c_{14}
                          +\tilde{m}_{x} c_{14}^{2} \; .
   \end{gather}
   The remaing elements are given by the reduced rotation
   \begin{equation}
    \left( \begin{array}{cc} \tilde{m}^{\prime\prime}_{12} 
            & \tilde{m}^{\prime\prime}_{24} \\
            \tilde{m}^{\prime}_{13} 
            & \tilde{m}^{\prime\prime}_{34} \end{array} \right)
    =\left( \begin{array}{cc} \tilde{m}^{\prime}_{12}  & 0 \\
            \tilde{m}^{\prime}_{13} & 0 \end{array} \right)                    
    \left( \begin{array}{cc} c_{14} & s_{14} \\
            -s_{14} & c_{14} \end{array} \right) \; .
   \end{equation}
   This leads to
   \begin{gather}
    \label{appendixsincosine}
    \tilde{m}^{\prime\prime}_{12}=c_{14} \tilde{m}^{\prime}_{12}
    =-c_{14}s_{24}c_{34} m_{14}
    \quad , \quad  
    \tilde{m}^{\prime\prime}_{24}=s_{14} \tilde{m}^{\prime}_{12}
    =s_{14}s_{24}c_{34} m_{14} \approx 0  \\     
    \tilde{m}^{\prime}_{13}=c_{14} \tilde{m}_{13}
    =-c_{14} s_{34} m_{14}
    \quad , \quad  
    \tilde{m}^{\prime\prime}_{34}=s_{14} \tilde{m}_{13}
    =s_{14}s_{34} m_{14} \approx 0 \nonumber
   \end{gather}

\end{appendix}

\nocite{Nilse:2006jv}
\nocite{Hebecker:2003jt}
\nocite{Hebecker:2001jb}
\nocite{Quiros:2003gg}
\nocite{Irges:2004gy}
\nocite{Dienes:1998vg}
\nocite{Gaitan-Lozano:1995sm}
\nocite{Cotti:1998de}
\nocite{vonGersdorff:2002rg}

\addtocontents{toc}{\bigskip \textbf{Bibliography}} 
\bibliography{Literatur.bib}

\newpage
\thispagestyle{empty}
\vspace*{5ex}
{\LARGE
\noindent
\textbf{Acknowledgements}}

\vspace{10ex}

First of all, I would like to thank my supervisor Prof. Gerhard Mack for
his guidance and support. 

Im a also very grateful to the members of my group Thorsten Pr\"ustel,
Falk Neugebohrn and Michael R\"ohrs for their friendship and many fruitful
discussions about physics. 

Diverse conversations broadened and deepened my physical knowledge. Hence
I am greatly indebted to a number of colleagues, in particular 
Thorsten Pr\"ustel, Falk Neugebohrn, Michael  R\"ohrs, Florian 
Schwennsen, Martin Hentschinski, Frank Fugel and Thorben Kneesch.

It is also a pleasure to thank Sven Grosskreutz for his friendship and help.

Moreover, I am grateful to the members of the II. Institut f\"ur Theoretische
Physik and the DESY theory group for creating a very pleasant and stimulating
working atmosphere.

This work was supported by the Graduiertenkolleg ``Zuk\"unfitige Entwicklungen
in der Teilchenphysik''.

Finally, I would like to thank my parents for their love, support and trust.

\vspace{2ex}
\normalsize
\noindent

\end{document}